
\documentclass{lmcs} 
\pdfoutput=1

\usepackage{lastpage}
\lmcsdoi{21}{4}{28}
\lmcsheading{}{\pageref{LastPage}}{}{}%
{Sep.~27,~2024}{Dec.~12,~2025}{}

\keywords{Probabilistic Programming, Term Rewriting, Evaluation Strategies, Modularity}

\usepackage{hyperref}
\usepackage{amsmath, amssymb, amsthm}
\usepackage{thmtools}
\usepackage{thm-restate}
\PassOptionsToPackage{hyphens}{url}
\usepackage{lineno}
\usepackage{booktabs}
\usepackage{tikz}
\usetikzlibrary{shapes,calc,arrows,automata,decorations.pathmorphing,backgrounds,arrows.meta,shapes.geometric,shapes.multipart,shapes.misc}
\usepackage[all,cmtip]{xy}
\usepackage{mymatrix}
\RequirePackage{makecell}

\pgfdeclarelayer{edgelayer}
\pgfdeclarelayer{nodelayer}
\pgfsetlayers{background,edgelayer,nodelayer,main}
\tikzstyle{none}=[inner sep=0mm]
\tikzstyle{moveBlock}=[fill=white, draw=black, shape=rectangle]
\tikzstyle{target}=[fill=white, draw=black, shape=circle]

\tikzstyle{dotHead}=[dotted, ->]
\tikzstyle{dotWithoutHead}=[dotted, -]
\tikzstyle{dashHead}=[dashed,->]
\tikzstyle{dashWithoutHead}=[dashed,-]
\tikzstyle{arrow}=[->]

\RequirePackage{makecell}

\usepackage{xifthen}
\usepackage{xspace}
\usepackage[noline,noend]{algorithm2e}
\usepackage[capitalize,nameinlink]{cleveref}

\theoremstyle{definition}
\newtheorem{counterexample}[thm]{Counterexample}

\usepackage{stackengine}
\usepackage[utf8]{inputenc}
\usepackage[inline]{enumitem}
\setlist[enumerate,1]{label=(\roman*), wide=0pt, leftmargin=*}

\renewcommand{\emptyset}{\varnothing}

\newenvironment{myproofof}[1]{%
	\begin{proof}[Proof of #1]%
}{\end{proof}%
}

\newenvironment{myproofsketch}{
  \begin{proof}[Proof Sketch]%
}{\end{proof}%
}

\usepackage{mathtools}

\usepackage{stmaryrd}
\usepackage{etoolbox}
\usepackage{mymatrix}
\usepackage{multirow}
\usepackage{bm}
\usepackage[english]{babel}
\usepackage{accents}
\usepackage{microtype}
\usepackage{nicefrac,xfrac}
\usepackage{centernot}
\usepackage{array,longtable}
\newcolumntype{C}{>{$}c<{$}}  
\newcolumntype{R}{>{$}r<{$}}  
\newcolumntype{L}{>{$}l<{$}}  
\usepackage{wrapfig}
\usepackage{subcaption}
\makeatletter         
\def\figurecaption#1#2{\noindent\hangindent 40pt
                       \hbox to 36pt {\small\sl #1 \hfil}
                       \ignorespaces {\small #2}}

%
%
\long\def\@makecaption#1#2{
  \vskip 10pt 
  \settowidth{\@tempdima}{#2}
  \ifdim\@tempdima>0pt
       \setbox\@tempboxa\hbox{#1: #2}
     \else
       \setbox\@tempboxa\hbox{#1 #2}
   \fi
   \ifdim \wd\@tempboxa >\hsize               
       \begin{list}{#1:}{
       \settowidth{\labelwidth}{#1:}
       \setlength{\leftmargin}{\labelwidth}
       \addtolength{\leftmargin}{\labelsep}
        }\item #2 \end{list}\par   
     \else                                    
       \hbox to\hsize{\hfil\box\@tempboxa\hfil}  
   \fi}
\makeatother

\makeatletter
\let\xx@thm\@thm
\AtBeginDocument{\let\@thm\xx@thm}
\makeatother

\usepackage{cite}
\usepackage[numbers]{natbib}
\usepackage{doi}

\newcommand{\makeproof}[2]{}
\newcommand{\paper}[1]{}
\newcommand{\report}[1]{#1} 

\allowdisplaybreaks

\hypersetup{
    pdftitle={From Innermost to Full Probabilistic Term Rewriting},
    colorlinks=true,
    linkcolor=blue,
    citecolor=olive,
    filecolor=magenta,
    urlcolor=cyan
}

\makeatletter \newcommand*\bigcdot{\mathpalette\bigcdot@{.4}}
\newcommand*\bigcdot@[2]{\mathbin{\vcenter{\hbox{\scalebox{#2}{$\m@th#1\bullet$}}}}}
\makeatother

\renewcommand{\emptyset}{\varnothing}

\newcommand{\disabledcomment}[1]{}
\newcommand{\oldcomment}[1]{}

\newcommand{\tool}[1]{\textsf{#1}}

\newcommand{\dontprint}[1]{}

\renewcommand{\epsilon}{\varepsilon}

\newcommand{\IN}{\mathbb{N}}
\newcommand{\IS}{\mathbb{S}}
\newcommand{\IR}{\mathbb{R}}

\newcommand{\R}{\mathcal{R}}

\newcommand{\PP}{\mathcal{P}}
\newcommand{\TT}{\mathcal{T}}
\newcommand{\BB}{\mathcal{B}}

\newcommand{\aprove}{\textsf{AProVE}}
\newcommand{\muterm}{\textsf{MuTerm}}
\newcommand{\natt}{\textsf{NaTT}}
\newcommand{\ttttwo}{\textsf{T\kern-0.15em\raisebox{-0.55ex}T\kern-0.15emT\kern-0.15em\raisebox{-0.55ex}2}}
\newcommand{\tct}{\textsf{TcT}}
\newcommand{\ceta}{\textsf{CeTA}}

\crefname{defi}{Def.}{Def.}
\crefname{exa}{Ex.}{Ex.}
\crefname{counterexample}{Counterex.}{Counterex.}
\crefname{appendix}{App.}{App.}
\crefname{ex}{Ex.}{Ex.}
\crefname{thm}{Thm.}{Thm.}
\crefname{lem}{Lemma}{Lemmas}
\crefname{rem}{Rem.}{Rem.}
\crefname{section}{Section}{Sections}
\crefname{subsection}{Section}{Sections}
\crefname{subsubsection}{Section}{Sections}
\crefname{line}{Line}{Lines}
\crefname{cor}{Cor.}{Cor.}
\crefname{figure}{Fig.}{Fig.}
\crefname{enumi}{}{}
\crefname{algorithm}{Alg.}{Alg.}

\renewcommand{\emptyset}{\varnothing}

\newcommand{\IE}{\ensuremath{\mathbb{E}}}

\newcommand{\F}[1]{\mathfrak{#1}}
\newcommand{\C}[1]{\mathcal{#1}}

\makeatletter
\def\moverlay{\mathpalette\mov@rlay}
\def\mov@rlay#1#2{\leavevmode\vtop{%
   \baselineskip\z@skip \lineskiplimit-\maxdimen
   \ialign{\hfil$\m@th#1##$\hfil\cr#2\crcr}}}
\newcommand{\charfusion}[3][\mathord]{
    #1{\ifx#1\mathop\vphantom{#2}\fi
        \mathpalette\mov@rlay{#2\cr#3}
      }
    \ifx#1\mathop\expandafter\displaylimits\fi}
\makeatother

\newcommand{\Var}{\mathcal{V}}

\newcommand{\pos}{\mathrm{Pos}}


\newcommand{\TSet}[2]{\mathcal{T}(#1,#2)}

\newcommand{\VSet}{\mathcal{V}}

\renewcommand{\O}{\mathcal{O}}

\newcommand{\FDist}{\operatorname{FDist}}
\newcommand{\Supp}{\operatorname{Supp}}

\newcommand{\rootsym}{\operatorname{root}}

\newcommand{\rt}{\operatorname{root}}
\newcommand{\edh}{\operatorname{edh}}
\newcommand{\edl}{\operatorname{edl}}
\newcommand{\capterm}{\operatorname{Abs}}
\newcommand{\caphterm}{\operatorname{A}}
\newcommand{\caphbterm}{\operatorname{B}}
\newcommand{\caphcterm}{\operatorname{C}}


\newcommand{\ts}{\mathsf{s}}
\renewcommand{\O}{\mathcal{O}}
\renewcommand{\O}{\mathcal{O}}
\newcommand{\tz}{\mathsf{0}}
\newcommand{\tf}{\mathsf{f}}
\newcommand{\tg}{\mathsf{g}}
\renewcommand{\th}{\mathsf{h}}
\newcommand{\ta}{\mathsf{a}}
\newcommand{\tb}{\mathsf{b}}
\newcommand{\tc}{\mathsf{c}}
\newcommand{\td}{\mathsf{d}}
\newcommand{\te}{\mathsf{e}}

\newcommand{\tq}{\mathsf{q}}

\newcommand{\trw}{\mathsf{rw}}
\newcommand{\tcons}{\mathsf{cons}}
\newcommand{\tnode}{\mathsf{node}}
\newcommand{\tenc}{\mathsf{enc}}
\newcommand{\targenc}{\mathsf{argenc}}
\newcommand{\tnil}{\mathsf{nil}}

\newcommand{\cv}{\operatorname{cv}}
\newcommand{\bv}{\operatorname{bv}}
\newcommand{\dv}{\operatorname{dv}}

\newcommand{\dheight}{\operatorname{dh}}
\newcommand{\runtime}{\operatorname{rc}}
\newcommand{\eruntime}{\operatorname{erc}}
\newcommand{\dertime}{\operatorname{dc}}
\newcommand{\edertime}{\operatorname{edc}}

\newcommand{\ctroot}{\F{r}}
\newcommand{\ctleaf}{\operatorname{Leaf}}

\newcommand{\ctdepth}{\operatorname{d}}

\newcommand{\ctlevelTwo}{\C{L}_{2}}
\newcommand{\ctlevelTwowithborder}{\F{L}_2}



\newcommand{\SN}[1][]{%
\ifx\relax#1\relax
  \mathtt{SN}%
\else
  \mathtt{SN}_{#1}%
\fi
}
\newcommand{\wSN}[1][]{%
\ifx\relax#1\relax
  \mathtt{WN}%
\else
  \mathtt{WN}_{#1}%
\fi
}
\newcommand{\SNf}{\SN[\fto_{\R}]}
\newcommand{\SNi}{\SN[\ito_{\R}]}
\newcommand{\SNli}{\SN[\lito_{\R}]}
\newcommand{\SNs}{\SN[\sto_{\R}]}
\newcommand{\wSNf}{\wSN[\fto_{\R}]}
\newcommand{\wSNi}{\wSN[\ito_{\R}]}
\newcommand{\wSNs}{\wSN[\sto_{\R}]}
\newcommand{\wSNli}{\wSN[\lito_{\R}]}
\newcommand{\AST}[1][]{%
  \ifx\relax#1\relax
    \mathtt{AST}%
  \else
    \mathtt{AST}_{#1}%
  \fi
}
\newcommand{\wAST}[1][]{%
  \ifx\relax#1\relax
    \mathtt{wAST}%
  \else
    \mathtt{wAST}_{#1}%
  \fi
}
\newcommand{\ASTf}{\AST[\fto_{\PP}]}
\newcommand{\ASTi}{\AST[\ito_{\PP}]}

\newcommand{\ASTs}{\AST[\sto_{\PP}]}
\newcommand{\wASTf}{\wAST[\fto_{\PP}]}
\newcommand{\PAST}[1][]{%
  \ifx\relax#1\relax
    \mathtt{PAST}%
  \else
    \mathtt{PAST}_{#1}%
  \fi
}
\newcommand{\wPAST}[1][]{%
  \ifx\relax#1\relax
    \mathtt{wPAST}%
  \else
    \mathtt{wPAST}_{#1}%
  \fi
}
\newcommand{\PASTf}{\PAST[\fto_{\PP}]}
\newcommand{\PASTi}{\PAST[\ito_{\PP}]}

\newcommand{\PASTs}{\PAST[\sto_{\PP}]}
\newcommand{\wPASTf}{\wPAST[\fto_{\PP}]}
\newcommand{\SAST}[1][]{%
  \ifx\relax#1\relax
    \mathtt{SAST}%
  \else
    \mathtt{SAST}_{#1}%
  \fi
}
\newcommand{\SASTf}{\SAST[\fto_{\PP}]}
\newcommand{\SASTi}{\SAST[\ito_{\PP}]}

\newcommand{\SASTs}{\SAST[\sto_{\PP}]}
\newcommand{\PSN}[1][]{%
  \ifx\relax#1\relax
    \mathtt{PSN}%
  \else
    \mathtt{PSN}_{#1}%
  \fi
}
\newcommand{\wPSN}[1][]{%
  \ifx\relax#1\relax
    \mathtt{wPSN}%
  \else
    \mathtt{wPSN}_{#1}%
  \fi
}
\newcommand{\PSNf}{\PSN[\fto_{\PP}]}
\newcommand{\PSNi}{\PSN[\ito_{\PP}]}
\newcommand{\PSNli}{\PSN[\lito_{\PP}]}
\newcommand{\PSNs}{\PSN[\sto_{\PP}]}
\newcommand{\wPSNf}{\wPSN[\fto_{\PP}]}

\makeatletter
\NewDocumentCommand{\dparrow}{+O{} +O{0.5cm}}{%
\begin{tikzpicture}[baseline=-0.5ex] {
\node[inner sep=0](@1) at (0,0) {};
\node[inner sep=0](@2) at (#2,0) {};
\draw [arrows={-Triangle[open]},shorten >= 1pt,shorten <= 1pt](@1)--(@2) node[pos=.5,above,inner sep=1pt] {\ensuremath{#1}};}
\end{tikzpicture}\xspace
}

\NewDocumentCommand{\myto}{+O{} +O{0.5cm}}{%
\begin{tikzpicture}[baseline=-0.5ex] {
\node[inner sep=0](@1) at (0,0) {};
\node[inner sep=0](@2) at (#2,0) {};
\draw [arrows={-to}](@1)--(@2) node[pos=.5,above,inner sep=1pt] {\ensuremath{#1}};}
\end{tikzpicture}\xspace
}

\NewDocumentCommand{\paraarrow}{+O{} +O{0.4cm}}{%
\begin{tikzpicture}[baseline=-0.5ex] {
\node[inner sep=0](@1) at (0,0) {};
\node[inner sep=0](@2) at (#2,0) {};
\node[inner sep=0](@3) at (0.07,0) {};
\draw [arrows={-to}](@1)--(@2) node[pos=.5,above,inner sep=1pt] {\ensuremath{#1}};
\draw [arrows={-to}](@1)--(@3);}
\end{tikzpicture}\xspace
}

\NewDocumentCommand{\paradparrow}{+O{} +O{0.4cm}}{%
\begin{tikzpicture}[baseline=-0.5ex] {
\node[inner sep=0](@1) at (0,0) {};
\node[inner sep=0](@2) at (#2,0) {};
\node[inner sep=0](@3) at (0.07,0) {};
\draw [arrows={-Triangle[open]}](@1)--(@2) node[pos=.5,above,inner sep=1pt] {\ensuremath{#1}};
\draw [arrows={-to}](@1)--(@3);}
\end{tikzpicture}\xspace
}

\newcommand{\oset}[2]{%
  {\mathop{#2}\limits^{\vbox to 1\ex@{\kern-\tw@\ex@
   \hbox{\scriptsize #1}\vss}}}}

\newcommand{\osetthree}[2]{%
  {\mathop{#2}\limits^{\vbox to 3\ex@{\kern-\tw@\ex@
   \hbox{\scriptsize #1}\vss}}}}

\newcommand{\osetfive}[2]{%
  {\mathop{#2}\limits^{\vbox to 5\ex@{\kern-\tw@\ex@
   \hbox{\scriptsize #1}\vss}}}}

\newcommand{\osetminus}[2]{%
  {\mathop{#2}\limits^{\vbox to -2\ex@{\kern-\tw@\ex@
   \hbox{\scriptsize #1}\vss}}}}
\makeatother


\newcommand{\NF}{\mathtt{NF}}


\newcommand{\liftArr}[3]{
  \mathrel{
    \xliftto{#1}
    \!\!{}^{#2}_{#3}
  }
}

\newcommand{\tored}[3]{
  \mathrel{
    \xhookrightarrow{{}_{\scriptstyle #1}}
    \!\!{}^{#2}_{#3}
  }
}

\newcommand{\fto}{\mathrel{\smash{\stackrel{\raisebox{2pt}{$\scriptscriptstyle \mathbf{f}\:$}}{\smash{\rightarrow}}}}}
\newcommand{\ito}{\mathrel{\smash{\stackrel{\raisebox{2pt}{$\scriptscriptstyle \mathbf{i}\:$}}{\smash{\rightarrow}}}}}
\newcommand{\lito}{\mathrel{\smash{\stackrel{\raisebox{2pt}{$\scriptscriptstyle \mathbf{li}\:$}}{\smash{\rightarrow}}}}}
\newcommand{\sto}{\mathrel{\smash{\stackrel{\raisebox{2pt}{$\scriptscriptstyle s\:$}}{\smash{\rightarrow}}}}}
\newcommand{\sprimeto}{\mathrel{\smash{\stackrel{\raisebox{2pt}{$\scriptscriptstyle s'\:$}}{\smash{\rightarrow}}}}}

\newcommand{\snotrootto}{\mathrel{\smash{\stackrel{\raisebox{3.4pt}{\tiny $s\:\lnot \varepsilon$}}{\smash{\longrightarrow}}}}}

\newcommand{\ftoleft}{\mathrel{\smash{\stackrel{\raisebox{2pt}{$\scriptscriptstyle \:\mathbf{f}$}}{\smash{\leftarrow}}}}}

\newcommand{\topara}{\mathrel{\paraarrow}}

\newcommand{\ftopara}{\mathrel{\smash{\stackrel{\raisebox{2pt}{$\scriptscriptstyle \mathbf{f}\:$}}{\smash{\paraarrow}}}}}
\newcommand{\itopara}{\mathrel{\smash{\stackrel{\raisebox{2pt}{$\scriptscriptstyle \mathbf{i}\:$}}{\smash{\paraarrow}}}}}
\newcommand{\stopara}{\mathrel{\smash{\stackrel{\raisebox{2pt}{$\scriptscriptstyle s\:$}}{\smash{\paraarrow}}}}}

\newcommand{\leftR}{\mathrel{\prescript{}{\R}{\ftoleft}}}
\newcommand{\leftRStar}{\mathrel{\prescript{*}{\R}{\ftoleft}}}
\newcommand{\leftROne}{\mathrel{\prescript{}{\R_1}{\ftoleft}}}

\newcommand{\ftor}{\mathrel{\fto_{\R}}}
\newcommand{\itor}{\mathrel{\ito_{\R}}}
\newcommand{\litor}{\mathrel{\lito_{\R}}}

\newcommand{\itorExOne}{\mathrel{\ito_{\R_1}}}

\newcommand{\itoRd}{\mathrel{\ito_{\R_{\td}}}}

\newcommand{\itos}{\mathrel{\ito_{\PP}}}



\newcommand{\xliftto}[1]{\mathrel{\smash{\stackrel{\raisebox{3.4pt}{\tiny $#1\:$}}{\smash{\liftto}}}}}
\newcommand{\xleftliftto}[1]{\mathrel{\smash{\stackrel{\raisebox{3.4pt}{\tiny $\:#1$}}{\smash{\leftliftto}}}}}
\newcommand{\liftto}{\rightrightarrows}
\newcommand{\iliftto}{\xliftto{\mathbf{i}}}
\newcommand{\fliftto}{\xliftto{\mathbf{f}}}
\newcommand{\liliftto}{\xliftto{\mathbf{li}}}

\newcommand{\leftliftto}{\mathrel{\ooalign{\raisebox{2pt}{$\leftarrow$}\cr\hfil\raisebox{-2pt}{$\leftarrow$}\hfil}}}

\newcommand{\paraliftto}{\mathrel{\ooalign{\raisebox{2pt}{$\paraarrow$}\cr\hfil\raisebox{-2pt}{$\paraarrow$}\hfil}}}

\newcommand{\fparaliftto}{\mathrel{\smash{\stackrel{\raisebox{3.4pt}{\tiny $\mathbf{f}\:$}}{\smash{\paraliftto}}}}}
\newcommand{\iparaliftto}{\mathrel{\smash{\stackrel{\raisebox{3.4pt}{\tiny $\mathbf{i}\:$}}{\smash{\paraliftto}}}}}



\theoremstyle{plain} 


\begin{document}

\title[From Innermost to Full Probabilistic Term Rewriting]{From Innermost to Full Probabilistic Term Rewriting: 
  Almost-Sure Termination,
  Complexity, and Modularity}
\titlecomment{{\lsuper*}This is a revised and extended journal version of our earlier conference paper \cite{FoSSaCS2024}.}
\thanks{Funded by the DFG Research Training Group 2236 UnRAVeL}

\author[J.-C.~Kassing]{Jan-Christoph Kassing\lmcsorcid{0009-0001-9972-2470}}
\author[J.~Giesl]{J\"urgen Giesl\lmcsorcid{0000-0003-0283-8520}}
\address{RWTH Aachen University, Aachen, Germany}	
\email{\{kassing,giesl\}@cs.rwth-aachen.de}  




\begin{abstract}
  \noindent There are many evaluation strategies for term rewrite systems, 
but automatically proving termination or analyzing complexity is usually easiest for innermost rewriting.
Several syntactic criteria exist when innermost termination implies (full) termination 
or when runtime complexity and innermost runtime complexity coincide.
We adapt these criteria to the probabilistic setting, e.g., we show when it suffices to
analyze almost-sure termination w.r.t.\ innermost rewriting in order to prove
(full) almost-sure termination of probabilistic term rewrite systems.
These criteria can be applied for both termination and complexity analysis in the probabilistic setting.
We implemented and evaluated our new contributions in the tool \aprove.
Moreover, we also use our new results to
investigate the modularity of probabilistic termination properties.

\end{abstract}

\maketitle


\section{Introduction}\label{sec-introduction}

Termination
and complexity analysis are among the main tasks in program verification, and 
techniques and tools to analyze termination or complexity of
term rewrite systems (TRSs) automatically have been studied for decades. 
While a direct application of classical reduction orderings is often too weak,
these orderings can be used successfully within the \emph{dependency pair} (DP) framework
for termination~\cite{arts2000termination,giesl2006mechanizing} and for innermost runtime complexity~\cite{noschinski2013analyzing}. 
Moreover, the framework of~\cite{avanzini_combination_2016} uses \emph{weak}
dependency pairs in order to analyze derivational and runtime complexity.
These frameworks allow for modular termination and complexity proofs
by decomposing the original problem into sub-problems which can
then be analyzed independently using different techniques.
Thus, DPs are used in essentially all current termination and complexity
tools  for TRSs
(e.g., \aprove{}~\cite{JAR-AProVE2017}, \muterm{}~\cite{gutierrez_mu-term_2020},
\natt{}~\cite{natt_sys_2014},
\tct{}~\cite{avanzini_tct_2016},
\ttttwo{}~\cite{ttt2_sys}).
To allow certification of proofs with DPs, they have been formalized in
several proof assistants (e.g., in \tool{Rocq} (formerly \tool{Coq})~\cite{Contejean07,Blanqui11},
\tool{Isabelle}~\cite{ceta_sys}, and recently together with the size-change
principle \cite{DBLP:conf/popl/LeeJB01,DBLP:journals/aaecc/ThiemannG05,DBLP:conf/cav/ManoliosV06}
in \tool{PVS} \cite{almeida_formalizing_2020, munoz2023FormalVerificationTermination}), and there exist several corresponding certification tools
for termination and complexity proofs with DPs (e.g., \ceta{}~\cite{ceta_sys}).

\emph{Probabilistic} programs are used to describe randomized
algorithms and probability distributions, with applications in many areas, 
see, e.g.,~\cite{Gordon14}.
To use TRSs also for such programs, \emph{probabilistic term rewrite systems} (PTRSs) were introduced in~\cite{BournezRTA02,bournez2005proving,avanzini2020probabilistic}.
In the probabilistic setting, there are several notions of ``termination''.
For example, a program is \emph{almost-surely terminating} ($\mathtt{AST}$) 
if the probability for termination is $1$.
Another interesting property is \emph{positive almost-sure termination} ($\mathtt{PAST}$)~\cite{DBLP:conf/mfcs/Saheb-Djahromi78,bournez2005proving}, which means
that the expected length of every evaluation is finite.
Finally, \emph{strong} or \emph{bounded almost-sure termination}
($\mathtt{SAST}$)~\cite{fuTerminationNondeterministicProbabilistic2019,avanzini2020probabilistic} 
requires that, for every 
configuration $t$ (e.g., for every term), there is a finite bound on the expected
lengths of all evaluations starting in~$t$. Thus, if there is a start configuration
$t$ which non-deterministically leads to evaluations of arbitrary finite length, then the
program can be $\mathtt{PAST}$, but not $\mathtt{SAST}$. 
In general,
$\mathtt{SAST}$ implies $\mathtt{PAST}$, and $\mathtt{PAST}$ implies $\mathtt{AST}$, 
but the converse directions do not hold.

We recently developed an adaption of the DP framework for $\mathtt{AST}$~\cite{JPK60} and an adaption for
innermost $\mathtt{AST}$~\cite{kassinggiesl2023iAST,FLOPS2024} (i.e., $\mathtt{AST}$ restricted to rewrite sequences where one only
evaluates at innermost positions), which allows us to benefit from a similar modularity
as in the non-probabilistic setting.
However, the DP framework for innermost $\mathtt{AST}$ is 
substantially more powerful than the one for $\mathtt{AST}$.
Indeed, also in the non-probabilistic setting, 
innermost termination is usually substantially easier to prove than (full) termination,
i.e., termination w.r.t.\ any possible evaluation strategy,
see, e.g.,~\cite{arts2000termination,giesl2006mechanizing}. 
The same holds for non-probabilistic complexity analysis, where the DP framework
of~\cite{noschinski2013analyzing} is restricted to innermost rewriting and 
the framework of~\cite{avanzini_combination_2016}
is considerably more powerful for innermost than for full rewriting.
To lift innermost termination and complexity proofs to full rewriting,
in the non-probabilistic setting there exist several sufficient criteria 
which ensure that innermost termination implies full termination~\cite{Gramlich1995AbstractRB} 
and that innermost runtime complexity coincides with full runtime complexity~\cite{frohn_analyzing_nodate}.

Up to now no such results were known in the probabilistic setting. 
Our paper presents the first sufficient criteria for PTRSs which ensure that, e.g., $\mathtt{AST}$
coincides for full and innermost rewriting, and we also show similar results for
other rewrite strategies like \emph{leftmost-innermost} rewriting.
We focus on criteria that can be checked automatically, so we can combine our results
with the DP framework for proving innermost $\mathtt{AST}$ of PTRSs~\cite{kassinggiesl2023iAST,FLOPS2024}.  
In this way, we obtain a technique that (if applicable) can
prove $\mathtt{AST}$ for \emph{full} rewriting automatically
and is significantly more powerful than the corresponding DP framework for full $\mathtt{AST}$~\cite{JPK60}.

Our criteria also hold for $\mathtt{PAST}$, $\mathtt{SAST}$, and expected complexity.
Similar to $\AST$, very recently we developed an adaption of the DP framework to prove innermost $\mathtt{SAST}$ 
and upper bounds on the innermost expected runtime complexity~\cite{kassing2025DependencyPairsExpected}.
Again, we can use our results on the relation between innermost and full rewriting to
apply this adaption of the DP framework for arbitrary evaluation strategies.
A corresponding evaluation can be found in~\cite{kassing2025DependencyPairsExpected}.
Except for applying polynomial or matrix orderings directly on the whole
PTRS~\cite{avanzini2020probabilistic}, currently there do not
exist any other automatic techniques for $\mathtt{PAST}$, $\mathtt{SAST}$, or expected
complexity of PTRSs.

As a corollary of our results in the probabilistic setting, 
we also develop the first results relating derivational and innermost 
derivational complexity for ordinary non-probabilistic TRSs.
The difference between derivational and runtime complexity is that 
runtime complexity only considers start terms where a defined function symbol (i.e., an
``algorithm'')
is applied
to arguments built with constructor symbols (i.e., to ``data''),
while derivational complexity allows arbitrary start terms.

There exist numerous results  on the
modularity of termination, confluence, and completeness of TRSs
in the non-probabilistic setting, see, e.g., \cite{Gramlich1995AbstractRB,
  gramlich2012ModularityTermRewriting, Toyama87, toyama1987ChurchRosserPropertyDirect}.
Based on our novel criteria,
we develop the first modularity
results for probabilistic termination w.r.t.\ different evaluation strategies,
 i.e., 
we investigate whether $\mathtt{AST}$, $\mathtt{PAST}$, or $\mathtt{SAST}$ are preserved
for unions of PTRSs.
Additionally, we also study preservation of $\mathtt{AST}$, $\mathtt{PAST}$, or $\mathtt{SAST}$
under signature extensions, which can be seen as a special case of modularity, i.e., a specific union of PTRSs, 
where the second PTRS only contains trivially terminating rewrite rules over the new signature.
We show that while $\mathtt{AST}$ and $\mathtt{SAST}$ are preserved under signature extensions, this does not hold for $\mathtt{PAST}$,
implying that any sound and complete proof technique for $\mathtt{PAST}$ of PTRSs
has to take the specific signature into account.
Related to these results, we show that $\mathtt{PAST}$ and $\mathtt{SAST}$ coincide
for a broad class of PTRSs.
For example, if the signature contains at
least one function symbol of arity greater than $1$,
then there is no difference between
$\mathtt{PAST}$ and $\mathtt{SAST}$ for finite PTRSs.

\vspace*{-.2cm}

\subsection*{Structure}\label{sec:structure}

We start with preliminaries on term rewriting in \cref{Preliminaries}.
Then we recapitulate PTRSs based on~\cite{bournez2005proving,diazcaro_confluence_2018,avanzini2020probabilistic,Faggian2019ProbabilisticRN,kassinggiesl2023iAST} 
in \cref{Probabilistic Term Rewriting}.
In \cref{Relating AST and its Restricted Forms} we show that the properties of~\cite{Gramlich1995AbstractRB}
that ensure equivalence of innermost and full termination do not suffice in the probabilistic setting and extend them accordingly.
In particular, we show that innermost and full $\mathtt{AST}$ coincide for PTRSs that are
non-overlapping and linear.
This result also holds for $\mathtt{PAST}$, $\mathtt{SAST}$, and the expected runtime complexity, as well as for strategies like
leftmost-innermost evaluation.
In \cref{Improving Applicability} we try to weaken the linearity
requirement in order to prove
full $\mathtt{AST}$ for larger classes of PTRSs.
The implementation of our criteria in the tool \aprove{} is evaluated in \cref{Evaluation}.
Afterwards, in \cref{Modularity} we analyze the
modularity of all these (full and innermost)
termination properties for PTRSs.
We discuss related work on the
verification of probabilistic programs in \Cref{Related Work}.
Finally, we conclude in \cref{Conclusion} and
refer to App.~\ref{appendix} for all missing proofs.\footnote{To ease readability,
for those proofs which require larger technical constructions, we only give proof sketches
in the main part of the paper and present the corresponding full technical proofs in
App.~\ref{appendix}.}

\vspace*{-.2cm}

\subsection*{Novel Contributions of the Paper}\label{sec:novel_results}

The current paper extends our earlier conference paper~\cite{FoSSaCS2024} by:

\begin{itemize}
    \item All results concerning $\mathtt{SAST}$ and the novel relations between
      $\mathtt{SAST}$ and $\mathtt{PAST}$ (the whole \Cref{sect:Relating PAST and SAST} 
      as well as all results on $\mathtt{SAST}$ from \Cref{Relating AST and its Restricted
      Forms,Improving Applicability}).
    \item The new result that $\mathtt{PAST}$ is not closed under signature extensions
      (\Cref{thm:Sig-PAST}), while $\mathtt{AST}$ and $\mathtt{SAST}$ are (\Cref{signature-extensions-AST-SAST}).
    \item The novel definitions of expected derivational and runtime complexity
      (\Cref{def:expected-derivational-complexity,def:expected-runtime-complexity}) and all corresponding results
      (\Cref{properties-eq-AST-iAST-1-complex}, \ref{properties-eq-iAST-liAST-complex}, and
      \ref{properties-eq-AST-iAST-3-complex}).
    \item The corollaries for non-probabilistic derivational complexity
      (\Cref{properties-eq-Der-iDer-1,properties-eq-iDer-liDer-complex}).
    \item The whole \cref{Modularity} concerning the analysis of modularity.
    \item Numerous additional explanations, examples, and remarks.
    \item More details on the proofs of the main theorems (including central lemmas like
      \Cref{lemma-eq-AST-iAST-1}, \ref{lemma-eq-iAST-liAST},
      and \ref{lemma-eq-AST-iAST-3}), in addition to the full proofs in\paper{
      \Cref{REPORT}}\report{ App.\ \ref{appendix}}.
    \item A new counterexample (Counterex.\ \ref{example:simultaneous-rewriting-bug}) showing a mistake 
      in our earlier conference paper \cite{FoSSaCS2024}.
    \item An improved implementation and evaluation which combines the contributions
      of the current paper with the DP framework for full $\mathtt{AST}$  from \cite{JPK60}
      (which had not yet been developed at the time of our conference paper
      \cite{FoSSaCS2024}), see   \pagebreak[3] \Cref{Evaluation}.
\end{itemize}

\section{Preliminaries}\label{Preliminaries}

For any relation $\to \; \subseteq A \times A$ on some set $A$ and $n \in \IN$, we define $\to^{n}$ as 
$\to^{0} \;=\; \{(a,a) \mid a \in A\}$ and $\to^{n+1} \;=\; \to^{n} \circ \to$, 
where ``$\circ$'' denotes composition of relations, and define $\to^* = \bigcup_{n\in \IN}\to^{n}$, 
i.e., $\to^*$ is the \emph{reflexive and transitive closure} of $\to$.
Let $\NF_{\to}$ denote the set of all elements that are in \emph{normal form} w.r.t.\ $\to$, 
i.e., for all $a \in \NF_{\to}$ there is no $b \in A$ with $a \to b$. 

We assume familiarity with term rewriting~\cite{baader_nipkow_1999}, but recapitulate
the notions that are needed for this work.
We write $\TSet{\Sigma}{\VSet}$ for the set of all \emph{terms} over a (possibly infinite) countable 
set of \emph{function symbols} $\Sigma = \biguplus_{k \in \IN} \Sigma_k$ and a (possibly infinite) countable set of \emph{variables} $\VSet$, 
and $\TT$ if the specific sets $\Sigma$ and $\VSet$ are irrelevant or clear from the context.
To be precise, $\TSet{\Sigma}{\VSet}$ is the smallest set with $\VSet \subseteq \TSet{\Sigma}{ \VSet}$, 
and if $f \in \Sigma_k$ and $t_1, \dots, t_k \in \TSet{\Sigma}{ \VSet}$ then $f(t_1,\dots,t_k) \in \TSet{\Sigma}{ \VSet}$.
We say that a function symbol has \emph{arity} $k \in \IN$ if $f \in \Sigma_k$, 
i.e., the arity denotes how many arguments a function symbol takes.
A \emph{substitution} is a function $\sigma:\VSet \to \TT$ with $\sigma(x) = x$ for all but finitely many $x \in \VSet$, 
and we often write $x\sigma$ instead of $\sigma(x)$.
Substitutions homomorphically extend to terms: If $t=f(t_1,\dots,t_k)\in \TT$ then $t \sigma
= f(t_1\sigma,\dots,t_k \sigma)$.
For a term $t \in \TT$, the set of \emph{positions} $\pos(t)$ 
is the smallest subset of $\IN^*$ satisfying $\varepsilon \in \pos(t)$, 
and if $t=f(t_1,\dots,t_k)$ then for all $1 \leq i \leq k$ and all $\pi \in \pos(t_i)$ we have $i.\pi \in \pos(t)$.
If $\pi \in \pos(t)$ then $t|_{\pi}$ denotes the subterm starting at position $\pi$
and $t[r]_{\pi}$ denotes the term that results from replacing the subterm $t|_{\pi}$ at position $\pi$ with the term $r \in \TT$.
A \emph{context} $C$ is a term containing
exactly one occurrence of 
a special constant symbol $\square$, 
denoting a hole.
The constant $\square$ is used to indicate where one can insert a term, 
i.e., if $\square$ is at position $\pi$ of $C$, then $C[t]$ is a shorthand notation for $C[t]_{\pi}$.

A \emph{rewrite rule} $\ell \to r$ is a pair of terms  $\ell, r \in \TT$ 
such that $\VSet(r) \subseteq \VSet(\ell)$ and $\ell \notin \VSet$, 
where $\VSet(t)$ denotes the set of all variables occurring in $t \in \TT$.
A \emph{term rewrite system} (TRS) is a (possibly infinite) countable set of rewrite rules.
As an example, consider the TRS $\R_{\td}$ that doubles a natural number (represented by the terms $\ts$ and $\tz$) 
with the rewrite rules $\td(\ts(x)) \to \ts(\ts(\td(x)))$ and $\td(\tz) \to \tz$.
A TRS $\R$ induces a \emph{rewrite relation} ${\fto_{\R}} \subseteq \TT
\times \TT$ on terms where $s \fto_{\R} t$ holds if there is a position $\pi \in \pos(s)$, 
a rule $\ell \to r \in \R$, and a substitution $\sigma$ such that $s|_{\pi}=\ell\sigma$ and $t = s[r\sigma]_{\pi}$.
Here, $\mathbf{f}$ stands for ``full rewriting''\footnote{In the literature, 
one usually simply writes $\to_{\R}$ instead. 
We use $\fto_{\R}$ here to clearly indicate the corresponding rewrite strategy.}
as we did not fix any specific strategy yet.
A rewrite step $s \fto_{\R} t$ is an \emph{innermost} rewrite step (denoted $s \itor t$) if
all proper subterms of the used \emph{redex} $\ell\sigma$ are in normal form w.r.t.\ $\R$, i.e.,
the proper subterms of $\ell\sigma$
do not contain redexes themselves and thus, they cannot be reduced with $\fto_\R$.
For example, we have $\td(\ts(\td(\ts(\tz)))) \itoRd \td(\ts(\ts(\ts(\td(\tz)))))$.
Let $\NF_{\R}$ denote the set of all terms that are in normal form w.r.t.\ $\fto_{\R}$.

Let $<$ be the \emph{prefix ordering} on positions and let $\leq$ be its reflexive closure.
Two positions $\tau$ and $\pi$ are \emph{parallel} if both $\tau \not\leq \pi$ and $\pi \not\leq \tau$ hold.
For two parallel positions $\tau$ and $\pi$ we define
$\tau \prec \pi$ if we have $i < j$ for the unique $i,j$ such that $\chi.i \leq \tau$ and $\chi.j \leq \pi$,
where $\chi$ is the longest common prefix of $\tau$ and $\pi$.
An innermost rewrite step $s \itor t$ at position $\pi$ is
\emph{leftmost} (denoted $s \litor t$) if $s$ does not contain
any redex at a position
$\tau$ with $\tau \prec \pi$.

In this paper, we will consider innermost rewriting ($\mathbf{i}$), leftmost-innermost rewriting ($\mathbf{li}$),
and full rewriting ($\mathbf{f}$).
Since every leftmost-innermost rewrite step is an innermost rewrite step, 
and every innermost rewrite step is a step w.r.t.\ full rewriting, one directly obtains
$\litor \;\subseteq\; \itor \;\subseteq\; \ftor$ for every TRS $\R$.
Let $\IS = \{\mathbf{f}, \mathbf{i}, \mathbf{li}\}$ be the set of these three
rewrite strategies.

\subsection{Termination}
Let $\to \;\subseteq\; \TT \times \TT$ be a relation on terms.
We call $\to$ \emph{strongly normalizing}
or $\SN[\to]$ for short if $\to$ is well founded. A TRS
$\R$ is \emph{terminating} if we have $\SNf$,
and $\R$ is
\emph{innermost terminating} or \emph{leftmost-innermost terminating}
if we have $\SNi$ or $\SNli$, respectively.
If every term $t \in \TT$ has a normal form w.r.t.\ $\to$
(i.e., we have $t \to^* t'$ where $t' \in \NF_{\to}$)
then we call $\to$ \emph{weakly normalizing} (denoted by $\wSN[\to]$).

Two terms $s,t \in \TT$ are \emph{joinable} via $\R$ 
if there exists a $w \in \TT$ such that $s \fto_{\R}^* w \leftRStar t$.
Two rules $\ell_1 \to r_1, \ell_2 \to r_2 \in \R$ with renamed variables such that
$\VSet(\ell_1) \cap \VSet(\ell_2) = \emptyset$ are \emph{overlapping}
if there exists a non-variable position $\pi$ of $\ell_1$ such that $\ell_1|_{\pi}$ and
$\ell_2$ are unifiable, i.e., there exists a substitution $\sigma$ such that $\ell_1|_{\pi} \sigma = \ell_2 \sigma$.
If $(\ell_1 \to r_1)
= (\ell_2 \to r_2)$,  then we require that $\pi \neq \varepsilon$.
$\R$ is \emph{non-overlapping} (NO) if it has no overlapping rules.
As an example, the TRS $\R_{\td}$ is non-overlapping.
A TRS is \emph{left-linear} (LL) (\emph{right-linear}, RL) if
every variable occurs at most once in the left-hand side  (right-hand side) of a rule.
A TRS is \emph{linear} if it is both left- and right-linear.
A TRS is \emph{non-erasing} (NE)
if in every rule,
all variables of the left-hand side also occur
in the right-hand side.

Next, we recapitulate the relations
between $\SNf$, $\SNi$, $\SNli$, and $\wSNf$ in the non-probabilistic
setting.
Obviously, the stronger notion always implies the weaker one, e.g., $\SNf$ implies $\SNi$
and  $\wSNi$ implies $\wSNf$
since $\itor \;\subseteq\; \ftor$.
The interesting question is for which classes of TRSs the weaker and the stronger notion
are equivalent.\footnote{Note that to this end, we do not have to consider
 $\wSNli$ and $\wSNi$. The reason is that 
   when
 analyzing under which conditions
   $\wSNf$ implies $\SNf$, we also know that under these conditions  we have $\wSNs \Longrightarrow \SNs$
 for all $s \in \IS$, since $\wSNs \Longrightarrow\wSNf$ and $\SNf \Longrightarrow \SNs$ hold.
 Moreover, we have
  $\SNs \Longrightarrow \wSNs$  for all $s \in \IS$.}
We start with the relation between $\SNf$ and $\SNi$.

\begin{counterexample}[Toyama's Counterexample~\cite{Toyama87}]\label{example:diff-SN-vs-iSN-toyama}
    For the TRS $\R_1$ with the rules $\tf(\ta,\tb,x) \to \tf(x,x,x)$, $\tg \to \ta$, and $\tg \to \tb$,
    we do not have $\SN[\fto_{\R_1}]$ due to the infinite rewrite sequence $\tf(\ta, \tb, \tg) \fto_{\R_1} \tf(\tg, \tg, \tg) \fto_{\R_1}
    \tf(\ta, \tg, \tg) \fto_{\R_1} \tf(\ta, \tb, \tg) \fto_{\R_1} \ldots$
    But the only innermost rewrite sequences starting with $\tf(\ta, \tb, \tg)$ are
    $\tf(\ta, \tb, \tg) \itorExOne \tf(\ta, \tb, \ta) \itorExOne \tf(\ta, \ta, \ta)$ and
    $\tf(\ta, \tb, \tg) \itorExOne \tf(\ta, \tb, \tb) \itorExOne \tf(\tb, \tb, \tb)$,
    i.e., both of them reach normal forms in the end.
    Thus, $\SN[\ito_{\R_1}]$ holds as
    we have to rewrite the inner $\tg$ before we can use the $\tf$-rule. 
\end{counterexample}

The first property known to ensure equivalence of $\SNf$ and $\SNi$ is orthogonality,
which was already shown in~\cite{ODonnell77}.
A TRS is \emph{orthogonal} (OR) if it is non-overlapping and left-linear.

\begin{restatable}[From $\SNi$ to ${\SNf}$ (1),~\cite{ODonnell77}]{thm}{SNvsiSN1}\label{properties-eq-SN-iSN-1}
    If a TRS $\R$ is OR, then:
    \begin{align*}
        \SNf &\Longleftrightarrow \SNi\!
    \end{align*}
\end{restatable}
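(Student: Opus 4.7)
The direction $\SNf \Longrightarrow \SNi$ is immediate, since $\itor \subseteq \ftor$ means that every infinite innermost sequence is already an infinite full sequence.

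For the nontrivial converse $\SNi \Longrightarrow \SNf$, I plan to argue by contradiction. Assume $\R$ is OR and $\SNi$ holds, but there is an infinite full rewrite sequence $t_0 \ftor t_1 \ftor t_2 \ftor \cdots$. The goal will be to project this sequence onto an infinite innermost derivation starting from $t_0$, contradicting $\SNi$. The central technical ingredient is a commutation lemma of the following shape: whenever $s \itor u$ contracts some innermost redex and $s \ftor v$ contracts an arbitrary redex, then there exists a common reduct $w$ with $u \ftor^{*} w$ and $v \itor^{*} w$. Orthogonality is essential here. Non-overlapping rules out all critical pairs, so the two contracted redexes cannot clash at the same subterm. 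Left-linearity then guarantees that a variable of the outer redex whose binding contains the inner redex occurs only once, so rewriting produces no conflicting duplicates that would have to be reconciled.

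With the commutation property in hand, the strategy is a standard tiling argument along the infinite full sequence. Pick any innermost redex in $t_0$ (which exists, since otherwise $t_0 \in \NF_{\R}$ and no $\ftor$-step from $t_0$ would be possible), reduce it to obtain $t_0 \itor u_0$, and apply the commutation lemma to the step $t_0 \ftor t_1$ to produce a term $w_0$ with $u_0 \ftor^{*} w_0$ and $t_1 \itor^{*} w_0$. Iterating this construction along the whole full sequence and concatenating the innermost segments yields an infinite innermost derivation starting from $t_0$, contradicting $\SNi$.

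The main obstacle is the residual bookkeeping throughout the simulation: one must verify that innermost redexes remain well-defined (possibly bundled into parallel innermost steps $\itopara$) after each full step, and that the resulting innermost derivation is genuinely infinite rather than being absorbed into finitely many parallel contractions. This is precisely where the full power of orthogonal residual theory is needed, and it is the classical core of O'Donnell's result. In the paper I would either cite the original proof directly or, to remain self-contained, reuse the parallel-rewriting and lifting relations $\iliftto$ and $\iparaliftto$ developed later in the paper to formalize the tiling rigorously.
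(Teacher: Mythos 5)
First, note that the paper does not prove this statement at all: \Cref{properties-eq-SN-iSN-1} is recapitulated from the literature and attributed to O'Donnell~\cite{ODonnell77}, and the appendix only contains proofs of the paper's new contributions. So there is no in-paper proof to match; your proposal has to be judged against the classical argument, and in outline it is indeed that argument. Your commutation lemma is correctly stated and correctly oriented for orthogonal systems: if $s \itor u$ at position $p$ and $s \ftor v$ at position $q$, then $q$ cannot lie strictly below $p$ (the proper subterms of an innermost redex are in $\NF_{\R}$), $q = p$ forces the same step by non-overlappingness, and for $q$ parallel to or above $p$ left-linearity preserves the outer match, so $u \ftor^{\leq 1} w$ while $v \itor^{*} w$ contracts the residual copies of the innermost redex — and these residuals are still \emph{innermost} redexes in $v$, since their proper subterms are unchanged normal forms. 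This last observation is also why erasure is harmless here (no infinite reduction can hide strictly inside an innermost redex), in contrast to \Cref{properties-SN-vs-WN} where NE must be assumed.

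The genuine gap is in your final assembly, and it is exactly the point you flag but leave open. As literally described, the segments do not concatenate: you produce $t_0 \itor u_0$ and then $t_1 \itor^{*} w_0$, which have different sources, so ``concatenating the innermost segments'' does not yield a single derivation from $t_0$; moreover some projected segments may be empty, so infinitude is not automatic. The clean repair is to change the direction of the projection: show that a \emph{single} innermost step preserves the existence of an infinite $\ftor$-sequence. Projecting the whole infinite sequence $t_0 \ftor t_1 \ftor \cdots$ over $t_0 \itor u_0$ gives $u_0 \ftor^{\leq 1} u_1 \ftor^{\leq 1} \cdots$ with $t_k \itor^{*} u_k$, where a step vanishes only when the original step contracts a residual of the innermost redex. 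A residual-counting argument rules out that cofinitely many steps vanish: residuals of a single redex are pairwise parallel, each vanishing step consumes one residual, and new residuals are created only by non-vanishing (duplicating) steps, so infinitely many vanishing steps force infinitely many non-vanishing ones. Hence $u_0$ again starts an infinite $\ftor$-sequence, is in particular not a normal form, and recursing yields the infinite innermost derivation $t_0 \itor u_0 \itor u_1 \itor \cdots$ contradicting $\SNi$. With that counting lemma made explicit, your plan closes; without it (or an explicit appeal to O'Donnell's residual theory doing the same work), the proof is incomplete at its mathematical core.
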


Then, in~\cite{Gramlich1995AbstractRB}, it was shown that one can remove the requirement
of left-linearity.

\begin{restatable}[From $\SNi$ to ${\SNf}$ (2),~\cite{Gramlich1995AbstractRB}]{thm}{SNvsiSN2}\label{properties-eq-SN-iSN-2}
    If a TRS $\R$ is NO, then:
    \begin{align*}
        \SNf &\Longleftrightarrow \SNi\!
    \end{align*}
\end{restatable}
 
Moreover,~\cite{Gramlich1995AbstractRB} refined
\Cref{properties-eq-SN-iSN-2} further.
A TRS $\R$ is an \emph{overlay system} (OS) if
its rules may only overlap at the root position, i.e.,
$\pi = \varepsilon$.
For instance,  $\R_1$ from Counterex.\ 
\ref{example:diff-SN-vs-iSN-toyama} is an overlay system.
Furthermore, a TRS is \emph{locally confluent} (or \emph{weakly Church-Rosser}, abbreviated WCR) 
if for all terms $s,t_1,t_2 \in \TT$ such that $t_1 \leftR s \fto_{\R} t_2$ the terms $t_1$ and $t_2$ are joinable.
$\R_1$ is \emph{not} WCR, as we have $\ta \leftROne \tg \fto_{\R_1} \tb$, but $\ta$ and $\tb$ are not joinable.
If a TRS has both of these properties, then $\SNi$ and $\SNf$ are again equivalent.

\begin{restatable}[From $\SNi$ to ${\SNf}$ (3),~\cite{Gramlich1995AbstractRB}]{thm}{SNvsiSN3}\label{properties-eq-SN-iSN-3}
    If a TRS $\R$ is OS and WCR, then:
    \begin{align*}
        \SNf &\Longleftrightarrow \SNi\!
    \end{align*}
\end{restatable}

\Cref{properties-eq-SN-iSN-3} is stronger than \Cref{properties-eq-SN-iSN-2} 
as every non-overlapping TRS is a locally confluent overlay
system~\cite{knuth_simple_1970}.

Next, we recapitulate the results on the
relation between $\wSNf$ and $\SNf$.

\begin{counterexample}\label{example:diff-SN-vs-WN}
    Consider the TRS $\R_2$ with the rules $\tf(x) \to \tb$ and $\ta \to \tf(\ta)$.
    We do not have $\SN[\fto_{\R_2}]$ since we can always rewrite the inner $\ta$ to get 
    $\ta \fto_{\R_2} \tf(\ta) \fto_{\R_2} \tf(\tf(\ta)) \fto_{\R_2} \ldots$.
    On the other hand, $\wSNf$ holds since we can also rewrite the outer $\tf(\ldots)$ 
    before we use the $\ta$-rule twice, resulting in the term $\tb$, which is a normal form.
    For the TRS $\R_3$ with the rules $\tf(\ta) \to \tb$ and $\ta \to \tf(\ta)$, the situation is similar.
\end{counterexample}

The TRS $\R_2$ from Counterex.\ \ref{example:diff-SN-vs-WN} is erasing and
$\R_3$ is overlapping.
For TRSs with neither of those two properties, $\SNf$ and $\wSNf$ are equivalent.

\begin{restatable}[From $\wSNf$ to ${\SNf}$,~\cite{Gramlich1995AbstractRB}]{thm}{SNvsWN}\label{properties-SN-vs-WN}
    If a TRS $\R$ is NO and NE, then:
    \begin{align*}
        \SNf &\Longleftrightarrow \wSNf\!
    \end{align*}
\end{restatable}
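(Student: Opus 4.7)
The direction $\SNf \Rightarrow \wSNf$ is immediate. For the converse, my plan is to proceed by contradiction: assume $\wSNf$ holds but some term $t_0$ starts an infinite rewrite sequence $t_0 \fto_{\R} t_1 \fto_{\R} t_2 \fto_{\R} \ldots$, while simultaneously $t_0 \fto_{\R}^* n$ for some $n \in \NF_{\R}$ (obtained from $\wSNf$). I will derive the impossible statement that $n$ itself starts an infinite rewrite sequence, which contradicts $n \in \NF_{\R}$.

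The central tool is a \emph{persistence lemma}: if a term $s$ starts an infinite rewrite sequence and $s \fto_{\R} s'$, then so does $s'$. I would prove this by combining two ingredients. First, non-overlappingness yields a parallel-moves property: whenever $s$ is reduced in two different ways, say to $s'$ via $\ell \to r$ at position $\pi$ and to $s''$ via some other step at position $\tau$, then $s'$ and $s''$ can be joined by (possibly parallel) rewrite steps in $\R$. This is shown by case analysis on the relative positions $\pi$ and $\tau$ (parallel, $\tau < \pi$, $\pi < \tau$, or $\pi = \tau$); non-overlappingness ensures that any nontrivial interaction of the two redexes must occur via variable positions of the outer rule. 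Second, non-erasingness guarantees that when applying $\ell \to r$, every variable of $\ell$ still occurs in $r$, so every subterm that was substituted for a variable of $\ell$ is preserved (at least once) in $r\sigma$.

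Combining these, if the infinite sequence $s \fto_{\R} s_1 \fto_{\R} s_2 \fto_{\R} \ldots$ exists and we perform one alternative step $s \fto_{\R} s'$, the parallel-moves lemma produces a matching sequence on the projected side; non-erasingness ensures that redexes originally living inside variable positions of the rule used at $\pi$ are not discarded, so that the projected derivation remains infinite. Applying the persistence lemma inductively along the finite reduction $t_0 \fto_{\R}^* n$ then forces $n$ to admit an infinite rewrite sequence, which is the desired contradiction since $n \in \NF_{\R}$.

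The main obstacle will be the bookkeeping in the parallel-moves argument. Although non-overlappingness suffices to close each local diamond, the joined side may need several rewrite steps per original step, and in the critical cases where one of $\pi, \tau$ is a strict prefix of the other, the infinite continuation of the derivation lives inside a variable-bound subterm; one must invoke non-erasingness to ensure that at least one copy of this subterm survives each projection, so that the infinite sequence indeed persists. Since left-linearity is \emph{not} assumed, one has to handle the case in which the rule at $\pi$ may duplicate the redex at $\tau$ (in which case persistence is easy) as well as the case in which $\tau$ lies strictly above $\pi$ and the redex at $\tau$ may destroy or reshape the match of $\ell \to r$; here one rewrites the infinite tail inside the preserved variable occurrences instead. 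Making this case analysis precise, while tracking that the projected sequence cannot collapse to a finite one, is the delicate part of the argument.
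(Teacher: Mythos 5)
First, note that the paper does not prove this statement itself: it is recapitulated from \cite{Gramlich1995AbstractRB}, and Gramlich's actual proof route is quite different from yours. It goes through \emph{innermost} rewriting: NO and NE yield that every weakly normalizing term is weakly \emph{innermost} normalizing, non-overlappingness yields that weak innermost normalization implies strong innermost normalization, and then $\SNi \Rightarrow \SNf$ is exactly \Cref{properties-eq-SN-iSN-2}. That detour deliberately avoids the residual-tracking that your direct projection argument requires.

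Your high-level plan (a persistence lemma: non-termination is preserved by arbitrary reduction, then induct along $t_0 \fto_{\R}^* n$) is sound, and persistence is indeed true under NO and NE. The genuine gap is in your proof of the persistence lemma. You claim that ``non-overlappingness yields a parallel-moves property.'' This is false without left-linearity, which the theorem does not assume: the parallel moves lemma needs orthogonality, and NO alone gives only local confluence via the critical pair lemma (recall Huet's non-confluent non-overlapping system $\tf(x,x) \to \ta$, $\tf(x,\tg(x)) \to \tb$, $\tc \to \tg(\tc)$). Local joinability is too weak for your purpose, because joining a peak says nothing about preserving the \emph{infinite} continuation. Concretely, your single-step simulation invariant breaks in the non-linear cases: if the infinite sequence performs a step below one occurrence of a non-linear variable $x$ of the redex $\ell\sigma$ being projected away, the match of $\ell$ is destroyed on the unprojected side, and restoring it requires extra ``equalization'' steps on the other copies of $x\sigma$ --- steps that are not part of the given infinite sequence, so the two sides are no longer one step apart and your induction does not close. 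The standard repair is to strengthen the invariant from ``one step apart'' to ``related by contracting a \emph{set} of equal parallel redexes (residuals) simultaneously,'' and then argue separately that only finitely many consecutive steps of the infinite sequence can be absorbed as residual contractions (here NE is used so that steps inside substituted subterms project to at least one step rather than being erased). This is precisely the simultaneous-rewriting device the present paper introduces in \Cref{sec:simultaneous rewriting} to cope with non-left-linearity, and the fact that the paper's probabilistic analogue \Cref{properties-AST-vs-wAST} has to \emph{add} linearity as a hypothesis is further evidence that the projection argument you sketch is delicate exactly at this point. So: either carry out the residual/multi-step bookkeeping explicitly, or switch to Gramlich's innermost route, which sidesteps it entirely.
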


Finally, we look at the relation between rewrite strategies that use an ordering 
for parallel redexes like leftmost-innermost rewriting compared to innermost rewriting.
It turns out that such an ordering does not interfere with termination at all.

\begin{restatable}[From $\SNli$ to $\SNi$,~\cite{KRISHNARAO2000141}]{thm}{iSNvsliSN}\label{properties-iSN-vs-liSN}
    For all TRSs $\R$ we have:
    \begin{align*}
        \SNi &\Longleftrightarrow \SNli\!
    \end{align*}
\end{restatable}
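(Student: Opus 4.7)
The plan is to prove the two directions separately. The implication $\SNi \Longrightarrow \SNli$ is immediate since $\litor \;\subseteq\; \itor$, so any infinite leftmost-innermost rewrite sequence is already an infinite innermost one. The substantive content is the converse, which I would prove by contraposition: assuming there is an infinite innermost rewrite sequence $\mathcal{D}: t_0 \itor t_1 \itor \ldots$, I construct an infinite leftmost-innermost rewrite sequence starting from $t_0$.

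The central technical tool is a commutation lemma for innermost reductions at parallel positions. If a term $s$ admits two distinct innermost redexes at positions $\lambda$ and $\pi$, then $\lambda$ and $\pi$ must be parallel, since otherwise the inner redex would be a proper subterm of the outer one, contradicting its innermost status. Hence the two reductions commute in the usual diamond sense: if $s \itor t$ at $\pi$ and $s \litor s'$ at $\lambda$, then there is a common reduct $t'$ with $t \litor t'$ at $\lambda$ and $s' \itor t'$ at $\pi$. Moreover, because reductions at parallel positions do not interfere, $\lambda$ remains the leftmost-innermost position of $t$: any new innermost redex created by the $\pi$-reduction lies within $\pi$ and is therefore not to the left of $\lambda$.

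Using this lemma, I transform $\mathcal{D}$ into an infinite leftmost-innermost sequence one step at a time, maintaining the invariant that the current term always has an infinite innermost continuation. Let $\lambda_0$ be the leftmost-innermost position of $t_0$. Because $\lambda_0$ is an innermost redex, no step of $\mathcal{D}$ reduces at a proper prefix of $\lambda_0$ (such a redex would contain $\lambda_0$, contradicting its innermost status) or at a proper extension (no redexes exist strictly below $\lambda_0$), so every step occurs either at $\lambda_0$ itself or at a position parallel to $\lambda_0$. If some step is at $\lambda_0$ first at index $k$, iterated application of the commutation lemma rearranges the prefix into $t_0 \litor s_1 \itor \ldots \itor s_k = t_{k+1}$, and the tail $t_{k+1} \itor t_{k+2} \itor \ldots$ supplies the infinite innermost continuation from $s_1$. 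If no such $k$ exists, then every step is at a position parallel to $\lambda_0$, and a single leftmost-innermost step $t_0 \litor t_0'$ together with stepwise commutation yields an infinite innermost sequence $t_0' \itor t_1' \itor \ldots$. Iterating this one-step construction produces the desired infinite leftmost-innermost sequence, contradicting $\SNli$.

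The main obstacle lies in the careful bookkeeping of the commutation argument: one must verify that the leftmost-innermost redex persists with the same rule instance and substitution under parallel reductions, that positional indices align correctly across arbitrarily many rearrangements of the original infinite sequence, and that the one-step construction genuinely produces a new leftmost-innermost step at every iteration rather than stalling. This is the standard standardization argument for innermost rewriting.
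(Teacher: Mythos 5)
Your proof is correct and takes essentially the same route as the paper: the statement is cited to Krishna Rao, but the paper's own proof of the probabilistic analogue (\Cref{lemma-eq-iAST-liAST}, via the $\Phi$-construction of \Cref{lemma-eq-AST-iAST-1}) is precisely your argument lifted to rewrite sequence trees --- innermost redexes at distinct positions are parallel, so the (leftmost-)innermost step commutes with the preceding steps and can be permuted to the front, and one iterates. One small imprecision worth fixing: new redexes created by the $\pi$-step need not ``lie within $\pi$'' (they can arise at prefixes of $\pi$), but no such position can host an \emph{innermost} redex to the left of $\lambda$, since every proper prefix of $\lambda$ still contains the surviving redex at $\lambda$ and every other affected position lies to the right of $\lambda$, so your leftmostness invariant does hold.
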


The relations between the different termination properties for non-probabilistic TRSs (given in
\Cref{properties-eq-SN-iSN-3},~\ref{properties-SN-vs-WN},
and~\ref{properties-iSN-vs-liSN}) are summarized in \Cref{fig:relations_SN}.

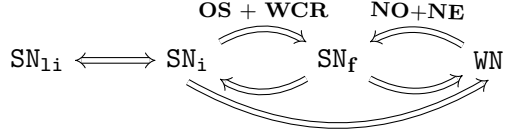
\begin{figure}
	\centering
	\begin{tikzpicture}
        \tikzstyle{adam}=[rectangle,thick,draw=black!100,fill=white!100,minimum size=4mm]
        \tikzstyle{empty}=[rectangle,thick,minimum size=4mm]
        
        \node[empty] at (0, -1)  (ast) {$\SNf$};
        \node[empty] at (-2.8, -1)  (iast) {$\SNi$};
        \node[empty] at (2.8, -1)  (wast) {$\wSNf$};
        \node[empty] at (-5.6, -1)  (liast) {$\SNli$};

        \draw (iast) edge[implies-implies,double equal sign distance] (liast);
        \draw (ast) edge[-implies,double equal sign distance, bend left] (iast);
        \draw (iast) edge[-implies,double equal sign distance, bend left] node[sloped, anchor=center,above] {\scriptsize \textbf{OS} + \textbf{WCR}} (ast);
        \draw (ast) edge[-implies,double equal sign distance, bend right] (wast);
        \draw (wast) edge[-implies,double equal sign distance, bend right] node[sloped, anchor=center,above] {\scriptsize \textbf{NO}+\textbf{NE}} (ast);
        \draw (iast.south) edge[-implies,double equal sign distance, bend right] (wast.south);

    \end{tikzpicture}
    \caption{Relations between the different termination properties for TRSs}\label{fig:relations_SN}
\end{figure}

\subsection{Complexity}\label{subsection-Complexity}

Next, we recapitulate known results regarding the complexity of TRSs under different
rewrite strategies.
There are two standard notions of complexity used in term rewriting: \emph{derivational} and \emph{runtime
complexity} \cite{DBLP:conf/rta/HofbauerL89,DBLP:conf/cade/HirokawaM08}.
For any $M \subseteq  \IN \cup \{\omega\}$, $\sup M$ denotes the least upper bound of $M$,
where $\sup \emptyset = 0$.
For a relation ${\to} \subseteq \TT \times \TT$,
the \emph{derivation height}
$\dheight_{\to}:\TT \to
 \IN \cup \{ \omega\}$
of a term $t \in \TT$
is $\dheight_{\to}(t) = \sup\{m \mid \exists t' \in \TT: t \to^{m} t'\}$, 
i.e., the length of the longest $\to$-rewrite sequence starting with $t$.
Then, the \emph{derivational complexity} 
$\dertime_{\to} : \IN \to \IN \cup \{\omega\}$ of $\to$ is defined as 
$\dertime_{\to}(n) = \sup\{\dheight_\to(t) \mid t \in \TT, |t| \leq n\}$.
Here, the \emph{size} $|t|$ of a term $t$ is the number of occurrences of
function symbols and variables in~$t$, i.e., 
we have $|x| = 1$ for $x \in \VSet$ and $|f(t_1, \ldots, t_k)| = 1 + \sum_{i=1}^{k} |t_i|$. 
So $\dertime_{\to}(n)$ denotes the length of the longest $\to$-rewrite
sequence starting with an arbitrary term of at most size~$n$.
For a TRS $\R$ and a strategy $s \in \IS$, the \emph{derivational complexity} of $\R$ w.r.t.\ $s$
is $\dertime_{\sto_{\R}}$.

In contrast, for \emph{runtime complexity} one restricts the start terms to be \emph{basic}.
For a TRS $\R$, we decompose its signature $\Sigma = \Sigma_{C} \uplus \Sigma_{D}$ such
that $f \in \Sigma_D$ if $f = \rootsym(\ell)$ for some rule $\ell \to r \in \R$. 
The symbols in $\Sigma_{C}$ and $\Sigma_{D}$ are called \emph{constructors} and
\emph{defined symbols}, respectively. 
A term $t \in \TT$ is \emph{basic} if $t = f(t_1, \ldots, t_k)$ such that $f \in \Sigma_D$ 
and $t_i \in \TSet{\Sigma_C}{\VSet}$ for all $1 \leq i \leq k$, and the set of all basic terms is denoted by $\TT_{\BB}$.
The \emph{runtime complexity} of a relation ${\to} \subseteq \TT \times \TT$ w.r.t.\ a TRS $\R$ maps any $n \in \IN$ 
to the length of the longest $\to$-rewrite sequence starting with a basic term $t \in \TT_{\BB}$ with $|t| \leq n$. 
So $\runtime_{\to, \R} : \IN \to \IN \cup \{\omega\}$ is defined as 
$\runtime_{\to, \R}(n) = \sup\{\dheight_\to(t) \mid t \in \TT_{\BB}, |t| \leq n\}$, where
$\TT_{\BB}$ are  the basic terms w.r.t.\ $\R$.
For a TRS $\R$ and a strategy $s \in \IS$, the \emph{runtime complexity} of $\R$ w.r.t.\ $s$
is $\runtime_{\sto_{\R},\R}$, where we often omit the additional index $\R$ for
readability, i.e., $\runtime_{\sto_{\R},\R} = \runtime_{\sto_{\R}}$.

We are not aware of any non-trivial
classes of TRSs where innermost and full derivational complexity coincide,
i.e., where we have $\dertime_{\fto_{\R}}(n) = \dertime_{\ito_{\R}}(n)$ for
 all $n \in \IN$ and
all TRSs $\R$ from that class.
However, for runtime complexity, there exist sufficient criteria for TRSs
$\R$ which imply 
$\runtime_{\fto_{\R}} = \runtime_{\ito_{\R}}$~\cite{frohn_analyzing_nodate}.
To relate innermost and full runtime complexity, it does not suffice to require 
that the rules are non-overlapping, but we also have to make sure that we cannot
duplicate redexes during an evaluation, as shown by the following counterexample from~\cite{frohn_analyzing_nodate}.

\begin{counterexample}\label{example:runtime-vs-iruntime}
    Let $\ts^n(\tz)$ abbreviate $\underbrace{\ts(\ldots \ts(}_{n\,\text{times}}\!\tz)\ldots)$
    and let $\R_4$ consist of the rules:

    \vspace*{-.6cm}
    \begin{minipage}[t]{7cm}
        \begin{align*}
            \tf(\tz,y) &\to y\\
            \tf(\ts(x),y) &\to \tf(x,\tnode(y,y))\!
        \end{align*}
    \end{minipage}
    \begin{minipage}[t]{7cm}
        \begin{align*}
            \tg(x) &\to \tf(x,\ta)\\
            \ta &\to \tb\!
        \end{align*}
    \end{minipage}
    \vspace*{.2cm}

    \noindent
    Note that $\R_4$ is non-overlapping.
    For the basic term $\tg(\ts^n(\tz))$ of size $n + 2$ we have $\tg(\ts^n(\tz)) \fto_{\R_4} \tf(\ts^n(\tz), \ta)$.
    Next, one could apply the second $\tf$-rule repeatedly and 
    obtain a term corresponding to a full binary tree of height $n$ whose
    exponentially many leaves all correspond to the symbol $\ta$.
    Finally, these leaves can all be reduced to $\tb$ in $2^n$ steps,
    hence $\runtime^{\R_4}_{\mathbf{f}}(n) \in \Theta(2^n)$.
    On the other hand, any basic term of size $n$ only leads to innermost rewrite sequences of length $\O(n)$,
    as for example in $\tf(\ts^n(\tz), \ta)$ we have to rewrite the inner $\ta$ before we
    are able to duplicate it.
    Thus, we obtain $\runtime^{\R_4}_{\mathbf{i}}(n) \in \Theta(n)$.
\end{counterexample}

The property that no redexes are duplicated during a rewrite sequence 
that starts with a basic term is called \emph{spareness}~\cite{frohn_analyzing_nodate}.
For a TRS $\R$, a rewrite step using the rule $\ell \to r \in \R$ and the substitution
$\sigma$ is \emph{spare} 
if $\sigma(x) \in \NF_{\R}$ for every $x \in \Var$ that occurs more than once in $r$.
A $\fto_\R$-rewrite sequence is spare if each of its $\fto_\R$-steps is spare.
$\R$ is spare (SP) if each $\fto_\R$-rewrite sequence that starts with a basic term $t \in \TT_{\BB}$ is spare.
In general, it is undecidable whether a TRS is spare.
However, there exist computable sufficient conditions for
spareness, see~\cite{frohn_analyzing_nodate}.

Similar to the results regarding termination, for the equivalence of
     $\runtime_{\fto_{\R}}$ and $\runtime_{\ito_{\R}}$
it again suffices to require
overlay systems instead of non-overlapping TRSs.
However, while we have to require spareness, in contrast to \Cref{properties-eq-SN-iSN-3}
we do not need local confluence.

\begin{restatable}[From $\runtime_{\ito_{\R}}$ to $\runtime_{\fto_{\R}}$,~\cite{frohn_analyzing_nodate}]{thm}{irunvsrun}\label{properties-irun-vs-run}
    If a TRS $\R$ is OS and SP, then:
    \begin{align*}
        \runtime_{\fto_{\R}} &= \runtime_{\ito_{\R}}\!
    \end{align*}
\end{restatable}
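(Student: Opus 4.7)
The inequality $\runtime_{\fto_{\R}}(n) \geq \runtime_{\ito_{\R}}(n)$ is immediate from the inclusion $\itor \subseteq \fto_\R$, so the task reduces to establishing $\dheight_{\fto_\R}(t) \leq \dheight_{\ito_\R}(t)$ for every basic term $t \in \TT_\BB$. The plan is, given an arbitrary full rewrite sequence $t = t_0 \fto_\R t_1 \fto_\R \cdots \fto_\R t_m$ of length $m$ starting at $t$, to construct an innermost rewrite sequence starting at $t$ of length at least $m$, by iteratively eliminating the first non-innermost step.

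The core of the argument is a swap lemma. Suppose $t_i \fto_\R t_{i+1}$ is the first non-innermost step, performed at position $\pi$ using a rule $\ell \to r$ and substitution $\sigma$, so that some proper subterm of the redex $\ell\sigma$ contains an $\R$-redex. Since $\R$ is an overlay system, no proper non-variable subterm of $\ell$ can itself be a redex, so every inner redex of $\ell\sigma$ must lie strictly inside some $\sigma(x)$ with $x \in \Var(\ell)$. I would then first reduce the chosen inner redex in each of the (possibly several) copies of $\sigma(x)$ within $\ell\sigma$ by innermost steps; since these copies are all syntactically identical and each such step remains innermost in the whole term, this yields a prefix of $k \geq 1$ innermost steps $t_i \itor \cdots \itor t_i'$ in which all copies of $\sigma(x)$ have been replaced by a common one-step reduct. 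The outer rule then remains applicable at $\pi$ in $t_i'$ with an updated substitution $\sigma'$, giving $t_i' \fto_\R t_{i+1}'$; the term $t_{i+1}'$ differs from $t_{i+1}$ only in that the inner redex has been pre-reduced in the surviving copies of $\sigma(x)$ in $r\sigma$, so the remaining steps of the original sequence can be simulated from $t_{i+1}'$.

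The main obstacle is the length bookkeeping: the $k \geq 1$ early innermost steps must be compensated by saved steps in the continuation, otherwise iterating would inflate the sequence uncontrollably. This is precisely where spareness is indispensable. If the variable $x$ occurred more than once in $r$, then SP applied to the rewrite sequence from the basic term $t$ through $t_i$ would force $\sigma(x) \in \NF_\R$, contradicting the existence of an inner redex inside $\sigma(x)$; hence $x$ occurs at most once in $r$, so the outer step does not duplicate the reduced inner redex. Combined with the observation that the original continuation from $t_{i+1}$ would still have had to reduce that inner redex at some later step, we conclude that the swapped sequence has length at least $m$. Iterating this transformation until no non-innermost steps remain produces the desired innermost sequence, giving $\dheight_{\ito_\R}(t) \geq m$. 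The most delicate case will be that of non-left-linear rules, where a single inner step breaks the outer pattern match and so $k$ may strictly exceed $1$; this requires coordinating the reductions across the copies of $\sigma(x)$ and carefully arguing that the resulting $k - 1$ extra innermost steps are acceptable because the innermost sequence need only be at least, not exactly, as long as the original full one.
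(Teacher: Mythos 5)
The paper itself does not prove this theorem --- it is recalled from \cite{frohn_analyzing_nodate} --- but your swap-and-push architecture is exactly the technique the paper uses for its probabilistic analogues (\Cref{lemma-eq-AST-iAST-1}, \Cref{lemma-eq-AST-iAST-3}): eliminate the first non-innermost step, use the overlay property to locate every inner redex inside the matched substitution, use spareness to control duplication, and account for lengths. So the overall route is right; two steps, however, do not go through as written.

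First, your length bookkeeping leans on the claim that ``the original continuation from $t_{i+1}$ would still have had to reduce that inner redex at some later step.'' This is false in general: the sequences being counted are arbitrary finite $\fto_{\R}$-sequences, not normalizing ones, so the residual may simply persist forever, or be erased by a later step above it. Luckily the claim is unnecessary: by SP the variable $x$ occurs at most once in $r$, so at most one residual of the pre-reduced redex survives, and (again by SP) it is never duplicated afterwards; hence at most $s \leq 1$ original steps are skipped and the new sequence has length $m + k - s \geq m$ since $k \geq 1$, whether or not the residual is ever touched. Second, and more seriously, your simulation of the continuation from $t_{i+1}'$ silently assumes that the reduct you committed to when pre-reducing the inner redex agrees with what the original sequence later does to it. Since $\R$ is only an overlay system and not non-overlapping, several rules may be applicable at the root of that (innermost) redex, and the original may choose a different one, breaking the simulation. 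The repair is a look-ahead: pick as inner redex an \emph{innermost} redex position $\tau$ inside $\sigma(x)$ (you never fix this minimality, but it is also what makes your $k$ prefix steps innermost), and pre-reduce all copies with exactly the rule the original applies at the residual of $\tau$, if any, and an arbitrary applicable rule otherwise; any rule applied at an innermost position yields an innermost step, so the prefix stays innermost. This is precisely the point where the paper's probabilistic construction instead assumes NO, so that the rule at the pushed redex is \emph{unique} (cf.\ the proof of \Cref{lemma-eq-AST-iAST-1}). Finally, ``iterating until no non-innermost steps remain'' deserves a word: the maximal innermost prefix grows by at least one per round while the length stays $\geq m$, so after at most $m$ rounds the prefix alone is an innermost sequence of length $\geq m$, which is all that is needed.
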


Counterex.\
\ref{example:runtime-vs-iruntime} also works for derivational complexity, 
showing that duplication of redexes needs to be prohibited 
for equivalence of $\dertime_{\fto_{\R}}$ and $\dertime_{\ito_{\R}}$ as well.
However, note that \emph{spareness} requires basic start terms, 
so an analogous theorem for derivational complexity would have to require a stronger
property,
e.g., right-linearity of $\R$.
We will present such a result as a corollary of our analysis of probabilistic rewriting
in \Cref{Relating AST and its Restricted Forms} (\Cref{properties-eq-Der-iDer-1}).
Finally,
the relation between the runtime complexities for 
rewrite strategies
that use an ordering for parallel redexes has not been studied so far.
Here,  our probabilistic analysis will also yield a corresponding
corollary (\Cref{properties-eq-iDer-liDer-complex}).

\section{Probabilistic Term Rewriting}\label{Probabilistic Term Rewriting}

In \Cref{sect:Probabilistic Term Rewriting}, we define \emph{probabilistic
TRSs}~\cite{bournez2005proving,avanzini2020probabilistic,kassinggiesl2023iAST}. 
In \Cref{sect:Probabilistic Notion of Termination}, we re\-ca\-pi\-tu\-late
 notions of termination 
in the probabilistic setting, like almost-sure termination ($\AST$), positive almost-sure termination ($\PAST$),
strong almost-sure termination ($\SAST$), and we introduce a novel definition
of \emph{expected derivational/runtime complexity}.
Then in \Cref{sect:Relating PAST and SAST}, we
present new results on the
relation between $\PAST$ and $\SAST$.

\subsection{Probabilistic Term Rewriting}\label{sect:Probabilistic Term Rewriting}

In contrast to TRSs, a PTRS has finite\footnote{The restriction to
\emph{finite} multi-distributions allows us to simplify the handling of PTRSs in the proofs. However, we conjecture that most of our results also hold for PTRSs with infinite countable multi-distributions.}
multi-distributions on the right-hand sides of its rewrite rules.\footnote{A different form of probabilistic rewrite rules was proposed in \textsf{PMaude}~\cite{pmaude_2006}, where numerical extra variables in right-hand sides of rules are instantiated according to a probability distribution.}
A finite \emph{multi-distribution} $\mu$ on a set $A \neq \emptyset$ is a finite multiset of pairs $(p:a)$, where $0 < p \leq 1$ is a probability and $a \in A$, such that $\sum _{(p:a) \in \mu}p = 1$.
$\FDist(A)$ is the set of all finite multi-distributions on $A$.
For $\mu\in\FDist(A)$, its \emph{support} is the multiset $\Supp(\mu)=\{a \mid (p:a)\in\mu$ for some $p\}$.
A \emph{probabilistic rewrite rule} is a pair $(\ell \to \mu) \in \TT \times \FDist(\TT)$ such that $\ell \not\in \VSet$ and $\VSet(r) \subseteq \VSet(\ell)$ for every $r \in \Supp(\mu)$.
A \emph{probabilistic TRS} (PTRS) is a (possibly infinite) countable set $\PP$ of probabilistic rewrite rules.
Similar to TRSs, the PTRS $\PP$ induces a \emph{rewrite relation} ${\fto_{\PP}} \subseteq \TT \times \FDist(\TT)$ where $s \fto_{\PP} \{p_1:t_1, \ldots, p_k:t_k\}$  if there is a position $\pi$, a rule $\ell \to \{p_1:r_1, \ldots, p_k:r_k\} \in \PP$, and a substitution $\sigma$ such that $s|_{\pi}=\ell\sigma$ and $t_j = s[r_j\sigma]_{\pi}$ for all $1 \leq j \leq k$.
We call $s \fto_{\PP} \mu$ an \emph{innermost} rewrite step (denoted $s \ito_{\PP} \mu$)
if all proper subterms of the used redex $\ell\sigma$ are in normal form w.r.t.\ $\PP$.
We have $s \lito_{\PP} \mu$ if the rewrite step $s \ito_{\PP} \mu$ at position $\pi$ is
leftmost (i.e., there is no redex at a position $\tau$ with $\tau \prec \pi$).
For example, the PTRS $\PP_{\trw}$ with the only rule $\tg \to
\{\nicefrac{1}{2}:\tc(\tg,\tg), \; \nicefrac{1}{2}:\tz\}$ corresponds to a symmetric
random walk on the number of $\tg$-symbols in a term.

Many properties of TRSs from \Cref{Preliminaries} can be lifted to PTRSs in a
straightforward way:
A PTRS $\PP$ is right-linear (non-erasing) iff the TRS
$\{\ell \to r \mid \ell \to \mu \in \PP, r \in \Supp(\mu)\}$
has the respective property.
Moreover, all properties that just consider the left-hand sides, e.g., normal forms, left-linearity,
being
non-overlapping, orthogonality,  and being an overlay system,
can be lifted to PTRSs directly as well, since their rules again only have a single
left-hand side.

\subsection{Probabilistic Notions of Termination}\label{sect:Probabilistic Notion of Termination}

Next, we introduce the different notions of probabilistic termination.
In this section, we regard an arbitrary \emph{probabilistic (term) relation} $\to \; \subseteq \; \TT \times \FDist(\TT)$
and let $\NF_{\to}$ again be the set of all normal forms for $\to$.
As in~\cite{diazcaro_confluence_2018,avanzini2020probabilistic,Faggian2019ProbabilisticRN,kassinggiesl2023iAST}, we
\emph{lift} $\to \; \subseteq \TT \times \FDist(\TT)$ to a rewrite relation $\liftto \; \subseteq \FDist(\TT) \times \FDist(\TT)$ between multi-distributions in order to track all probabilistic rewrite sequences (up to non-determinism) at once.
For any $0 < p \leq 1$ and any $\mu \in \FDist(\TT)$, let $p \cdot \mu = \{ (p\cdot q:a) \mid (q:a) \in \mu \}$.

\begin{defi}[Lifting]\label{def:lifting}
    The \emph{lifting} ${\liftto} \subseteq \FDist(\TT) \times \FDist(\TT)$ of a relation ${\to} \subseteq \TT \times \FDist(\TT)$ is the smallest relation with:
	\begin{itemize}
	    \item[$\bullet$] If $t \in \TT$ is in normal form w.r.t.~$\rightarrow$, then $\{1: t\} \liftto \{1:t\}$.
		\item[$\bullet$] If $t \to \mu$, then $\{1: t\} \liftto \mu$.
		\item[$\bullet$] If  for all $1 \leq j \leq k$ there are $\mu_j, \nu_j \in
                  \FDist(\TT)$ with $\mu_j \liftto \nu_j$ and $0 < p_j
                  \leq 1$ with $\sum_{1 \leq j \leq k} p_j = 1$, then $\bigcup_{1 \leq j
                    \leq k} p_j \cdot \mu_j \liftto \bigcup_{1 \leq j \leq k} p_j \cdot
                  \nu_j$. 
	\end{itemize}
\end{defi}
For a PTRS $\PP$, we write $\fliftto_\PP$, $\iliftto_{\PP}$, and $\liliftto_{\PP}$ for
the liftings of $\fto_{\PP}$, $\ito_{\PP}$, and $\lito_{\PP}$, respectively.

\begin{exa}\label{example:PTRS-random-walk-lifting-sequence}
  We obtain the following $\liftArr{\mathbf{f}}{}{\PP_{\trw}}$-rewrite sequence (which is also a
  $\liftArr{\mathbf{i}}{}{\PP_{\trw}}$-rewrite sequence, but not a $\liftArr{\mathbf{li}}{}{\PP_{\trw}}$-rewrite sequence).
    {\small
	\[
        \begin{array}{ll}
            &\{1:\tg\}\\
	        \liftArr{\mathbf{f}}{}{\PP_{\trw}}&\{\nicefrac{1}{2}:\tc(\tg,\tg), \nicefrac{1}{2}:\tz\}\\
            \liftArr{\mathbf{f}}{}{\PP_{\trw}}& \{\nicefrac{1}{4}:\tc(\tc(\tg,\tg),\tg),\nicefrac{1}{4}:\tc(\tz,\tg), \nicefrac{1}{2}:\tz \}\\
            \liftArr{\mathbf{f}}{}{\PP_{\trw}}& \{ \nicefrac{1}{8}:\tc(\tc(\tg,\tg),\tc(\tg,\tg)), \nicefrac{1}{8}:\tc(\tc(\tg,\tg),\tz), \nicefrac{1}{8}:\tc(\tz,\tc(\tg,\tg)), \nicefrac{1}{8}:\tc(\tz,\tz), \nicefrac{1}{2}:\tz \}
        \end{array}
    \]}
\end{exa}

Another way to track all possible rewrite sequences with their corresponding probabilities
is to lift $\to$ to \emph{rewrite sequence trees (RSTs)}~\cite{FLOPS2024}.
The nodes $v$ of an $\to$-RST are labeled by pairs $(p_v:t_v)$ of a
probability $p_v$ and a term $t_v$, where
the root is always labeled with the probability $1$.
For each node $v$ with the successors $w_1, \ldots, w_k$, the edge relation represents a
step with the relation $\to$,
i.e., $t_v \to \{\tfrac{p_{w_1}}{p_v}:t_{w_1}, \ldots, \tfrac{p_{w_k}}{p_v}:t_{w_k}\}$.
For an $\to$-RST $\F{T}$,
let $N^{\F{T}}$ denote the set of nodes and $\ctleaf^{\F{T}}$ denote the set of leaves.
We say that $\F{T}$ is \emph{fully evaluated} if for every $v \in \ctleaf^{\F{T}}$
the corresponding term $t_v$ is a normal form w.r.t.\ $\to$, i.e., $t_v \in \NF_{\to}$.
In \Cref{fig:RST_example} one can see the $\fto_{\PP_{\trw}}$-RST for the $\liftArr{\mathbf{f}}{}{\PP_{\trw}}$-rewrite sequence from \Cref{example:PTRS-random-walk-lifting-sequence}.
Note that the normal forms remain in each multi-distribution of a
$\liftArr{\mathbf{f}}{}{\PP_{\trw}}$-rewrite sequence,  
but they are leaves of the corresponding $\fto_{\PP_{\trw}}$-RST.\@

\begin{figure}
	\centering
    \begin{tikzpicture}
        \tikzstyle{adam}=[thick,draw=black!100,fill=white!100,minimum size=4mm, shape=rectangle split, rectangle split parts=2,rectangle split horizontal]
        \tikzstyle{empty}=[rectangle,thick,minimum size=4mm]
        
        \node[empty] at (-6, 0)  (m1) {$\mu_0:$};
        \node[adam] at (0, 0)  (a) {$1$\nodepart{two}$\tg$};

        \node[empty] at (-6, -1)  (m2) {$\mu_1:$};
        \node[adam] at (-2, -1)  (b) {$\nicefrac{1}{2}$\nodepart{two}$\tc(\tg,\tg)$};
        \node[adam,label=below:{$\quad \NF_{\PP_{\trw}}$}] at (2, -1)  (c) {$\nicefrac{1}{2}$\nodepart{two}$\tz$};

        \node[empty] at (-6, -2)  (m3) {$\mu_2:$};
        \node[adam] at (-4, -2)  (d) {$\nicefrac{1}{4}$\nodepart{two}$\tc(\tc(\tg,\tg),\tg)$};
        \node[adam] at (0, -2)  (e) {$\nicefrac{1}{4}$\nodepart{two}$\tc(\tz,\tg)$};

        \node[empty] at (-5, -3)  (f) {$\ldots$};
        \node[empty] at (-3, -3)  (g) {$\ldots$};
        \node[empty] at (-1, -3)  (h) {$\ldots$};
        \node[empty] at (1, -3)  (i) {$\ldots$};
        
        \draw (a) edge[->] (b);
        \draw (a) edge[->] (c);
        \draw (b) edge[->] (d);
        \draw (b) edge[->] (e);
        \draw (d) edge[->] (f);
        \draw (d) edge[->] (g);
        \draw (e) edge[->] (h);
        \draw (e) edge[->] (i);
    \end{tikzpicture}
	  \caption{Corresponding RST for the $\fliftto_{\PP_{\trw}}$-rewrite sequence in \Cref{example:PTRS-random-walk-lifting-sequence}.}\label{fig:RST_example}
\end{figure}

To express the concept of almost-sure termination, one has to determine the probability for normal forms in a multi-distribution.

\begin{defi}[$|\mu|_{\to}$, $|\mu|_{\PP}$]\label{def:prob-abs-value}
    Let $\mu \in \FDist(\TT)$.
    For a relation $\to \; \subseteq \TT \times \FDist(\TT)$, let $|\mu|_{\to} = \sum_{(p:t) \in \mu, t \in \NF_{\to}} p$,
    and for a PTRS $\PP$, let $|\mu|_{\PP} = \sum_{(p:t) \in \mu, t \in \NF_{\PP}} p$.
\end{defi}

\begin{exa}\label{example:prob-abs-value}
    Consider $\{ \nicefrac{1}{8}:\tc(\tc(\tg,\tg),\tc(\tg,\tg)), \nicefrac{1}{8}:\tc(\tc(\tg,\tg),\tz), \nicefrac{1}{8}:\tc(\tz,\tc(\tg,\tg)),\linebreak \nicefrac{1}{8}:\tc(\tz,\tz), \nicefrac{1}{2}:\tz \} = \mu$ from \Cref{example:PTRS-random-walk-lifting-sequence}.
    Then $|\mu|_{\PP_{\trw}} = \nicefrac{1}{8} + \nicefrac{1}{2}  = \nicefrac{5}{8}$,
    since $\tc(\tz,\tz)$ and $\tz$ are both normal forms w.r.t.\ $\PP_{\trw}$.
\end{exa}

\begin{defi}[Almost-Sure Termination~\cite{avanzini2020probabilistic}]\label{def:ptrs-innermost-term-innermost-AST}
    Let $\to \; \subseteq \TT \times \FDist(\TT)$
    and $\vec{\mu} = (\mu_n)_{n \in \IN}$ be an infinite
    $\liftto$-rewrite
    sequence, i.e., $\mu_n \liftto \mu_{n+1}$ for all $n \in \IN$. We say that
    $\vec{\mu}$ \emph{converges with probability} $\lim\limits_{n \to \infty}|\mu_n|_{\to}$.
    The relation $\to$ is \emph{almost-surely terminating} (denoted $\AST[\to]$) 
    if $\lim\limits_{n \to \infty} |\mu_n|_{\to} = 1$
    holds for every infinite $\liftto$-rewrite sequence $(\mu_n)_{n \in \IN}$.   
    We say that $\to$ is \emph{weakly $\AST$} (denoted $\wAST[\to]$) if for every term
    $t$ there exists an infinite $\liftto$-rewrite sequence $(\mu_n)_{n \in \IN}$ 
    with $\lim\limits_{n \to \infty} |\mu_n|_{\to} = 1$ and $\mu_0 = \{1:t\}$.
\end{defi}

For the definition of $\wAST$ recall that
by \Cref{def:lifting} every term (even normal forms) 
can start infinite $\liftto$-rewrite sequences,
as we keep normal forms in $\liftto$-steps. 

Equivalently, one can also define $\AST[\to]$ (and $\wAST[\to]$) via $\to$-RSTs.
For any $\to$-RST $\F{T}$ we define its 
\emph{convergence probability} $|\F{T}| = \sum_{v \in \ctleaf^{\F{T}}} p_v$.
Then $\AST[\to]$ holds iff for all $\to$-RSTs $\F{T}$ we have $|\F{T}| = 1$.
Moreover, $\wAST[\to]$ holds iff for every term $t$ there exists a fully evaluated $\to$-RST
$\F{T}$ whose root is labeled with $(1:t)$ such that $|\F{T}| = 1$.

\begin{exa}\label{example:rw-is-AST} 
    For every infinite extension $(\mu_n)_{n \in \IN}$ of the $\liftArr{\mathbf{f}}{}{\PP_{\trw}}$-rewrite sequence in
    \cref{example:PTRS-random-walk-lifting-sequence}, we have $\lim\limits_{n \to \infty}
    |\mu_n|_{\PP_{\trw}} = 1$. Indeed, we have $\AST[\fto_{\PP_{\trw}}]$ 
    and thus also $\AST[\ito_{\PP_{\trw}}]$, $\AST[\lito_{\PP_{\trw}}]$, and $\wAST[\fto_{\PP_{\trw}}]$.
\end{exa}

Next, we define \emph{positive} almost-sure termination, which 
considers the \emph{expected derivation length} $\edl(\vec{\mu})$ of a rewrite sequence $\vec{\mu}$,
i.e., the expected number of steps until one reaches a normal form.
For positive almost-sure termination, we require that the expected derivation length 
of every possible rewrite sequence is finite.
In the following definition, $(1 - |\mu_n|_{\to})$ is the probability of terms 
that are \emph{not} in normal form w.r.t.\ $\to$ after the $n$-th step.

\begin{defi}[Positive Almost-Sure Termination, $\edl$~\cite{avanzini2020probabilistic}]\label{def:expected-derivation-length-PAST}
    Let $\to \; \subseteq \TT \times \FDist(\TT)$ and $\vec{\mu} = (\mu_n)_{n \in \IN}$ be an infinite 
    $\liftto$-rewrite sequence.
    By $\edl(\vec{\mu}) = \sum_{n = 0}^{\infty} (1 - |\mu_n|_{\to})$
    we denote the \emph{expected derivation length} of $\vec{\mu}$.
    The relation $\to$ is \emph{positively almost-surely terminating} (denoted $\PAST[\to]$)
    if $\edl(\vec{\mu})$ is finite for every infinite 
    $\liftto$-rewrite sequence $\vec{\mu} = (\mu_n)_{n \in \IN}$
    starting with a single term, i.e., $\mu_0 = \{1:t\}$ with $t \in \TT$.
    We say that $\to$ is \emph{weakly $\PAST$} (denoted $\wPAST[\to]$) if for every
    term $t$ there exists an infinite $\liftto$-rewrite
    sequence $\vec{\mu} = (\mu_n)_{n \in \IN}$ such that $\edl(\vec{\mu})$ is finite and $\mu_0 = \{1:t\}$.
\end{defi}

In terms of $\to$-RSTs, we define the \emph{expected derivation length} of an
$\to$-RST $\F{T}$ to be $\edl(\F{T}) = \sum_{v \in N^{\F{T}} \setminus \ctleaf^{\F{T}}} p_v$.
Then, we have $\PAST[\to]$ iff $\edl(\F{T})$ is finite
for every $\to$-RST $\F{T}$.
Similarly, we have $\wPAST[\to]$ iff for every term $t$ there exists a fully evaluated $\to$-RST $\F{T}$
whose root is labeled with $(1:t)$ such that $\edl(\F{T})$ is finite.

\begin{rem}
    For every $\liftto$-rewrite sequence $\vec{\mu}$ that converges with probability 1, 
    we also have $\edl(\vec{\mu}) = \sum_{n = 0}^{\infty} (1 - |\mu_n|_{\to})
    = \sum_{n = 1}^{\infty} n \cdot (|\mu_n|_{\to} - |\mu_{n-1}|_{\to})$, 
    where $|\mu_n|_{\to} - |\mu_{n-1}|_{\to}$ denotes the probability that we reach a normal form in the $n$-th step.
    This is due to the correspondence between a probability mass
    function $f:\IN \to [0,1]$ and its distribution function $F:\IN \to
    [0,1]$ w.r.t.\ the expected value, namely $\IE(f) = \sum_{n = 1}^{\infty} n \cdot f(n)
    = \sum_{n = 1}^{\infty} (1 - F(n))$.
    With $f_{\vec{\mu}}(0) = |\mu_0|_{\to}$ and $f_{\vec{\mu}}(n) = |\mu_n|_{\to} -
    |\mu_{n-1}|_{\to}$ for all $n > 0$, we get $F_{\vec{\mu}}(n) = |\mu_n|_{\to}$, and
    hence we obtain the above equation for $\edl(\vec{\mu})$.    
    Note that $f_{\vec{\mu}}$ is only a probability mass function if $\lim_{n \to \infty} |\mu_{n}|_{\to} = 1$,
    because then $\sum_{n=0}^{\infty} f_{\vec{\mu}}(n) 
    = |\mu_0|_{\to} + \sum_{n=1}^{\infty} (|\mu_n|_{\to} - |\mu_{n-1}|_{\to})
    = \lim_{n \to \infty} |\mu_{n}|_{\to} = 1$.
\end{rem}

It is well known that $\PAST[\to]$ implies $\AST[\to]$, but not vice versa.

\begin{exa}
    For every infinite extension $\vec{\mu} = (\mu_n)_{n \in \IN}$ of the $\liftArr{\mathbf{f}}{}{\PP_{\trw}}$-rewrite sequence in
    \cref{example:PTRS-random-walk-lifting-sequence}, the expected derivation length
    $\edl(\vec{\mu})$ is infinite, hence $\wPAST[\fto_{\PP_{\trw}}]$ does not hold, and
        $\PAST[\fto_{\PP_{\trw}}]$, $\PAST[\ito_{\PP_{\trw}}]$, or
    $\PAST[\lito_{\PP_{\trw}}]$ do not hold either.
\end{exa}

Next, we define \emph{strong almost-sure termination}~\cite{fuTerminationNondeterministicProbabilistic2019,avanzini2020probabilistic},
which is even stricter than $\PAST$ in  case of non-determinism.
It requires a finite bound on the expected derivation lengths of all rewrite
sequences with the same start term.
For a term $t \in \TT$, the \emph{expected derivation height} $\edh_\to(t)$ 
considers all $\liftArr{}{}{}$-rewrite sequences that start with $\{1:t\}$.

\begin{defi}[Strong Almost-Sure Termination, $\edh$~\cite{avanzini2020probabilistic}]\label{def:expected-derivation-height-SAST}
    For every relation $\to \; \subseteq \TT \times \FDist(\TT)$ we define 
    the \emph{expected derivation height} of a term $t \in \TT$ by 
    $\edh_\to(t) = \sup\{\edl(\vec{\mu}) \mid \vec{\mu}= (\mu_n)_{n \in \IN}
    \text{ is a
    $\liftArr{}{}{}$-rewrite sequence with } \mu_0 = \{1:t\}\}$.    
    We say that $\to$ is \emph{strongly almost-surely terminating} ($\SAST[\to]$)
    if $\edh_\to(t)$ is finite for all $t \in \TT$.
\end{defi}

In terms of $\to$-RSTs, we have $\SAST[\to]$ iff 
$\sup\{\edl(\F{T}) \mid \F{T}$ is an $\to$-RST whose root is labeled with $(1:t)\}$
is finite for all $t \in \TT$.

Note that in contrast to $\wAST$ and $\wPAST$, we did not define any notion of \emph{weak} $\SAST$. The
reason is that the definition of weak forms of termination  always only requires the
\emph{existence} of some suitable rewrite sequence. But the definition of strong almost-sure termination
imposes a requirement on \emph{all} rewrite sequences. Thus, it is not clear how to
obtain a useful definition for ``weak $\SAST$'' that differs from $\wPAST$.

$\PAST[\to]$ and $\SAST[\to]$ are already defined 
in terms of the expected derivation length and height,
but they consider arbitrary start terms of arbitrary size.
As in the non-probabilistic setting, one can also regard
a function that maps the size of the start term to
the expected derivation length,
leading to the novel notion of the \emph{expected derivational complexity}.

\begin{defi}[Expected Derivational Complexity, $\edertime_{\to}$]\label{def:expected-derivational-complexity}
    For a relation $\to \; \subseteq \TT \times \FDist(\TT)$, 
    we define its \emph{expected derivational complexity} 
    $\edertime_{\to} : \IN \to \IN \cup \{\omega\}$ as
    $\edertime_{\to}(n) = \sup\{\edh_\to(t) \mid t \in \TT, |t| \leq n\}$.
\end{defi}

Note that as in the non-probabilistic setting,
this definition uses the expected derivation height of a term,
and not the expected derivation length of a rewrite sequence.
Hence, the expected derivational complexity $\edertime_{\to}$ relates to
$\SAST[\to]$ more than to $\PAST[\to]$.
Indeed, we may have both $\PAST[\to]$ and $\edertime_{\to} \in
\Theta(\omega)$, see 
Counterex.\ \ref{example:Past-vs-Sast-finite} in the next section.
Here, $\edertime_{\to} \in \Theta(\omega)$ 
means that there is some $n \in \IN$ such that $\edertime_{\to}(n) = \omega$,
and hence, $\edertime_{\to}(n') = \omega$ for all $n' \geq n$.
Furthermore, we have the following easy observation regarding $\edertime_{\to}$ and $\SAST[\to]$,
where $\edertime_{\to} \in o(\omega)$ means that for all $n \in \IN$ we have $\edertime_{\to}(n) < \omega$.

\begin{lem}[Relation between $\edertime$ and $\SAST$]\label{lemma:SAST-via-eDer}
    Let $\to \; \subseteq \TT \times \FDist(\TT)$ where $\TT =\TSet{\Sigma}{\VSet}$ 
    for a finite signature $\Sigma$.
    Then $\edertime_{\to} \in o(\omega)$ iff $\SAST[\to]$.
\end{lem}

\begin{proof}
    If we do not have $\SAST[\to]$, then $\edertime_{\to} \in \Theta(\omega)$
    follows directly from the definition.
    On the other hand, if $\Sigma$ is finite, then $\edertime_{\to}(n) = \sup\{\edh_\to(t) \mid t \in \TT, |t| \leq n\}$ where $\{t \in \TT \mid |t| \leq n\}$ is
    finite (up to variable renamings), so that the supremum
    is equal to $\edh_\to(t)$ for a term $t \in \TT$ with $|t| \leq n$.
    Hence, if $\edertime_{\to} \in \Theta(\omega)$, then there exists some
    $n \in \IN$ and a term $t \in \TT$ such that $|t| \leq n$ and $\edh_{\to}(t) =
    \omega$, such that $\SAST[\to]$ does not hold.
\end{proof}

Finally, for expected runtime complexity we additionally require basic start terms again.

\begin{defi}[Expected Runtime Complexity, $\eruntime_{\to,\PP}$]\label{def:expected-runtime-complexity}
    For a relation $\to \; \subseteq \TT \times \FDist(\TT)$, the \emph{expected runtime complexity} $\eruntime_{\to, \PP} : \IN \to \IN \cup \{\omega\}$ w.r.t.\ a PTRS $\PP$ is 
    $\eruntime_{\to, \PP}(n) = \sup\{\edh_\to(t) \mid t \in \TT_{\BB}, |t| \leq n\}$,
    where
$\TT_{\BB}$ denotes the basic terms w.r.t.\ $\PP$.
\end{defi}

When considering the expected runtime complexity
for a rewrite relation like $\fto_{\PP}$, we again omit the additional index $\PP$, 
i.e., $\eruntime_{\fto_{\PP}, \PP} = \eruntime_{\fto_{\PP}}$.

\subsection{Relating Positive and Strong Almost-Sure Termination}\label{sect:Relating PAST and SAST}

We now present novel results on the
relation between $\PASTs$ and $\SASTs$.
Similar to the implication from $\PAST[\to]$ to $\AST[\to]$,
it is well known that $\SAST[\to]$ implies $\PAST[\to]$,
and this implication is again strict~\cite{avanzini2020probabilistic}.
While the counterexample for $\SASTf = \PASTf$
in~\cite{avanzini2020probabilistic} uses infinitely many rules, the following new example
shows that even for PTRSs with \emph{finitely} many rules, 
$\SASTf$ and $\PASTf$ are not equivalent.

\begin{counterexample}\label{example:Past-vs-Sast-finite}
    Consider the PTRS $\PP_{\mathsf{unary}}$ with the rules

    \vspace*{-.3cm}
    \begin{minipage}[t]{7cm}
        \begin{align*}
            \tf(x) &\to \{ \nicefrac{1}{2}:\tf(\ts(x)), \nicefrac{1}{2}:\tb\}\\
            \tf(x) &\to \{1:\tg(x)\}\\
            \tg(\tz) &\to \{1:\tz\}\\            
            \tg(\ts(x)) &\to \{1:\tg_1(\ts(x))\}\\
            \tg_1(\ts(x)) &\to \{1:\th(\ts(x))\}
        \end{align*}
    \end{minipage}
    \begin{minipage}[t]{7cm}
        \begin{align*}
            \th(\ts(x)) &\to \{1:\tq(\th(x))\}\\
            \th(\tz) &\to \{1:\ta(\ta(\ta(\ta(\tz))))\}\\
            \tq(\ta(x)) &\to \{1: \tq_1(x)\} \\
            \tq_1(x) &\to \{1: \tq_2(x)\} \\
            \tq_2(x) & \to  \{1:\ta(\ta(\ta(\ta(\tq(x)))))\}
        \end{align*}
    \end{minipage}
    \vspace*{.2cm}

    \noindent
    and the following rewrite sequence:
    \[
    \begin{array}{lllll}
    \tg(\ts^n(\tz)) &\to_{\PP_{\mathsf{unary}}}^2&    \th(\ts^n(\tz))  &\to_{\PP_{\mathsf{unary}}}^{n} & \tq^n(\th(\tz))\\
                    &\to_{\PP_{\mathsf{unary}}}         &\tq^n(\ta^4(\tz))             &\to_{\PP_{\mathsf{unary}}}^{3 * 4} & \tq^{n-1}(\ta^{4^2}(\tq(\tz)))\\
                    & \to_{\PP_{\mathsf{unary}}}^{3 * 4^2} &\tq^{n-2}(\ta^{4^3}(\tq^2(\tz)))
    &\to_{\PP_{\mathsf{unary}}}^{3 * 4^3}& \ldots\\
    & \to_{\PP_{\mathsf{unary}}}^{3 * 4^{n-1}} &\tq(\ta^{4^n}(\tq^{n-1}(\tz))) &
    \to_{\PP_{\mathsf{unary}}}^{3 * 4^{n}} &\ta^{4^{n+1}}(\tq^{n}(\tz))
    \end{array}
    \]
    To ease readability, here we wrote $\to_{\PP_{\mathsf{unary}}}$ instead of
    $\fliftto_{\PP_{\mathsf{unary}}}$ and we 
    omitted the multi-distributions, as all the used rules have trivial probabilities, 
    i.e., they are of the form $\ell \to \{1:r\}$ for some $\ell, r \in \TT$.
    Thus, for every $n \in \IN$ with $n > 0$, the term $\tg(\ts^n(\tz))$ has an expected
    derivation height of
    $2 + n + 1 + 3 * \sum_{i=1}^n 4^i = n + 3 * \sum_{i=0}^n 4^i  = n + 3 * \tfrac{4^{n+1} - 1}{3} = 4^{n+1} - 1 + n$.

    Next, consider the possible derivations of $\tf(\tz)$. 
    If we never use the rule from $\tf$ to $\tg$, then the expected derivation length is 
    $\tfrac{1}{2} \cdot 1 + \tfrac{1}{4} \cdot 2 + \tfrac{1}{8} \cdot 3 + \ldots =
    \sum_{i=1}^\infty \tfrac{i}{2^i} = 2$
    and if we never use the probabilistic $\tf$-rule, then the derivation length is $2$ as well.
    Otherwise, if we use the probabilistic
    $\tf$-rule in the first $k \in \IN$ steps with $k > 0$ and the rule from $\tf$ to $\tg$ in the $(k+1)$-th step, then 
    the expected derivation length is
    $(\sum_{i=1}^k \tfrac{i}{2^i}) +
    \tfrac{1}{2^k} +
    \tfrac{4^{k+1}-1+k}{2^k}
    = (\sum_{i=1}^k \tfrac{i}{2^i}) + 2^{k+2} + \frac{k}{2^k}$.
    Thus, for every rewrite sequence starting with $\tf(\tz)$, the expected derivation length is finite. 
    However, since $k$ can be any number, for any $k >0$, there is a rewrite sequence starting with 
    $\tf(\tz)$ whose expected derivation 
    length is $(\sum_{i=1}^k \tfrac{i}{2^i}) +
    2^{k+2} + \frac{k}{2^k} \geq 2^{k+2}$. 
    In other words, although every rewrite sequence starting with 
    $\tf(\tz)$ has finite expected derivation length, the supremum over the lengths
    for all these rewrite sequences is infinite.    
    Thus, we do not have $\SAST[\fto_{\PP_{\mathsf{unary}}}]$,
    as $\tf(\tz)$ has infinite expected derivation height, 
    but $\PAST[\fto_{\PP_{\mathsf{unary}}}]$ holds,
    as every rewrite sequence starting with 
    $\tf(\tz)$ has finite expected derivation length
    and a similar argument holds for every other start term.
\end{counterexample}

The PTRS in Counterex.\ \ref{example:Past-vs-Sast-finite} is of a very specific form:
Its signature contains only unary symbols and constants, 
and the probabilistic rule $\tf(x) \to \{\nicefrac{1}{2}:\tf(\ts(x)),
\nicefrac{1}{2}:\tb\}$ is erasing.
If we remove one of these properties, then
$\PAST[\fto_{\PP_{\mathsf{unary}}}]$ does not hold either.

\begin{exa}\label{example:Past-binary-function}
    Reconsider $\PP_{\mathsf{unary}}$ from Counterex.\ \ref{example:Past-vs-Sast-finite}.
    If we extend the signature by a binary symbol $\tc$, then
    $\PAST[\fto_{\PP_{\mathsf{unary}}}]$ does not hold anymore, as we can start with the term $\tc(\tf(\tz),
    \tf(\tz))$ and consider the following rewrite \pagebreak[3] sequence.

\vspace*{-.15cm}
    
    {\small
	  \[
        \begin{array}{ll}
            &\{1:\tc(\tf(\tz), \tf(\tz))\}\\
	        \liftArr{\mathbf{f}}{}{\PP_{\mathsf{unary}}}&\{\nicefrac{1}{2}:\underline{\textcolor{red}{\tc(\tb, \tf(\tz))}}, \nicefrac{1}{2}:\tc(\tf(\ts(\tz)), \tf(\tz))\}\\
	        \liftArr{\mathbf{f}}{}{\PP_{\mathsf{unary}}}&\{\hspace*{1.2cm}\ldots\hspace*{1.2cm}, \nicefrac{1}{4}:\underline{\textcolor{red}{\tc(\tb, \tf(\tz))}}, \nicefrac{1}{4}:\tc(\tf(\ts^2(\tz)), \tf(\tz))\}\\
	        \liftArr{\mathbf{f}}{}{\PP_{\mathsf{unary}}}&\{\hspace*{1.2cm}\ldots\hspace*{1.2cm}, \hspace*{1.2cm}\ldots\hspace*{1.2cm}, \nicefrac{1}{8}:\underline{\textcolor{red}{\tc(\tb, \tf(\tz))}}, \nicefrac{1}{8}:\tc(\tf(\ts^3(\tz)), \tf(\tz))\}\\
        \end{array}
    \]}

    \noindent
    Here, we use the term $\tf(\tz)$ in $\tc$'s first argument
    to create infinitely many copies 
    of the term $\tf(\tz)$ in $\tc$'s second argument
    using the probabilistic $\tf$-rule. 
    The underlined terms are not yet normal forms, as they contain the subterm $\tf(\tz)$, and will be evaluated further. 
    As in Counterex.\ \ref{example:Past-vs-Sast-finite}, for each $k > 0$, 
    there is a rewrite sequence starting with 
    $\tf(\tz)$ whose expected derivation length is at least $2^{k+2}$.
    We can choose $k = 1$ for the first underlined term, $k = 2$ for the second, etc.,
    leading to a final expected derivation length of at least 
    $\sum_{i=1}^\infty \tfrac{1}{2^i} \cdot 2^{i+2} = \sum_{i=1}^\infty 4$, which diverges to infinity.
    Hence, $\PAST[\fto_{\PP_{\mathsf{unary}}}]$  does not hold
    over the extended signature.
    In particular, this shows that in contrast to ordinary termination for non-probabilistic TRSs, 
    $\PAST$ is not preserved under extensions of the signature.

    Similarly, if we consider $\PP_{\mathsf{unary}}^{'}$ with the non-erasing rule
    $\tf(x) \to \{\nicefrac{1}{2}:\tf(\ts(x)), \nicefrac{1}{2}:\tb(x)\}$ using
    a unary function symbol $\tb$
    instead of $\tf(x) \to \{\nicefrac{1}{2}:\tf(\ts(x)), \nicefrac{1}{2}:\tb\}$, then we
    obtain
    the following rewrite sequence:
  {\small
	\[
        \begin{array}{ll}
            &\{1:\tf(\tf(\tz))\}\\
	        \liftArr{\mathbf{f}}{}{\PP_{\mathsf{unary}}}&\{\nicefrac{1}{2}:\underline{\textcolor{red}{\tb(\tf(\tz))}}, \nicefrac{1}{2}:\tf(\ts(\tf(\tz)))\}\\
	        \liftArr{\mathbf{f}}{}{\PP_{\mathsf{unary}}}&\{\hspace*{1cm}\ldots\hspace*{1cm}, \nicefrac{1}{4}:\underline{\textcolor{red}{\tb(\ts(\tf(\tz)))}}, \nicefrac{1}{4}:\tf(\ts^2(\tf(\tz)))\}\\
	        \liftArr{\mathbf{f}}{}{\PP_{\mathsf{unary}}}&\{\hspace*{1cm}\ldots\hspace*{1cm}, \hspace*{1.2cm}\ldots\hspace*{1.2cm}, \nicefrac{1}{8}:\underline{\textcolor{red}{\tb(\ts^2(\tf(\tz)))}}, \nicefrac{1}{8}:\tf(\ts^3(\tf(\tz)))\}\\
        \end{array}
    \]}

    \noindent
    Again, we can extend this sequence to an infinite rewrite sequence with an infinite expected derivation length.
    Hence, $\PAST[\fto_{\PP_{\mathsf{unary}}}]$ does not hold either.
\end{exa}

Counterex.\ \ref{example:Past-vs-Sast-finite} and \Cref{example:Past-binary-function} 
show that $\PASTs$ is not preserved under signature extensions for any strategy $s \in \IS$,
as the first rewrite sequence given in \Cref{example:Past-binary-function} is in fact a leftmost-innermost rewrite sequence.

\begin{restatable}[Signature Extensions for $\PASTs$]{thm}{SigPAST}\label{thm:Sig-PAST}
    Let $s \in \IS$.  
    There exists a PTRS $\PP$ and signatures $\Sigma, \Sigma'$ with $\Sigma \subset \Sigma'$ 
    such that $\PASTs$ holds over the signature $\Sigma$,
    but not over $\Sigma'$.
\end{restatable}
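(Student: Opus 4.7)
The plan is to take $\PP = \PP_{\mathsf{unary}}$ from Counterex.~\ref{example:Past-vs-Sast-finite}, let $\Sigma$ be the signature used there (consisting of the constants $\tz, \tb$ and the unary symbols $\tf, \tg, \tg_1, \th, \tq, \tq_1, \tq_2, \ta, \ts$), and set $\Sigma' = \Sigma \uplus \{\tc\}$ with a fresh binary symbol $\tc$. For the $\Sigma'$ direction I would reuse the first rewrite sequence exhibited in \Cref{example:Past-binary-function}, which starts at $\tc(\tf(\tz),\tf(\tz))$ and at every step contracts the leftmost-innermost redex: its proper subterms $\tz$, $\ts^k(\tz)$, and $\tb$ are all normal forms, so the $\tf$-redex that is rewritten really is the leftmost-innermost one. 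This makes the displayed sequence a genuine $\liliftto_\PP$-sequence, and together with the estimate already provided there, $\sum_{i=1}^\infty \tfrac{1}{2^i} \cdot 2^{i+2} = \infty$, it refutes $\PASTli$ of $\PP$ over $\Sigma'$; since $\lito_\PP \;\subseteq\; \ito_\PP \;\subseteq\; \fto_\PP$, the very same sequence refutes $\PASTi$ and $\PASTf$ as well, so we get the ``failure'' half of the theorem uniformly for all $s \in \IS$.

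For the $\Sigma$ direction, since $\liliftto_\PP \;\subseteq\; \iliftto_\PP \;\subseteq\; \fliftto_\PP$ it suffices to prove $\PASTf$ of $\PP$ over $\Sigma$. I would proceed by induction on the number $k$ of occurrences of $\tf$ in the start term. Because $\Sigma$ contains only unary symbols and constants, every term is a ``unary stack'' of the form $f_1(f_2(\cdots f_m(c) \cdots))$, and no right-hand side of any rule of $\PP$ introduces a fresh $\tf$, so $k$ is non-increasing along any rewrite sequence. In the base case $k=0$, all applicable rules are deterministic and form a finite chain whose expected derivation length is bounded by an explicit function of the term size (essentially the $4^{n+1}-1+n$ bound already computed in Counterex.~\ref{example:Past-vs-Sast-finite}). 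For $k \geq 1$, the unary structure forces the innermost $\tf$-subterm to be rewritten first, and the case analysis from Counterex.~\ref{example:Past-vs-Sast-finite} for $\tf(\tz)$ generalises to $C[\tf(t)]$ with a unary context $C$ and an $\tf$-free $t$: the expected cost contributed by this innermost $\tf$-block is finite, after which the induction hypothesis applied to the resulting term finishes the argument.

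The main obstacle is this second step: one has to rule out any hidden mechanism by which probabilistic branching over $\Sigma$ could accumulate unboundedly along a single lifted rewrite sequence. The crucial structural observation, implicit but not isolated in Counterex.~\ref{example:Past-vs-Sast-finite}, is that over a signature of arity $\leq 1$ the ``copying'' phenomenon which drives the $\Sigma'$ counterexample (creating many independent $\tf(\tz)$ subterms to which the cost-expensive $\tf \to \tg$ rule can later be applied) is simply unavailable. Once this is made rigorous, the per-$\tf$ analysis upgrades to all start terms over $\Sigma$, yielding $\PASTs$ of $\PP$ over $\Sigma$ for every $s \in \IS$ and completing the proof.
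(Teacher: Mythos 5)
Your construction is exactly the paper's: the same PTRS $\PP_{\mathsf{unary}}$ from \Cref{example:Past-vs-Sast-finite}, the same extension $\Sigma' = \Sigma \uplus \{\tc\}$ with a fresh binary $\tc$, and the same divergent sequence from $\tc(\tf(\tz),\tf(\tz))$ of \Cref{example:Past-binary-function}, together with the observation that this sequence is leftmost-innermost and hence refutes $\PASTs$ for all $s \in \IS$ at once. That half of your argument is sound and matches the paper.

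The gap is in your positive half. Your induction for $\PASTf$ over $\Sigma$ leans on the claim that ``the unary structure forces the innermost $\tf$-subterm to be rewritten first.'' Under full rewriting nothing forces this: from $\tf(\ts^k(\tf(\tz)))$ one may contract the \emph{outer} $\tf$ with the probabilistic rule arbitrarily often while the inner $\tf(\tz)$ sits untouched --- exactly the interleaving that the second sequence of \Cref{example:Past-binary-function} (for the non-erasing variant $\PP_{\mathsf{unary}}'$) exploits to destroy $\PASTf$. Since you reduce all three strategies to $\PASTf$ via $\lito_{\PP} \subseteq \ito_{\PP} \subseteq \fto_{\PP}$, the case you must handle is precisely the one where your forcing claim fails. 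What actually saves $\PP_{\mathsf{unary}}$ over $\Sigma$ is not evaluation order but \emph{erasure}: in the $\tb$-branch of $\tf(x) \to \{\nicefrac{1}{2}:\tf(\ts(x)), \nicefrac{1}{2}:\tb\}$ the entire argument --- including any nested $\tf$ --- is discarded, so no branch can hold a ``reserve copy'' of $\tf(\tz)$ whose later $\tg$-phase (cost $\Theta(4^n)$) compensates its $2^{-n}$ survival probability; adding a binary $\tc$ or replacing $\tb$ by $\tb(x)$ removes exactly this protection, which is the paper's dichotomy. Your induction on the number of $\tf$-occurrences is likely repairable (track the unique surviving ray of the outermost $\tf$, allow interleaved outer/inner steps, and use erasure to bound the weighted contribution of the nested ones), but as written the key step appeals to a forced innermost order that does not exist for $s = \mathbf{f}$. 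For calibration: the paper itself leaves this half informal (``a similar argument holds for every other start term'' in \Cref{example:Past-vs-Sast-finite}), so your attempt to make it precise goes beyond the paper --- it just names the wrong mechanism.
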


In contrast,
when analyzing modularity of $\ASTs$ and $\SASTs$
in \Cref{Modularity}, we will show
that they are closed under signature extensions (see \Cref{signature-extensions-AST-SAST}).

The core idea of both examples in \Cref{example:Past-binary-function} 
is that in the limit we can reach an infinite multi-distribution
whose support contains an infinite number of 
terms like $\tf(\tz)$ with unbounded expected derivation height.
A PTRS that allows such sequences is said to \emph{admit infinite splits}.

\begin{defi}[Infinite Splits]\label{def:infinite-splitting}
    A PTRS $\PP$ \emph{admits infinite splits}
    if for every term $t \in \TT$ there exists a term $t' \in \TT$
    and an $\fto_{\PP}$-RST whose root is labeled with
    $(1:t')$ such that there are infinitely many leaves labeled with
    terms that have $t$ as a subterm.
\end{defi}

\begin{exa}
  $\PP_{\mathsf{unary}}$ over the signature containing a function symbol of arity $\geq 2$
  and the non-erasing PTRS $\PP_{\mathsf{unary}}^{'}$ both admit infinite splits.
  However, $\PP_{\mathsf{unary}}$ over the signature 
  $\{\tf, \tb, \ts, \tz, \tg, \tg_1, \th, \ta, \tq, \tq_1, \tq_2\}$
  does not admit infinite splits.
\end{exa}

As shown by the following theorem,
admitting infinite splits implies
that $\PASTf$ is the same as $\SASTf$.

\begin{restatable}[Equivalence of $\PASTf$ and $\SASTf$ via Infinite Splits]{thm}{PASTvsSASTInfiniteSplits}\label{thm:PAST-vs-SAST-inf-splits}
    If a PTRS $\PP$ admits infinite splits, then:
    \begin{align*}
        \PASTf &\Longleftrightarrow \SASTf\!
    \end{align*}
\end{restatable}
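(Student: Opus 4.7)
The plan is as follows. The direction $\SASTf \Longrightarrow \PASTf$ is immediate from the definitions, since a finite $\edh_{\fto_\PP}(t)$ for every term $t$ entails in particular that each individual expected derivation length is finite. I prove the converse by contrapositive: assuming that $\PP$ admits infinite splits and that $\SASTf$ fails, I will produce an $\fto_\PP$-RST of infinite expected derivation length, contradicting the RST-based characterization of $\PASTf$ recalled after \Cref{def:expected-derivation-length-PAST}.

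The failure of $\SASTf$ yields a term $t \in \TT$ with $\edh_{\fto_\PP}(t) = \omega$, so for every $M > 0$ there is an $\fto_\PP$-RST rooted at $(1:t)$ of expected derivation length at least $M$. Applying the infinite splits property to $t$, I obtain a term $t'$ and an $\fto_\PP$-RST $\F{T}$ rooted at $(1:t')$ with infinitely many leaves $v_1, v_2, \ldots$ of positive probabilities $p_1, p_2, \ldots$ such that $t$ occurs as a subterm of each label $t_{v_i}$; fix a context $C_i$ with $t_{v_i} = C_i[t]$. The first substantive step is a subterm-monotonicity observation: any $\fto_\PP$-RST $\F{S}$ rooted at $(1:t)$ lifts to an $\fto_\PP$-RST $C_i[\F{S}]$ rooted at $(1:t_{v_i})$ by replacing every label $(p:s)$ with $(p:C_i[s])$. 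Since a rewrite step inside a subterm is still a rewrite step in the enclosing context, this is a valid RST in which probabilities, internal nodes, and leaves are all preserved, whence $\edl(C_i[\F{S}]) = \edl(\F{S})$ and so $\edh_{\fto_\PP}(t_{v_i}) \geq \edh_{\fto_\PP}(t) = \omega$ for every $i$.

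Because $\edh_{\fto_\PP}(t_{v_i}) = \omega$, for each $i$ I may pick an $\fto_\PP$-RST $\F{T}_i$ rooted at $(1:t_{v_i})$ with $\edl(\F{T}_i) \geq i/p_i$. I then glue these into a combined RST $\F{T}^*$ by identifying the root of each $\F{T}_i$ with the leaf $v_i$ of $\F{T}$ and scaling all probabilities inside $\F{T}_i$ by $p_i$. A node-level accounting, taking into account that each $v_i$ changes from a leaf of $\F{T}$ into an internal node of $\F{T}^*$ (its root-of-$\F{T}_i$ contribution $p_i \cdot 1 = p_i$ combines with the rescaled non-root internal nodes of $\F{T}_i$ to give $p_i \cdot \edl(\F{T}_i)$), yields
\[\edl(\F{T}^*) \;=\; \edl(\F{T}) + \sum_{i \geq 1} p_i \cdot \edl(\F{T}_i) \;\geq\; \sum_{i \geq 1} p_i \cdot \tfrac{i}{p_i} \;=\; \omega,\]
which is the desired contradiction.

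The main obstacle is the subterm-monotonicity claim. Formulating it directly at the level of lifting rewrite sequences is awkward, because a subterm can reach a normal form while its enclosing context still contains redexes, so the trivial self-step $\{1:s\} \liftto \{1:s\}$ used to pad sequences is not always available after wrapping with a context. Working in the RST view sidesteps this, since leaves need not be normal forms and the contextual lifting $\F{S} \mapsto C_i[\F{S}]$ then behaves functorially; the remaining additivity of $\edl$ under the gluing construction reduces to the elementary exchange of probabilities indicated above.
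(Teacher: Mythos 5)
Your proof is correct and follows essentially the same route as the paper's: the trivial direction from the definitions, and for the converse the same three ingredients — the infinite-splits RST, the context-wrapping observation that $\edl(C_i[\F{S}]) = \edl(\F{S})$ (the paper states this for its trees $\F{T}_{2^{n_x}}^{C[t]}$), and the replacement of infinitely many leaves by subtrees deep enough that their weighted contributions diverge. The only difference is cosmetic: you choose $\edl(\F{T}_i) \geq i/p_i$ so the $i$-th leaf contributes at least $i$, whereas the paper picks $n_x$ with $p_x \geq 2^{-n_x}$ and trees of expected derivation length at least $2^{n_x}$ so each leaf contributes at least $1$; both yield $\edl(\F{T}^*) = \omega$.
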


\begin{proof}
    We only have to prove ``$\Longrightarrow$'', and do this via contraposition.
    Assume that $\SASTf$ does not hold. 
    Then there exists a term $t \in \TT$ such that for every $n \in \IN$ 
    there exists an $\fto_{\PP}$-RST $\F{T}_{n}^{t}$ whose root is labeled with
    $(1:t)$ such that $\edl(\F{T}_{n}^{t}) \geq n$.
    We now construct a single $\fto_{\PP}$-RST $\F{T}^{\infty}$ with $\edl(\F{T}^{\infty}) = \infty$,
    which implies that $\PASTf$ does not hold either.
    Since $\PP$ admits infinite splits, there exists a term $t' \in \TT$
    and an $\fto_{\PP}$-RST $\F{T}$ whose root is labeled with $(1:t')$ 
    such that there are infinitely many leaves whose corresponding terms have $t$ as a subterm.
    Let $v$ be a leaf in $\F{T}$ with $t_v^{\F{T}} = C[t]$ for some context $C$
    and let $n_v \in \IN$ be such that $p_v^{\F{T}} \geq \frac{1}{2^{n_v}}$.
    Then we can replace the leaf by the $\fto_{\PP}$-RST $\F{T}_{2^{n_v}}^{C[t]}$. Here, 
    $\F{T}_{2^{n_v}}^{C[t]}$ is the same tree as $\F{T}_{2^{n_v}}^{t}$ where in addition we have the context
    $C$ around every term of every node.
    Hence,
    $\edl(\F{T}_{2^{n_v}}^{C[t]}) =  \edl(\F{T}_{2^{n_v}}^{t}) \geq 2^{n_v}$.
    Let $\F{T}^{\infty}$ be the $\fto_{\PP}$-RST that results from performing this
    replacement
    for every leaf $v$ in $\F{T}$ 
    that contains $t$ as a subterm.
    Then, for $\F{T}^{\infty}$ we have 
    \begin{align*}\textstyle  
    \edl(\F{T}^{\infty}) &\textstyle\geq \sum_{v \in \ctleaf^{\F{T}} \land\;
t_v^{\F{T}} = C[t] \text{ for some context } C} \; p_v^{\F{T}} \cdot
    \edl(\F{T}_{2^{n_v}}^{C[t]})\\
    &\textstyle\geq \sum_{v \in \ctleaf^{\F{T}} \land\;
      t_v^{\F{T}} = C[t]} \; \frac{1}{2^{n_v}} \cdot 2^{n_v} \; = \; \sum_{v \in \ctleaf^{\F{T}} \land\;
      t_v^{\F{T}} = C[t]} \; 1 \; = \;\infty \tag*{\qedhere}
\end{align*} 
    
\end{proof}

\begin{rem}
    \Cref{thm:PAST-vs-SAST-inf-splits}
    can also be adapted to strategies like innermost or leftmost-innermost rewriting,
    but then one has to ensure that
    in the infinitely many leaves of the RST in \Cref{def:infinite-splitting}, 
    the subterm $t$ can be used as the next
    redex according to the respective strategy.
\end{rem}

As an application of \Cref{thm:PAST-vs-SAST-inf-splits} we give two  syntactical 
criteria that ensure that $\PASTf$ 
is equivalent to $\SASTf$ for a given PTRS $\PP$, where both criteria are
very easy to check automatically.
The first one (illustrated by the first PTRS in \Cref{example:Past-binary-function})
states that if we consider only PTRSs with finitely many rules,
then the existence of a function symbol of arity at least $2$ suffices for equivalence
of $\PASTf$ and $\SASTf$.
Thus, this novel observation shows that
for almost all finite PTRSs in practice, there is no difference between 
$\PASTf$ and $\SASTf$.

\begin{restatable}[Equivalence of $\PASTf$ and $\SASTf$ (1)]{thm}{PASTvsSASTOne}\label{thm:PAST-vs-SAST-1}
    If a PTRS $\PP$ has only finitely many rules and the corresponding signature contains a function symbol of at least arity $2$, then:
    \begin{align*}
        \PASTf &\Longleftrightarrow \SASTf\!
    \end{align*}
\end{restatable}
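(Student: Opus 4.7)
By Theorem~\ref{thm:PAST-vs-SAST-inf-splits}, if $\PP$ admits infinite splits then $\PASTf \Leftrightarrow \SASTf$; however, under the present hypotheses $\PP$ need not admit infinite splits (for instance, any PTRS whose rules all have trivial distributions already satisfies $\PASTf \Leftrightarrow \SASTf$ by König's lemma, without any probabilistic splitting). We therefore cannot simply invoke Theorem~\ref{thm:PAST-vs-SAST-inf-splits} as a black box. Instead, we prove the contrapositive $\neg \SASTf \Rightarrow \neg \PASTf$ directly, adapting the construction from Theorem~\ref{thm:PAST-vs-SAST-inf-splits}, but only needing the infinite-splits property with respect to the single term $s$ that witnesses $\neg \SASTf$.

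Assume $\edh_{\fto_{\PP}}(s) = \infty$ for some $s \in \TT$. If some $\fto_{\PP}$-RST from $\{1:s\}$ already has $\edl = \infty$, then $\neg \PASTf$ is immediate. So suppose every RST from $\{1:s\}$ has finite $\edl$. Then every strategy from $s$ is almost-surely terminating, and since $\PP$ is finite, a standard finite-MDP / stochastic-shortest-path argument shows that if the $\fto_{\PP}$-reachable set of $s$ were finite, $\edh(s)$ would be finite; hence the reachable set is infinite. By König's lemma applied to the finitely branching relation $\fto_{\PP}$, there exists an infinite reduction $s = t_0 \fto_{\PP} t_1 \fto_{\PP} \cdots$. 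If this reduction used only deterministic rules, the corresponding non-branching RST would have $\edl = \infty$, contradicting our subcase assumption. So by the pigeonhole principle, some probabilistic rule (with support of size $\geq 2$) is applied at infinitely many steps of the reduction.

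Let $f$ be a symbol of arity $k \geq 2$ and set $t^{*} = f(s, s, x_3, \ldots, x_k)$ with fresh variables for the extra slots. We build an $\fto_{\PP}$-RST $\F{T}$ of $t^{*}$ that mirrors the above infinite reduction on the \emph{first} argument: at level $i$ we apply the same rule used to rewrite $t_i$, but at the corresponding position inside the first argument, so that the main child at level $i$ is $f(t_{i+1}, s, x_3, \ldots, x_k)$; whenever that rule is probabilistic, we declare its remaining sibling outcomes to be leaves of $\F{T}$. Since probabilistic rules occur at infinitely many levels, $\F{T}$ contains infinitely many such leaves, and each of them has the form $f(u, s, x_3, \ldots, x_k)$ and therefore contains $s$ as a subterm. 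This yields the infinite-splits property with respect to $s$, even though $\PP$ as a whole may not admit infinite splits.

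We now conclude as in Theorem~\ref{thm:PAST-vs-SAST-inf-splits}: for each leaf $v$ of $\F{T}$ we choose $n_v$ with $p_v \geq 2^{-n_v}$ and extend $v$ by an $\fto_{\PP}$-RST acting on the untouched copy of $s$ with expected derivation length $\geq 2^{n_v}$, which exists because $\edh(s) = \infty$. Each extension contributes at least $p_v \cdot 2^{n_v} \geq 1$ to the overall expected derivation length, so summing over the infinitely many leaves produces an $\fto_{\PP}$-RST of $t^{*}$ with $\edl = \infty$, contradicting $\PASTf$. The main obstacle is the subcase analysis above: we must combine the finiteness of $\PP$ (used in the König and finite-MDP arguments to extract an infinite reduction in which some probabilistic rule is applied infinitely often) with the existence of a symbol of arity $\geq 2$ (used to park an untouched copy of $s$ in a sibling argument so that each of the infinitely many side-outcome leaves still contains $s$ as a subterm).
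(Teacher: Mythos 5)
Your proof is correct, and while its endgame coincides with the paper's, it reaches the key intermediate object by a genuinely different route. The paper does not in fact invoke \Cref{thm:PAST-vs-SAST-inf-splits} unconditionally either: arguing by contradiction from $\PASTf$ together with $\neg\SASTf$, it first observes abstractly that some term $t'$ must admit an $\fto_{\PP}$-RST with infinitely many leaves (``if every RST had only finitely many leaves, then either no infinite path exists and we get $\SASTf$, or an infinite path exists and $\PASTf$ fails''), and then, for an \emph{arbitrary} term $t$, parks copies of $t$ in the remaining arguments of the arity-$\geq 2$ symbol --- considering $\tc(t',t,\ldots,t)$ and restricting rewriting to the first argument --- so that $\PP$ admits infinite splits in the full sense of \Cref{def:infinite-splitting}, after which \Cref{thm:PAST-vs-SAST-inf-splits} yields $\SASTf$ and the contradiction. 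You instead prove the contrapositive and make the infinitely-branching witness \emph{constructive}: finiteness of $\PP$ gives finite branching, a finite-MDP/stochastic-shortest-path argument shows the reachable set of the witness $s$ must be infinite in the relevant subcase, K\"onig's lemma extracts an infinite reduction, the subcase hypothesis forces probabilistic steps to occur along it infinitely often, and mirroring this reduction in the first argument of $f(s,s,x_3,\ldots,x_k)$ yields the RST with infinitely many side-outcome leaves, each containing the untouched copy of $s$; you then only need the splitting property relative to the single term $s$ and inline the grafting argument of \Cref{thm:PAST-vs-SAST-inf-splits}. What your route buys is an explicit construction replacing the paper's rather terse parenthetical dichotomy, and a correct observation (your trivial-distribution example) that the theorem's hypotheses alone do not imply infinite splits; what it costs is the appeal to external MDP theory (in a finite MDP where every scheduler is a.s.\ terminating, the maximal expected hitting time is finite), which is true and standard but lies outside the paper's toolkit, whereas the paper's abstract dichotomy avoids it.

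One step needs tightening: your pigeonhole argument as written only excludes the reduction being \emph{entirely} deterministic, but you must also exclude the case where probabilistic rules occur at only \emph{finitely many} steps, since otherwise only finitely many side-outcome leaves are created. The fix is the same argument you already use: if the path is eventually deterministic, its nodes have some constant probability $p>0$ from some level on, so the corresponding path-RST (with the finitely many sibling outcomes as leaves) already satisfies $\edl = \infty$, contradicting the subcase assumption that every $\fto_{\PP}$-RST starting with $(1:s)$ has finite expected derivation length.
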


\begin{proof}
    Let $\PP$ contain only finitely many rules 
    and assume for a contradiction that we have $\PASTf$ but not $\SASTf$.
    Then there exists a term $t'$ and an $\fto_{\PP}$-RST whose root is labeled with
    $(1:t')$ which has infinitely many leaves.
    (If there were only finitely many leaves for every term, then we would
    either have $\SASTf$
    if there exists no infinite path, or we would not have $\PASTf$ if there exists an infinite path.)
    Let $t \in \TT$ be an arbitrary term, and $\tc$ be a function symbol of arity $\geq 2$.
    We can now construct an $\fto_{\PP}$-RST that starts with $(1:\tc(t',t, \ldots, t))$ 
    such that there are infinitely many leaves labeled with
    terms that have $t$ as a subterm.
    To do so, one simply restricts rewriting to the first argument of $\tc$.
    Note that this construction is only possible because we assume that there is a
    function symbol of at least arity 2, 
    such that both $t$ and $t'$ can be subterms of the same term $\tc(t',t, \ldots, t)$.
    This shows that $\PP$ admits infinite splits, and thus, we have
    $\SASTf$ by \Cref{thm:PAST-vs-SAST-inf-splits},
    which is our desired contradiction.
\end{proof}

The second sufficient criterion for equivalence of $\PASTf$ 
and $\SASTf$ (illustrated by the second PTRS in \Cref{example:Past-binary-function})
requires specific forms of ``non-erasing loops''.

\begin{restatable}[Equivalence of $\PASTf$ and $\SASTf$ (2)]{thm}{PASTvsSASTTwo}\label{thm:PAST-vs-SAST-2}
  If there exists a probabilistic rule $\ell \to \{p_1: C[\ell \sigma], p_2:s, \ldots\}$
  such that there is a variable
    $x \in \VSet(\ell)$ with $x \in \Var(x\sigma) \cap \Var(s)$, then:
    \begin{align*}
        \PASTf &\Longleftrightarrow \SASTf\!
    \end{align*}
\end{restatable}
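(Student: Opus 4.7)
The plan is to show that $\PP$ admits infinite splits, and then invoke \Cref{thm:PAST-vs-SAST-inf-splits}. To this end, I fix an arbitrary term $t \in \TT$ and exhibit a start term $t'$ together with an $\fto_{\PP}$-RST rooted at $(1:t')$ that has infinitely many leaves whose terms contain $t$ as a subterm.

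Concretely, I choose a substitution $\tau_0$ with $x\tau_0 = t$ (and $y\tau_0$ arbitrary, e.g.\ $y\tau_0 = y$, for the other variables $y \in \VSet(\ell)$) and set $t' := \ell\tau_0$. Since the $p_1$-outcome of the given rule has the form $C[\ell\sigma]$, applying the rule at the matched instance $\ell\tau_n$ yields in its $p_1$-branch the term $(C[\ell\sigma])\tau_n = C\tau_n[\ell\tau_{n+1}]$, where $\tau_{n+1}$ is the composed substitution defined by $y\tau_{n+1} := (y\sigma)\tau_n$ for all $y \in \VSet(\ell)$. Iterating this at the innermost $\ell$-instance along the $p_1$-spine produces, at depth $n$, a node of probability $p_1^n$ whose term is $C\tau_0[C\tau_1[\cdots C\tau_{n-1}[\ell\tau_n]\cdots]]$. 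I take the $p_2$-successor of each such node as a leaf of the RST (and turn any remaining outcome of the rule into a leaf as well), obtaining infinitely many leaves of positive probability $p_1^n \cdot p_2$.

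The crux is the invariant that $x\tau_n$ contains $t$ as a subterm. This holds for $n = 0$ by construction, and for the inductive step the hypothesis $x \in \VSet(x\sigma)$ guarantees that $x$ occurs as a variable in $x\sigma$, so that $x\tau_{n+1} = (x\sigma)\tau_n$ has $x\tau_n$ as a subterm and therefore contains $t$. Because $x \in \VSet(s)$, the leaf term $C\tau_0[\cdots [s\tau_n]\cdots]$ contains $x\tau_n$, and hence contains $t$. Thus $\PP$ admits infinite splits, and \Cref{thm:PAST-vs-SAST-inf-splits} delivers the desired equivalence $\PASTf \Longleftrightarrow \SASTf$.

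The main subtlety is bookkeeping around substitution composition: one must verify that each $\ell\sigma\tau_n$ really is an instance of $\ell$ (so that the rule remains applicable at the inner position) and that the constructed labelled tree meets the probability conditions of an RST. Both reduce to the standard identity $(\ell\sigma)\tau_n = \ell(\sigma \cdot \tau_n)$ together with the fact that full rewriting may fire a rule at any redex position; the extra outcomes of the rule simply become additional leaves without disturbing the argument.
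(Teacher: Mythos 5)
Your proof is correct and takes essentially the same route as the paper: both fix a substitution instantiating $x$ with $t$, iterate the rule along the $p_1$-branch (your $\tau_n = \sigma^n\tau_0$ is exactly the paper's $\sigma^n\delta$), collect the $p_2$-branches as infinitely many leaves containing $t$, and conclude via infinite splits and \Cref{thm:PAST-vs-SAST-inf-splits}. Your version merely makes the substitution bookkeeping and the inductive invariant $x \in \VSet(x\sigma^n)$ explicit, which the paper leaves implicit.
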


\begin{proof}
    Let $t \in \TT$ be an arbitrary term, and let $\delta$ be a substitution such that $x \delta = t$ for a variable $x \in \VSet(x \sigma) \cap \VSet(s)$.
    We can now construct an $\fto_{\PP}$-RST that starts with $(1: \ell \delta)$ 
    such that there are infinitely many leaves labeled with
    terms that have $t$ as a subterm.
    To do so, one rewrites the redex $\ell\delta$ with the rule $\ell \to \{p_1: C[\ell \sigma], p_2:s, \ldots\}$,
    leading to a new leaf labeled with the term $s \delta$ (containing the subterm $t$ since
    $x \in \Var(s)$)
    and a new node labeled with the term $C[\ell \sigma]\delta = C\delta[\ell \sigma\delta]$, where we can rewrite the redex $\ell\sigma\delta$ again.
    This in turn will lead to a leaf labeled with $s\sigma\delta$ (which again contains $t$
    since $x \in \VSet(x \sigma) \cap \VSet(s)$ implies $x \in \VSet(s\sigma)$)
    and a new node labeled with a term containing $\ell\sigma^2\delta$, etc.
    Hence, $\PP$ admits infinite splits and the theorem is implied by \Cref{thm:PAST-vs-SAST-inf-splits}.
\end{proof}

\Cref{thm:PAST-vs-SAST-2} can also be extended so that the loop does not consist of a single rewrite step,
but we can have arbitrary many steps during the loop.

\section{Relating Variants of Probabilistic Termination and Expected Complexity}\label{Relating AST and its Restricted Forms}

Our goal is to relate the different probabilistic termination properties 
($\AST$, $\PAST$, and $\SAST$)
and the expected complexity of full rewriting
to the respective properties
of innermost rewriting (\Cref{subsection-iast-fast}), weak termination (\Cref{subsection-wast-fast}), and leftmost-innermost rewriting (\Cref{subsection-liast-fast}).
More precisely, we want to find properties of a PTRS $\PP$ which are suitable for automated
checking and which guarantee that, e.g., $\ASTs \Longleftrightarrow \AST[\sprimeto_{\PP}]$ for $s, s' \in \IS$.
Then, for example, we can use existing tools that analyze $\ASTi$ in order to prove $\ASTf$,
if we have successfully checked the properties that guarantee equivalence
 of $\ASTf$ and $\ASTi$.
Let $\PSN \in \{\AST,\PAST,\SAST\}$. 
As most of our results hold for all these three termination properties, 
we use $\PSN$ (\emph{probabilistic strong normalization}) to
refer to all of them. Similarly, we use
$\wPSN \in \{\wAST,\wPAST\}$.
Clearly, we have to require at least the same properties as in the
non-probabilistic setting, as every TRS $\R$ can be transformed into a PTRS $\PP$
by replacing every rule $\ell \to r$ with $\ell \to \{1:r\}$.
Then for every $s \in \IS$, we have $\SNs$ iff $\ASTs$ 
and $\wSNf$ iff $\wASTf$.

The following subsections are all structured as follows: 
We first give examples to show why the criteria from the non-probabilistic setting
do not carry over to the probabilistic setting.
Then, we explain the criteria needed in the probabilistic setting, 
state the corresponding theorem, and usually
give a lemma to explain the main proof idea.
We also use these lemmas to reason about expected complexity.

\subsection{From $\PSNi$ to $\PSNf$}\label{subsection-iast-fast}

We start by analyzing the relation between innermost and full rewriting.
The following example shows that \Cref{properties-eq-SN-iSN-1} does not carry over to the probabilistic setting,
i.e., orthogonality is not sufficient to ensure that $\PSNi$ implies $\PSNf$.

\begin{counterexample}[Orthogonality Does Not Suffice]\label{example:diff-AST-vs-iAST-dup}
    Consider the orthogonal PTRS $\PP_1$ with the two rules:

    \vspace*{-.5cm}
    \begin{minipage}[t]{7cm}
        \begin{align*}
            \tg &\to \{\nicefrac{3}{4}:\td(\tg), \nicefrac{1}{4}:\tz\}
        \end{align*}
    \end{minipage}
    \begin{minipage}[t]{7cm}
        \begin{align*}
            \td(x) &\to \{1:\tc(x,x)\}
        \end{align*}
    \end{minipage}
    \vspace*{.2cm}

    \noindent
    We do not have $\AST[\fto_{\PP_1}]$ (hence also neither
     $\PAST[\fto_{\PP_1}]$ nor $\SAST[\fto_{\PP_1}]$), because
    $\{1:\tg\} \fliftto_{\PP_1}^2\linebreak \{\nicefrac{3}{4}:\tc(\tg,\tg), \nicefrac{1}{4}:\tz\}$,
    which corresponds to a random walk biased towards non-termination (since $\tfrac{3}{4} > \tfrac{1}{4}$).

    However, the $\td$-rule can only duplicate normal forms in innermost evaluations.
    To see that we have $\SAST[\ito_{\PP_1}]$ (hence
    $\PAST[\ito_{\PP_1}]$ and $\AST[\ito_{\PP_1}]$), consider the
    only possible innermost rewrite sequence  $\vec{\mu}$ starting with $\{1:\tg\}$:
    {
    \[
        \{1:\tg\} \iliftto_{\PP_1} \{\nicefrac{3}{4}:\td(\tg), \nicefrac{1}{4}:\tz\}
        \iliftto_{\PP_1} \{(\nicefrac{3}{4})^2:\td(\td(\tg)), \;
        \nicefrac{1}{4} \cdot \nicefrac{3}{4}:\td(\tz), \; \nicefrac{1}{4}:\tz\} \iliftto_{\PP_1} \ldots
    \]}
    We can also view this rewrite sequence as an $\ito_{\PP_1}$-RST:\@

    \medskip
    
    \begin{center}
        \begin{tikzpicture}
            \tikzstyle{adam}=[thick,draw=black!100,fill=white!100,minimum size=4mm, shape=rectangle split, rectangle split parts=2,rectangle split horizontal]
            \tikzstyle{empty}=[rectangle,thick,minimum size=4mm]

            \node[empty] at (-7, 2)  (a) {$\mu_0:$};
            \node[adam] at (0, 2)  (1) {$1$ \nodepart{two} $\tg$};

            \node[empty] at (-7, 1)  (b) {$\mu_1:$};
            \node[adam] at (-1.5, 1)  (11) {$\nicefrac{3}{4}$\nodepart{two}$\td(\tg)$};
            \node[adam] at (1.5, 1)  (12) {$\nicefrac{1}{4}$\nodepart{two}$\tz$};

            \node[empty] at (-7, 0)  (c) {$\mu_2:$};
            \node[adam] at (-3, 0)  (111) {$(\nicefrac{3}{4})^2$\nodepart{two}$\td(\td(\tg))$};
            \node[adam] at (0, 0)  (112) {$\nicefrac{1}{4} \cdot \nicefrac{3}{4}$\nodepart{two}$\td(\tz)$};

            \node[empty] at (-7, -1)  (d) {$\mu_3:$};
            \node[adam] at (-4.8, -1)  (1111) {$(\nicefrac{3}{4})^3$\nodepart{two}$\td(\td(\td(\tg)))$};
            \node[adam] at (-1.2, -1)  (1112) {$\nicefrac{1}{4} \cdot (\nicefrac{3}{4})^2$\nodepart{two}$\td(\td(\tz))$};
            \node[empty] at (1.5, -1)  (1121) {$\ldots$};

            \node[empty] at (-4.8, -2)  (11111) {$\ldots$};
            \node[empty] at (0.3, -2)  (11121) {$\ldots$};

            \draw (1) edge[->] (11);
            \draw (1) edge[->] (12);
            \draw (11) edge[->] (111);
            \draw (11) edge[->] (112);
            \draw (111) edge[->] (1111);
            \draw (111) edge[->] (1112);
            \draw (112) edge[->] (1121);
            \draw (1112) edge[->] (11121);
            \draw (1111) edge[->] (11111);
            \draw (1112) edge[->] (11121);
        \end{tikzpicture}
    \end{center}
    The branch to the right that starts with $\tz$ stops after $0$ innermost steps, the
    branch that starts with $\td(\tz)$ stops after $1$ innermost step, the branch that starts with $\td(\td(\tz))$ stops after $2$ innermost steps, and so on.
    So if we start with the term $\td^n(\tz)$, then we reach a normal form after
    $n$ steps, and we reach $\td^n(\tz)$ after $n+1$ steps
    from the initial term $\tg$.
    For every $k \in \IN$ we have
    $|\mu_{2\cdot k + 1}|_{\PP_1} = |\mu_{2 \cdot k + 2}|_{\PP_1} = \sum_{n=0}^{k} \nicefrac{1}{4} \cdot (\nicefrac{3}{4})^n$
    and thus   
    \[ 
        \begin{array}{rclcl}
            \edl(\vec{\mu}) &=& \sum_{n = 0}^{\infty} (1 - |\mu_n|_{\PP_1}) 
            &=&
            1 + 2 \cdot  \sum_{k \in \IN} (1 - |\mu_{2\cdot k + 1}|_{\PP_1}) \\
            &=&
            1 + 2 \cdot  \sum_{k \in \IN} (1 - \sum_{n=0}^{k} \nicefrac{1}{4} \cdot (\nicefrac{3}{4})^n) 
            &=&
            1 + 2 \cdot  \sum_{k \in \IN} (\nicefrac{3}{4})^{k+1}\\
            &=&
            (2 \cdot  \sum_{k \in \IN} (\nicefrac{3}{4})^{k}) -1 
            &=& 
            7
        \end{array}
        \]
    Due to innermost rewriting, there is no non-determinism in this sequence,
    i.e., when starting with   $\{1:\tg\}$, there is no rewrite sequence with higher expected 
    derivation length.
    Thus, we also have $\edh_{\ito_{\PP_1}}(\tg) = 7$.
    Analogously, in all other innermost rewrite sequences, the $\td$-rule can also only duplicate normal
    forms. Thus, all terms have finite expected derivation height w.r.t.\ innermost rewriting. 
    Therefore, $\SAST[\ito_{\PP_1}]$ and thus, also $\PAST[\ito_{\PP_1}]$ and $\AST[\ito_{\PP_1}]$ hold.
    The latter can also be proved automatically by our implementation of the probabilistic DP
    framework for $\ASTi$~\cite{kassinggiesl2023iAST,FLOPS2024} in \aprove{}.
\end{counterexample}

To construct a counterexample, we exploited the fact that $\PP_1$ is not right-linear, 
which allows us to duplicate the redex $\tg$ repeatedly during the rewrite sequence. 
Similar to the complexity analysis in the non-probabilistic setting (see
\Cref{subsection-Complexity}),  
we need to prohibit the duplication of redexes.
Indeed, requiring right-linearity prevents this kind of duplication and yields our desired result.

\begin{restatable}[From $\PSNi$ to $\PSNf$]{thm}{ASTAndIASTPropertyOne}\label{properties-eq-AST-iAST-1}
  If a PTRS $\PP$ is OR and RL (i.e., NO and linear), then:
\begin{align*}
        \PSNf &\Longleftrightarrow \PSNi\!
    \end{align*}
\end{restatable}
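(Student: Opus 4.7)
The direction $\PSNf \Longrightarrow \PSNi$ is immediate: since $\ito_{\PP} \subseteq \fto_{\PP}$, the lifting $\iliftto_{\PP}$ is contained in $\fliftto_{\PP}$, so every infinite innermost rewrite sequence is also a full rewrite sequence. Hence any bound that holds for all $\fliftto_{\PP}$-sequences (convergence probability~$1$, finite expected derivation length, or finite expected derivation height) automatically restricts to the innermost ones.

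For the hard direction $\PSNi \Longrightarrow \PSNf$, my plan is to prove a simulation lemma of the following shape: if $\PP$ is OR and RL and $\{1:s\} \fliftto_{\PP} \mu$, then there exists $\mu'$ with $\{1:s\} \iliftto_{\PP}^{+} \mu'$ and $\mu \iliftto_{\PP}^{*} \mu'$. The intuition is to decompose a full step into its innermost preparatory phase (innermost-normalizing the proper subterms of the contracted redex) followed by the root-level contraction, which is now an innermost step. The three syntactic requirements enter as follows: left-linearity (from OR) ensures that pattern matching does not care whether arguments have already been innermost-reduced; non-overlapping (from OR) guarantees that performing innermost steps strictly below a redex position can neither destroy the outer match nor create a conflicting one; and right-linearity (RL) prevents the full step from duplicating unreduced subterms, so the innermost simulation does not have to be performed multiple times on copies of the same subterm. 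This last point is exactly what Counterex.~\ref{example:diff-AST-vs-iAST-dup} shows to be necessary. To lift the lemma from single terms to multi-distributions, I would close it compositionally using \Cref{def:lifting}, taking the simulation branchwise and then recombining with the original probabilities.

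Given this lemma, I would argue for all three properties uniformly. From any infinite $\fliftto_{\PP}$-rewrite sequence $(\mu_n)_{n \in \IN}$ starting from $\{1:s\}$, an iterated application of the simulation lemma produces an infinite $\iliftto_{\PP}$-rewrite sequence $(\mu'_m)_{m \in \IN}$ and a strictly monotone index map $\varphi : \IN \to \IN$ with $\mu_n \iliftto_{\PP}^{*} \mu'_{\varphi(n)}$ and $\varphi(n) \geq n$. Since innermost steps preserve and can only increase the mass of normal forms, $|\mu_n|_{\PP} \leq |\mu'_{\varphi(n)}|_{\PP}$, which yields the $\AST$ part by passing to the limit. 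For $\PAST$ and $\SAST$, the bound $\varphi(n) \geq n$ gives $\edl\bigl((\mu_n)\bigr) \leq \edl\bigl((\mu'_m)\bigr)$, from which finiteness transfers from the innermost to the full sequence; taking suprema over start configurations gives the corresponding $\edh$ bound needed for $\SAST$ and, as a by-product, the expected complexity inequality used later in \Cref{subsection-Complexity}-style corollaries. The main obstacle is the simulation lemma itself: the probabilistic bookkeeping requires showing that the decomposition of a single full step into an innermost preparatory phase plus a root contraction lifts consistently across \emph{all} probabilistic branches of a rule's right-hand side, which is precisely where right-linearity becomes indispensable, since otherwise a probabilistic choice made once at the root would have to be coherently replicated across duplicated subterms in the innermost simulation.
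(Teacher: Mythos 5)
Your easy direction is fine, but the hard direction has two genuine gaps, and they interact. First, your inequalities point the wrong way for the conclusions you draw. Your simulation produces an innermost sequence that \emph{leads} the full one, giving $|\mu_n|_{\PP} \leq |\mu'_{\varphi(n)}|_{\PP}$; but from $\ASTi$ you only learn $\lim_m |\mu'_m|_{\PP} = 1$, and together with an \emph{upper} bound on $|\mu_n|_{\PP}$ this says nothing about $\lim_n |\mu_n|_{\PP}$ --- the full sequence could keep its normal-form mass at $0$ forever while each $\mu_n$ is still innermost-rewritable to distributions of high normal-form mass. The same problem hits $\edl$: from $|\mu_n|_{\PP} \leq |\mu'_{\varphi(n)}|_{\PP}$ and $\varphi(n) \geq n$ you obtain $\sum_n (1-|\mu'_{\varphi(n)}|_{\PP}) \leq \edl(\vec{\mu})$, i.e., a subsum of $\edl(\vec{\mu}')$ is bounded \emph{above} by $\edl(\vec{\mu})$, which is the opposite of the inequality $\edl(\vec{\mu}) \leq \edl(\vec{\mu}')$ you assert. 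This is why the paper's \Cref{lemma-eq-AST-iAST-1} is stated contrapositively with the reversed inequalities: from an arbitrary full sequence $\vec{\mu}$ it constructs an innermost sequence $\vec{\nu}$ with $\lim_n |\nu_n|_{\PP} \leq \lim_n |\mu_n|_{\PP}$ and $\edl(\vec{\nu}) \geq \edl(\vec{\mu})$, so that a witness against $\PSNf$ yields a witness against $\PSNi$.

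Second, your key simulation lemma is false as stated. In the probabilistic setting, $\ASTi$ does not provide finite innermost normalization of the proper subterms of a redex: for $\PP_{\trw}$, no finite $\iliftto_{\PP_{\trw}}$-sequence starting in $\{1:\tg\}$ ends in a multi-distribution consisting only of normal forms, so the ``innermost preparatory phase'' cannot be completed in finitely many steps, and the claim $\{1:s\} \iliftto_{\PP}^{+} \mu'$ with $\mu \iliftto_{\PP}^{*} \mu'$ is unattainable in general. The paper circumvents exactly this: its construction $\Phi$ does not normalize below the redex but pushes a \emph{single} innermost step --- the first reduction of the chosen innermost redex that actually occurs in the given sequence, viewed as a rewrite sequence tree $\F{T}$ --- up to the root, proving $|\Phi(\F{T}_x)| = |\F{T}_x|$ and $\edl(\Phi(\F{T}_x)) \geq \edl(\F{T}_x)$ at each finite stage (this is where linearity tracks the unique residual of the innermost redex, and non-overlappingness keeps the outer match intact); the innermost sequence only arises as the limit of infinitely many such transformations, where the convergence probability may drop --- harmless given the needed inequality direction. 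Your intuition about RL preventing coherent replication of probabilistic choices (cf.\ Counterex.\ \ref{example:diff-AST-vs-iAST-dup}) identifies the right phenomenon, but it does not repair either gap; note also that for $\SAST$ the paper needs an extra case split (a single sequence of infinite $\edl$ versus a family with unbounded $\edl$, applying the lemma memberwise), which your ``taking suprema'' step glosses over.
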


For the proof of
\Cref{properties-eq-AST-iAST-1}, we prove the following lemma which directly implies
\Cref{properties-eq-AST-iAST-1}.

\begin{restatable}[From Innermost to Full Rewriting]{lem}{ASTAndIASTLemmaOne}\label{lemma-eq-AST-iAST-1}
  If a PTRS $\PP$ is OR and RL (i.e., NO and linear), then for every infinite $\fliftto_{\PP}$-rewrite sequence $\vec{\mu} = (\mu_n)_{n \in \IN}$ there exists an infinite $\iliftto_{\PP}$-rewrite sequence $\vec{\nu} = (\nu_n)_{n \in \IN}$ such that 
    \begin{enumerate}
        \item $\lim\limits_{n \to \infty}|\mu_n|_{\PP} \geq \lim\limits_{n \to \infty}|\nu_n|_{\PP}$
        \item $\edl(\vec{\mu}) \leq \edl(\vec{\nu})$
    \end{enumerate}
\end{restatable}

\paper{For reasons of space, here we only give a proof sketch. 
As mentioned, all missing complete proofs can be found in~\cite{REPORT}.}
\report{As mentioned, all missing complete proofs can be found in App.\ \ref{appendix}.}

\smallskip

\begin{myproofsketch}
    The proofs for all lemmas in this section follow a similar structure.
    We always iteratively replace rewrite steps by ones that use the desired strategy and
    ensure that this neither increases
    the probability of convergence
    nor decreases the expected derivation length.
    For this replacement, we lift the corresponding construction from the non-probabilistic to the
    probabilistic setting. However, this cannot be done directly but instead, we have to
    regard the limit of a sequence of transformation steps.

    Let $\PP$ be a PTRS that is non-overlapping, linear, 
    and there exists an infinite $\fliftto_{\PP}$-rewrite sequence $\vec{\mu} = (\mu_n)_{n \in \IN}$ such that $\lim_{n \to \infty} |\mu_n|_{\PP} = c$ for some $c \in \IR$ with $0 \leq c < 1$.
    Our goal is to transform this sequence into an innermost sequence that converges with at most probability $c$. 
    If the sequence is not yet an innermost one, then in 
    $(\mu_n)_{n \in \IN}$ at least one rewrite step is performed with a redex that is not an innermost redex.
    Since $\PP$ is non-overlapping, we can replace a first such non-innermost rewrite step with an innermost rewrite step using
    a similar construction as in the non-probabilistic setting.
    In this way, we result in a rewrite sequence $\vec{\mu}^{(1)} = (\mu^{(1)}_n)_{n \in \IN}$ with 
    $\lim_{n \to \infty} |\mu^{(1)}_n|_{\PP} = \lim_{n \to \infty} |\mu_n|_{\PP} = c$.
    Here, linearity is needed to ensure that the probability of convergence
    does not increase during this replacement.
    We can then repeat this replacement for every non-innermost rewrite step, i.e., we again
    replace a first non-innermost rewrite step in $(\mu^{(1)}_n)_{n \in \IN}$ to obtain
    $(\mu^{(2)}_n)_{n \in \IN}$ with  the same convergence probability, etc.
    In the end, the limit of all these rewrite sequences $\vec{\mu}^{(\infty)} = \lim_{i \to \infty}
    (\mu^{(i)}_n)_{n \in \IN}$ is an innermost rewrite sequence that converges with
    probability at most $c$.

    Regarding the expected derivation length, we can use exactly the same construction,
    as this also guarantees that in each step, $\vec{\mu}^{(1)}$ does not only  
    converge with the same probability as $\vec{\mu}$, but we also have 
    $\edl(\vec{\mu}^{(1)}) \geq \edl(\vec{\mu})$ and $\edl(\vec{\mu}^{(i+1)}) \geq \edl(\vec{\mu}^{(i)})$ for all $i > 0$.
    So in the end, the limit of all these rewrite sequences 
    $\vec{\mu}^{(\infty)}$ is an innermost rewrite sequence 
    with $\edl(\vec{\mu}^{(\infty)}) \geq \edl(\vec{\mu})$.
\end{myproofsketch}

\medskip

\begin{myproofof}{\Cref{properties-eq-AST-iAST-1}}
    We only need to prove the ``$\Longleftarrow$'' direction.
    Let us first consider $\ASTf$.
    Assume that $\PP$ is orthogonal, right-linear, and that $\ASTf$ does not hold.
    Then, there exists an infinite $\fliftto_{\PP}$-rewrite sequence $\vec{\mu} = (\mu_n)_{n \in \IN}$ such that $\lim\limits_{n \to \infty}|\mu_n|_{\PP} < 1$.
    By \Cref{lemma-eq-AST-iAST-1} we obtain an infinite $\iliftto_{\PP}$-rewrite sequence
    $\vec{\nu} = (\nu_n)_{n \in \IN}$ such that $\lim\limits_{n \to \infty}|\nu_n|_{\PP}
    \leq \lim\limits_{n \to \infty}|\mu_n|_{\PP} < 1$, and thus,
    $\ASTi$ does not hold either.
    The argument is completely analogous for $\PASTf$, just reasoning
    about the expected derivation length.

    For $\SASTf$ we have two cases: 
    If there exists a single $\fliftto_{\PP}$-rewrite sequence $\vec{\mu} = (\mu_n)_{n \in \IN}$ 
    such that $\edl(\vec{\mu}) = \omega$, then we proceed as for $\PASTf$.
    Otherwise, there exists an infinite set
    $\{ \vec{\mu}^{(i)} \mid i \in \IN \}$ of
    $\fliftto_{\PP}$-rewrite sequences $\vec{\mu}^{(i)}$
    with the same initial multi-distribution
    such that $\sup \{\edl(\vec{\mu}^{(i)}) \mid i \in \IN \} = \omega$.
    For each of these rewrite sequences we apply \Cref{lemma-eq-AST-iAST-1} as before such that we obtain  \emph{innermost} rewrite sequences $\vec{\nu}^{(i)}$ with $\edl(\vec{\nu}^{(i)}) \geq \edl(\vec{\mu}^{(i)})$ for all $i \in \IN$.
    Thus, $\{ \vec{\nu}^{(i)} \mid i \in \IN \}$
    is an infinite set of innermost rewrite sequences
    with the same initial multi-distribution
    such that $\sup \{\edl(\vec{\nu}^{(i)}) \mid i \in \IN \} = \omega$, 
    which proves that $\SASTi$ does not hold either.
\end{myproofof}

One may wonder whether we can remove the left-linearity requirement from \Cref{properties-eq-AST-iAST-1}, 
as in the non-probabilistic setting.
It turns out that this is not possible.

\begin{counterexample}[Left-Linearity Cannot be Removed]\label{example:diff-AST-vs-iAST-left-lin}
    Consider the PTRS $\PP_2$ with the rules:
    
    \vspace*{-.5cm}
    \begin{minipage}[t]{7cm}
        \begin{align*}
            \tf(x,x) &\to \{1:\tf(\ta,\ta)\}
        \end{align*}
    \end{minipage}
    \begin{minipage}[t]{7cm}
        \begin{align*}
            \ta &\to \{\nicefrac{1}{2}:\tb, \nicefrac{1}{2}:\tc\}
        \end{align*}
    \end{minipage}
    \vspace*{.2cm}

    \noindent
    We do not have $\AST[\fto_{\PP_2}]$ (hence also neither $\PAST[\fto_{\PP_2}]$ nor $\SAST[\fto_{\PP_2}]$), 
    since $\{1:\tf(\ta,\ta)\} \fliftto_{\PP_2} \{1:\tf(\ta,\ta)\} \fliftto_{\PP_2} \ldots$ 
    is an infinite rewrite sequence that converges with probability $0$.
    However, we have $\SAST[\ito_{\PP_2}]$ (and hence, $\PAST[\ito_{\PP_2}]$ and $\AST[\ito_{\PP_2}]$) since the corresponding innermost sequence has the form $\{1:\tf(\ta,\ta)\} \iliftto_{\PP_2} \{\tfrac{1}{2}:\tf(\tb,\ta),\tfrac{1}{2}:\tf(\tc,\ta)\} \iliftto_{\PP_2} \{\tfrac{1}{4}:\tf(\tb,\tb), \tfrac{1}{4}:\tf(\tb,\tc), \tfrac{1}{4}:\tf(\tc,\tb), \tfrac{1}{4}:\tf(\tc,\tc)\} \iliftto_{\PP_2} \ldots$.
    Here, the last distribution contains two normal forms $\tf(\tb,\tc)$ and $\tf(\tc,\tb)$ that did not occur in the previous rewrite sequence,
    so that the expected derivation length of this $\iliftto_{\PP_2}$-rewrite sequence
    is
       $2 + 3 \cdot \sum_{i=1}^{\infty} (\nicefrac{1}{2})^i = 5$.
    Since all innermost rewrite sequences keep on adding such normal forms after a constant
    number of steps for each start term, $\edh_{\ito_{\PP_2}}(t)$ is finite for each $t \in \TT$ (again, $\AST[\ito_{\PP_2}]$ can be shown automatically by \aprove{}).
    Note that adding the requirement of being non-erasing would not help to get rid of the
    left-linearity requirement, 
    as shown by the PTRS $\PP_3$ which results from $\PP_2$ by replacing the
    $\tf$-rule with $\tf(x,x) \to \{1:\td(\tf(\ta,\ta), x)\}$.
\end{counterexample}

The problem here is that although we rewrite both occurrences of $\ta$ with the same rewrite rule,
the two $\ta$-symbols are replaced by two different terms (each with a probability $> 0$).
This would be impossible in the non-probabilistic setting.

Next, one could try to adapt \cref{properties-eq-SN-iSN-3} to the probabilistic setting
(when requiring linearity in addition).
So one could investigate whether $\PSNi$ implies $\PSNf$ for PTRSs
that are linear locally confluent
overlay systems. A PTRS $\PP$ is \emph{locally confluent} if 
for all multi-distributions $\mu, \mu_1, \mu_2$ such that $\mu_1 \;
{\prescript{}{\PP}{\xleftliftto{\mathbf{f}}}} \; \mu 
\fliftto_{\PP} \mu_2$, there exists a multi-distribution $\mu'$ such that 
 $\mu_1 \fliftto_{\PP}^* \mu' \;
 {\prescript{*}{\PP}{\xleftliftto{\mathbf{f}}}} \; \mu_2$, see~\cite{diazcaro_confluence_2018}. Note that in
contrast to the probabilistic setting, there are non-overlapping PTRSs that are not
locally confluent
(e.g., the variant $\PP_2'$ of $\PP_2$ that consists of the rules
 $\tf(x,x) \to \{1:\td\}$ and $\ta \to \{\nicefrac{1}{2}:\tb, \nicefrac{1}{2}:\tc\}$,
since we have 
 $\{1:\td\} \; {\prescript{}{\PP_2'}{\xleftliftto{\mathbf{f}}}} \; \{1:\tf(\ta,\ta)\}
\fliftto_{\PP_2'} \{\nicefrac{1}{2}:\tf(\tb,\ta), \nicefrac{1}{2}:\tf(\tc,\ta)\}$ and the two
resulting multi-distributions are not joinable).
Whether every linear, non-overlapping PTRS is locally confluent is an open problem,
thus, it is open  whether an adaption of \cref{properties-eq-SN-iSN-3} would
subsume \Cref{properties-eq-AST-iAST-1} as in the non-probabilistic setting.

In contrast to the proof of \cref{properties-eq-SN-iSN-1},
the proof of \cref{properties-eq-SN-iSN-3} relies on a minimality requirement for
the used redex.
In the non-probabilistic setting,
whenever a term $t$ starts an infinite rewrite sequence, 
then there exists a  \emph{minimal} infinite rewrite sequence beginning with $t$, 
where one only reduces redexes whose proper subterms are terminating.
 However, such minimal infinite sequences do not always exist in the probabilistic setting.

\begin{exa}[No Minimal Infinite Rewrite Sequence for $\ASTf$ and $\PASTf$]\label{example:minimality-in-positions-AST}
    Reconsider the PTRS $\PP_1$ from Counterex.\ \ref{example:diff-AST-vs-iAST-dup}, 
    where $\AST[\fto_{\PP_1}]$ does not hold.
    However, there is no minimal rewrite sequence with convergence probability $< 1$.
    If we always rewrite the proper subterm $\tg$ of the redex $\td(\tg)$, then this 
    yields a rewrite sequence that converges with probability $1$,
    like the $\iliftto_{\PP_1}$-rewrite sequence in
    Counterex.\ \ref{example:diff-AST-vs-iAST-dup}.
    Hence, a rewrite sequence $\vec{\mu}$ with convergence probability $< 1$ would have to
    eventually use the $\td$-rule on a term of the form $\td(t)$ where $t$ contains
    $\tg$.
    But then $\vec{\mu}$ is not minimal since $\tg$ itself starts a rewrite sequence
    with convergence probability $< 1$.

    To simplify the argumentation for $\PAST[\fto_{\PP_1}]$,
    let us replace the rule $\td(x) \to \{1:\tc(x,x)\}$ by $\td(\tg) \to
    \{1:\tc(\tg,\tg)\}$
    such that we can only duplicate $\tg$
    and no other symbols.
    Then the same argumentation as for $\AST[\fto_{\PP_1}]$ above
    shows that there is also no minimal
    non-$\PAST[\fto_{\PP_1}]$
    sequence, i.e., no minimal rewrite sequence $\vec{\mu}$ with $\edl(\vec{\mu}) =
    \infty$.
    Again, if we 
    always rewrite the proper subterm $\tg$ if possible, then the expected derivation length of this sequence
    would be finite.
    Thus, we have to rewrite $\td$ eventually, although its argument contains
    $\tg$. However, then the rewrite sequence is not minimal.
\end{exa}

It remains open whether one can also adapt
\cref{properties-eq-SN-iSN-3} to the probabilistic setting (e.g., if one can replace
non-overlappingness in \Cref{properties-eq-AST-iAST-1}
by the requirement of locally confluent overlay systems). 
There are two main difficulties when trying to adapt the proof of
\Cref{properties-eq-SN-iSN-3}
to PTRSs.
First, the minimality requirement cannot be imposed in the probabilistic setting, as
discussed above.
In the non-probabilistic setting, this requirement is needed  to ensure that any subterm
below a position that was reduced in the original (minimal) infinite rewrite sequence
is terminating.
Second, the original proof of \cref{properties-eq-SN-iSN-3} uses Newman's Lemma~\cite{Newman42}  
which states that local confluence implies confluence for strongly normalizing terms $t$, and thus
it implies that $t$ has a unique normal form.
Local confluence and adaptions of the unique normal form property for the probabilistic
setting have
been studied in~\cite{diazcaro_confluence_2018,Faggian2019ProbabilisticRN}, who concluded that obtaining an
analogous statement to Newman's Lemma for PTRSs that are $\AST$ would be very difficult.
The reason is that one cannot use well-founded induction on the length of a rewrite
sequence of a PTRS that is $\AST$, 
since these rewrite sequences may be infinite.

We can also use \Cref{lemma-eq-AST-iAST-1} for expected complexity, leading to the following result.

\begin{restatable}[From Innermost to Full Expected Complexity]{thm}{ASTAndIASTPropertyOneComplex}\label{properties-eq-AST-iAST-1-complex}
    If a PTRS $\PP$ is OR and RL (i.e., NO and linear), then:
    \[\edertime_{\fto_{\PP}} = \edertime_{\ito_{\PP}} \hspace*{1cm} \text{ and } \hspace*{1cm} \eruntime_{\fto_{\PP}} = \eruntime_{\ito_{\PP}}\!\]
\end{restatable}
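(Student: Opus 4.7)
The plan is to reduce both claimed equalities to Lemma \ref{lemma-eq-AST-iAST-1}, which already provides the key step of converting a full rewrite sequence into an innermost one with at least the same expected derivation length. The easy direction comes for free: since $\iliftto_{\PP} \,\subseteq\, \fliftto_{\PP}$, every $\iliftto_{\PP}$-rewrite sequence starting with $\{1:t\}$ is also a $\fliftto_{\PP}$-rewrite sequence, hence $\edh_{\fto_{\PP}}(t) \geq \edh_{\ito_{\PP}}(t)$ for every $t \in \TT$. Taking suprema over $\{t \in \TT \mid |t|\leq n\}$ and $\{t \in \TT_{\BB} \mid |t|\leq n\}$ respectively yields $\edertime_{\fto_{\PP}}(n) \geq \edertime_{\ito_{\PP}}(n)$ and $\eruntime_{\fto_{\PP}}(n) \geq \eruntime_{\ito_{\PP}}(n)$.

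For the nontrivial direction, I would fix an arbitrary term $t$ and show $\edh_{\fto_{\PP}}(t) \leq \edh_{\ito_{\PP}}(t)$. Given any infinite $\fliftto_{\PP}$-rewrite sequence $\vec{\mu} = (\mu_k)_{k \in \IN}$ with $\mu_0 = \{1:t\}$, Lemma \ref{lemma-eq-AST-iAST-1} yields an infinite $\iliftto_{\PP}$-rewrite sequence $\vec{\nu} = (\nu_k)_{k \in \IN}$ with $\edl(\vec{\nu}) \geq \edl(\vec{\mu})$. Thus $\edl(\vec{\mu}) \leq \edh_{\ito_{\PP}}(t)$, and taking the supremum over $\vec{\mu}$ gives the desired inequality. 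Taking suprema over terms of size at most $n$ then yields $\edertime_{\fto_{\PP}}(n) \leq \edertime_{\ito_{\PP}}(n)$, and restricting to basic terms yields $\eruntime_{\fto_{\PP}}(n) \leq \eruntime_{\ito_{\PP}}(n)$; crucially, being basic is a property of terms alone and does not depend on the rewrite strategy, so the two suprema for runtime complexity range over the same set $\{t \in \TT_{\BB} \mid |t|\leq n\}$.

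The main obstacle I foresee is verifying that the innermost sequence $\vec{\nu}$ produced by Lemma \ref{lemma-eq-AST-iAST-1} starts with the same multi-distribution as $\vec{\mu}$, i.e., $\nu_0 = \mu_0 = \{1:t\}$. This is not stated explicitly in the lemma, but it is apparent from the proof sketch: the transformation iteratively replaces a first non-innermost step by an innermost one (without altering earlier steps) and then takes the limit, so the initial multi-distribution is preserved throughout. I would either (i) note this as a direct consequence of the construction, or (ii) strengthen the statement of Lemma \ref{lemma-eq-AST-iAST-1} to record that $\nu_0 = \mu_0$, which is mild and already carries through the proof sketch verbatim. Once this point is pinned down, both equalities follow immediately by combining the two inequalities above.
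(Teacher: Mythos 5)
Your proposal is correct and takes essentially the same route as the paper, whose entire proof of this theorem is the remark that it follows directly from \Cref{lemma-eq-AST-iAST-1} (just as \Cref{properties-eq-AST-iAST-1} did), combined with the trivial inclusion ${\ito_{\PP}} \subseteq {\fto_{\PP}}$ for the easy direction. Your concern about $\nu_0 = \mu_0$ is legitimate but resolved exactly as you suggest: the construction in the paper's full proof operates on RSTs and never changes the root label, and the paper itself implicitly relies on this preservation when, in the $\SAST$ case of \Cref{properties-eq-AST-iAST-1}, it speaks of the resulting innermost sequences having ``the same initial multi-distribution.''
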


\begin{proof}
  Similar to the proof of \Cref{properties-eq-AST-iAST-1}, this
  is a direct consequence of \Cref{lemma-eq-AST-iAST-1}.
\end{proof}

While it is open whether an adaption to locally confluent overlay systems would work for $\PSNi$,
we can give a counterexample
to show that one cannot weaken the requirement of NO to overlay systems for expected complexity, i.e., \Cref{properties-eq-AST-iAST-1-complex}
does not hold for linear overlay systems in general. In contrast, 
in the non-probabilistic setting
we have $\runtime_{\fto_{\R}} = \runtime_{\ito_{\R}}$ for all right-linear
overlay systems (since right-linearity implies spareness), see 
\Cref{properties-irun-vs-run}.

\begin{exa}\label{ORinsteadofOS}
    Consider the PTRS $\PP_4$ with the six rules:

    \vspace*{-.5cm}
    \begin{minipage}[t]{7cm}
        \begin{align*}
            \tf(x) &\to \{\nicefrac{1}{2}:\tg(x), \nicefrac{1}{2}:\th(x)\}\\
            \tg(\tb) &\to \{1: \tf(\ta)\}\\
            \th(\tc) &\to \{1: \tf(\ta)\}\!
        \end{align*}
    \end{minipage}
    \begin{minipage}[t]{7cm}
        \begin{align*}
            \td &\to \{1: \tf(\ta)\}\\
            \ta &\to \{1: \tb\}\\
            \ta &\to \{1: \tc\}\!
        \end{align*}
    \end{minipage}

    \noindent
    $\PP_4$ is a linear overlay system.
    Moreover,
    $\eruntime_{\fto_{\PP}} \in \Theta(\omega)$
    due to the infinite $\fliftto_{\PP_4}$-rewrite sequence 
    $\{1:\td\} \fliftto_{\PP_4} \{1:\tf(\ta)\} \fliftto_{\PP_4} \{\nicefrac{1}{2}:\tg(\ta), \nicefrac{1}{2}:\th(\ta)\} \fliftto_{\PP_4} \{\nicefrac{1}{2}:\tg(\tb), \nicefrac{1}{2}:\th(\tc)\} \fliftto_{\PP_4} \{\nicefrac{1}{2}:\tf(\ta), \nicefrac{1}{2}:\tf(\ta)\}$
    that converges with probability $0$.
    But for $\iliftto_{\PP_4}$-rewrite sequences we have to rewrite the argument
    $\ta$ of $\tf(\ta)$ first, leading to a normal form with a chance of $\nicefrac{1}{2}$ after two steps.
    Hence, $\eruntime_{\ito_{\PP}} \in \O(1)$.
    For expected derivational complexity, we get $\edertime_{\fto_{\PP}} \in \Theta(\omega)$ and $\edertime_{\ito_{\PP}} \in \O(n)$ via similar arguments.
    Note that we have $\edertime_{\ito_{\PP}} \in \O(n)$ and not $\edertime_{\ito_{\PP}} \in \O(1)$, 
    since for derivational complexity we allow start terms like $\tf^n(\ta)$
    where $n$ steps are needed to reach a multi-distribution that also contains normal
    forms.
\end{exa}

Clearly $\PP_4$ is not locally confluent.
Similar as for $\PSNi$, 
it remains open whether there is an adaption of \Cref{properties-eq-AST-iAST-1-complex} for locally
confluent overlay systems.

In \Cref{Improving Applicability}, we will show 
that for left-linear, non-overlapping, and \emph{spare} PTRSs $\PP$ we still
have $\eruntime_{\fto_{\PP}} = \eruntime_{\ito_{\PP}}$, see
\Cref{properties-eq-AST-iAST-3-complex}. In other words, 
RL can be weakened to SP.

Note that \Cref{properties-eq-AST-iAST-1-complex} of course also holds
for PTRSs with trivial probabilities, i.e., 
if all rules have the form $\ell \to \{1:r\}$.
Thus, we can use
\Cref{properties-eq-AST-iAST-1-complex} to obtain the first result on the
relation between full and innermost derivational complexity in the non-probabilistic
setting.

\begin{restatable}[From Innermost to Full Complexity]{cor}{DerAndIDerPropertyOneComplex}\label{properties-eq-Der-iDer-1}
  If
  a TRS $\R$ is OR and RL (i.e., NO and linear), then:
  \[\dertime_{\fto_{\R}} = \dertime_{\ito_{\R}}\]
\end{restatable}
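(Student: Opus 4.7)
The plan is to reduce Corollary~\ref{properties-eq-Der-iDer-1} to Theorem~\ref{properties-eq-AST-iAST-1-complex} by a faithful embedding of the non-probabilistic TRS into a PTRS with only trivial probabilities. Given $\R$, define $\PP_{\R} = \{\ell \to \{1:r\} \mid \ell \to r \in \R\}$. First I would observe that the syntactic conditions NO, OR, LL, and RL are preserved by this embedding: the left-hand sides are identical, and each right-hand side of $\PP_{\R}$ has a singleton support equal to the corresponding right-hand side of $\R$. So $\PP_{\R}$ is OR and RL (equivalently, NO and linear) whenever $\R$ is.

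Next I would identify the relevant rewrite relations. For every strategy $s \in \{\mathbf{f}, \mathbf{i}\}$, a step $u \sto_{\R} v$ corresponds one-to-one with $u \sto_{\PP_{\R}} \{1:v\}$. Lifting preserves this correspondence: by induction on Definition~\ref{def:lifting} one shows that every multi-distribution reachable from $\{1:t\}$ via $\sliftto_{\PP_{\R}}$ has the form $\{1:t'\}$ with $t \sto_{\R}^{*} t'$, and conversely every classical $\sto_{\R}$-sequence lifts to a $\sliftto_{\PP_{\R}}$-sequence on singleton distributions. In particular, $\NF_{\PP_{\R}} = \NF_{\R}$, and for any singleton-supported sequence $\vec{\mu} = (\{1:t_{n}\})_{n \in \IN}$ we have $|\mu_{n}|_{\PP_{\R}} \in \{0,1\}$, with $|\mu_{n}|_{\PP_{\R}} = 1$ iff $t_{n} \in \NF_{\R}$.

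From this I would conclude $\edh_{\sto_{\PP_{\R}}}(t) = \dheight_{\sto_{\R}}(t)$ for every term $t$: if $t$ has an infinite $\sto_{\R}$-derivation, the lifted sequence satisfies $|\mu_{n}|_{\PP_{\R}} = 0$ for all $n$, so $\edl = \omega$; otherwise $\edl$ equals the length of the $\sto_{\R}$-derivation to a normal form, and the supremum over all lifted sequences matches the supremum over all $\sto_{\R}$-derivations. Taking the supremum over terms of size at most $n$ yields $\edertime_{\sto_{\PP_{\R}}}(n) = \dertime_{\sto_{\R}}(n)$ for each $n \in \IN$ and each $s \in \{\mathbf{f}, \mathbf{i}\}$.

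Finally, applying Theorem~\ref{properties-eq-AST-iAST-1-complex} to the OR and RL PTRS $\PP_{\R}$ gives $\edertime_{\fto_{\PP_{\R}}} = \edertime_{\ito_{\PP_{\R}}}$, which by the identification above is exactly $\dertime_{\fto_{\R}} = \dertime_{\ito_{\R}}$. I do not expect any genuine obstacle here; the only point requiring a little care is the inductive argument that $\sliftto_{\PP_{\R}}$ cannot introduce non-singleton distributions from a singleton start, which follows directly from the three clauses of Definition~\ref{def:lifting} since every rule of $\PP_{\R}$ has a singleton right-hand side.
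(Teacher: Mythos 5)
Your proposal is correct and matches the paper's own route: the paper obtains \Cref{properties-eq-Der-iDer-1} precisely by viewing $\R$ as a PTRS with trivial probabilities $\ell \to \{1:r\}$ and invoking \Cref{properties-eq-AST-iAST-1-complex}. Your additional verification that lifted sequences starting from a singleton $\{1:t\}$ remain singleton-supported (so that expected derivation height collapses to ordinary derivation height) is exactly the detail the paper leaves implicit, and your argument for it via the clauses of \Cref{def:lifting} is sound.
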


\subsection{From $\wPSNf$ to $\PSNf$}\label{subsection-wast-fast}

Next, we investigate $\wPSNf$. Since $\PSNi$ implies $\wPSNf$, 
we essentially have the same problems as for $\PSNi$, i.e.,
in addition to non-overlappingness, we need linearity. This can be seen in
Counterex.\ \ref{example:diff-AST-vs-iAST-dup} and \ref{example:diff-AST-vs-iAST-left-lin},
as
for $i \in \{1,3\}$ we have
$\PSN[\ito_{\PP_i}]$ (and hence $\wPSN[\fto_{\PP_i}]$) but not $\PSN[\fto_{\PP_i}]$,
while $\PP_1$ and $\PP_3$ are non-overlapping and non-erasing, but not linear.
Furthermore, we need non-erasingness as we did in the non-probabilistic setting for the
same reasons, see Counterex.~\ref{example:diff-SN-vs-WN}.

\begin{restatable}[From $\wPSNf$ to $\PSNf$]{thm}{ASTvswAST}\label{properties-AST-vs-wAST}
    If a PTRS $\PP$ is NO, linear, and NE, then
    \begin{align*}
        \PSNf &\Longleftrightarrow \wPSNf\!
    \end{align*}
\end{restatable}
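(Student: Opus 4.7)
The direction $\PSNf \Rightarrow \wPSNf$ is immediate from the definitions: any infinite $\fliftto_{\PP}$-rewrite sequence starting from $\{1:t\}$ serves as its own witness. For the converse, my plan is to reduce to \Cref{properties-eq-AST-iAST-1} by proving the intermediate implication $\wPSNf \Rightarrow \PSNi$ under the full hypothesis NO + linear + NE. Since \Cref{properties-eq-AST-iAST-1} gives $\PSNi \Leftrightarrow \PSNf$ already under the weaker NO + linear hypothesis, chaining these yields $\wPSNf \Rightarrow \PSNi \Rightarrow \PSNf$.

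To establish $\wPSNf \Rightarrow \PSNi$, I would take an arbitrary infinite $\iliftto_{\PP}$-rewrite sequence $\vec{\nu} = (\nu_n)_{n \in \IN}$ starting from $\{1:t\}$ and a witness $\fliftto_{\PP}$-rewrite sequence $\vec{\mu}^{*}$ from $\{1:t\}$ provided by $\wPSNf$. The central step is an auxiliary lemma analogous in spirit to \Cref{lemma-eq-AST-iAST-1}, but comparing an arbitrary innermost sequence to a specific full sequence: under NO + linear + NE, for any term $t$, any infinite $\iliftto_{\PP}$-rewrite sequence from $\{1:t\}$ has convergence probability at least that of any $\fliftto_{\PP}$-rewrite sequence from $\{1:t\}$, and expected derivation length at most that of the latter. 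Applying this lemma to $\vec{\nu}$ and $\vec{\mu}^{*}$ immediately forces $\vec{\nu}$ to converge with probability~$1$ (for $\wAST$) or have finite $\edl$ (for $\wPAST$), as required. For $\SASTf$, the supremum argument from the end of the proof of \Cref{properties-eq-AST-iAST-1} would be applied to each of the potentially infinitely many witness sequences.

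The proof of the auxiliary lemma would again proceed by an iterative transformation in the style of \Cref{lemma-eq-AST-iAST-1}. Starting from $\vec{\mu}^{*}$, I would successively replace each full step that rewrites a ``non-innermost'' redex by the corresponding sequence of innermost steps dictated by $\vec{\nu}$, maintaining both convergence probability and $\edl$ invariants. Here, NO ensures that the chosen redexes can be safely reordered without destroying the rewrite sequence; linearity prevents the duplication/identification pathologies of Counterexamples~\ref{example:diff-AST-vs-iAST-dup} and~\ref{example:diff-AST-vs-iAST-left-lin}; and NE, which was not required for \Cref{lemma-eq-AST-iAST-1}, is the new ingredient: without it, a step in $\vec{\mu}^{*}$ could erase a subterm that $\vec{\nu}$ is about to reduce, breaking the alignment between the two sequences (exactly as in Counterex.~\ref{example:diff-SN-vs-WN} in the non-probabilistic setting). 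As in \Cref{lemma-eq-AST-iAST-1}, a limit construction is needed to cover the countably many replacements.

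The main obstacle will be managing this limit in the probabilistic setting when the two sequences may produce very different multi-distributions at intermediate stages. Concretely, one must show that the transformation can be set up so that (i) each finite-stage sequence remains a valid $\fliftto_{\PP}$-rewrite sequence, (ii) the convergence probability and $\edl$ are preserved or improved at every stage, and (iii) the pointwise limit over stages yields an innermost rewrite sequence whose quantitative parameters are bounded correctly --- forcing the contradiction with $\wPSNf$. Non-erasingness is what makes each replacement well-defined (the positions targeted by $\vec{\nu}$ persist through the modifications), while linearity makes the probabilistic bookkeeping compatible with the lifting from \Cref{def:lifting}.
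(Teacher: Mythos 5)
There is a genuine gap, and it sits at the heart of your limit construction: you run the transformation in the wrong direction. You propose to start from the witness $\vec{\mu}^{*}$ (convergence probability $1$, finite $\edl$) and successively rewrite it towards the arbitrary innermost sequence $\vec{\nu}$, preserving the quantitative invariants at each stage, hoping the limit forces the bounds onto $\vec{\nu}$. But the invariants only survive the passage to the limit one-sidedly: in this style of construction (as in \Cref{lemma-eq-AST-iAST-1}), each finite stage preserves the convergence probability exactly, yet the limit can only be bounded by $|\F{T}^{(\infty)}| \leq |\F{T}^{(i)}|$ and $\edl(\F{T}^{(\infty)}) \geq \edl(\F{T}^{(i)})$ --- probability can strictly drop and $\edl$ can strictly grow in the limit (e.g., stages that are finite paths of growing length all converge with probability $1$, while their limit is an infinite path converging with probability $0$). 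So transforming the good sequence into the bad one yields only the vacuous conclusions $\lim_{n}|\nu_n|_{\PP} \leq 1$ and $\edl(\vec{\nu}) \geq \edl(\vec{\mu}^{*})$, and step (iii) of your plan fails. The paper's proof does the opposite: it takes an arbitrary ``bad'' $\fto_{\PP}$-RST $\F{T}$ with $|\F{T}| = c < 1$ and iteratively pushes the steps of the \emph{witness} tree $\tilde{\F{T}}$ into $\F{T}$, so that the limit of the stages \emph{is} $\tilde{\F{T}}$; the one-sided limit inequalities then give $1 = |\tilde{\F{T}}| = |\F{T}^{(\infty)}| \leq c < 1$ (and dually $\edl(\F{T}) \leq \edl(\tilde{\F{T}}) < \infty$ for $\PAST$/$\SAST$), which is the contradiction. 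Your fix is exactly to swap the roles of the two sequences: morph $\vec{\nu}$ towards $\vec{\mu}^{*}$, with the pushed step dictated by the witness.

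Two secondary remarks. First, once the direction is corrected, your detour through innermost rewriting and \Cref{properties-eq-AST-iAST-1} becomes superfluous: the construction never uses that the ``bad'' sequence is innermost, so it directly proves $\wPSNf \Rightarrow \PSNf$ for arbitrary full sequences, which is what the paper does (your auxiliary lemma, stated for all innermost versus all full sequences, is true under NO + linear + NE, but it comes out as a special case rather than a stepping stone). Second, you correctly identified the role of NE: in the corrected construction the witness's step may lie \emph{above} the step being replaced, so one must additionally handle rewriting \emph{below} the tracked redex position, and non-erasingness (together with NO and linearity, so the tracked variable occurs exactly once on each side) guarantees the tracked position persists --- this is precisely why the $\top$ (erasure) case of \Cref{lemma-eq-AST-iAST-1} disappears in the paper's adapted construction.
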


\subsection{From $\PSNli$ to $\PSNf$}\label{subsection-liast-fast}

Finally, we look at leftmost-innermost rewriting
as an example for a rewrite strategy that uses an ordering for parallel redexes. 
In contrast to the non-probabilistic setting, it turns out that $\PSNli$ and $\PSNi$ are not equivalent in general.
The following counterexample is similar to Counterex.\ \ref{example:diff-AST-vs-iAST-left-lin}, which illustrated that $\PSNf$ and $\PSNi$ are not equivalent without left-linearity.

\begin{counterexample}\label{example:liAST-vs-iAST}
    Consider the PTRS $\PP_5$ with the five rules:
    
    \vspace*{-.4cm}
    \begin{minipage}[t]{7cm}
        \vspace*{.3cm}
        \begin{align*}
            \ta &\to \{1:\tc_1\}\\
            \ta &\to \{1:\tc_2\}
        \end{align*}
    \end{minipage}
    \begin{minipage}[t]{7cm}
        \begin{align*}
            \tb &\to \{\nicefrac{1}{2}:\td_1, \nicefrac{1}{2}:\td_2\}\\
            \tf(\tc_1,\td_1) &\to \{1:\tf(\ta,\tb)\}\\
            \tf(\tc_2,\td_2) &\to \{1:\tf(\ta,\tb)\}
        \end{align*}
    \end{minipage}
    \vspace*{.2cm}

    \noindent
    We do not have $\AST[\ito_{\PP_5}]$ (hence also neither $\PAST[\ito_{\PP_5}]$ nor $\SAST[\ito_{\PP_5}]$), 
    since there exists the infinite rewrite sequence
    $\{1:\tf(\ta,\tb)\} \iliftto_{\PP_5} \{\nicefrac{1}{2}:\tf(\ta,\td_1),
    \nicefrac{1}{2}:\tf(\ta,\td_2)\} \iliftto_{\PP_5} \{\nicefrac{1}{2}:\tf(\tc_1,\td_1),
    \nicefrac{1}{2}:\tf(\tc_2,\td_2)\} \iliftto_{\PP_5} \{\nicefrac{1}{2}:\tf(\ta,\tb),
    \nicefrac{1}{2}:\tf(\ta,\tb)\} \iliftto_{\PP_5} \ldots$, which converges with probability $0$.
    It first ``splits'' the term $\tf(\ta,\tb)$ with the $\tb$-rule, and then applies one of the two different $\ta$-rules to each of the resulting terms.
    In contrast, when applying a leftmost-innermost rewrite strategy, we have to decide which $\ta$-rule to use before we split the term with the $\tb$-rule.
    For example, we have $\{1:\tf(\ta,\tb)\} \liliftto_{\PP_5} \{1:\tf(\tc_1,\tb)\} \liliftto_{\PP_5} \{\nicefrac{1}{2}:\tf(\tc_1,\td_1), \nicefrac{1}{2}:\tf(\tc_1,\td_2)\}$.
    Here, the second term $\tf(\tc_1,\td_2)$ is a normal form.
    Since all leftmost-innermost rewrite sequences keep on adding such normal forms after
    a certain number of steps for each start term, 
    we have $\SAST[\lito_{\PP_5}]$ (and hence, $\PAST[\lito_{\PP_5}]$ and $\AST[\lito_{\PP_5}]$).
\end{counterexample}

The counterexample above can easily be adapted to variants of innermost rewriting that impose different orders on parallel redexes like, e.g., \emph{rightmost}-innermost rewriting.

However, $\PSNli$ and $\PSNi$ are again equivalent for non-overlapping PTRSs $\PP$.
For such PTRSs, at most one rule can be used to rewrite at a given position, which prevents
the problem illustrated in Counterex.\ \ref{example:liAST-vs-iAST}.

\begin{restatable}[From $\PSNli$ to $\PSNi$]{thm}{iASTvsliAST}\label{properties-iAST-vs-liAST}
    If a PTRS $\PP$ is NO, then
    \begin{align*}
        \PSNi &\Longleftrightarrow \PSNli\!
    \end{align*}
\end{restatable}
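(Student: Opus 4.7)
The direction $\PSNi \Longrightarrow \PSNli$ is immediate from the inclusion $\liliftto_{\PP} \subseteq \iliftto_{\PP}$: every infinite leftmost-innermost rewrite sequence is also an infinite innermost one, so $\lim_n |\mu_n|_{\PP}$ and $\edl$ both transfer without change. For the converse $\PSNli \Longrightarrow \PSNi$ I plan to prove the following analogue of \Cref{lemma-eq-AST-iAST-1}: if $\PP$ is NO and $\vec{\mu} = (\mu_n)_{n \in \IN}$ is an infinite $\iliftto_{\PP}$-rewrite sequence, then there exists an infinite $\liliftto_{\PP}$-rewrite sequence $\vec{\nu} = (\nu_n)_{n \in \IN}$ with $\nu_0 = \mu_0$, $\lim_n |\nu_n|_{\PP} \leq \lim_n |\mu_n|_{\PP}$, and $\edl(\vec{\nu}) \geq \edl(\vec{\mu})$. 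From this lemma the theorem follows for all three variants $\AST$, $\PAST$, $\SAST$ by exactly the same case distinction used in the proof of \Cref{properties-eq-AST-iAST-1}.

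The key local ingredient to establish first is a commutativity of parallel innermost redexes in a non-overlapping PTRS. If a term $t$ has two innermost redex positions $\tau$ and $\pi$ (which must be parallel, since two innermost positions in the same term cannot be nested), then, with the unique rules applicable at $\tau$ and $\pi$ (unique by NO), rewriting at $\tau$ first and then at $\pi$ in every term of the resulting support yields exactly the same multi-distribution as rewriting at $\pi$ first and then at $\tau$. Non-overlappingness ensures the rule applicable at each of the two positions is unchanged by a rewrite at the other, and parallelism ensures the two contexts do not interfere. This is the direct counterpart, for the innermost-to-leftmost-innermost passage, of the commutation of parallel rewrites that underlies \Cref{lemma-eq-AST-iAST-1}.

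Lifting this local commutativity to infinite rewrite sequences follows the iterative/limit pattern of the proof sketch of \Cref{lemma-eq-AST-iAST-1}. Setting $\vec{\mu}^{(0)} = \vec{\mu}$, I inductively define $\vec{\mu}^{(i+1)}$ by locating the first step in which some term of the support is rewritten at a position $\pi$ different from the leftmost-innermost redex position $\tau$ of that term, and using the local commutativity to make $\tau$ be rewritten first on that branch. Along branches in which $\tau$ was already reduced further down in $\vec{\mu}^{(i)}$, the modification is a pure reordering that keeps every $|\mu_n|_{\PP}$ as well as $\edl$ unchanged. Along branches in which $\tau$ was never reduced in $\vec{\mu}^{(i)}$, the subterm at $\tau$ persists as a redex so that the branch contributes nothing to $\lim_n |\mu_n|_{\PP}$, and the insertion of an additional $\tau$-rewrite therefore can only raise $\edl$ (in the right direction) without raising $\lim_n |\nu_n|_{\PP}$ above $\lim_n |\mu_n|_{\PP}$. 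Since the index of the first non-leftmost-innermost step strictly grows with $i$, the directed limit $\vec{\nu} = \lim_i \vec{\mu}^{(i)}$ is everywhere leftmost-innermost and satisfies the required inequalities.

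The main obstacle I expect is precisely the careful bookkeeping for those branches in which the leftmost-innermost redex $\tau$ is never reduced in the original $\vec{\mu}$. In contrast to the proof of \Cref{lemma-eq-AST-iAST-1}, where non-innermost redexes are ultimately subsumed by subsequent innermost reductions, here a non-leftmost-innermost (but still innermost) redex may be ignored along a whole infinite branch. The transformation then inserts a genuinely new rewrite step rather than swapping existing ones, and one must verify RST-wise that (i) the extra step does not create normal forms that were not already guaranteed to appear in $\vec{\mu}$ on that branch, so $\lim_n |\nu_n|_{\PP}$ does not exceed $\lim_n |\mu_n|_{\PP}$, and (ii) the directed limit of $(\vec{\mu}^{(i)})_i$ is a well-defined infinite $\liliftto_{\PP}$-rewrite sequence, which amounts to checking that the first $i$ steps of $\vec{\mu}^{(i)}$ are already leftmost-innermost and remain unchanged in all later iterations.
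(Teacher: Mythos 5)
Your proposal is correct and takes essentially the same route as the paper: the paper proves exactly your auxiliary lemma (\Cref{lemma-eq-iAST-liAST}) by the same iterate-and-take-the-limit construction as in \Cref{lemma-eq-AST-iAST-1}, observing---as you do---that in an innermost rewrite sequence one can never rewrite strictly above the tracked redex, so only the parallel-position case arises and non-overlappingness alone suffices. Your flagged obstacle (i) resolves exactly as you anticipate: on a branch where $\tau$ is never reduced, every term retains a redex at a position parallel to $\tau$ (the one being rewritten next), and this redex survives the inserted $\tau$-step, so the mirrored branch never produces a normal form and the convergence probability does not increase.
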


For the proof of \Cref{properties-iAST-vs-liAST}, we use the following lemma which
immediately implies \Cref{properties-iAST-vs-liAST}.

\begin{restatable}[From Leftmost-Innermost to Innermost Rewriting]{lem}{IASTAndLIASTLemma}\label{lemma-eq-iAST-liAST}
    If a PTRS $\PP$ is NO, then for every infinite $\iliftto_{\PP}$-rewrite sequence $\vec{\mu} = (\mu_n)_{n \in \IN}$ there exists an infinite $\liliftto_{\PP}$-rewrite sequence $\vec{\nu} =
    (\nu_n)_{n \in \IN}$ such that
    \begin{enumerate}
        \item $\lim\limits_{n \to \infty}|\mu_n|_{\PP} \geq \lim\limits_{n \to \infty}|\nu_n|_{\PP}$
        \item $\edl(\vec{\mu}) \leq \edl(\vec{\nu})$
    \end{enumerate}
\end{restatable}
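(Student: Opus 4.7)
The plan is to adapt the iterate-and-take-limit strategy sketched for \Cref{lemma-eq-AST-iAST-1} to the present setting, where the two rewrite steps to be exchanged lie at \emph{parallel} innermost positions rather than at nested ones. Because the positions are incomparable, no duplication of redexes can occur, so right-linearity is not required; non-overlappingness suffices, since it guarantees that the leftmost-innermost redex fires a uniquely determined rule.

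Concretely, assume $\vec{\mu}$ is not yet leftmost-innermost and let $n$ be minimal such that $\mu_n \iliftto_\PP \mu_{n+1}$ is not a $\liliftto_\PP$-step. Unfolding this step via \Cref{def:lifting} yields some term $t \in \Supp(\mu_n)$ rewritten at an innermost position $\pi$ that is not leftmost, so $t$ has another redex at a position $\tau \prec \pi$; descending into $t|_\tau$ to its own leftmost-innermost position gives a parallel position $\tau^* \prec \pi$ with $t|_{\tau^*} = \ell'\delta$ for a unique rule $\ell' \to \{q_i : r'_i\}_i \in \PP$ by NO. I would then locally modify the sequence by inserting, before the $\pi$-step on $t$, a leftmost-innermost step at $\tau^*$ that splits the branch of $t$ into the summands $\{q_i : t[r'_i\delta]_{\tau^*}\}_i$, and then applying the $\pi$-step to each of these summands, producing the joint multi-distribution $\{p_j q_i : t[r_j\sigma]_\pi[r'_i\delta]_{\tau^*}\}_{i,j}$. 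Since $\pi$ and $\tau^*$ are parallel, this two-step block is exactly equivalent (as a multi-distribution) to first performing the original $\pi$-step and then a $\tau^*$-step in each of its branches, so the remainder of the original sequence can be appended after this block, with subsequent occurrences of the now-already-performed $\tau^*$-step factored out consistently along the branches descending from $t$. The resulting sequence $\vec{\mu}^{(1)}$ has one more leading leftmost-innermost step than $\vec{\mu}$, converges with exactly the same probability, and satisfies $\edl(\vec{\mu}^{(1)}) \geq \edl(\vec{\mu})$ (one extra step is contributed to each path through the insertion, which can only increase $\edl$).

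Iterating produces sequences $\vec{\mu}^{(0)} = \vec{\mu}, \vec{\mu}^{(1)}, \vec{\mu}^{(2)}, \ldots$ whose prefix of genuine $\liliftto_\PP$-steps grows by at least one at each iteration. Since each iteration modifies only the tail starting from the first offending step, the length-$k$ prefix of $\vec{\mu}^{(i)}$ stabilizes for all sufficiently large $i$, and defining $\nu_k$ to be this stable value yields a well-defined leftmost-innermost rewrite sequence $\vec{\nu}$. Monotonicity of the iteration (convergence probability is preserved and $\edl$ is non-decreasing) passes to the limit, giving $\lim_{n \to \infty} |\nu_n|_\PP \leq \lim_{n \to \infty} |\mu_n|_\PP$ and $\edl(\vec{\nu}) \geq \edl(\vec{\mu})$ as claimed.

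The main obstacle, as in the sketch of \Cref{lemma-eq-AST-iAST-1}, is making the single-step surgery precise: a lifting step of \Cref{def:lifting} can simultaneously rewrite many different terms of $\Supp(\mu_n)$ at many different positions, so the modification must touch only the branch rooted at $t$ while the rest of the sequence is threaded through with the ``already-performed'' $\tau^*$-step carefully factored out from all later liftings. This is routine but tedious; the induction is driven by the structural rules of \Cref{def:lifting}, and non-overlappingness is invoked at every later occurrence of the $\tau^*$-redex to guarantee that the step the original sequence would have taken there coincides with the one that has already been applied.
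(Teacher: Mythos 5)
Your proposal is correct and takes essentially the same route as the paper's own proof: the paper performs the same iterate-and-take-limit surgery, phrased on rewrite sequence trees by reusing the construction $\Phi(\_)$ from \Cref{lemma-eq-AST-iAST-1}, and observes that since the given sequence is already innermost no step can occur above the tracked redex, so only the parallel-position case arises and hence only non-overlappingness (not linearity) is needed --- exactly your commutation-and-cancellation argument, with NO guaranteeing that the later occurrence of the $\tau^*$-step coincides with the pre-performed one. One minor repair: choose $\tau^*$ as the leftmost innermost redex position of the \emph{whole} term $t$ rather than of $t|_{\tau}$, since otherwise the inserted step need not itself be leftmost and your claim that the leftmost-innermost prefix grows by one per iteration fails as stated (though finiteness of the set of positions of $t$ would still guarantee progress at each stage).
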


\begin{figure}[t]
	\centering
	\begin{tikzpicture}
        \tikzstyle{adam}=[rectangle,thick,draw=black!100,fill=white!100,minimum size=4mm]
        \tikzstyle{empty}=[rectangle,thick,minimum size=4mm]

        \node[empty] at (0, -1)  (ast) {$\PSNf$};
        \node[empty] at (-3, -1)  (iast) {$\PSNi$};
        \node[empty] at (3, -1)  (weak-AST) {$\wPSNf$};
        \node[empty] at (-6, -1)  (liast) {$\PSNli$};

        \draw (iast) edge[-implies,double equal sign distance, bend left] (liast);
        \draw (liast) edge[-implies,double equal sign distance, bend left] node[sloped, anchor=center,above] {\scriptsize \textbf{NO}} (iast);
        \draw (ast) edge[-implies,double equal sign distance, bend left] (iast);
        \draw (iast) edge[-implies,double equal sign distance, bend left] node[sloped, anchor=center,above] {\scriptsize \textbf{NO}+\textbf{LL}+\textbf{RL}} (ast);
        \draw (ast) edge[-implies,double equal sign distance, bend right] (weak-AST);
        \draw (weak-AST) edge[-implies,double equal sign distance, bend right] node[sloped, anchor=center,above] {\scriptsize \textbf{NO}+\textbf{LL}+\textbf{RL}+\textbf{NE}} (ast);
        \draw (iast) edge[-implies,double equal sign distance, bend right, out=310, in=235, looseness=0.5] (weak-AST);
        \draw (liast) edge[-implies,double equal sign distance, bend right, out=310,
          in=250, looseness=0.5] (weak-AST.south);
    \end{tikzpicture}
    \caption{Relations between the different termination properties for PTRSs}\label{fig:relations_PSN}
\end{figure}
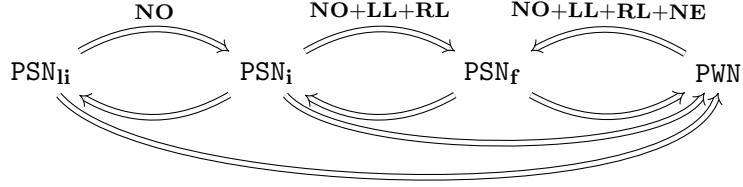

The relations between the different notions of $\mathtt{AST}$,
$\mathtt{PAST}$, and $\mathtt{SAST}$ 
of PTRSs for different rewrite strategies (given in
\Cref{properties-eq-AST-iAST-1},~\ref{properties-AST-vs-wAST}, and~\ref{properties-iAST-vs-liAST})
are summarized in \Cref{fig:relations_PSN}.

For expected complexity, we obtain the following result from \Cref{lemma-eq-iAST-liAST}.

\begin{restatable}[From Leftmost-Innermost to Innermost Expected Complexity]{thm}{iASTAndlIASTPropertyOneComplex}\label{properties-eq-iAST-liAST-complex}
    If a PTRS $\PP$ is NO, then:
    \[\edertime_{\lito_{\PP}} = \edertime_{\ito_{\PP}} \hspace*{1cm} \text{ and } \hspace*{1cm} \eruntime_{\lito_{\PP}} = \eruntime_{\ito_{\PP}}\!\]
\end{restatable}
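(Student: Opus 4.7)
The plan is to derive this theorem from Lemma \ref{lemma-eq-iAST-liAST} in direct analogy to the way Theorem \ref{properties-eq-AST-iAST-1-complex} was derived from Lemma \ref{lemma-eq-AST-iAST-1}. I will prove both equalities by establishing two inequalities for each, one of which is trivial and one of which uses the lemma.

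For the trivial direction, note that $\liliftto_{\PP} \;\subseteq\; \iliftto_{\PP}$ since every leftmost-innermost step is in particular an innermost step. Hence every $\liliftto_{\PP}$-rewrite sequence starting with $\{1:t\}$ is also an $\iliftto_{\PP}$-rewrite sequence starting with $\{1:t\}$, so that the set over which the supremum in the definition of $\edh_{\lito_{\PP}}(t)$ is taken is a subset of the corresponding set for $\edh_{\ito_{\PP}}(t)$. This immediately gives $\edh_{\lito_{\PP}}(t) \leq \edh_{\ito_{\PP}}(t)$ for every term $t$, and taking suprema over all $t \in \TT$ (resp.\ $t \in \TT_{\BB}$) with $|t| \leq n$ yields $\edertime_{\lito_{\PP}} \leq \edertime_{\ito_{\PP}}$ and $\eruntime_{\lito_{\PP}} \leq \eruntime_{\ito_{\PP}}$.

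For the reverse direction, I fix an arbitrary term $t$ (basic in the runtime case) and show $\edh_{\ito_{\PP}}(t) \leq \edh_{\lito_{\PP}}(t)$. Let $\vec{\mu} = (\mu_n)_{n \in \IN}$ be any infinite $\iliftto_{\PP}$-rewrite sequence with $\mu_0 = \{1:t\}$. Since $\PP$ is NO, Lemma \ref{lemma-eq-iAST-liAST} yields an infinite $\liliftto_{\PP}$-rewrite sequence $\vec{\nu} = (\nu_n)_{n \in \IN}$ with $\edl(\vec{\nu}) \geq \edl(\vec{\mu})$. The construction underlying the lemma (iteratively replacing non-leftmost-innermost steps by leftmost-innermost ones, analogous to the construction sketched for Lemma \ref{lemma-eq-AST-iAST-1}) preserves the initial multi-distribution, so $\nu_0 = \{1:t\}$. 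Taking the supremum over all such $\vec{\mu}$ on the left and noting that each corresponding $\vec{\nu}$ is admissible on the right yields $\edh_{\ito_{\PP}}(t) \leq \edh_{\lito_{\PP}}(t)$. Taking suprema over all (basic) terms $t$ with $|t| \leq n$ gives the two desired equalities.

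The main conceptual point to verify is that Lemma \ref{lemma-eq-iAST-liAST}, as stated, really provides a $\liliftto_{\PP}$-sequence with the same initial multi-distribution as $\vec{\mu}$; this is not made fully explicit in the lemma's statement but is a consequence of the step-by-step replacement construction. Once this is noted, the argument is a routine lifting of the proof pattern from Theorem \ref{properties-eq-AST-iAST-1-complex} to the present strategy pair. No additional assumptions (such as right-linearity) are needed, because in contrast to Lemma \ref{lemma-eq-AST-iAST-1} the replacement of an innermost redex by its leftmost-innermost counterpart never duplicates subterms and cannot increase the probability of convergence beyond what the NO condition already controls.
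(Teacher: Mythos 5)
Your proposal is correct and matches the paper's own route: the paper derives this theorem as a direct consequence of Lemma~\ref{lemma-eq-iAST-liAST} (just as Theorem~\ref{properties-eq-AST-iAST-1-complex} follows from Lemma~\ref{lemma-eq-AST-iAST-1}), with the trivial inequality coming from $\litor \subseteq \itor$. Your side observation that the lemma's construction preserves the initial multi-distribution is also consistent with the paper's appendix proof, where the replacement construction operates on RSTs with a fixed root label, and your remark that only NO is needed matches the paper's reasoning that in an innermost RST one never rewrites above a redex, so only the parallel-position case of the construction arises.
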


As shown by $\PP_5$ from Counterex.\ \ref{example:liAST-vs-iAST}, again
we cannot change the requirement of non-overlapping PTRSs to overlay systems, 
since $\PP_5$ is an overlay system where
$\PSN[\lito_{\PP_5}]$ holds, but $\PSN[\ito_{\PP_5}]$ does not.
The relations between the different expected complexities of PTRSs (given in
\Cref{properties-eq-AST-iAST-1-complex,properties-eq-iAST-liAST-complex})
are summarized in \Cref{fig:relations_ecomplex}. Here, arrows ``$\implies$'' stand for
``$\geq$''.

\begin{figure}
  \centering
  \begin{subfigure}{0.5\textwidth}\label{fig:relations_eruntime}
    \centering
      \begin{tikzpicture}
            \tikzstyle{adam}=[rectangle,thick,draw=black!100,fill=white!100,minimum size=4mm]
            \tikzstyle{empty}=[rectangle,thick,minimum size=4mm]

            \node[empty] at (0, -1)  (ast) {$\eruntime_{\fto_{\PP}}$};
            \node[empty] at (-3, -1)  (iast) {$\eruntime_{\ito_{\PP}}$};
            \node[empty] at (-6, -1)  (liast) {$\eruntime_{\lito_{\PP}}$};

            \draw (iast) edge[-implies,double equal sign distance, bend left] node[sloped, anchor=center,below] {\scriptsize $\leq$} (liast);
            \draw (liast) edge[-implies,double equal sign distance, bend left] node[sloped, anchor=center,above] {\scriptsize \textbf{NO}} node[sloped, anchor=center,below] {\scriptsize $\geq$} (iast);
            \draw (ast) edge[-implies,double equal sign distance, bend left] node[sloped, anchor=center,below] {\scriptsize $\leq$} (iast);
            \draw (iast) edge[-implies,double equal sign distance, bend left] node[sloped, anchor=center,above] {\scriptsize \textbf{NO}+\textbf{LL}+\textbf{RL}} node[sloped, anchor=center,below] {\scriptsize $\geq$} (ast);
        \end{tikzpicture}\vspace*{.3cm}
        \subcaption{Runtime Complexity}
    \end{subfigure}%
    \begin{subfigure}{0.5\textwidth}\label{fig:relations_edertime}
      \centering
        \begin{tikzpicture}
            \tikzstyle{adam}=[rectangle,thick,draw=black!100,fill=white!100,minimum size=4mm]
            \tikzstyle{empty}=[rectangle,thick,minimum size=4mm]

            \node[empty] at (0, -1)  (ast) {$\edertime_{\fto_{\PP}}$};
            \node[empty] at (-3, -1)  (iast) {$\edertime_{\ito_{\PP}}$};
            \node[empty] at (-6, -1)  (liast) {$\edertime_{\lito_{\PP}}$};

            \draw (iast) edge[-implies,double equal sign distance, bend left] node[sloped, anchor=center,below] {\scriptsize $\leq$} (liast);
            \draw (liast) edge[-implies,double equal sign distance, bend left] node[sloped, anchor=center,above] {\scriptsize \textbf{NO}} node[sloped, anchor=center,below] {\scriptsize $\geq$} (iast);
            \draw (ast) edge[-implies,double equal sign distance, bend left] node[sloped, anchor=center,below] {\scriptsize $\leq$} (iast);
            \draw (iast) edge[-implies,double equal sign distance, bend left] node[sloped, anchor=center,above] {\scriptsize \textbf{NO}+\textbf{LL}+\textbf{RL}} node[sloped, anchor=center,below] {\scriptsize $\geq$} (ast);
        \end{tikzpicture}\vspace*{.2cm}
        \subcaption{Derivational Complexity}
    \end{subfigure}
    \caption{Relations for expected complexity}\label{fig:relations_ecomplex}
  \end{figure}
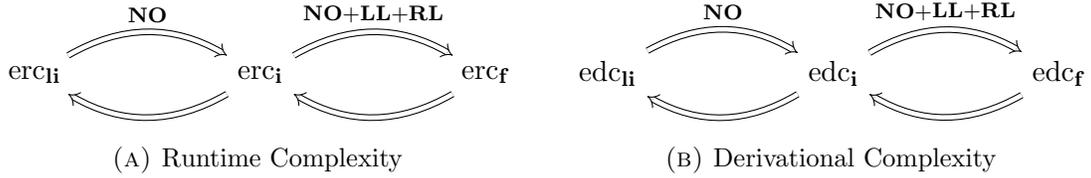

For the non-probabilistic setting,
\Cref{properties-eq-iAST-liAST-complex} implies
the following corollary.

\begin{restatable}[From Leftmost-Innermost to Innermost Complexity]{cor}{iSNAndliSNPropertyOneComplex}\label{properties-eq-iDer-liDer-complex}
  If a TRS $\R$ is NO, then:
  \[\dertime_{\lito_{\R}} = \dertime_{\ito_{\R}} \hspace*{1cm} \text{ and } \hspace*{1cm} \runtime_{\lito_{\R}} = \runtime_{\ito_{\R}}\!\]
\end{restatable}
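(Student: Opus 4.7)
The plan is to deduce this corollary directly from Theorem~\ref{properties-eq-iAST-liAST-complex} by the standard embedding of ordinary TRSs into PTRSs with trivial probabilities. Given a TRS $\R$, define the PTRS $\PP_{\R} = \{\ell \to \{1:r\} \mid \ell \to r \in \R\}$. Since the left-hand sides of rules coincide, $\R$ is NO if and only if $\PP_{\R}$ is NO; hence the assumption of Theorem~\ref{properties-eq-iAST-liAST-complex} is satisfied for $\PP_{\R}$.

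Next, I would verify the key correspondence between the non-probabilistic and probabilistic rewrite relations. For any term $t \in \TT$ and any strategy $s \in \{\mathbf{i},\mathbf{li}\}$, every $\sto_{\PP_{\R}}$-step has the form $t \sto_{\PP_{\R}} \{1 : t'\}$ with $t \sto_{\R} t'$, and conversely. Lifting this to $\F{T}$-RSTs, every $\sto_{\PP_{\R}}$-RST whose root carries probability $1$ is simply a linear chain of nodes of probability $1$ corresponding to an ordinary $\sto_{\R}$-rewrite sequence, and vice versa. Consequently, for every term $t$ one has $\edh_{\sto_{\PP_{\R}}}(t) = \dheight_{\sto_{\R}}(t)$, since the supremum defining $\edh$ is attained along such deterministic chains whose expected derivation length equals their (ordinary) length.

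Taking the supremum over all terms $t$ of size at most $n$ (resp.\ over basic terms of size at most $n$) yields
\[
\edertime_{\sto_{\PP_{\R}}}(n) = \dertime_{\sto_{\R}}(n) \quad \text{and} \quad \eruntime_{\sto_{\PP_{\R}}}(n) = \runtime_{\sto_{\R}}(n)
\]
for both $s = \mathbf{i}$ and $s = \mathbf{li}$. Applying Theorem~\ref{properties-eq-iAST-liAST-complex} to the NO PTRS $\PP_{\R}$ gives $\edertime_{\lito_{\PP_{\R}}} = \edertime_{\ito_{\PP_{\R}}}$ and $\eruntime_{\lito_{\PP_{\R}}} = \eruntime_{\ito_{\PP_{\R}}}$, and substituting the two identities above yields the two claimed equalities for $\R$.

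Since this is a pure reduction to the probabilistic result, there is no real obstacle; the only thing to be careful about is the correspondence $\edh_{\sto_{\PP_{\R}}}(t) = \dheight_{\sto_{\R}}(t)$, which must be argued via the fact that rules of shape $\ell \to \{1:r\}$ cannot branch probabilistically, so the $\sto_{\PP_{\R}}$-RSTs of interest are exactly the non-probabilistic $\sto_{\R}$-rewrite sequences (including those realising the supremum in $\dheight_{\sto_{\R}}$).
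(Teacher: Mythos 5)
Your proposal is correct and matches the paper's own (implicit) argument: the paper derives this corollary from Theorem~\ref{properties-eq-iAST-liAST-complex} via exactly the embedding $\ell \to \{1:r\}$ of TRSs as PTRSs with trivial probabilities, just as it makes explicit for the analogous Corollary~\ref{properties-eq-Der-iDer-1}. Your additional care in checking that NO transfers and that $\edh_{\sto_{\PP_{\R}}}(t) = \dheight_{\sto_{\R}}(t)$ (since trivial-probability rules cannot branch, so RSTs degenerate to ordinary rewrite sequences) is precisely the verification the paper leaves to the reader.
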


\section{Improving Applicability}\label{Improving Applicability}

In this section, we present several ideas to improve the applicability of \cref{properties-eq-AST-iAST-1}  
which relates $\PSNf$ and $\PSNi$ --- the most interesting theorem in practice.
Recall that according to \cref{properties-eq-AST-iAST-1},
$\PSNi$ implies $\PSNf$ for PTRSs $\PP$ that are NO, LL, and RL.
In \Cref{sec:simultaneous rewriting}, we first discuss attempts in order
to remove the requirement of 
left-linearity by modifying the rewrite relation to \emph{simultaneous rewriting}.
Here, we correct an inaccuracy in our previous conference paper \cite{FoSSaCS2024}, where
we claimed that simultaneous rewriting allows us to remove
the left-linearity requirement. However, we now give a
counterexample which shows that this improvement does not work in general.
Then in \Cref{sec:spareness} we target right-linearity, and show that it can be
weakened to spareness if one only considers rewrite sequences that start with basic terms,
as in the definition of runtime complexity.

\subsection{Attempting to Remove Left-Linearity by Simultaneous Rewriting}\label{sec:simultaneous rewriting}

To try to remove the left-linearity requirement, one might consider
the
simultaneous reduction of several copies of identical redexes. 
For a PTRS $\PP$, this results in the notion of \emph{simultaneous rewriting}, denoted $\ftopara_{\PP}$. 
The following example illustrates the idea of handling non-left-linear PTRSs by
applying the same rewrite rule at parallel positions simultaneously. 

\begin{exa}[Simultaneous Rewriting]\label{example:simultaneous-simultaneous-rewriting}
    Reconsider the PTRS $\PP_2$ from Counterex.\ \ref{example:diff-AST-vs-iAST-left-lin} 
    with the rules $\tf(x,x) \to \{1:\tf(\ta,\ta)\}$ and 
    $\ta \to \{\nicefrac{1}{2}:\tb, \nicefrac{1}{2}:\tc\}$, 
    where we have $\PSN[\ito_{\PP_2}]$, but not $\PSN[\fto_{\PP_2}]$.
    Our new rewrite relation ${\paraliftto_{\PP_2}}$ allows us to reduce several copies
    of the same redex simultaneously, so that we get
    \[\{1:\tf(\ta,\ta)\}
    \iparaliftto_{\PP_2} \{\tfrac{1}{2}:\tf(\tb,\tb),\tfrac{1}{2}:\tf(\tc,\tc)\}
    \iparaliftto_{\PP_2} \{\nicefrac{1}{2}:\tf(\ta,\ta),
    \nicefrac{1}{2}:\tf(\ta,\ta)\} \iparaliftto_{\PP_2} \ldots,\]
    i.e., this $\,\iparaliftto_{\PP_2}$-rewrite sequence converges with probability 0 and thus,
    we do \emph{not} have $\AST[\itopara_{\PP_2}]$ 
    (hence, also neither $\PAST[\itopara_{\PP_2}]$ nor $\SAST[\itopara_{\PP_2}]$). 
    Note that we simultaneously reduced both occurrences of $\ta$ in the first step.
\end{exa}

\begin{defi}[Simultaneous Rewriting]\label{def:PTRS}
    Let $\PP$ be a PTRS\@.
    A term $s$ rewrites \emph{simultaneously} to a multi-distribution $\mu = \{p_1:t_1, \ldots,
    p_k:t_k\}$ (denoted $s \ftopara_{\PP} \mu$) if there is a non-empty set of
    parallel positions $\Pi = \{\pi_1, \ldots, \pi_m\} \subseteq \pos(s)$, a rule $\ell \to \{p_1:r_1, \ldots, p_k:r_k\} \in \PP$, 
    and a substitution $\sigma$ such that $s|_{\pi_i}=\ell\sigma$ for all $1 \leq i \leq m$
    and $t_j = s[r_j\sigma]_{\pi_1}\ldots[r_j\sigma]_{\pi_m}$ for all $1 \leq j \leq k$.
    The simultaneous rewrite step is \emph{innermost} (denoted $s \itopara_\PP \mu$) 
    if all proper subterms of the redex $\ell\sigma$ are in normal form w.r.t.\ $\PP$.
\end{defi}

Clearly, if the set of positions $\Pi$ in \Cref{def:PTRS} is a singleton, then the
resulting simultaneous rewrite step is an ``ordinary'' probabilistic rewrite step, i.e.,
\pagebreak[3]
${\fto_{\PP}} \subseteq {\ftopara_{\PP}}$ and
${\itos} \subseteq {\itopara_{\PP}}$.

\begin{restatable}[From $\topara_{\PP}$ to $\to_{\PP}$]{cor}{rewriteRelationRelation}\label{rewrite-relation-relation-cor}
    Let $\PP$ be a PTRS. Then, $\PSN[\stopara_{\PP}]$ implies $\PSNs$ for $s \in \{\mathbf{f}, \mathbf{i}\}$.
\end{restatable}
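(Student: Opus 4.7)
The plan is to exploit the inclusion of rewrite relations that is already essentially observed just before the corollary: whenever the set $\Pi$ in \Cref{def:PTRS} is a singleton, a simultaneous rewrite step collapses to an ordinary probabilistic rewrite step, so ${\fto_{\PP}} \subseteq {\ftopara_{\PP}}$ and ${\ito_{\PP}} \subseteq {\itopara_{\PP}}$. I would first upgrade this inclusion from the term-level relations to the lifted relations on multi-distributions: a straightforward induction on the definition of lifting (\Cref{def:lifting}) shows that ${\sto_{\PP}} \subseteq {\stopara_{\PP}}$ entails ${\sliftto_{\PP}} \subseteq {\sparaliftto_{\PP}}$ for $s \in \{\mathbf{f},\mathbf{i}\}$, since each of the three generating clauses for $\sliftto_{\PP}$ is directly reproducible using $\stopara_{\PP}$.

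From this inclusion the corollary follows by contraposition. Given any infinite $\sliftto_{\PP}$-rewrite sequence $\vec{\mu} = (\mu_n)_{n \in \IN}$, the very same sequence is an infinite $\sparaliftto_{\PP}$-rewrite sequence, with identical convergence probability $\lim_{n\to\infty}|\mu_n|_{\PP}$ and identical expected derivation length $\edl(\vec{\mu})$, since these quantities depend only on the multi-distributions $\mu_n$ and on $\NF_{\PP} = \NF_{\PP}^{\mathrel{\topara_{\PP}}}$ (note that a term is a normal form w.r.t.\ $\topara_{\PP}$ iff it is a normal form w.r.t.\ $\to_{\PP}$, because $\topara_{\PP}$ requires at least one redex position). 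Hence, if $\ASTs$ fails, exhibited by some $\vec{\mu}$ with $\lim_{n\to\infty}|\mu_n|_{\PP} < 1$, then $\AST[\stopara_{\PP}]$ fails via the same $\vec{\mu}$; and analogously for $\PASTs$ using $\edl(\vec{\mu}) = \infty$.

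For $\SASTs$ the argument is the same once one observes that the supremum defining $\edh_{\sto_{\PP}}(t)$ is taken over a subset of the set of rewrite sequences defining $\edh_{\stopara_{\PP}}(t)$, so $\edh_{\sto_{\PP}}(t) \leq \edh_{\stopara_{\PP}}(t)$ for every term $t$. Therefore, if $\SASTs$ fails at some $t$, then $\SAST[\stopara_{\PP}]$ fails at the same~$t$.

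No step presents a real obstacle: the only thing to verify carefully is that lifting is monotone in the underlying term relation, which is immediate from \Cref{def:lifting}. The corollary then amounts to nothing more than passing the contrapositive of $\PSN[\stopara_{\PP}] \Longrightarrow \PSNs$ through this monotonicity.
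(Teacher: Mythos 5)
Your proposal is correct and matches the paper's intended argument: the corollary is derived directly from the inclusion ${\sto_{\PP}} \subseteq {\stopara_{\PP}}$ noted immediately before its statement, so that every $\sto_{\PP}$-rewrite sequence witnessing a failure of $\PSNs$ is itself a $\stopara_{\PP}$-rewrite sequence witnessing a failure of $\PSN[\stopara_{\PP}]$. Your additional care in checking that the lifting is monotone in the underlying relation, that normal forms w.r.t.\ $\stopara_{\PP}$ and $\sto_{\PP}$ coincide (so the quantities $|\mu_n|_{\PP}$, $\edl(\vec{\mu})$, and $\edh$ are unaffected), and that for $\SAST$ the supremum ranges over a subset, is exactly the justification the paper leaves implicit.
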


However, the converse of \cref{rewrite-relation-relation-cor} does not hold.
\cref{example:simultaneous-simultaneous-rewriting} shows that $\PSNi$  
does not imply $\PSN[\itopara_{\PP}]$, and
the following example shows the same for $\PSNf$ and $\PSN[\ftopara_{\PP}]$.

\begin{exa}\label{example:simultaneous-simultaneous-rewriting-full}
    Consider the PTRS $\overline{\PP}_2$ with the three rules:
    
    \vspace*{-.5cm}
    \begin{minipage}[t]{7cm}
        \begin{align*}
            \tf(\tb,\tb) &\to \{1:\tf(\ta,\ta)\}\\
            \tf(\tc,\tc) &\to \{1:\tf(\ta,\ta)\}
        \end{align*}
    \end{minipage}
    \begin{minipage}[t]{7cm}
        \vspace*{.2cm}
        \begin{align*}
            \ta &\to \{\nicefrac{1}{2}:\tb, \nicefrac{1}{2}:\tc\}
        \end{align*}
    \end{minipage}

\vspace*{.2cm}
    
    \noindent
    We have $\SAST[\fto_{\overline{\PP}_2}]$ (hence, also $\PAST[\fto_{\overline{\PP}_2}]$ and $\AST[\fto_{\overline{\PP}_2}]$).
    But as in \Cref{example:simultaneous-simultaneous-rewriting}, we obtain
    $\{1:\tf(\ta,\ta)\}
    \iparaliftto_{\overline{\PP}_2} \{\tfrac{1}{2}:\tf(\tb,\tb),\tfrac{1}{2}:\tf(\tc,\tc)\}
    \iparaliftto_{\overline{\PP}_2} \{\nicefrac{1}{2}:\tf(\ta,\ta),
    \nicefrac{1}{2}:\tf(\ta,\ta)\}$,
    i.e., there are rewrite sequences with
    $\,\iparaliftto_{\overline{\PP}_2}$, and thus, also with
    $\,\fparaliftto_{\overline{\PP}_2}$, that converge with probability 0. 
    Hence, $\AST[\itopara_{\overline{\PP}_2}]$
    does not hold, and therefore, 
    $\AST[\ftopara_{\overline{\PP}_2}]$, 
    $\PAST[\ftopara_{\overline{\PP}_2}]$, and
 $\SAST[\ftopara_{\overline{\PP}_2}]$ do not hold either.
\end{exa}

Note that this kind of simultaneous rewriting is different from the ``ordinary'' parallelism used for non-probabilistic rewriting, which is typically denoted by $\fto_{||}$. 
There, one may reduce multiple parallel redexes in a single rewrite step.
To be precise, a term $s$ rewrites \emph{parallel} to a multi-distribution $\mu$ (denoted $s \fto_{||} \mu$) w.r.t.\ a PTRS $\PP$
if there is a non-empty set of parallel redex positions $\Pi = \{\pi_1, \ldots, \pi_m\}
\subseteq \pos(s)$, 
such that $\{1:s\} \fliftto_{\PP,\pi_1} \ldots \fliftto_{\PP,\pi_m} \mu$.
Here, $\fliftto_{\PP,\pi}$ denotes the relation $\fliftto_{\PP}$, where we rewrite at position $\pi$ in each term in the multi-distribution.
This is always possible, since the positions are parallel positions of the
redexes in the start term $s$.

The two differences between simultaneous rewriting and parallel rewriting are that 
while both of them
allow the reduction of
multiple redexes,  simultaneous rewriting  ``merges'' the
corresponding terms in the multi-distributions that result from rewriting the several redexes. 
Because of this merging, 
we only allow the simultaneous reduction of \emph{equal} redexes, whereas ``ordinary'' parallel rewriting allows the simultaneous reduction of arbitrary parallel redexes,
which is the second difference.
For example, for $\PP_2$ from Counterex.\ \ref{example:diff-AST-vs-iAST-left-lin} we have
$\{1:\tf(\ta,\ta)\} \iparaliftto_{\PP_2}
\{\tfrac{1}{2}:\tf(\tb,\tb),\tfrac{1}{2}:\tf(\tc,\tc)\}$, whereas
with ordinary parallel rewriting we would obtain $\{1:\tf(\ta,\ta)\} \iliftto_{|| \PP_2}
\{\tfrac{1}{4}:\tf(\tb,\tb),\tfrac{1}{4}:\tf(\tb,\tc),\tfrac{1}{4}:\tf(\tc,\tb),\tfrac{1}{4}:\tf(\tc,\tc)\}$.

Even though simultaneous rewriting solves the issue of Counterex.\ \ref{example:diff-AST-vs-iAST-left-lin}, 
the following (more involved) counterexample shows that even with this refined rewrite relation,
we cannot remove left-linearity from \cref{properties-eq-AST-iAST-1}. In other words, there
exist PTRSs $\PP$ that are non-overlapping and right-linear, but where
$\PSN[\itopara_{\PP}]$ does not imply $\PSNf$.

\begin{counterexample}\label{example:simultaneous-rewriting-bug}
  Consider the PTRS
    $\PP_6$ with the following seven rules,
    where $s = \th(\tf(\ta,\ta))$:
    
    \vspace*{-.5cm}
    \begin{minipage}[t]{7cm}
        \vspace*{.25cm}
        \begin{align*}
            \th(x) &\to \{\nicefrac{1}{2}:\th_1(x), \nicefrac{1}{2}:\th_2(x)\}\\
            \ta &\to \{\nicefrac{1}{2}:\tb, \nicefrac{1}{2}:\tc\}\\
            \tf(x,x) &\to \{1:x\}\\
        \end{align*}
    \end{minipage}
    \begin{minipage}[t]{7cm}
        \begin{align*}
            \th_1(\tb) &\to \{1:s\}\\
            \th_1(\tc) &\to \{1:s\}\\
            \th_2(\tf(\tb,\tc)) &\to \{1:\td(s,s,s)\}\\
            \th_2(\tf(\tc,\tb)) &\to \{1:\te(s,s)\}\\
        \end{align*}
    \end{minipage}

    \noindent
    Let us take a closer look at the following
    $\fto_{\PP_6}$-RST 
    that starts with the term $s$.

    \medskip

    \begin{center}
        \scriptsize 
        \begin{tikzpicture}
            \tikzstyle{adam}=[thick,draw=black!100,fill=white!100,minimum size=4mm, shape=rectangle split, rectangle split parts=2,rectangle split horizontal]
            \tikzstyle{empty}=[rectangle,thick,minimum size=4mm]

            \node[adam] at (0, 1)  (11) {$1$\nodepart{two}$\th(\tf(\ta, \ta))$};

            \node[adam] at (-4, 0)  (111) {$\nicefrac{1}{2}$\nodepart{two}$\th_1(\tf(\ta, \ta))$};
            \node[adam] at (4, 0)  (112) {$\nicefrac{1}{2}$\nodepart{two}$\th_2(\tf(\ta, \ta))$};

            \node[adam] at (-4, -1)  (1111) {$\nicefrac{1}{2}$\nodepart{two}$\th_1(\ta)$};
            \node[adam] at (2, -1)  (1121) {$\nicefrac{1}{4}$\nodepart{two}$\th_2(\tf(\tb, \ta))$};
            \node[adam] at (6, -1)  (1122) {$\nicefrac{1}{4}$\nodepart{two}$\th_2(\tf(\tc, \ta))$};

            \node[adam] at (-5, -2)  (11111) {$\nicefrac{1}{4}$\nodepart{two}$\th_1(\tb)$};
            \node[adam] at (-3, -2)  (11112) {$\nicefrac{1}{4}$\nodepart{two}$\th_1(\tc)$};
            \node[adam] at (0, -2)  (11211) {$\nicefrac{1}{8}$\nodepart{two}$\th_2(\tf(\tb, \tb))$};
            \node[adam] at (2.5, -2)  (11212) {$\nicefrac{1}{8}$\nodepart{two}$\th_2(\tf(\tb, \tc))$};
            \node[adam] at (5.5, -2)  (11221) {$\nicefrac{1}{8}$\nodepart{two}$\th_2(\tf(\tc, \tb))$};
            \node[adam] at (8, -2)  (11222) {$\nicefrac{1}{8}$\nodepart{two}$\th_2(\tf(\tc, \tc))$};

            \node[adam] at (-5, -3)  (111111) {$\nicefrac{1}{4}$\nodepart{two}$s$};
            \node[adam] at (-3, -3)  (111121) {$\nicefrac{1}{4}$\nodepart{two}$s$};
            \node[adam] at (0, -3)  (112111) {$\nicefrac{1}{8}$\nodepart{two}$\th_2(\tb)$};
            \node[adam] at (2.5, -3)  (112121) {$\nicefrac{1}{8}$\nodepart{two}$\td(s,s,s)$};
            \node[adam] at (5.5, -3)  (112211) {$\nicefrac{1}{8}$\nodepart{two}$\te(s,s)$};
            \node[adam] at (8, -3)  (112221) {$\nicefrac{1}{8}$\nodepart{two}$\th_2(\tc)$};

            \draw (11) edge[->] (111);
            \draw (11) edge[->] (112);
            \draw (111) edge[->] (1111);
            \draw (112) edge[->] (1121);
            \draw (112) edge[->] (1122);
            \draw (1111) edge[->] (11111);
            \draw (1111) edge[->] (11112);
            \draw (1121) edge[->] (11211);
            \draw (1121) edge[->] (11212);
            \draw (1122) edge[->] (11221);
            \draw (1122) edge[->] (11222);
            \draw (11111) edge[->] (111111);
            \draw (11112) edge[->] (111121);
            \draw (11211) edge[->] (112111);
            \draw (11212) edge[->] (112121);
            \draw (11221) edge[->] (112211);
            \draw (11222) edge[->] (112221);
        \end{tikzpicture}
    \end{center}

    \medskip

    We do not have $\AST[\fto_{\PP_6}]$ 
    (hence, also neither $\PAST[\fto_{\PP_6}]$ nor $\SAST[\fto_{\PP_6}]$).
    Starting with $s$, we can either obtain $s$ again with a chance of $\nicefrac{1}{2}$,
    end in a normal form with a chance of $\nicefrac{1}{4}$,
    rewrite to two occurrences of $s$ with a chance of $\nicefrac{1}{8}$,
    or rewrite to three occurrences of $s$ with a chance of $\nicefrac{1}{8}$.
    Overall, this results in a random walk biased towards non-termination, that is not
    $\AST$
    (i.e., this RST can be extended to an RST whose convergence probability is $< 1$).

    The crucial observation is that the order of rule application differs between the subtrees:
    In the left subtree starting at $\th_1(\tf(\ta, \ta))$, 
    we want to use the non-left-linear $\tf$-rule first.
    In the right subtree starting at $\th_2(\tf(\ta, \ta))$,
    we want to use the $\ta$-rules first to create the terms $\tf(\tb, \tc)$ and $\tf(\tc, \tb)$.
    So essentially, we have to decide on the rewrite strategy after rewriting the top $\th$-symbol.
    This is impossible in innermost evaluations and cannot be solved by simultaneous rewriting either.
    For innermost simultaneous rewriting, we would have to decide on whether we want to rewrite 
    both occurrences of $\ta$ in the first step or just one.
    In both cases, we obtain an RST
    with finite expected derivation length. Hence, 
 we have $\SAST[\itopara_{\PP_6}]$ 
    (which implies $\PAST[\itopara_{\PP_6}]$ and $\AST[\itopara_{\PP_6}]$).

More precisely, if we decide to rewrite both $\ta$-symbols simultaneously at the start, we obtain
 the following $\itopara_{\PP_6}$-RST.

\medskip
 
    \begin{center}
        \scriptsize 
        \begin{tikzpicture}
            \tikzstyle{adam}=[thick,draw=black!100,fill=white!100,minimum size=4mm, shape=rectangle split, rectangle split parts=2,rectangle split horizontal]
            \tikzstyle{empty}=[rectangle,thick,minimum size=4mm]

            \node[adam] at (0, 1)  (11) {$1$\nodepart{two}$\th(\tf(\ta, \ta))$};

            \node[adam] at (-3, 0)  (111) {$\nicefrac{1}{2}$\nodepart{two}$\th(\tf(\tb, \tb))$};
            \node[adam] at (3, 0)  (112) {$\nicefrac{1}{2}$\nodepart{two}$\th(\tf(\tc,
              \tc))$};

            \node[adam] at (-3, -1)  (1111) {$\nicefrac{1}{2}$\nodepart{two}$\th(\tb)$};
            \node[adam] at (3, -1)  (1121) {$\nicefrac{1}{2}$\nodepart{two}$\th(\tc)$};

            \node[adam] at (-4.5, -2)  (11111) {$\nicefrac{1}{4}$\nodepart{two}$\th_1(\tb)$};
            \node[adam] at (-1.5, -2)  (11112) {$\nicefrac{1}{4}$\nodepart{two}$\th_2(\tb)$};
            \node[adam] at (1.5, -2)  (11211) {$\nicefrac{1}{4}$\nodepart{two}$\th_1(\tc)$};
            \node[adam] at (4.5, -2)  (11212) {$\nicefrac{1}{4}$\nodepart{two}$\th_2(\tc)$};

            \node[adam] at (-4.5, -3)  (111111) {$\nicefrac{1}{4}$\nodepart{two}$s$};
            \node[adam] at (1.5, -3)  (112111) {$\nicefrac{1}{4}$\nodepart{two}$s$};

            \draw (11) edge[->] (111);
            \draw (11) edge[->] (112);
            \draw (111) edge[->] (1111);
            \draw (112) edge[->] (1121);

            \draw (1111) edge[->] (11111);
            \draw (1111) edge[->] (11112);
            
            \draw (1121) edge[->] (11211);
            \draw (1121) edge[->] (11212);

            \draw (11111) edge[->] (111111);
            \draw (11211) edge[->] (112111);
        \end{tikzpicture}
    \end{center}

\medskip
    
Overall,  this results in a random walk biased towards termination,
because we reach a normal form with a chance of $\nicefrac{1}{2}$ and the term
$s$ with a chance of $\nicefrac{1}{2}$,
    but we cannot increase the number of occurrences of $s$.

    If we rewrite both $\ta$-symbols individually, then we  obtain
 the following $\itopara_{\PP_6}$-RST.

\medskip
 
    \begin{center}
        \tiny 
        \begin{tikzpicture}
            \tikzstyle{adam}=[thick,draw=black!100,fill=white!100,minimum size=4mm, shape=rectangle split, rectangle split parts=2,rectangle split horizontal]
            \tikzstyle{empty}=[rectangle,thick,minimum size=4mm]

            \node[adam] at (0, 1)  (11) {$1$\nodepart{two}$\th(\tf(\ta, \ta))$};

            \node[adam] at (-4, 0)  (111) {$\nicefrac{1}{2}$\nodepart{two}$\th(\tf(\tb, \ta))$};
            \node[adam] at (4, 0)  (112) {$\nicefrac{1}{2}$\nodepart{two}$\th(\tf(\tc, \ta))$};

            \node[adam] at (-6, -1)  (1111) {$\nicefrac{1}{4}$\nodepart{two}$\th(\tf(\tb, \tb))$};
            \node[adam] at (-2.15, -1)  (1112) {$\nicefrac{1}{4}$\nodepart{two}$\th(\tf(\tb, \tc))$};
            \node[adam] at (2.15, -1)  (1121) {$\nicefrac{1}{4}$\nodepart{two}$\th(\tf(\tc, \tb))$};
            \node[adam] at (6, -1)  (1122) {$\nicefrac{1}{4}$\nodepart{two}$\th(\tf(\tc, \tc))$};

            \node[adam] at (-6, -2)  (11111) {$\nicefrac{1}{4}$\nodepart{two}$\th(\tb)$};
            \node[adam] at (-3.2, -2)  (11121) {$\nicefrac{1}{8}$\nodepart{two}$\th_1(\tf(\tb, \tc))$};
            \node[adam] at (-1.1, -2)  (11122) {$\nicefrac{1}{8}$\nodepart{two}$\th_2(\tf(\tb, \tc))$};

            \node[adam] at (1.1, -2)  (11211) {$\nicefrac{1}{8}$\nodepart{two}$\th_1(\tf(\tc, \tb))$};
            \node[adam] at (3.2, -2)  (11212) {$\nicefrac{1}{8}$\nodepart{two}$\th_2(\tf(\tc, \tb))$};
            \node[adam] at (6, -2)  (11221) {$\nicefrac{1}{8}$\nodepart{two}$\th(\tc)$};
            
            \node[adam] at (-6.8, -3)  (111111) {$\nicefrac{1}{8}$\nodepart{two}$\th_1(\tb)$};
            \node[adam] at (-5.2, -3)  (111112) {$\nicefrac{1}{8}$\nodepart{two}$\th_2(\tb)$};
            \node[adam] at (-1.1, -3)  (111221) {$\nicefrac{1}{8}$\nodepart{two}$\td(s,s,s)$};
            \node[adam] at (3.2, -3)  (112121) {$\nicefrac{1}{8}$\nodepart{two}$\te(s,s)$};
            \node[adam] at (5.2, -3)  (112211) {$\nicefrac{1}{8}$\nodepart{two}$\th_1(\tc)$};
            \node[adam] at (6.8, -3)  (112212) {$\nicefrac{1}{8}$\nodepart{two}$\th_2(\tc)$};

            \node[adam] at (-6.8, -4)  (1111111) {$\nicefrac{1}{8}$\nodepart{two}$s$};
            \node[adam] at (5.2, -4)  (1122111) {$\nicefrac{1}{8}$\nodepart{two}$s$};

            \draw (11) edge[->] (111);
            \draw (11) edge[->] (112);
            
            \draw (111) edge[->] (1111);
            \draw (111) edge[->] (1112);
            
            \draw (112) edge[->] (1121);
            \draw (112) edge[->] (1122);
            
            \draw (1111) edge[->] (11111);
            \draw (1112) edge[->] (11121);
            \draw (1112) edge[->] (11122);
            \draw (1121) edge[->] (11211);
            \draw (1121) edge[->] (11212);
            \draw (1122) edge[->] (11221);

            \draw (11111) edge[->] (111111);
            \draw (11111) edge[->] (111112);

            \draw (11122) edge[->] (111221);
            \draw (11221) edge[->] (112211);
            \draw (11122) edge[->] (111221);
            \draw (11212) edge[->] (112121);

            \draw (11221) edge[->] (112211);
            \draw (11221) edge[->] (112212);
      
            \draw (111111) edge[->] (1111111);
            \draw (112211) edge[->] (1122111);
        \end{tikzpicture}
    \end{center}

    \medskip
    
    Again, this results in a random walk biased towards termination,
because starting with $s$, we can either obtain $s$ again with a chance of $\nicefrac{1}{4}$,
    end in a normal form with a chance of $\nicefrac{1}{2}$,
    rewrite to two occurrences of $s$ with a chance of $\nicefrac{1}{8}$,
    or rewrite to three occurrences of $s$ with a chance of $\nicefrac{1}{8}$.
So every $\itopara_{\PP_6}$-RST has a finite expected derivation length and we
obtain $\SAST[\itopara_{\PP_6}]$ 
    (which implies $\PAST[\itopara_{\PP_6}]$ and
$\AST[\itopara_{\PP_6}]$).
\end{counterexample}

So this counterexample refutes 
\cite[Thm.\ 27]{FoSSaCS2024} where it was claimed that
$\PSN[\itopara_{\PP}]$  implies $\PSNf$ for $\PSN \in  \{\AST,\PAST\}$.
The problem is that the proof of 
\cite[Thm.\ 27]{FoSSaCS2024}  was
limited to terms with at most two nested defined symbols.
For example, every term in the rewrite sequence of
Counterex.\ \ref{example:diff-AST-vs-iAST-left-lin}
has at most two nested defined symbols, e.g., in the term $\tf(\ta,\ta)$
the defined symbol $\tf$ is above the defined symbol $\ta$, 
but there is no further defined symbol above $\tf$.
However, Counterex.\ \ref{example:simultaneous-rewriting-bug} 
uses a term $\th(\tf(\ta,\ta))$ with three nested defined symbols.
As mentioned in Counterex.\ \ref{example:simultaneous-rewriting-bug}, 
we have to decide on the rewrite strategy after rewriting the symbol $\th$ at the root position,
and cannot fix an innermost evaluation strategy.
Due to the non-left-linear $\tf$-rule, depending on
whether one first rewrites $\tf$ or $\ta$, one obtains
completely different distributions even though the PTRS is non-overlapping.

Note that for $\wPSN$,
the direction of the implication in \Cref{rewrite-relation-relation-cor} is reversed, 
since $\wPSN$ requires that for each start term, there \emph{exists} an infinite rewrite sequence satisfying a certain property, 
whereas $\PSN$ requires that \emph{all} infinite rewrite sequences
satisfy a certain property.
Thus, if there exists an infinite $\liftto_\PP$-rewrite sequence that, e.g., converges with probability $1$ 
(showing that  $\wASTf$ holds),
then this is also a valid $\paraliftto_\PP$-rewrite sequence that converges with probability $1$
(showing that $\wAST[\ftopara_{\PP}]$ holds).

\begin{restatable}[From $\wPSNf$ to \protect{$\wPSN[\ftopara_{\PP}]$}]{cor}{rewriteRelationRelationWeakAST}\label{rewrite-relation-relation-cor-weak-AST}
    Let $\PP$ be a PTRS. Then, $\wPSNf$ implies $\wPSN[\ftopara_{\PP}]$.
\end{restatable}

For $\wPSN$, Counterex.\ \ref{example:simultaneous-rewriting-bug} shows that 
simultaneous rewriting cannot be used to remove the requirement of left-linearity
from
\Cref{properties-AST-vs-wAST} either.
The following Counterex.\ \ref{example:problem-sim-rewriting-wAST}
shows that we can even give a counterexample that only considers terms 
where the number of nested defined symbols is at most $2$.

\begin{counterexample}\label{example:problem-sim-rewriting-wAST}
    Consider the non-left-linear PTRS $\PP_7$ with the two rules:

    \vspace*{-.5cm}
    \begin{minipage}[t]{7cm}
        \begin{align*}
            \tg &\to \{\nicefrac{3}{4}:\td(\tg,\tg), \nicefrac{1}{4}:\tz\}
        \end{align*}
    \end{minipage}
    \begin{minipage}[t]{7cm}
        \begin{align*}
            \td(x,x) &\to \{1:x\}
        \end{align*}
    \end{minipage}
    \vspace*{.2cm}

    \noindent
    We do not have $\AST[\fto_{\PP_7}]$ (hence, not $\PAST[\fto_{\PP_7}]$ either), 
    as we have $\{1:\tg\} \fliftto_{\PP_7} \{\nicefrac{3}{4}:\td(\tg,\tg), \nicefrac{1}{4}:\tz\}$,
    which corresponds to a random walk biased towards non-termination if we never use the $\td$-rule (since $\tfrac{3}{4} > \tfrac{1}{4}$).
    However, if we always use the $\td$-rule directly after the $\tg$-rule, then we
    essentially end up with a PTRS whose only rule is $\tg \to \{\nicefrac{3}{4}:\tg,
    \nicefrac{1}{4}:\tz\}$, which corresponds to flipping a biased coin until heads comes up.
    This proves $\wPAST[\fto_{\PP_7}]$ and hence, also $\wAST[\fto_{\PP_7}]$.
    As $\PP_7$ is non-overlapping, right-linear, and non-erasing,
    this shows that a variant of \Cref{properties-AST-vs-wAST} without the requirement of
    left-linearity would need more than just moving to simultaneous rewriting.
\end{counterexample}

\subsection{Weakening Right-Linearity to Spareness}\label{sec:spareness}

In the non-probabilistic setting,
runtime complexity is easier to analyze than derivational complexity
because of the restriction to basic start terms.
In particular, this restriction also
allows us to use notions like spareness such that full runtime complexity can be analyzed
via innermost runtime complexity, see \Cref{properties-irun-vs-run}.
Similarly, in the probabilistic setting
we can also require spareness instead of right-linearity,
if we only consider basic start terms.
To adapt spareness to PTRSs $\PP$, a rewrite step
using the rule $\ell \to \mu \in \PP$ and the substitution
$\sigma$ is called \emph{spare}
if $\sigma(x) \in \NF_{\PP}$ for every $x \in \Var$ that occurs more than once in
some $r\in\Supp(\mu)$.
A $\fliftto_\PP$-rewrite sequence is spare if each of its $\fliftto_\PP$-steps is spare.
$\PP$ is spare if each $\fliftto_\PP$-rewrite sequence that starts with $\{1:t\}$
for a basic term $t \in \TT_{\BB}$ is spare.

\begin{exa}\label{example:spareness}
    Consider the PTRS $\PP_8$ with the two rules:

    \vspace*{-.5cm}
    \begin{minipage}[t]{7cm}
        \begin{align*}
            \tg &\to \{\nicefrac{3}{4}:\td(\tz), \nicefrac{1}{4}:\tg\}
        \end{align*}
    \end{minipage}
    \hspace*{.5cm}
  \begin{minipage}[t]{4cm}
        \begin{align*}
            \td(x) &\to \{1:\tc(x,x)\}
        \end{align*}
    \end{minipage}

    \vspace*{.3cm}

    \noindent
    It is similar to the PTRS $\PP_1$ from Counterex.\ \ref{example:diff-AST-vs-iAST-dup}, but we
    exchanged
    the symbols $\tg$ and $\tz$ in the right-hand side of the $\tg$-rule.
    This PTRS is orthogonal but duplicating due to the $\td$-rule.
    However, in any rewrite sequence that starts with $\{1:t\}$
    for a basic term $t$ we can only duplicate constructor symbols but no terms with
    defined symbols.
    Hence, $\PP_8$ is spare.
\end{exa}

If a PTRS $\PP$ is spare, and we start with a basic term, then we will only duplicate normal
forms with our duplicating rules.
This means that the duplicating rules do not influence the (expected) runtime and, more importantly
for $\mathtt{AST}$, the probability of convergence.
This leads to the following theorem, which
weakens the requirement of RL to SP in \Cref{properties-eq-AST-iAST-1},
where ``\emph{starting in $\TT_{\BB}$}'' means 
that one only considers rewrite sequences that start with $\{1:t\}$ for a term $t \in
\TT_{\BB}$, where $\TT_{\BB}$ is again the set of all basic terms w.r.t.\ $\PP$.

\begin{restatable}[From $\PSNi$ Starting in $\TT_{\BB}$ to $\PSNf$ Starting in $\TT_{\BB}$]{thm}{ASTAndIASTPropertyThree}\label{properties-eq-AST-iAST-3}
    If a PTRS $\PP$ is OR and SP, then:
    \begin{align*}
        \PSNf \text{ starting in } \TT_{\BB} &\Longleftrightarrow \PSNi \text{ starting in } \TT_{\BB}\!
    \end{align*}
\end{restatable}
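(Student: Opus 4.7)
The plan is to prove \Cref{properties-eq-AST-iAST-3} by establishing a spare variant of \Cref{lemma-eq-AST-iAST-1}. Specifically, I would show: if $\PP$ is OR and SP and there exists an infinite $\fliftto_{\PP}$-rewrite sequence $\vec{\mu} = (\mu_n)_{n \in \IN}$ with $\mu_0 = \{1:t_0\}$ for some $t_0 \in \TT_{\BB}$, then there exists an infinite $\iliftto_{\PP}$-rewrite sequence $\vec{\nu} = (\nu_n)_{n \in \IN}$ with $\nu_0 = \{1:t_0\}$ such that $\lim_{n \to \infty}|\mu_n|_{\PP} \geq \lim_{n \to \infty}|\nu_n|_{\PP}$ and $\edl(\vec{\mu}) \leq \edl(\vec{\nu})$. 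Once this lemma is in place, \Cref{properties-eq-AST-iAST-3} follows by exactly the same case distinction as in the proof of \Cref{properties-eq-AST-iAST-1}, handling $\ASTf$, $\PASTf$, and $\SASTf$ separately; crucially, the transformation preserves the initial basic term, so the resulting innermost sequences still start in $\TT_{\BB}$.

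For the spare lemma itself, I would mimic the iterative transformation from the proof sketch of \Cref{lemma-eq-AST-iAST-1}: locate the first non-innermost step in the current sequence, replace it with an innermost step, and iterate, obtaining a chain $\vec{\mu}^{(1)}, \vec{\mu}^{(2)}, \ldots$ of rewrite sequences all starting from $\{1:t_0\}$, whose pointwise limit $\vec{\mu}^{(\infty)}$ is the required innermost sequence. The critical point is that each such replacement must preserve the convergence probability and must not decrease the expected derivation length. In \Cref{lemma-eq-AST-iAST-1} this step relied on right-linearity; here I would invoke spareness instead.

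Concretely, suppose at some step of $\vec{\mu}^{(i)}$ a non-innermost step is applied at position $\pi$ using rule $\ell \to \mu' \in \PP$ and substitution $\sigma$, with an innermost redex located at some position $\pi' > \pi$ inside $\sigma(y)$ for some $y \in \VSet(\ell)$. Because $\vec{\mu}^{(i)}$ still starts from the basic term $t_0$, it is a $\fliftto_{\PP}$-rewrite sequence beginning in $\TT_{\BB}$ and is therefore spare. Spareness forces that for every $r \in \Supp(\mu')$ and every variable occurring more than once in $r$, the $\sigma$-image of that variable lies in $\NF_{\PP}$. Since $\sigma(y)$ properly contains a redex, $y$ must occur at most once in each $r \in \Supp(\mu')$. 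Consequently, swapping the two steps introduces no probabilistic duplication mismatch (the same multi-distribution of results is obtained either way), and left-linearity (part of OR) guarantees that the outer $\ell\sigma$-match still persists after the inner step. This yields the same preservation of convergence probability and derivation length as in \Cref{lemma-eq-AST-iAST-1}.

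The main obstacle will be the careful bookkeeping required so that the iterative construction converges to a well-defined innermost sequence $\vec{\mu}^{(\infty)}$; this follows the same limit argument as in \Cref{lemma-eq-AST-iAST-1} and relies on the fact that spareness is inherited automatically by each $\vec{\mu}^{(i)}$, since each sequence starts from the same basic $t_0$ and $\PP$ is SP, so no additional invariant needs to be maintained. A subtlety worth highlighting is that the innermost redex inside $\sigma(y)$ may itself be probabilistic; the spareness hypothesis is exactly what rules out the counterexample pattern of \Cref{example:diff-AST-vs-iAST-dup}, where without restricting to basic start terms the duplicating $\td$-rule could replicate a probabilistic $\tg$-redex and thereby invalidate the swap.
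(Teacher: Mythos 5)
Your proposal is correct and takes essentially the same route as the paper: the paper's proof of \Cref{lemma-eq-AST-iAST-3} likewise reuses the iterative $\Phi$-construction from \Cref{lemma-eq-AST-iAST-1} verbatim, noting that since every $\fliftto_{\PP}$-sequence from a basic term is spare, the variable whose instance contains the innermost redex cannot be in normal form and hence occurs at most once in each right-hand side, so spareness replaces right-linearity in the one place (Case (B)) where it was used, while left-linearity from OR handles the matching as before. Your observations that the transformation preserves the basic start term (so spareness is inherited by each $\vec{\mu}^{(i)}$ with no extra invariant) and that the theorem then follows by the same case distinction as for \Cref{properties-eq-AST-iAST-1} are exactly the points the paper's proof relies on.
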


For the proof of \Cref{properties-eq-AST-iAST-3} we use the following lemma.

\begin{restatable}[From Innermost to Full Rewriting Starting in $\TT_{\BB}$]{lem}{ASTAndIASTLemmaThree}\label{lemma-eq-AST-iAST-3}
    If a PTRS $\PP$ is OR and SP, then for every infinite $\fliftto_{\PP}$-rewrite sequence 
    $\vec{\mu} = (\mu_n)_{n \in \IN}$ that starts with a basic term
    there exists an infinite $\iliftto_{\PP}$-rewrite sequence 
    $\vec{\nu} = (\nu_n)_{n \in \IN}$ that starts with a basic term
    such that 
    \begin{enumerate}
        \item $\lim\limits_{n \to \infty}|\mu_n|_{\PP} \geq \lim\limits_{n \to \infty}|\nu_n|_{\PP}$
        \item $\edl(\vec{\mu}) \leq \edl(\vec{\nu})$
    \end{enumerate}
\end{restatable}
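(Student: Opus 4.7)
The plan is to adapt the proof of \Cref{lemma-eq-AST-iAST-1}, replacing the right-linearity argument with one based on spareness. As in that proof, starting from an infinite $\fliftto_{\PP}$-rewrite sequence $\vec{\mu} = (\mu_n)_{n \in \IN}$, I will iteratively swap non-innermost rewrite steps with nested innermost ones, obtaining rewrite sequences $\vec{\mu}^{(i)}$ that preserve the convergence probability and the expected derivation length, and then pass to the limit $\vec{\mu}^{(\infty)}$ which will be the desired $\iliftto_{\PP}$-rewrite sequence $\vec{\nu}$. The crucial observation is that since $\vec{\mu}$ starts with a basic term $t \in \TT_\BB$ and $\PP$ is spare, every $\fliftto_{\PP}$-step occurring in $\vec{\mu}$ is spare; the swap preserves this invariant and hence also basic-start and spareness for every $\vec{\mu}^{(i)}$.

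First I would locate the earliest non-innermost step, say $\mu_n \fliftto_{\PP} \mu_{n+1}$, applying a rule $\ell \to \{p_1:r_1,\dots,p_k:r_k\}$ via substitution $\sigma$ at position $\pi$ of some term in $\Supp(\mu_n)$, with an inner redex at some position $\pi.\pi'$ with $\pi' \neq \varepsilon$. Since $\PP$ is non-overlapping and the inner redex lies strictly below $\pi$, it must lie inside $\sigma(y)$ for some variable $y \in \Var(\ell)$. Here spareness delivers the key dichotomy: if $y$ occurs more than once in some $r_j$, then $\sigma(y) \in \NF_{\PP}$ by spareness, contradicting the existence of an inner redex in $\sigma(y)$. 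Hence $y$ occurs at most once in every $r_j$, so moving the inner reduction before the outer one yields, on the right-hand side, at most one corresponding redex in each $r_j\sigma$, and the swap produces exactly the same multi-distribution (up to one extra intermediate step) as first performing the inner reduction and then the outer one.

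Next I would formalize this swap to construct $\vec{\mu}^{(i+1)}$ from $\vec{\mu}^{(i)}$ such that (i) the first $n{-}1$ multi-distributions coincide, (ii) the $n$-th step of $\vec{\mu}^{(i+1)}$ is ``strictly more innermost'' than that of $\vec{\mu}^{(i)}$ in a suitable well-founded measure (e.g., the multiset of redex depths), (iii) the convergence probability at the limit is unchanged, and (iv) $\edl(\vec{\mu}^{(i+1)}) \geq \edl(\vec{\mu}^{(i)})$, since the swap may insert additional non-normal-form intermediate terms. Spareness carries over to $\vec{\mu}^{(i+1)}$ because it uses the same rules on the same substitutions, only at reordered positions, and still starts with the same basic term. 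Iterating yields a sequence of rewrite sequences that stabilises in every finite prefix, so the limit $\vec{\mu}^{(\infty)}$ is a well-defined infinite $\iliftto_{\PP}$-rewrite sequence starting with the same basic term, with $\lim_{n \to \infty}|\mu^{(\infty)}_n|_{\PP} \leq \lim_{n \to \infty}|\mu_n|_{\PP}$ and $\edl(\vec{\mu}^{(\infty)}) \geq \edl(\vec{\mu})$.

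The main obstacle I anticipate is the bookkeeping for the swap when the outer rule has different variable occurrences in different right-hand sides $r_j$: a duplicated variable in $r_j$ versus a linear occurrence in $r_{j'}$ must be handled uniformly, and this is exactly where spareness is used to rule out the problematic case. A secondary technical point is verifying that the limit construction is sound, namely that each prefix of $\vec{\mu}^{(\infty)}$ is eventually fixed by the iteration (so that the limit is indeed a valid $\iliftto_{\PP}$-rewrite sequence) and that the convergence probability and expected derivation length pass to the limit correctly; both follow from a standard continuity argument applied to the partial sums $\sum_{i \leq N}(1 - |\mu^{(i)}_n|_{\PP})$, exactly as in the proof of \Cref{lemma-eq-AST-iAST-1}.
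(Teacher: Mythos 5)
Your proposal is correct and matches the paper's own proof, which likewise reduces to the construction $\Phi(\_)$ from \Cref{lemma-eq-AST-iAST-1} and observes that in Case (B) the tracked innermost redex lies inside $\sigma(q)$, where spareness plus the basic start term force $q$ to occur at most once in every $r_j$ (your dichotomy), with left-linearity from OR handling the occurrence in $\ell$. The only cosmetic difference is that the paper carries out the iteration and limit argument on rewrite sequence trees rather than directly on lifted sequences, which is an equivalent formulation.
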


The requirement of
basic start terms is a real restriction for orthogonal PTRSs,
i.e., there exist very simple orthogonal PTRSs $\PP$, where $\PSNs$ starting in $\TT_{\BB}$ holds, 
but not $\PSNs$ in general.
One can see this already in the non-probabilistic setting for the TRS $\R$ consisting of
the two rules $\tf(\tg(\ta)) \to \tf(\tg(\ta))$ and $\tg(\tb) \to \tb$,
where we do not have $\SNs$, but $\SNs$ starting in $\TT_{\BB}$ for every $s \in \IS$.

The following example shows that \Cref{properties-eq-AST-iAST-3} does not hold for arbitrary start terms.

\begin{counterexample}\label{example:basic-terms-problem}
    Consider the PTRS $\PP_9$ with the two rules:

    \vspace*{-.5cm}
    \begin{minipage}[t]{7cm}
        \begin{align*}
            \tg &\to \{\nicefrac{3}{4}:\ts(\tg), \nicefrac{1}{4}:\tz\}
        \end{align*}
    \end{minipage}
  \begin{minipage}[t]{7cm}
        \begin{align*}
            \tf(\ts(x)) &\to \{1:\tc(\tf(x),\tf(x))\}
        \end{align*}
    \end{minipage}
  
    \vspace*{.3cm}

    \noindent
    This PTRS behaves similarly to $\PP_1$ (see Counterex.\ \ref{example:diff-AST-vs-iAST-dup}).
    We do not have $\AST[\fto_{\PP_9}]$ (and hence, also
    neither $\PAST[\fto_{\PP_9}]$ nor $\SAST[\fto_{\PP_9}]$), 
    as we have $\{1:\tf(\tg)\} \liftto_{\PP_9}^2 \{\nicefrac{3}{4}:\tc(\tf(\tg),\tf(\tg)), \nicefrac{1}{4}:\tf(\tz)\}$,
    which corresponds to a random walk biased towards non-termination (since $\tfrac{3}{4} > \tfrac{1}{4}$).

    However, the only basic terms for this PTRS are $\tg$ and $\tf(t)$ for terms $t$ that
    do not contain $\tg$ or $\tf$.
    A sequence starting with $\tg$ corresponds to flipping a biased coin and a sequence
    starting with $\tf(t)$ will clearly terminate. 
    Hence, we have $\SAST[\fto_{\PP_9}]$  (and thus, also
    $\PAST[\fto_{\PP_9}]$ and $\AST[\fto_{\PP_9}]$)
    starting in $\TT_{\BB}$.
    Furthermore, note that we have $\SAST[\ito_{\PP_9}]$ (and thus, also
    $\PAST[\ito_{\PP_9}]$ and $\AST[\ito_{\PP_9}]$) for arbitrary start terms,
    analogous to $\PP_1$.
    Since $\PP_9$ is OR and SP, this shows that \Cref{properties-eq-AST-iAST-3} cannot be extended to $\PSNf$ in general.
\end{counterexample}

For expected complexity, we obtain a result that is analogous to \Cref{properties-eq-AST-iAST-3}.
Since we require basic start terms, the following theorem
only holds for expected runtime complexity
and not for expected derivation complexity.

\begin{restatable}[From Innermost to Full Expected Runtime Complexity]{thm}{ASTAndIASTSparePropertyTwoComplex}\label{properties-eq-AST-iAST-3-complex}
  If a PTRS $\PP$ is OR and SP, then:
  \[\eruntime_{\fto_{\PP}} = \eruntime_{\ito_{\PP}}\]
\end{restatable}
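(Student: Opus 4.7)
The plan is to deduce \Cref{properties-eq-AST-iAST-3-complex} directly from \Cref{lemma-eq-AST-iAST-3}, in the same spirit as how \Cref{properties-eq-AST-iAST-1-complex} is obtained from \Cref{lemma-eq-AST-iAST-1}. The inequality $\eruntime_{\fto_{\PP}} \geq \eruntime_{\ito_{\PP}}$ is immediate from ${\ito_{\PP}} \subseteq {\fto_{\PP}}$, since every $\iliftto_{\PP}$-rewrite sequence is also a $\fliftto_{\PP}$-rewrite sequence, and therefore $\edh_{\ito_{\PP}}(t) \leq \edh_{\fto_{\PP}}(t)$ for every term $t$, in particular for every basic $t$ of size at most $n$.

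For the other direction, I would fix $n \in \IN$ and an arbitrary basic term $t \in \TT_{\BB}$ with $|t| \leq n$, and show that $\edh_{\fto_{\PP}}(t) \leq \edh_{\ito_{\PP}}(t)$. To do so, let $\vec{\mu} = (\mu_k)_{k \in \IN}$ be any infinite $\fliftto_{\PP}$-rewrite sequence with $\mu_0 = \{1:t\}$. Since $t$ is basic and $\PP$ is OR and SP, \Cref{lemma-eq-AST-iAST-3} yields an infinite $\iliftto_{\PP}$-rewrite sequence $\vec{\nu}$ starting with a basic term such that $\edl(\vec{\mu}) \leq \edl(\vec{\nu})$. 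The key point to justify here is that the construction underlying \Cref{lemma-eq-AST-iAST-3} (which, like that of \Cref{lemma-eq-AST-iAST-1}, iteratively rewrites non-innermost steps into innermost ones without changing earlier multi-distributions) preserves the initial multi-distribution, so $\vec{\nu}$ in fact also starts with $\{1:t\}$. Consequently $\edl(\vec{\mu}) \leq \edl(\vec{\nu}) \leq \edh_{\ito_{\PP}}(t)$, and taking the supremum over all $\vec{\mu}$ gives $\edh_{\fto_{\PP}}(t) \leq \edh_{\ito_{\PP}}(t)$.

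Finally, taking the supremum over all basic $t$ with $|t| \leq n$ yields $\eruntime_{\fto_{\PP}}(n) \leq \eruntime_{\ito_{\PP}}(n)$, and combining both inequalities gives the desired equality for every $n \in \IN$. A small technical wrinkle is the handling of finite (terminating) full rewrite sequences, since $\edl$ is defined on infinite sequences; but by \Cref{def:lifting} any terminating sequence can be extended by trivial steps $\{1:t\} \liftto \{1:t\}$ on normal forms without affecting $\edl$, so this poses no real issue. The main obstacle in the argument is the one already subsumed in \Cref{lemma-eq-AST-iAST-3}, namely that the replacement of non-innermost by innermost steps under OR and SP does not decrease the expected derivation length when starting from a basic term; once that lemma is in hand, the present theorem follows by a clean supremum argument.
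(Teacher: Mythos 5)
Your proposal is correct and matches the paper's own argument: the paper likewise obtains this theorem as a direct consequence of \Cref{lemma-eq-AST-iAST-3}, exactly as \Cref{properties-eq-AST-iAST-1-complex} is derived from \Cref{lemma-eq-AST-iAST-1}, with the trivial inequality coming from ${\ito_{\PP}} \subseteq {\fto_{\PP}}$. Your two side remarks are also sound: the construction $\Phi(\_)$ only replaces subtrees below the node of the first non-innermost step, so the initial multi-distribution $\{1:t\}$ is indeed preserved, and terminating sequences are handled by the trivial extension steps on normal forms from \Cref{def:lifting}.
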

Thus, this theorem weakens the requirement of RL in
\Cref{properties-eq-AST-iAST-1-complex} to SP (recall that right-linearity 
implies spareness).
A corresponding corollary for the non-probabilistic setting would be subsumed by
\Cref{properties-irun-vs-run},
which ensures  $\runtime_{\fto_{\R}} = \runtime_{\ito_{\R}}$ for all 
spare overlay systems $\R$. 
 As shown by \Cref{ORinsteadofOS},  OS is not enough in the probabilistic setting, but one
 needs OR.

One may wonder whether \Cref{properties-eq-AST-iAST-3} can nevertheless be used in order
to prove $\PSNf$ for a PTRS $\PP$ on \emph{all} terms (instead of just
basic start terms) by using a suitable transformation from
$\PP$ to another PTRS $\PP'$ such that $\PSNf$ holds for
all terms iff
$\PSN[\fto_{\PP'}]$ holds when starting with
 basic terms.
In~\cite{fuhs2019transformingdctorc}, a transformation was presented
that extends any (non-probabilistic) TRS $\R$ by generator rules $\C{G}(\R)$ such that the
derivational complexity of $\R$ 
is the same as the runtime complexity of $\R \cup \C{G}(\R)$,
where $\C{G}(\R)$ are considered to be \emph{relative} rules whose rewrite steps do not ``count'' for the complexity.
This transformation can indeed be reused to move from 
$\PSNf$ \emph{starting in $\TT_{\BB}$} to $\PSNf$ on
arbitrary terms.

The idea of the transformation is to introduce
a new constructor symbol $\tcons_f$ for every defined symbol $f \in \Sigma_D$,
and to introduce a new defined
symbol $\tenc_{f}$ for every function symbol $f \in \Sigma$.
As an example for $\PP_8$ from \Cref{example:spareness},
then instead of starting with the non-basic term $\tc(\tg,\tf(\tg))$,
we start with the basic term $\tenc_{\tc}(\tcons_{\tg},\tcons_{\tf}(\tcons_{\tg}))$, its \emph{basic variant}.
The new defined symbol $\tenc_{\tc}$ is used to first build the term $\tc(\tg,\tf(\tg))$
at the beginning of the rewrite sequence, i.e., it converts
all occurrences of $\tcons_{f}$ for $f \in \Sigma_D$ back into the defined symbol $f$, and then we can proceed as if we started with the term $\tc(\tg,\tf(\tg))$ directly.
For this conversion, we need another new defined symbol $\targenc$ that iterates through the term and
replaces all new constructors $\tcons_{f}$ by the original defined symbol $f$. 
Thus, we define the generator rules as in~\cite{fuhs2019transformingdctorc} 
(just with trivial probabilities in the right-hand sides $\ell \to \{1:r\}$), 
since we do not need any probabilities during this initial construction of the original start term.

\begin{defi}[Generator Rules $\C{G}(\PP)$]\label{def:generator-rules}
    Let $\PP$ be a PTRS over the signature $\Sigma$. 
    Its \emph{generator rules}
    $\C{G}(\PP)$ are the following set of rules
      \allowdisplaybreaks%
    \begin{align*}
       & \{\tenc_f(x_1, \ldots, x_k) \to \{1: f(\targenc(x_1), \ldots, \targenc(x_k))\} \mid f \in \Sigma\}\\
         \cup  \;&\{\targenc(\tcons_f(x_1, \ldots, x_k)) \to \{1: f(\targenc(x_1), \ldots, \targenc(x_k))\} \mid f \in \Sigma_D\}\\
         \cup \; & \{\targenc(f(x_1, \ldots, x_k)) \to \{1: f(\targenc(x_1), \ldots, \targenc(x_k))\} \mid f \in \Sigma_C\},
    \end{align*}
    
    \noindent
    where $x_1, \ldots, x_k$ are pairwise different variables and where the function symbols $\targenc$, $\tcons_f$, and $\tenc_f$ are fresh (i.e., they do not occur in $\PP$). 
    Moreover, we define $\Sigma_{\C{G}(\PP)} = \{\tenc_f \mid f \in \Sigma\} \cup \{\targenc\} \cup
    \{\tcons_f \mid f \in \Sigma_D\}$. 
\end{defi}

\begin{exa}\label{undefinedLink}
   For the PTRS $\PP_9$ from Counterex.\ \ref{example:basic-terms-problem},
   we obtain the following generator rules $\C{G}(\PP_9)$:

   \vspace*{-.2cm}

   \begin{longtable}{rcl}
        $\tenc_{\tg}$ & $\to$ &$\{1: \tg\}$\\
        $\tenc_{\tf}(x_1)$ & $\to$ &$\{1: \tf(\targenc(x_1))\}$\\
        $\tenc_{\tc}(x_1, x_2)$  &$\to$& $\{1: \tc(\targenc(x_1), \targenc(x_2))\}$\\
        $\tenc_{\ts}(x_1)$ & $\to$ &$\{1: \ts(\targenc(x_1))\}$\\
        $\tenc_{\tz}$ & $\to$& $\{1: \tz\}$\\
        $\targenc(\tcons_{\tg})$ & $\to$ &$\{1: \tg\}$\\
        $\targenc(\tcons_{\tf}(x_1))$ &$\to$ &$\{1: \tf(\targenc(x_1))\}$\\
        $\targenc(\tc(x_1, x_2))$ & $\to$ &$\{1: \tc(\targenc(x_1), \targenc(x_2))\}$\\
        $\targenc(\ts(x_1))$ & $\to$ &$\{1: \ts(\targenc(x_1))\}$\\
        $\targenc(\tz)$ & $\to$ &$\{1: \tz\}$
    \end{longtable}
 \end{exa}

As mentioned, using the  symbols $\tcons_f$ and $\tenc_f$, as in~\cite{fuhs2019transformingdctorc} every term over 
$\Sigma$ can be transformed into a basic term over $\Sigma \cup \Sigma_{\C{G}(\PP)}$.

The following lemma shows that by adding the generator rules, one can indeed reduce the
problem of proving  $\AST$  on all terms to
$\AST$ starting in $\TT_{\BB}$.

\begin{restatable}[From all Terms to Basic Terms (1)]{lem}{BasicASTToAST}\label{lemma:spareness-AST-proof-1}
    For any PTRS $\PP$ we have:
    \begin{align*}
        \ASTf \Longleftrightarrow \AST[\fto_{\PP \cup \C{G}(\PP)}] \text{ starting in } \TT_{\BB}.\!
    \end{align*}
\end{restatable}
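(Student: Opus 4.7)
The proof will establish both directions of the equivalence separately.

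For the direction ``$\AST[\fto_{\PP \cup \C{G}(\PP)}]$ starting in $\TT_{\BB}$ implies $\ASTf$'', the plan is to simulate any $\fto_{\PP}$-rewrite sequence by one that begins with a finite prefix of generator steps. For each $t = f(t_1, \ldots, t_k) \in \TT(\Sigma, \VSet)$ with $f \in \Sigma$, I define its basic variant $b(t) = \tenc_f(c(t_1), \ldots, c(t_k))$, where $c$ is the homomorphism that replaces each defined symbol $g \in \Sigma_D$ by the constructor $\tcons_g$ and is the identity on constructors and variables. Then $b(t) \in \TT_{\BB}$ w.r.t.\ $\PP \cup \C{G}(\PP)$, and by first firing the $\tenc_f$-rule at the root and then repeatedly applying the $\targenc$-rules to undo $c$, one obtains a finite rewrite sequence $\{1:b(t)\} \fliftto_{\C{G}(\PP)}^{*} \{1:t\}$. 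Prepending this block to any infinite $\fliftto_{\PP}$-rewrite sequence $\vec{\mu} = (\mu_n)_{n \in \IN}$ starting with $\{1:t\}$ yields an infinite $\fliftto_{\PP \cup \C{G}(\PP)}$-rewrite sequence starting in $\TT_{\BB}$; the hypothesis gives convergence with probability $1$, and since every $\mu_n$ contains only terms over $\Sigma$ (and thus no $\C{G}(\PP)$-redexes), we get $|\mu_n|_{\PP} = |\mu_n|_{\PP \cup \C{G}(\PP)}$, so $\lim_{n \to \infty} |\mu_n|_{\PP} = 1$.

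For the direction ``$\ASTf$ implies $\AST[\fto_{\PP \cup \C{G}(\PP)}]$ starting in $\TT_{\BB}$'', I plan to project any $\fto_{\PP \cup \C{G}(\PP)}$-rewrite sequence through the $\C{G}(\PP)$-normal form map. First I verify that $\C{G}(\PP)$, seen as a non-probabilistic TRS, is orthogonal (hence confluent) and strongly normalizing, so each term $s$ has a unique $\C{G}(\PP)$-normal form $\nu(s) \in \TT(\Sigma \cup \{\tcons_g \mid g \in \Sigma_D\}, \VSet)$. Define the projection $\Pi(\mu) = \{p_i : \nu(s_i) \mid (p_i:s_i) \in \mu\}$. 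The key property is that any $\C{G}(\PP)$-step from $s$ leaves $\nu(s)$ unchanged, while any $\PP$-step $s \fto_{\PP} \mu$ satisfies $\nu(s) \fto_{\PP} \Pi(\mu)$, since $\PP$-redexes match consistently through $\nu$ and right-hand sides of $\PP$-rules never introduce $\Sigma_{\C{G}(\PP)}$-symbols. Given an infinite $\fliftto_{\PP \cup \C{G}(\PP)}$-rewrite sequence $\vec{\mu}$ starting in $\TT_{\BB}$, I distinguish two cases. If only finitely many $\PP$-steps occur, then from some point on only $\C{G}(\PP)$-steps remain; any term that is a $\C{G}(\PP)$-normal form but not a $\PP$-normal form would force a further $\PP$-step, contradicting this case, so strong normalization of $\C{G}(\PP)$ drives the remaining probability mass onto $\PP \cup \C{G}(\PP)$-normal forms. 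Otherwise $\vec{\mu}$ contains infinitely many $\PP$-steps, and its projection yields an infinite $\fliftto_{\PP}$-rewrite sequence over $\Sigma \cup \{\tcons_g \mid g \in \Sigma_D\}$, which converges with probability $1$ by $\ASTf$ combined with closure of $\AST$ under signature extensions (\Cref{signature-extensions-AST-SAST}).

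The main obstacle will be transferring convergence from the projected sequence back to the original one in the second sub-case: a term $s$ with $\nu(s) \in \NF_{\PP}$ but $s \neq \nu(s)$ contributes to $|\Pi(\mu_n)|_{\PP}$ but not to $|\mu_n|_{\PP \cup \C{G}(\PP)}$, so the convergence probabilities need not agree a priori. I plan to bridge this via a standard $\varepsilon$-argument: for any $\varepsilon > 0$, choose $N$ with $|\Pi(\mu_N)|_{\PP} > 1 - \varepsilon/2$; since $\C{G}(\PP)$ is strongly normalizing, each term $s_i$ contributing to $|\Pi(\mu_N)|_{\PP}$ needs only finitely many further $\C{G}(\PP)$-steps to reach $\nu(s_i)$, which is then a $\PP \cup \C{G}(\PP)$-normal form; thus one can pick $M \geq N$ so that all but $\varepsilon/2$ of that mass has completed its $\C{G}(\PP)$-normalization, giving $|\mu_M|_{\PP \cup \C{G}(\PP)} > 1 - \varepsilon$.
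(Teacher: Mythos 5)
Your blueprint is essentially the paper's: in the easy direction you build the basic variant and prepend a generator prefix, and in the hard direction you project $(\PP \cup \C{G}(\PP))$-derivations back to $\PP$-derivations. However, two concrete points need repair. First, your claim $\{1:b(t)\} \fliftto_{\C{G}(\PP)}^{*} \{1:t\}$ is false for non-ground $t$: there is no rule $\targenc(x) \to \{1:x\}$, so $\targenc$ gets stuck on variables, and $b(t)$ reduces only to $t\delta_t$, where $\delta_t$ wraps every variable of $t$ in $\targenc(\cdot)$. This is exactly how the paper states it ($\{1:\bv(t)\} \iliftto_{\C{G}(\PP)}^{*} \{1:t\delta_t\}$), and the repair is local but necessary: one must mirror the given $\fliftto_{\PP}$-sequence under $\delta_t$, which works because $\targenc(x)$ is a normal form that no rule of $\PP \cup \C{G}(\PP)$ can match, so redexes and normal forms of $s\delta_t$ correspond exactly to those of $s$; as written, your prefix never reaches $\{1:t\}$ and your subsequent claim that the $\mu_n$ contain only terms over $\Sigma$ is not met.

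Second, in the hard direction your projection $\Pi$ through the $\C{G}(\PP)$-normal-form map $\nu$ does not literally produce a $\fliftto_{\PP}$-rewrite sequence: when the original sequence performs a $\C{G}(\PP)$-step on some $s$ in the support, the projected term $\nu(s)$ stays unchanged, but by \Cref{def:lifting} only normal forms may idle, so $(\Pi(\mu_n))_{n}$ stutters on non-normal terms; similarly, your case split on finitely versus infinitely many $\PP$-steps is only meaningful per path, not per lifted sequence, since different terms in one support may behave differently. This compression problem is precisely why the paper argues on RSTs instead: it builds the $\PP$-RST node by node with an injective map $\varphi$ that silently advances through generator edges, using the identity $\dv(\ell) = \dv(r)$ for all rules of $\C{G}(\PP)$ — no confluence or termination of $\C{G}(\PP)$ is needed. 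Note also that the paper's decoding $\dv$ maps back into $\TSet{\Sigma}{\VSet}$ directly (decoding $\tcons_g$ to $g$ and erasing $\targenc$), so it avoids your appeal to \Cref{signature-extensions-AST-SAST}; that appeal is non-circular (the signature-extension theorem does not depend on this lemma), but it is avoidable. Finally, your $\varepsilon$-bridge is sound only once the key preservation fact is actually proved: since generator rules are non-erasing, linear, and have left-hand-side roots $\tenc_f, \targenc \notin \Sigma$, one gets that $\nu(s) \in \NF_{\PP}$ forces every $\C{G}(\PP)$-reduct of $s$ to be free of $\PP$-redexes, whence such $s$ can only take $\C{G}(\PP)$-steps, which the lifting forces it to take and which terminate; you assert the projection property but this argument is the crux of the whole direction and should be spelled out.
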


If one extends the definition of $\PAST$ by
relative rules, then
similar results should also be possible for $\PASTf$,
$\SASTf$, and both
expected derivational and runtime
complexity.
When considering the additional generator rules as ordinary instead of relative rewrite rules, 
we obtain an upper bound on the expected complexity and sufficient conditions for $\PASTf$ and $\SASTf$.

\begin{restatable}[From all Terms to Basic Terms (2)]{lem}{BasicPASTToPAST}\label{lemma:spareness-AST-proof-2}
    For any PTRS $\PP$ we have:
    \begin{align*}
        \PASTf \Longleftarrow \PAST[\fto_{\PP \cup \C{G}(\PP)}] \text{ starting in } \TT_{\BB}\\
        \SASTf \Longleftarrow \SAST[\fto_{\PP \cup \C{G}(\PP)}] \text{ starting in } \TT_{\BB}\\
    \end{align*}
    \vspace*{-1.1cm}
    \[\edertime_{\fto_{\PP}} \leq \eruntime_{\fto_{\PP \cup \C{G}(\PP)}}\]
\end{restatable}

However,  even if $\PP$ is spare, the PTRS
$\PP \cup \C{G}(\PP)$ is not guaranteed to be spare, although the generator rules
themselves are right-linear. 
The problem is that the generator rules include  a rule like
$\tenc_{\tf}(x_1) \to \{1: \tf(\targenc(x_1))\}$ where
a defined symbol $\targenc$ occurs below the duplicating symbol $\tf$ on the right-hand
side.
Indeed, while $\PP_9$ is spare,
$\PP_9 \cup \C{G}(\PP_9)$ is not. For example, when starting with the basic term
$\tenc_{\tf}(\ts(\tcons_{\tg}))$, \pagebreak[3] we have
\[
\begin{array}{r@{\;\;}l@{\;\;}l}
\{1:\tenc_{\tf}(\ts(\tcons_{\tg}))\} &\fliftto_{\C{G}(\PP_9)}^2&
\{1:\tf(\ts(\targenc(\tcons_{\tg})))\}\\
&\fliftto_{\PP_9}&
\{1:\tc(\tf(\targenc(\tcons_{\tg})), \tf(\targenc(\tcons_{\tg})))\}
\end{array}
\]
where the last step is not spare.
In general, $\PP \cup \C{G}(\PP)$ is guaranteed to be spare if $\PP$ is right-linear.
So we could modify \Cref{properties-eq-AST-iAST-3} into a theorem which states that
$\ASTf$ holds for all terms iff $\AST[\ito_{\PP \cup \C{G}(\PP)}]$ holds when starting in
$\TT_{\BB}$ (and thus, for all terms) 
for orthogonal and right-linear PTRSs $\PP$.
However, this theorem would be subsumed by \Cref{properties-eq-AST-iAST-1}, where we 
already showed the equivalence of $\ASTf$ and $\ASTi$ if $\PP$ is orthogonal and right-linear. Indeed,
our goal in \Cref{properties-eq-AST-iAST-3} was to find a weaker requirement than right-linearity.
Hence, such a transformational approach to move from $\ASTf$ on all start terms to $\ASTf$
starting in $\TT_{\BB}$ does not seem viable for \Cref{properties-eq-AST-iAST-3}.

\section{Implementation and Evaluation}\label{Evaluation}

We implemented our new criteria for the equivalence of
$\ASTi$ and
$\ASTf$
in our termination prover \textsf{AProVE}~\cite{JAR-AProVE2017}. 
For every PTRS, one can indicate
whether one wants to analyze its termination behavior for all start terms or only for basic start terms.
\aprove's main technique for termination analysis of PTRSs $\PP$ is the probabilistic DP
framework from~\cite{kassinggiesl2023iAST,FLOPS2024} to prove $\ASTi$,
and its adaption for $\ASTf$ from~\cite{JPK60},
which is however strictly less powerful than the framework for $\ASTi$.
The general idea of the DP framework is
a \emph{divide-and-conquer} approach where \emph{processors} are used
to replace a termination problem by several new sub-problems that 
are easier to analyze than the original problem.
These processors are applied repeatedly 
until all sub-problems are solved.
Since different techniques can be used to analyze the different sub-problems,
this results in a \emph{modular} approach for almost-sure termination analysis.

For our evaluation, we developed a
``\tool{new}'' version of \tool{AProVE}, where we consider
\Cref{properties-eq-AST-iAST-1,properties-eq-AST-iAST-3}, 
and afterwards apply the DP framework as described above.
More precisely,
if one wants to analyze $\ASTf$ for a PTRS $\PP$, 
the ``\tool{new}'' version of \tool{AProVE} first tries to prove that
the conditions of \Cref{properties-eq-AST-iAST-1} are satisfied if one
regards arbitrary start terms or that
the conditions of \Cref{properties-eq-AST-iAST-3} are satisfied if one only wants to consider 
basic start terms.
If this succeeds, then it uses the probabilistic DP framework for $\ASTi$, 
which then implies $\ASTf$.\footnote{Currently, we only use the switch from full to innermost rewriting as a preprocessing step
before applying the DP framework.
As shown in~\cite{JPK60}, 
a corresponding modular processor \emph{within} the DP framework
(which requires the criteria of our theorems
not for the whole PTRS, but just for specific sub-problems
within the termination proof) would be unsound.}
If these theorems cannot be applied, then \textsf{AProVE}
tries to prove $\ASTf$ via the DP framework for
$\ASTf$~\cite{JPK60}, or using a direct application of polynomial orderings~\cite{kassinggiesl2023iAST}.
So in contrast to the experiments in our conference paper
\cite{FoSSaCS2024}, where we only used the
direct application of polynomial orderings to prove $\ASTf$, in this case, ``\tool{new}'' now
also uses the DP framework for $\ASTf$ from~\cite{JPK60}.

We compare ``\tool{new}'' with the
``\tool{old}'' version of \tool{AProVE} that does not apply any of the theorems of
the current paper. Thus, it directly uses the DP framework for $\ASTf$ from~\cite{JPK60}
or the direct application of polynomial orderings~\cite{kassinggiesl2023iAST}.

Note that for $\ASTf$ w.r.t.\ basic start terms,
\Cref{properties-eq-AST-iAST-3}
generalizes \Cref{properties-eq-AST-iAST-1}, since right-linearity implies
spareness.
To show the advantage of \Cref{properties-eq-AST-iAST-3}
compared to \Cref{properties-eq-AST-iAST-1}, we also experimented with the variant 
``\tool{old} + \Cref{properties-eq-AST-iAST-1}'' where we always used
\Cref{properties-eq-AST-iAST-1} instead of 
\Cref{properties-eq-AST-iAST-3}, even when only proving
$\ASTf$ for basic start terms.

We used the benchmark set of 130 PTRSs from~\cite{JPK60},
where \aprove{} can prove $\ASTi$ for 109 of them.
\Cref{table:results} shows for how many of these 130 PTRSs the respective strategy 
allows \tool{AProVE} to conclude $\ASTf$.
  
\setcounter{table}{0}
\begin{table}
  \centering
  \begin{tabular}{lccc}
  \toprule
  & \tool{old} & \tool{old} + \Cref{properties-eq-AST-iAST-1} & \tool{new} \\ 
  \midrule
  Arbitrary Terms & 52 & 57 & 57 \\ 
  Basic Terms & 59 & 65 & 70 \\ 
  \bottomrule
  \end{tabular}
  \caption{Results}\label{table:results}
\end{table}

The increase in the number of solved examples from ``\tool{old}'' to ``\tool{new}''
shows that even with dedicated strategies and techniques like the DP framework
for $\ASTf$ \cite{JPK60},
it is still beneficial to use the techniques from the current paper to move from
$\ASTf$ to $\ASTi$ whenever possible, since 
$\ASTi$ is significantly easier to prove than $\ASTf$.
Moreover, the table shows that the weakening of right-linearity to spareness (i.e., the
generalization of \Cref{properties-eq-AST-iAST-1} to \Cref{properties-eq-AST-iAST-3}) indeed
increases power when considering basic start terms.
As shown in the experiments of \cite{kassing2025DependencyPairsExpected}, our 
\Cref{properties-eq-AST-iAST-1,properties-eq-AST-iAST-3} are also very useful in order to move from
 $\SASTf$ vs.\ $\SASTi$, i.e., they often allow us to use the DP framework for
$\SASTi$ and expected innermost complexity analysis from \cite{kassing2025DependencyPairsExpected}
in order to prove $\SASTf$ and to analyze expected runtime complexity for full rewriting.

For details on our experiments, our collection of examples, and for instructions on how to run our implementation
in \textsf{AProVE} via its \emph{web interface} or locally, we refer to:
\begin{center}
  \url{https://aprove-developers.github.io/InnermostToFullAST/}
\end{center}

\section{Modularity}\label{Modularity}

In this section, we investigate the modularity of probabilistic notions of termination.
We will see that as in the non-probabilistic setting~\cite{Gramlich1995AbstractRB},
in contrast to innermost probabilistic rewriting,
full
probabilistic rewriting is not modular in general.
However, our results on the relation between innermost and full probabilistic
rewriting from the previous sections will allow us to obtain
modularity results for full probabilistic rewriting as well.
A property is called \emph{modular} if it is preserved for certain unions of PTRSs.
Modularity is not only interesting from a theoretical point of view, but in practice it is
also very important to know
whether one can split a huge PTRS into smaller parts
such that the property of interest is preserved.
Then, one can analyze these parts independently of each other.
As remarked earlier, this is also the main idea of the DP framework,
and in fact, we already benefit from
similar modularity results within the DP framework itself.
The fewer conditions are required for modularity of a notion of termination,
the stronger the corresponding DP framework is expected to be.

We will study two different forms of
unions, namely \emph{disjoint unions} (\Cref{Sect:Disjoint Unions}), where both PTRSs
do not share any function symbols, and \emph{shared constructor unions}
of PTRSs
(\Cref{Sect:Shared Constructor Unions}),
which may have common constructor symbols, but whose defined symbols are
disjoint.\footnote{The dependency pair
framework for  $\ASTi$
in \aprove{} is
already capable of 
splitting 
   disjoint unions and shared constructor unions into sub-problems that can
   be analyzed independently.
   However, in this section our goal is 
to analyze the modularity of different
   probabilistic notions of termination in general. Thus, this allows the use of these
   modularity results also outside the DP framework.}
In future work, one may extend this analysis to \emph{hierarchical unions},
which are unions of PTRSs where the first PTRS may
contain defined symbols 
of the second one, but not vice versa. Moreover, in this section we only consider the strategies
$s \in \{ \mathbf{f}, \mathbf{i} \}$ (but it is not difficult to show
that our  modularity results for innermost rewriting also hold for
leftmost-innermost rewriting). 

In \Cref{Sect:Signature Extensions}, we will also investigate signature extensions.
We have already seen in \Cref{thm:Sig-PAST} that $\PASTs$ is not closed under signature extensions for any $s \in \IS$.
Based on our modularity results, we will now
show that both $\ASTs$ and $\SASTs$ are closed under signature extensions.

To distinguish the function symbols of different
PTRSs $\PP$,
in the following we write $\Sigma_{D}^{\PP}$, $\Sigma_{C}^{\PP}$, and $\Sigma^{\PP}$ 
for the defined symbols, constructor symbols, and all
function symbols occurring in the rules of $\PP$, respectively.

\subsection{Disjoint Unions}\label{Sect:Disjoint Unions}

We first consider unions of systems that do not share any function symbols, i.e., 
we consider two PTRSs $\PP^{(1)}$ and $\PP^{(2)}$ such that $\Sigma^{\PP^{(1)}} \cap \Sigma^{\PP^{(2)}} = \emptyset$.
In the non-probabilistic setting,
\cite{Gramlich1995AbstractRB} showed that innermost termination is modular for disjoint unions.
This result can be lifted to $\ASTi$ and $\SASTi$.
We first investigate $\ASTi$, as illustrated by the following example.

\begin{exa}\label{example:modularity-innermost-disjoint-AST}
    Consider the PTRS $\PP_{10} = \PP_{10}^{(1)} \cup \PP_{10}^{(2)}$ given by
    
    \vspace*{-.5cm}
    \begin{minipage}[t]{7cm}
        \begin{align*}
            \PP_{10}^{(1)} : \tf(x) &\to \{\nicefrac{1}{2}:\tf(x), \nicefrac{1}{2}:\ta\}\!
        \end{align*}
    \end{minipage}
    \begin{minipage}[t]{7cm}
        \begin{align*}
            \PP_{10}^{(2)} : \tg(x) &\to \{\nicefrac{1}{2}:\tg(x), \nicefrac{1}{2}:\tb\}\!
        \end{align*}
    \end{minipage}
    \vspace*{.1cm}

    \noindent
    $\PP_{10}^{(1)}$ and $\PP_{10}^{(2)}$ both correspond to a fair coin flip, 
    where one terminates when obtaining heads.
    Hence, for both systems we have $\AST[\fto_{\PP_{10}^{(1)}}]$ and $\AST[\fto_{\PP_{10}^{(2)}}]$, 
    and thus also $\AST[\ito_{\PP_{10}^{(1)}}]$ and $\AST[\ito_{\PP_{10}^{(2)}}]$.
    Furthermore, $\Sigma^{\PP_{10}^{(1)}} \cap \Sigma^{\PP_{10}^{(2)}} = \emptyset$,
    i.e., $\PP_{10}$ is a disjoint union.
    When reducing a term like
    $\tf(\tg(x))$ which
    contains symbols from both systems, 
    then we first reduce the innermost redex $\tg(x)$ until we reach a normal form.
    Due to the innermost strategy, we cannot rewrite at the position of $\tf$ beforehand.
    This reduction only uses one of the two systems, namely $\PP_{10}^{(2)}$, hence it
    converges with probability $1$. 
    Then, we use the next innermost redex, which will be $\tf(\tb)$,
    using only rules of $\PP_{10}^{(1)}$ until we reach a normal form,
    where symbols from $\Sigma^{\PP_{10}^{(2)}}$ do not influence the reduction.
    The reason is that all subterms below or at positions of symbols from $\Sigma^{\PP_{10}^{(2)}}$
    are in normal form, and there is no symbol from $\Sigma^{\PP_{10}^{(2)}}$ above the
    $\tf$ at the root position.
    Again, this reduction converges with probability $1$.
    Thus, in the end, our reduction starting with $\tf(\tg(x))$ also converges with probability $1$,
    and the same holds for arbitrary start terms, which implies $\AST[\fto_{\PP_{10}}]$.
\end{exa}

In \Cref{example:modularity-innermost-disjoint-AST}, we considered the term
$\tf(\tg(x))$ 
where we swap once
between a symbol $\tf$ from $\Sigma^{\PP_{10}^{(1)}}$ and a symbol $\tg$ from $\Sigma^{\PP_{10}^{(2)}}$ on the path
from the root to the ``leaf''
of the term. 
In the proof of \Cref{modularity-iAST-disjoint}
we lift the argumentation of
\Cref{example:modularity-innermost-disjoint-AST} 
to arbitrary terms via induction.
This proof idea was also used by \cite{Gramlich1995AbstractRB} in the non-probabilistic
setting to show
the modularity of innermost termination for disjoint unions.

\begin{restatable}[Modularity of $\ASTi$ for Disjoint Unions]{thm}{ModiASTDisjoint}\label{modularity-iAST-disjoint}
  Let $\PP^{(1)}$ and $\PP^{(2)}$ be PTRSs with
  $\Sigma^{\PP^{(1)}} \cap \Sigma^{\PP^{(2)}} = \emptyset$. Then we have:
    \begin{align*}
        \AST[\ito_{\PP^{(1)} \cup \PP^{(2)}}]
        &\; \Longleftrightarrow \; \AST[\ito_{\PP^{(1)}}] \; \text{ and } \; \AST[\ito_{\PP^{(2)}}]\!
    \end{align*}
\end{restatable}

\begin{myproofsketch}
    The direction ``$\Longrightarrow$'' is trivial and thus, we only prove
    ``$\Longleftarrow$''. Assume that we have $\AST[\ito_{\PP^{(1)}}]$ and $\AST[\ito_{\PP^{(2)}}]$.

    For $\AST[\ito_{\PP^{(1)} \cup \PP^{(2)}}]$, it suffices to regard only rewrite sequences that start
    with multi-distributions of the form $\{1:t\}$ (see
    \Cref{lemma:PTRS-AST-single-start-term} in App.~\ref{appendix} for a proof).
    Thus, we show by
    structural induction on the term structure that for every 
    $t \in \TSet{\Sigma^{\PP^{(1)}} \cup \Sigma^{\PP^{(2)}}}{\VSet}$, all
    $\iliftto_{\PP^{(1)} \cup \PP^{(2)}}$-rewrite sequences starting with $\{1:t\}$
    converge with probability 1.

    If $t \in \VSet$, then $t$ is in normal form.
    If $t$ is a constant, then w.l.o.g.\ let $t \in \PP^{(1)}$.
    Since we have $\AST[\ito_{\PP^{(1)}}]$, 
    $t$ cannot start an infinite $\iliftto_{\PP^{(1)} \cup \PP^{(2)}}$-rewrite sequence
    that converges with probability~$<1$.

    In the induction step
    we have $t = f(q_1, \ldots, q_k)$.
    Due to the innermost evaluation strategy, we can only rewrite at the root position
    if every proper subterm is in normal form. Thus, we first only consider rewrite steps
    below the root.
    By the induction hypothesis, every
    infinite $\iliftto_{\PP^{(1)} \cup \PP^{(2)}}$-rewrite sequence that starts with some $\{1:q_i\}$ 
    converges with probability $1$, and hence, every
    infinite $\iliftto_{\PP^{(1)} \cup \PP^{(2)}}$-rewrite sequence that starts with
    $\{1:t\}$ converges with probability $1$ as well if it does not 
    perform rewrite steps at the root position.
    However, in the probabilistic setting, this observation 
    requires a quite complex approximation of the convergence probability.
    The reason is that if we rewrite a term $q_i$ to $\{p_1:q_{i,1}, \ldots, p_m:q_{i,m}\}$, then we obtain
    a distribution $\{p_1:f(q_1, \ldots, q_{i,1}, \ldots, q_k), \ldots, p_m:f(q_1, \ldots, q_{i,m}, \ldots, q_k)\}$.
    Now, the terms $q_j$ with $j \neq i$ occur multiple times in this distribution, and we may use different rules to rewrite them.
    Hence, the order in which we rewrite the different $q_i$ matters and cannot be
    chosen arbitrarily (as seen in Counterex.\ \ref{example:liAST-vs-iAST}).
    This
    duplication of the terms $q_j$ 
    is also the reason why $\PASTi$ is not modular for disjoint systems, see \Cref{example:Past-binary-function}.

    In the second step, we also allow rewrite steps at the root position. W.l.o.g., let the root symbol
    $f$ of $t$ be from $\Sigma^{\PP^{(1)}}$.
    Before performing a rewrite step at the root,
    we can replace all maximal (i.e., topmost)
    subterms of $t$ with root symbols from $\Sigma^{\PP^{(2)}}$ 
    by fresh variables (using the same variable for the same subterm).\footnote{Note that this
    replacement is only performed  in the induction step at the root,
    and not for every swap between symbols from $\Sigma^{\PP^{(1)}}$ and $\Sigma^{\PP^{(2)}}$
    on the path from the root to the leaves of $t$.}
    Since $f$ is the root symbol (i.e., there
    is no symbol at a position above $f$),
    and all proper subterms are in normal form,
    this replacement does not change the convergence probability. After the replacement,
    we only have function symbols from $\Sigma^{\PP^{(1)}}$, and thus the convergence
    probability of the resulting rewrite sequence is 1 since $\AST[\ito_{\PP^{(1)}}]$ holds.
\end{myproofsketch}

In contrast to $\ASTi$, 
$\PASTi$ cannot be modular due to the potential extension of the
signature (recall that by \Cref{thm:Sig-PAST}, $\PASTi$ is not closed under
signature extensions).

\begin{counterexample}\label{example:modularity-innermost-disjoint-PAST}
    Consider the PTRS $\PP_{\mathsf{unary}}$ from Counterex.\ \ref{example:Past-vs-Sast-finite}, 
    and a PTRS $\PP_{11}$ containing a binary function symbol $\tc$ such that
    $\PAST[\ito_{\PP_{11}}]$ and $\Sigma^{\PP_{\mathsf{unary}}} \cap \Sigma^{\PP_{11}} = \emptyset$ hold. 
    For example, $\PP_{11}$ could consist of
    the only rule $\tc(\td,\td) \to \{1:\tc(\te,\te)\}$.
    As explained in \Cref{example:Past-binary-function}, due to the signature extension by a
    binary function symbol, we do not have $\PAST[\ito_{\PP_{\mathsf{unary}} \cup \PP_{11}}]$, 
    while both $\PAST[\ito_{\PP_{\mathsf{unary}}}]$ and $\PAST[\ito_{\PP_{11}}]$ hold.
\end{counterexample}

Finally, we consider $\SASTi$.
To prove that
$\SASTi$ is modular for
disjoint unions,
we have to show that
the expected derivation height of any term $t$ is finite.
However, after rewriting $t$'s proper subterms to normal forms, as we did in \Cref{example:modularity-innermost-disjoint-AST} 
and in the induction proof of \Cref{modularity-iAST-disjoint}, 
we may end up with infinitely many different terms. All their expected derivation heights have
to be considered in order to compute
the expected derivation height of $t$. 

\begin{exa}\label{example:SAST-modularity-proof-problems}
    Consider the PTRSs $\PP_{12}^{(1)}$ and $\PP_{12}^{(2)}$ with 

    \vspace*{-.4cm}
    \begin{minipage}[t]{7cm}
        \begin{align*}
            \PP_{12}^{(1)}: \tf(\ts(x),y) &\to \{1:\tf(x,y)\}\\
            \tf(x,\ts(y)) &\to \{1:\tf(x,y)\}\\
            \ta &\to \{\nicefrac{1}{2}:\tz, \nicefrac{1}{2}:\ts(\ta)\}
        \end{align*}
    \end{minipage}
    \begin{minipage}[t]{7cm}
        \begin{align*}
            \PP_{12}^{(2)}: \tg(x) &\to \{1:x\}
        \end{align*}
    \end{minipage}

    \smallskip
    
    \noindent
    Clearly, we have both $\SAST[\ito_{\PP_{12}^{(1)}}]$ and
    $\SAST[\ito_{\PP_{12}^{(2)}}]$.
    Now consider the term $t = \tf(\tg(\ta), \tg(\ta))$.
    Due to the innermost strategy,  we have to
    rewrite its proper subterms first. When proceeding in a similar way as in the
    induction proof of \Cref{modularity-iAST-disjoint}, 
    then one
    would first construct bounds on the expected derivation heights of the proper subterms,
    and then use them to obtain a bound on the expected derivation height of the whole
    term $t$.
    However, reducing $t$'s proper subterms can create  
    infinitely many different terms,
    i.e., all terms of the form $\tf(\ts^n(\tz), \ts^m(\tz))$ for any $n, m \in \mathbb{N}$ can
    be reached with a certain probability.
    Since there is no finite supremum on the derivation height of
    $\tf(\ts^n(\tz), \ts^m(\tz))$ for all $n, m \in \mathbb{N}$,
    one would have to take the individual probabilities for reaching
    the terms $\tf(\ts^n(\tz), \ts^m(\tz))$  into account in order to prove that the expected
    derivation height of $t$ is indeed finite. 
\end{exa}

Instead, we use an easier argument to show
that any term like $t$ has finite expected derivation height.
Recall that in the induction proof of \Cref{modularity-iAST-disjoint}, in the induction step we first 
rewrite below the root position until every proper subterm is in normal form. Afterwards, if the root
symbol of $t$ is from $\Sigma^{\PP^{(1)}}$, then we replace all maximal subterms of $t$
with root symbols from $\Sigma^{\PP^{(2)}}$ by fresh variables. This results in a
term $t'$ over $\Sigma^{\PP^{(1)}}$ which is considered for the remaining derivation. 
As shown in \Cref{example:SAST-modularity-proof-problems}, there may be infinitely many such
terms $t'$, e.g., in \Cref{example:SAST-modularity-proof-problems}, $t'$ can be 
any term of the form $\tf(\ts^n(\tz), \ts^m(\tz))$. However, this infinite set of terms
can be over-approximated using the following
finite abstraction.

For the root symbol $\tf \in \Sigma^{\PP^{(1)}}$ of $t = \tf(\tg(\ta), \tg(\ta))$, the normal
forms reachable from  $\tg(\ta)$ can be over-approximated by considering the
normal forms reachable from the
argument $\ta$ of $\tg  \in \Sigma^{\PP^{(2)}}$ (because function symbols like $\tg$ may have ``collapsing
rules'' which return their arguments) or
by fresh variables (which represent possible normal forms
that start with symbols from
$\Sigma^{\PP^{(2)}}$). Thus, instead of considering the
rewrite steps at symbols
from $\Sigma^{\PP^{(1)}}$
in  $t$, instead we can consider all rewrite steps for the terms from the multiset $\{\tf(\ta, \ta),
\tf(x, \ta), \tf(\ta, y), \tf(x,y)\}$. This multiset is called
the \emph{disjoint union abstraction} of $t$ for $\Sigma^{\PP^{(1)}}$.
Note that all terms in this disjoint union abstraction are indeed from $\TSet{\Sigma^{\PP^{(1)}}}{\VSet}$.
To also capture the possibility that  the two occurrences of
 $\tg(\ta)$ in $t$ might reach the same normal form that starts with a symbol from
$\Sigma^{\PP^{(2)}}$, the disjoint union abstraction of $t$ also contains $\tf(x,x)$ where
we identify the variables $x$ and $y$.

Similarly, instead of considering the rewrite steps at symbols
from $\Sigma^{\PP^{(2)}}$ in $t$, we consider the two arguments of $\tf$ (i.e., the two
occurrences of $\tg(\ta)$) 
with roots from $\Sigma^{\PP^{(2)}}$
where each occurrence of the subterm $\ta \in \Sigma^{\PP^{(1)}}$ is replaced by a fresh variable 
(to represent possible normal forms that start with symbols from $\Sigma^{\PP^{(1)}}$). 
Thus, the disjoint union abstraction of $t$ for $\Sigma^{\PP^{(2)}}$ contains $\tg(x)$ and $\tg(y)$.

So instead of an induction proof as for the modularity of
$\ASTi$,\footnote{The proof idea
that we use for the modularity of $\SASTi$ for disjoint unions could also have been
used to prove the  modularity of $\ASTi$ for disjoint unions
(\Cref{modularity-iAST-disjoint}). However, for \Cref{modularity-iAST-disjoint} we used an
induction proof instead to have it similar to the
original proof of \cite{Gramlich1995AbstractRB} in the non-probabilistic setting. Moreover, such an induction
proof will also be required when showing the modularity of $\ASTi$
for shared constructor unions (\Cref{modularity-iAST-constructor}).}
for the modularity proof of $\SASTi$,
we replace the start term $t$ by all terms in the
disjoint union abstractions for both
$\Sigma^{\PP^{(1)}}$ and  $\Sigma^{\PP^{(2)}}$. This is a finite multiset of $K$ terms for some
$K \in \IN$. Since every term in this abstraction is either from 
$\TSet{\Sigma^{\PP^{(1)}}}{\VSet}$ or from
$\TSet{\Sigma^{\PP^{(2)}}}{\VSet}$, they all have a finite expected derivation height. Hence, if $C_{\max} \in \IN$
is the maximal expected derivation height of all these terms, then $K \cdot C_{\max}$ is a
(finite) bound on the expected derivation height of $t$.

The following definition introduces the \emph{disjoint union abstraction} formally. Here,
$\caphterm_1(t)$ is the multiset where all topmost subterms of $t$ with root from
$\Sigma^{\PP^{(2)}}$ are replaced by fresh variables or by
the abstractions of their subterms. The multiset $\capterm_1(t)$ results from
$\caphterm_1(t)$ by identifying any possible combination of the variables in the terms of $\caphterm_1(t)$.

\begin{defi}[Disjoint Union Abstraction]\label{def:term-abstraction}
    Let $\PP^{(1)}, \PP^{(2)}$ be PTRSs with $\Sigma^{\PP^{(1)}} \cap \Sigma^{\PP^{(2)}} = \emptyset$.
    For any $d \in \{1,2\}$ and any $t \in \TSet{\Sigma^{\PP^{(1)}} \cup
    \Sigma^{\PP^{(2)}}}{\VSet}$, 
    $\caphterm_d(t)$ and $\capterm_d(t)$ are multisets of terms from
    $\TSet{\Sigma^{\PP^{(d)}}}{\VSet}$, which are defined as follows. 
    {\small
    \[
        \begin{array}{ll}
          \caphterm_d(y) \hspace*{-.15cm}& = \{x\} \text{, if } y \in \VSet,
\text{  where  $x$  is always a new fresh variable }  \\
        	\caphterm_d(f(t_1, \ldots, t_k)) \hspace*{-.15cm}& = \{f(q_1, \ldots, q_k) \mid  q_1 \in
                \caphterm_d(t_1), \ldots,
q_k \in
                \caphterm_d(t_k)
                \} \text{, if } f \in \Sigma^{\PP^{(d)}}\\ 
        	\caphterm_d(f(t_1, \ldots, t_k)) \hspace*{-.15cm}& =  \{x\} \cup \caphterm_d(t_1) \cup
                \ldots \cup \caphterm_d(t_k)\text{,  otherwise, where
                    $x$  is always a new fresh variable }\!
    \end{array}
        \]}

    \noindent
    So $\caphterm_d(t)$ is always a linear term, i.e., it never contains multiple occurrences of the
    same variable.
 
    For any function $\varphi: X \to X$ with $X \subseteq \VSet$, 
    let $\sigma_{\varphi}$ be the substitution that replaces
    every variable $x \in X$ by $\varphi(x)$
    and leaves all other variables unchanged, i.e., $\sigma_{\varphi}(x) =
    \varphi(x)$ if $x \in X$ and $\sigma_X(x) = x$ otherwise. 
    Then we define
    \[ \capterm_d(t) = \{ \sigma_{\varphi}(q) \mid q \in \caphterm_d(t), \varphi: \VSet(q) \to \VSet(q) \}\]
    The \emph{disjoint union abstraction} of $t$ is the multiset $\capterm_1(t) \cup \capterm_2(t)$.
\end{defi}

\begin{exa}\label{example:disjoint-union-abstraction}
    Reconsider $\PP_{12}^{(1)}$ and $\PP_{12}^{(2)}$ from
    \Cref{example:SAST-modularity-proof-problems}. Here, 
    we obtain
    \[ \begin{array}{ll}
      \caphterm_1(\tf(\tg(\ta), \tg(\ta))) & = \{\tf(\ta, \ta),
      \tf(x, \ta), \tf(\ta, y), \tf(x,y)\}\\
      \capterm_1(\tf(\tg(\ta), \tg(\ta))) & = \{\tf(\ta, \ta),
      \tf(x, \ta), \tf(\ta, y),
      \tf(x,y), \tf(y,x), \tf(x,x), \tf(y,y)
      \}\\
      \caphterm_2(\tf(\tg(\ta), \tg(\ta))) & = \{x', \tg(x), \tg(y)\}\\
      \capterm_2(\tf(\tg(\ta), \tg(\ta))) & = \{x', \tg(x), \tg(y)\}\!
    \end{array}\]
\end{exa}

The following lemma states the
two most important properties regarding the disjoint union abstraction.

\begin{lem}[Properties of $\capterm_d$]
   Let
  $\PP^{(1)}, \PP^{(2)}$ be PTRSs with $\Sigma^{\PP^{(1)}} \cap \Sigma^{\PP^{(2)}} =
   \emptyset$,
  $t \in \TSet{\Sigma^{\PP^{(1)}} \cup
    \Sigma^{\PP^{(2)}}}{\VSet}$, 
and $d \in \{1,2\}$.
  Then:
  \begin{enumerate}
    \item $\capterm_d(t)$ is finite
    \item $\capterm_d(t) \subseteq \TSet{\Sigma^{\PP^{(d)}}}{\VSet}$
        \end{enumerate}
\end{lem}

\begin{proof}
  Simple proof by induction on the term structure.
\end{proof}

With the disjoint union abstraction, we can prove the modularity of
$\SASTi$ for disjoint unions.

\begin{restatable}[Modularity of $\SASTi$ for Disjoint Unions]{thm}{ModiSASTDisjoint}\label{modularity-iSAST-disjoint}
  Let $\PP^{(1)}$ and $\PP^{(2)}$ be PTRSs with
  $\Sigma^{\PP^{(1)}} \cap \Sigma^{\PP^{(2)}} = \emptyset$. Then we have:
    \begin{align*}
        \SAST[\ito_{\PP^{(1)} \cup \PP^{(2)}}]
        &\; \Longleftrightarrow \; \SAST[\ito_{\PP^{(1)}}]  \; \text{ and } \; \SAST[\ito_{\PP^{(2)}}]\!
    \end{align*}
\end{restatable}

\begin{myproofsketch}
    The direction ``$\Longrightarrow$'' is trivial and thus, we only prove ``$\Longleftarrow$''.
    So let $\PP = \PP^{(1)} \cup \PP^{(2)}$ where $\SAST[\ito_{\PP^{(1)}}]$ and $\SAST[\ito_{\PP^{(2)}}]$.
    Let $\F{T}$ be an arbitrary $\ito_{\PP}$-RST that starts with $(1:t)$.
    We show that $\edl(\F{T})$ is bounded by some constant 
    which does not depend on $\F{T}$ but just on $t$.
    This proves that $\SAST[\ito_{\PP}]$ holds.

    Since we have $\SAST[\ito_{\PP^{(d)}}]$,
    the expected derivation length of
    all $\ito_{\PP^{(d)}}$-RSTs with $d \in \{1,2\}$ that start with a term $q
    \in \TSet{\Sigma^{\PP^{(d)}}}{\VSet}$
    is bounded by some constant $C_q < \omega$.
    Thus, since $|\capterm_{1}(t) \cup \capterm_{2}(t)| = K \in \IN$ is finite, 
    there is a $C_{\max} < \omega$ such that for all $q \in
    \capterm_{1}(t) \cup \capterm_{2}(t)$ we have $\edl(\F{T}') \leq C_{\max}$ for every
    $\ito_{\PP^{(d)}}$-RST $\F{T}'$ that starts with $(1:q)$.
    Hence, we obtain $\edl(\F{T}) \leq K \cdot C_{\max}$.
\end{myproofsketch}

\begin{exa}\label{example:modularity-innermost-disjoint-SAST}
    Let us illustrate the notation in the previous proof sketch by
    applying it to the PTRS from
    \Cref{example:SAST-modularity-proof-problems}. 
    If we reconsider the start term $\tf(\tg(\ta), \tg(\ta))$,
    then $\capterm_{1}(\tf(\tg(\ta), \tg(\ta))) = \{\tf(\ta, \ta),
    \tf(x, \ta), \tf(\ta, y),
    \tf(x,y), \tf(y,x), \tf(x,x), \tf(y,y) \}$
    and $\capterm_{2}(\tf(\tg(\ta),\tg(\ta)))\!=\!\{x', \tg(x), \tg(y)\}$, as in~\Cref{example:disjoint-union-abstraction}.
    Moreover, we obtain $C_{\max} = C_{\tf(\ta,\ta)} = 2 + 2 + 2 \cdot \sum_{n = 1}^{\infty} (\nicefrac{1}{2})^{n+1} \cdot n = 2 + 2 + 2 \cdot 1 = 6$, 
    where $C_{\tf(\ta,\ta)}$ is the bound on the expected derivation height of the term $\tf(\ta,\ta)$.
    (The reason is that reaching a normal form from the subterm $\ta$ needs $2$ steps in expectation, 
    and then for each generated $\ts$ we have one additional step,
    where each
    $\ta$ generates 
    $n$ $\ts$-symbols with probability $(\nicefrac{1}{2})^{n+1}$.)
    Hence, the overall bound on the expected derivation height for $\tf(\tg(\ta), \tg(\ta))$ is
    $K \cdot C_{\max}  = 10 \cdot 6 = 60$.
    In fact, the actual expected derivation height for $\tf(\tg(\ta), \tg(\ta))$ is $6 + 2 = 8$
    ($2$ steps for the two $\tg$-symbols, and in expectation $6$ steps for the term $\tf(\ta,\ta)$), 
    but for the proof any finite bound suffices.
\end{exa}

For full rewriting, it is well known that
termination is already not modular in the non-probabilistic setting.

\begin{counterexample}\label{example:modularity-full-disjoint}
    Reconsider the TRS $\R_1$ from Counterex.\ \ref{example:diff-SN-vs-iSN-toyama}.
    This TRS is the disjoint union of $\R_1^{(1)} = \{\tf(\ta,\tb,x) \to \tf(x,x,x)\}$ and
    $\R_1^{(2)} = \{\tg \to \ta, \; \tg \to \tb\}$.
    Both $\R_1^{(1)}$ and $\R_1^{(2)}$ are terminating, but the disjoint union $\R_1$ is not.
\end{counterexample}

However, one can reuse our results
from \Cref{Relating AST and its Restricted Forms} to obtain the following corollary.

\begin{restatable}[Modularity of $\PSNf$ for Disjoint Unions]{cor}{ModASTDisjoint}\label{modularity-AST-disjoint}
    Let $\PP^{(1)}$ and $\PP^{(2)}$ be
    PTRSs with
    $\Sigma^{\PP^{(1)}} \cap \Sigma^{\PP^{(2)}} = \emptyset$
    that are NO and linear. Then we have:
    \begin{align*}
        \AST[\fto_{\PP^{(1)} \cup \PP^{(2)}}]
        &\; \Longleftrightarrow \; \AST[\fto_{\PP^{(1)}}] \; \text{ and } \; \AST[\fto_{\PP^{(2)}}]\\
        \SAST[\fto_{\PP^{(1)} \cup \PP^{(2)}}]
        &\; \Longleftrightarrow \; \SAST[\fto_{\PP^{(1)}}]  \; \text{ and } \; \SAST[\fto_{\PP^{(2)}}]\!
    \end{align*}
\end{restatable}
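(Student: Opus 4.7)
The plan is to derive the corollary directly from the corresponding innermost modularity results (\Cref{modularity-iAST-disjoint} for $\AST$ and \Cref{modularity-iSAST-disjoke}, i.e., \Cref{modularity-iSAST-disjoint} for $\SAST$) by exploiting the equivalence between full and innermost rewriting given by \Cref{properties-eq-AST-iAST-1}. The direction ``$\Longrightarrow$'' is immediate, since $\fto_{\PP^{(d)}} \subseteq \fto_{\PP^{(1)} \cup \PP^{(2)}}$ implies that every infinite $\fliftto_{\PP^{(d)}}$-rewrite sequence is also an infinite $\fliftto_{\PP^{(1)} \cup \PP^{(2)}}$-rewrite sequence with the same convergence probability and the same expected derivation length.

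For the direction ``$\Longleftarrow$'', first I would observe that the union $\PP = \PP^{(1)} \cup \PP^{(2)}$ is again NO and linear. Linearity is trivially inherited, as it is a property of individual rules. Non-overlappingness of $\PP$ follows because any two rules from the same $\PP^{(d)}$ are non-overlapping by assumption, while rules from different components cannot overlap: the disjointness of the signatures $\Sigma^{\PP^{(1)}} \cap \Sigma^{\PP^{(2)}} = \emptyset$ guarantees that the defined root symbols of their left-hand sides differ, and no non-variable subterm of a left-hand side of $\PP^{(d)}$ can unify with a left-hand side from $\PP^{(d')}$ for $d \neq d'$.

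With this established, I would apply \Cref{properties-eq-AST-iAST-1} to each component PTRS: from $\AST[\fto_{\PP^{(d)}}]$ we obtain $\AST[\ito_{\PP^{(d)}}]$, and analogously $\SAST[\fto_{\PP^{(d)}}]$ yields $\SAST[\ito_{\PP^{(d)}}]$ for $d \in \{1,2\}$. Then, by the modularity results \Cref{modularity-iAST-disjoint} and \Cref{modularity-iSAST-disjoint} for innermost rewriting, we conclude $\AST[\ito_{\PP^{(1)} \cup \PP^{(2)}}]$ and $\SAST[\ito_{\PP^{(1)} \cup \PP^{(2)}}]$, respectively. Finally, applying \Cref{properties-eq-AST-iAST-1} once more, now to the union $\PP^{(1)} \cup \PP^{(2)}$ (which is NO and linear as argued above), transfers these innermost properties back to full rewriting, yielding $\AST[\fto_{\PP^{(1)} \cup \PP^{(2)}}]$ and $\SAST[\fto_{\PP^{(1)} \cup \PP^{(2)}}]$.

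No step here appears to be a significant obstacle, as the reasoning is essentially a ``sandwich'' argument: use the innermost--full equivalence to drop down to innermost, invoke the already-proved modularity results, and lift back to full rewriting. The only care needed is the syntactic check that NO and linearity lift from the components to the disjoint union, which is routine given the disjoint-signature hypothesis. Note also that no analogous statement is claimed for $\PAST$, which is consistent with \Cref{example:modularity-innermost-disjoint-PAST} and \Cref{thm:Sig-PAST} showing that $\PAST$ is not even preserved under signature extensions, let alone disjoint unions.
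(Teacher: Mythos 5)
Your proof is correct and follows essentially the same route as the paper, which presents this corollary as a direct consequence of \Cref{properties-eq-AST-iAST-1} together with the innermost modularity results \Cref{modularity-iAST-disjoint} and \Cref{modularity-iSAST-disjoint} (the ``sandwich'' from full to innermost and back). Your explicit verification that NO and linearity lift from the components to the disjoint union via the disjoint-signature hypothesis is exactly the routine check the paper leaves implicit.
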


\subsection{Shared Constructor Unions}\label{Sect:Shared Constructor Unions}

Now we consider unions of PTRSs that may share constructor symbols, i.e., 
we consider two PTRSs $\PP^{(1)}$ and $\PP^{(2)}$ such that $\Sigma^{\PP^{(1)}}_{D} \cap \Sigma^{\PP^{(2)}}_{D} = \emptyset$,
called \emph{shared constructor unions}.
Again, we study innermost rewriting first.

In the non-probabilistic setting, innermost termination is also modular for shared
constructor unions \cite{Gramlich1995AbstractRB}. 
However,
$\PASTi$ was already not modular w.r.t.\ disjoint unions, 
so this also holds for shared constructor unions.
Moreover, $\SASTi$ also turns out to be not modular anymore for shared
constructor unions.

\begin{counterexample}\label{example:modularity-innermost-constructor-SAST}
    Consider the PTRS $\PP_{13} = \PP_{13}^{(1)} \cup \PP_{13}^{(2)}$ with the
    rules

    \vspace*{-.4cm}
    \begin{minipage}[t]{7cm}
        \begin{align*}
            \PP_{13}^{(1)}: \tf(\tc(x,y)) &\to \{1:\tc(\tf(x),\tf(y))\}\\
            \tf(\tz) &\to \{1:\tz\}\!
        \end{align*}
    \end{minipage}
    \begin{minipage}[t]{7cm}
        \begin{align*}
            \PP_{13}^{(2)}: \tg(x) &\to \{\nicefrac{1}{2}:\tg(\td(x)), \nicefrac{1}{2}:x \}\\
            \td(x) &\to \{1:\tc(x,x)\}\!
        \end{align*}
    \end{minipage}
    \vspace*{.3cm}

    \noindent
    While $\PP_{13}^{(1)}$ and $\PP_{13}^{(2)}$ do not have any common defined symbols, they
    share the constructor $\tc$.     
    We do not have $\PAST[\ito_{\PP_{13}}]$
    (and thus, not $\SAST[\ito_{\PP_{13}}]$),
    as the infinite $\iliftto_{\PP_{13}}$-rewrite sequence $(\mu_n)_{n \in \IN}$
    depicted in the following $\ito_{\PP_{13}}$-RST has an infinite expected derivation length.

    \medskip
    
    \begin{center}
      \begin{tikzpicture}
          \tikzstyle{adam}=[thick,draw=black!100,fill=white!100,minimum size=4mm, shape=rectangle split, rectangle split parts=2,rectangle split horizontal]
          \tikzstyle{empty}=[rectangle,thick,minimum size=4mm]

          \node[empty] at (-7, 2)  (a) {$\mu_0:$};
          \node[adam] at (0, 2)  (1) {$1$ \nodepart{two} $\tf(\tg(\tz))$};

          \node[empty] at (-7, 1)  (b) {$\mu_1:$};
          \node[adam] at (-2, 1)  (11) {$\nicefrac{1}{2}$\nodepart{two}$\tf(\tg(\td(\tz)))$};
          \node[adam] at (2, 1)  (12)
               {$\nicefrac{1}{2}$\nodepart{two}$\underline{\textcolor{red}{\tf(\tz)}}$}; 

          \node[empty] at (-7, 0)  (c) {$\mu_2:$};
          \node[adam] at (-2, 0)  (111) {$\nicefrac{1}{2}$\nodepart{two}$\tf(\tg(\tc(\tz,\tz)))$};
          \node[empty] at (2, 0)  (121) {$\ldots$};

          \node[empty] at (-7, -1)  (d) {$\mu_3:$};
          \node[adam] at (-4, -1)  (1111) {$(\nicefrac{1}{2})^2$\nodepart{two}$\tf(\tg(\td(\tc(\tz,\tz))))$};
          \node[adam] at (0, -1)  (1112) {$(\nicefrac{1}{2})^2$\nodepart{two}$\underline{\textcolor{red}{\tf(\tc(\tz,\tz))}}$};

          \node[empty] at (-7, -2)  (d) {$\mu_4:$};
          \node[empty] at (-4, -2)  (11111) {$\ldots$};
          \node[empty] at (0, -2)  (11121) {$\ldots$};

          \draw (1) edge[->] (11);
          \draw (1) edge[->] (12);
          \draw (12) edge[->] (121);
          \draw (11) edge[->] (111);
          \draw (111) edge[->] (1111);
          \draw (111) edge[->] (1112);
          \draw (111) edge[->] (1112);
          \draw (1111) edge[->] (11111);
          \draw (1112) edge[->] (11121);
      \end{tikzpicture}
    \end{center}

\medskip
    
    \noindent
    For any $n \in \IN$,
    each underlined term $\tf(\tc^n(\tz,\tz))$
    in the tree above can start a reduction of at least length $2^{n}$, where
    $\tc^n(\tz,\tz)$ corresponds to 
    the full binary tree of height $n$ 
    with $\tc$ in inner nodes and $\tz$ in the leaves.
    Hence, the term $\tf(\tg(\tz))$ has an expected derivation height of at least $\sum_{n
      = 0}^{\infty} \frac{1}{2^{n+1}} \cdot 2^n = \sum_{n = 0}^{\infty} \frac{1}{2}$,
    which diverges to infinity.

    On the other hand, we have $\SAST[\ito_{\PP_{13}^{(1)}}]$, 
    as $\PP_{13}^{(1)}$ is a PTRS with only trivial probabilities that corresponds to a terminating TRS.
    Moreover, $\SAST[\ito_{\PP_{13}^{(2)}}]$ holds as well, as the $\td$-rule 
    can  increase the number of $\tc$-symbols  in a term exponentially, 
    but those $\tc$-symbols will never be used. 
    Thus, $\SASTi$ is not modular for shared constructor unions.
\end{counterexample}

In contrast to the proof of \Cref{modularity-iSAST-disjoint},
we cannot use the disjoint union abstraction anymore to obtain a bound on the expected derivation height,
since symbols from $\Sigma^{\PP^{(2)}}$ can now ``generate'' constructor symbols of $\Sigma^{\PP^{(1)}}$.
However, for $\ASTi$ we can reuse the idea of the previous proof for \Cref{modularity-iAST-disjoint} 
to obtain a similar result for shared constructor unions.

\begin{restatable}[Modularity of $\ASTi$ for Shared Constructor Unions]{thm}{ModiASTCons}\label{modularity-iAST-constructor}
    Let $\PP^{(1)}$ and $\PP^{(2)}$ be PTRSs with
    $\Sigma_D^{\PP^{(1)}} \cap \Sigma_D^{\PP^{(2)}} = \emptyset$. Then we have:
    \begin{align*}
        \AST[\ito_{\PP^{(1)} \cup \PP^{(2)}}]
        &\; \Longleftrightarrow \; \AST[\ito_{\PP^{(1)}}] \; \text{ and } \; \AST[\ito_{\PP^{(2)}}]\!
     \end{align*}
\end{restatable}

\begin{proof}
    The proof for ``$\Longleftarrow$''
    is again via structural induction on the term $t$ in the initial
    multi-distribution
    $\{1:t\}$ and very similar to the proof of \Cref{modularity-iAST-disjoint}.
    The only difference is that in the induction step, 
    after performing rewrite steps below the root until all proper subterms are in normal form,
    if the root is from $\Sigma^{\PP^{(1)}}$, then
    we do not replace all maximal subterms with roots from $\Sigma^{\PP^{(2)}}$ by fresh variables
    but just maximal subterms with root symbols from $\Sigma^{\PP^{(2)}}_{D}$, as the constructor
    symbols may be used by rules of $\PP^{(1)}$.
    This, however, does not interfere with the proof idea.
\end{proof}

Again, we can reuse our results
from \Cref{Relating AST and its Restricted Forms} on
the relation between full and innermost rewriting to obtain
the following corollary for full rewriting.
Due to Counterex.\ \ref{example:modularity-full-disjoint}, this corollary
does not hold for general PTRSs.

\begin{restatable}[Modularity of $\ASTf$ for Shared Constructor Unions]{cor}{ModASTCons}\label{modularity-AST-constructor}
    Let $\PP^{(1)}$ and $\PP^{(2)}$ be PTRSs with
    $\Sigma_D^{\PP^{(1)}} \cap \Sigma_D^{\PP^{(2)}} = \emptyset$
    that are NO and linear. Then we have:
    \begin{align*}
        \AST[\fto_{\PP^{(1)} \cup \PP^{(2)}}]
        &\; \Longleftrightarrow \; \AST[\fto_{\PP^{(1)}}] \; \text{ and } \; \AST[\fto_{\PP^{(2)}}]\!
    \end{align*}
\end{restatable}

\subsection{Signature Extensions}\label{Sect:Signature Extensions}

Finally, we study signature extensions of PTRSs.
While $\PASTs$ is not closed under signature extensions by \Cref{thm:Sig-PAST}, we now 
consider $\ASTs$ and $\SASTs$.
Signature extensions can be seen as special cases of disjoint unions,
where the second PTRS $\PP^{(2)}$ contains only trivially terminating rules
over the new signature that we want to add to $\Sigma^{\PP^{(1)}}$.
Hence, for innermost rewriting,
\Cref{modularity-iAST-disjoint,modularity-iSAST-disjoint} already imply that $\ASTi$ and $\SASTi$
are closed under signature extensions.

For full rewriting,
\Cref{modularity-AST-disjoint} implies that
$\ASTf$ and $\SASTf$
are closed under signature extensions for non-overlapping and linear PTRSs.
We now show that this also holds for arbitrary PTRSs.
So let $\PP$ be an arbitrary PTRS over the signature $\Sigma^{\PP}$ for the rest of this section.
We consider two cases.

First, let $\Sigma^{\PP}$ contain only constants and unary symbols,
e.g., $\Sigma^{\PP} = \{\tf, \ta, \tb\}$ where 
$\tf$ is unary and $\ta, \tb$ are constants. 
If we extend
$\Sigma^{\PP}$ by a
signature $\Sigma'$ that may also contain symbols of other arities,
e.g., a symbol $\tc$ of arity $2$,
and consider terms from
$\TSet{\Sigma^{\PP} \cup \Sigma'}{\VSet}$
like $\tf(\tc(\tf(\ta), \tf(\tb)))$, then the fresh symbol $\tc$
``completely separates''
the function symbols
from $\Sigma^{\PP}$
occurring below and above $\tc$.
More precisely,
instead of $\tf(\tc(\tf(\ta), \tf(\tb)))$, it suffices to analyze the start
terms  
$\tf(\ta)$, $\tf(\tb)$, and $\tf(x_\tc)$
(where the $\tc$-subterm is replaced by a fresh variable $x_\tc$), which are all from 
$\TSet{\Sigma^{\PP}}{\VSet}$. The reason is that
rewriting above $\tc$ does not interfere with the terms below $\tc$ and vice versa.
Hence, if every rewrite sequence that starts with $\tf(\ta)$, $\tf(\tb)$, or $\tf(x_\tc)$ converges with probability $1$, then
so does the term $\tf(\tc(\tf(\ta), \tf(\tb)))$. In other words, if $\ASTf$ holds
over the signature $\Sigma^{\PP}$, then  $\ASTf$ also holds
over the signature $\Sigma^{\PP} \cup \Sigma'$.
Furthermore, the expected derivation height of $\tf(\tc(\tf(\ta), \tf(\tb)))$ is bounded by $3 \cdot C_{\max}$,
where \mbox{$C_{\max}$
is the maximum expected derivation height of the terms $\tf(\ta)$, $\tf(\tb)$, and $\tf(x_\tc)$, 
similar} as in the proof of \Cref{modularity-iSAST-disjoint}. Thus, if we have
$\SASTf$ over the signature $\Sigma^{\PP}$, then we also have $\SASTf$
over the signature $\Sigma^{\PP} \cup \Sigma'$.

Second, we consider the case where
$\Sigma^{\PP}$ itself already contains a function symbol $\tg$
that has at least arity $2$.
Again, the fresh symbols $\tc$ of $\Sigma'$  separate the function symbols of
$\Sigma^{\PP}$
occurring above and below
them. However, now the expected 
derivation height of a term like $\tf(\tc(\tf(\ta), \tf(\tb), \tf(y)))$
from $\TSet{\Sigma^{\PP} \cup \Sigma'}{\VSet}$
is not bounded by simply adding the expected derivation heights
of the corresponding terms 
$\tf(\ta)$, $\tf(\tb)$, $\tf(y)$, $\tf(x_\tc)$ from $\TSet{\Sigma^{\PP}}{\VSet}$
anymore. The reason is that $\PP$ might now duplicate subterms (e.g., there could be a
rule like  $\tf(x) \to \{1: \tg(x,x)\}$). 
However, for any term $t \in \TSet{\Sigma^{\PP} \cup \Sigma'}{\VSet}$, we can now construct
a term $t' \in \TSet{\Sigma^{\PP}}{\VSet}$ over the original signature $\Sigma^{\PP}$
that has (at least) the same expected derivation height and (at most) the same convergence probability.

The construction works as follows:
Let $\tg \in \Sigma^{\PP}$ be a symbol of arity $2$ (if its arity
is greater than two, then we use the term $\tg(\_, \, \_, \; x,..., x)$ instead, where $x \in \VSet$).
For example,
if we extend the signature $\Sigma^{\PP}$ by a symbol $\tc \in \Sigma'$ of arity $3$, then
instead of a term like 
$\tf(\tc(\tf(\ta), \tf(\tb), \tf(y)))$ from
$\TSet{\Sigma^{\PP} \cup \Sigma'}{\VSet}$,
we can  consider the term $\tf(\tg(\tf(\ta), \tg(\tf(\tb), \tg(\tf(y),x_\tc))))$  from
$\TSet{\Sigma^{\PP}}{\VSet}$ without the symbol $\tc$, 
where we do not rewrite the newly added $\tg$ symbols.
So here, we replaced $\tc(\_, \, \_, \, \_)$ by $\tg(\_, \, \tg(\_, \, \tg( \_, x_\tc)))$.
Note that this construction works for full but not for innermost rewriting,
since
it may create new redexes with the symbol $\tg$ that we may have to rewrite when using the
innermost strategy.
However, for innermost rewriting we have already proven closedness
under signature extensions by \Cref{modularity-iAST-disjoint,modularity-iSAST-disjoint}, as explained above.

To summarize, this leads to the following theorem.

\begin{restatable}[Signature Extensions for $\ASTs$ and $\SASTs$]{thm}{SignatureExtensions}\label{signature-extensions-AST-SAST}
  Let $\PP$ be a PTRS, $s \in \{ \mathbf{f}, \mathbf{i} \}$, and let $\Sigma'$ be some signature. Then we have:
  \begin{align*}
        \ASTs \text{ over } \Sigma^{\PP}
        &\; \Longleftrightarrow \; \ASTs \text{ over } \Sigma^{\PP} \cup \Sigma'\\
        \SASTs \text{ over } \Sigma^{\PP}
        &\; \Longleftrightarrow \; \SASTs \text{ over } \Sigma^{\PP} \cup \Sigma'\!
  \end{align*}
\end{restatable}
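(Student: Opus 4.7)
The plan is to treat the $\Longleftarrow$ directions as trivial (since $\TSet{\Sigma^{\PP}}{\VSet} \subseteq \TSet{\Sigma^{\PP} \cup \Sigma'}{\VSet}$, any witnessing rewrite sequence already exists in the smaller term set) and to concentrate on $\Longrightarrow$. For the innermost case ($s = \mathbf{i}$), I would reduce to the disjoint-union modularity results already proved in this section. Concretely, view the signature extension as the disjoint union of $\PP^{(1)} = \PP$ over $\Sigma^{\PP}$ with the empty PTRS $\PP^{(2)} = \emptyset$ over $\Sigma'$: since every term over $\Sigma'$ is a $\PP^{(2)}$-normal form, $\AST[\ito_{\PP^{(2)}}]$ and $\SAST[\ito_{\PP^{(2)}}]$ hold trivially, and \cref{modularity-iAST-disjoint} together with \cref{modularity-iSAST-disjoint} then deliver $\ASTi$ and $\SASTi$ over $\Sigma^{\PP} \cup \Sigma'$.

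For the full-rewriting case ($s = \mathbf{f}$), the disjoint-union corollary (\cref{modularity-AST-disjoint}) is unavailable because we do not assume $\PP$ to be non-overlapping or linear. Instead, I would construct, for every term $t \in \TSet{\Sigma^{\PP} \cup \Sigma'}{\VSet}$, a companion term $t' \in \TSet{\Sigma^{\PP}}{\VSet}$ such that every infinite $\fliftto_{\PP}$-rewrite sequence $\vec{\mu}$ starting with $\{1 : t\}$ can be lifted to an infinite $\fliftto_{\PP}$-rewrite sequence $\vec{\mu}'$ starting with $\{1 : t'\}$ with $\lim_n |\mu_n|_{\PP} \geq \lim_n |\mu'_n|_{\PP}$ and $\edl(\vec{\mu}) \leq \edl(\vec{\mu}')$. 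These two inequalities suffice: the first transfers $\ASTf$ from $\Sigma^{\PP}$ to $\Sigma^{\PP} \cup \Sigma'$, the second transfers $\SASTf$. The reason both inequalities hold in the required direction is that any $\PP$-redex $\ell\sigma$ inside a reduct of $t$ has all its non-variable positions filled with $\Sigma^{\PP}$-symbols, so the same redex survives the transformation inside $t'$; hence whenever the transformed term is a $\PP$-normal form, the original term is, too, giving $|\mu_n|_{\PP} \geq |\mu'_n|_{\PP}$ stepwise.

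The construction of $t'$ splits into the two subcases already sketched in the introduction. If $\Sigma^{\PP}$ contains only constants and unary symbols, then $\PP$-terms are automatically linear chains, so I would define $t'$ by cutting $t$ at every $\Sigma'$-node and replacing the subterm rooted there by a fresh variable, treating the resulting finite family of $\Sigma^{\PP}$-chains as independent; since $\Sigma^{\PP}$-rules cannot duplicate or erase through $\Sigma'$-symbols, the chains evolve without interference, and convergence/derivation-height bounds combine additively over the finitely many pieces. If instead $\Sigma^{\PP}$ contains a symbol $\tg$ of arity $\geq 2$, I would define $t'$ recursively by replacing every subterm $\tc(s_1,\ldots,s_k)$ with $\tc \in \Sigma'$ by the right-nested chain $\tg(s'_1, \tg(s'_2, \ldots, \tg(s'_k, x_\tc)\ldots))$ (padding the remaining arguments of $\tg$ with a fixed fresh variable $x$ when $\tg$ has arity $> 2$), so that $t'$ is a genuine $\Sigma^{\PP}$-term.

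The hard part will be the formal verification of the simulation lemma in the second subcase: one must argue that every rewrite step on $t$ is matched by a rewrite step on $t'$ using the same rule and a pointwise-transformed substitution, that the transformation commutes with substitution application and with position-indexed rewriting, and that the newly inserted $\tg$-structures, although they might themselves form additional $\PP$-redexes in $t'$, only ever inflate the expected derivation length and deflate the convergence probability of $\vec{\mu}'$ relative to $\vec{\mu}$ — never the reverse. The bookkeeping of positions across the transformation, combined with the handling of substitutions whose $\Sigma'$-subterms may be duplicated or erased by rules of $\PP$, is the central technical obstacle; once it is discharged, the two inequalities follow routinely by induction on the sequence and close out both theorems.
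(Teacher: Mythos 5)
Your overall route coincides with the paper's own proof: the innermost case is handled via disjoint-union modularity, and the full-rewriting case is split according to whether $\Sigma^{\PP}$ contains a symbol of arity $\geq 2$, with exactly the paper's nested-$\tg$ embedding in the second subcase (including the padding of surplus $\tg$-arguments with a fixed fresh variable). The paper, too, reduces that simulation to the one-line observation that $t \fto_{\PP} \{p_1{:}s_1, \ldots, p_k{:}s_k\}$ implies $\phi(t) \fto_{\PP} \{p_1{:}\phi(s_1), \ldots, p_k{:}\phi(s_k)\}$; moreover, by phrasing everything with RSTs — whose leaves need not be normal forms — the simulated tree simply never reduces the freshly created $\tg$-redexes, so one gets \emph{equality} of convergence probability and of $\edl$ rather than your one-sided inequalities, and the "central technical obstacle" you flag largely evaporates. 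Two places in your plan do need repair, however.

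First, your instantiation of the modularity theorems with $\PP^{(2)} = \emptyset$ does not deliver the conclusion: $\Sigma^{\PP^{(2)}}$ is defined as the set of symbols \emph{occurring in the rules} of $\PP^{(2)}$, so for the empty PTRS it is empty, and \Cref{modularity-iAST-disjoint} and \Cref{modularity-iSAST-disjoint} then only concern terms over $\Sigma^{\PP}$ — the statement over $\Sigma^{\PP} \cup \Sigma'$ is never produced. You must either re-run the modularity induction with ambient signature $\Sigma^{\PP} \cup \Sigma'$ (which does go through, since all $\Sigma'$-rooted subterms are then normal forms), or take a nonempty $\PP^{(2)}$ with trivially terminating rules mentioning all of $\Sigma'$ — but in that case you additionally owe the transfer from $\AST[\ito_{\PP \cup \PP^{(2)}}]$ back to $\AST[\ito_{\PP}]$ over the joint signature, which is not a mere inclusion of relations because the innermost condition refers to different sets of normal forms for the two systems. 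Second, in the unary/constants subcase for full rewriting, your claim that the $\Sigma^{\PP}$-chains "evolve without interference" and that bounds "combine additively" is precisely the content of the paper's Parallel Execution Lemmas (\Cref{lemma:parallel-execution-lemma-AST} and \Cref{lemma:parallel-execution-lemma-SAST}): the paper packages your chain pieces into a term $\tc(q_1, \ldots, q_n)$ under a fresh constructor and invokes these lemmas. For $\AST$ this step is genuinely nontrivial in the probabilistic setting — after a probabilistic split, the sibling pieces are duplicated across the branches of the multi-distribution and may subsequently be rewritten with \emph{different} nondeterministic choices in different branches (the failure mode of Counterex.\ \ref{example:liAST-vs-iAST}), which is why the lemma's proof requires the cutting lemma and a K\"onig's-lemma argument rather than an additivity computation. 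As written, your independence claim asserts the hardest lemma of this part of the paper instead of proving it.
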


\section{Related Work on Verification of Probabilistic Programs}\label{Related Work}

In the previous sections, we already discussed the connection to related work in term
rewriting. 
However, verification of probabilistic programs has also been studied
extensively for imperative programs on numbers, 
and for different recursive programming languages like the lambda calculus.
Thus, in this section we discuss existing work on probabilistic
termination analysis outside term rewriting.

Its \emph{hardness} has been investigated in 
\cite{kaminski2019hardness,  majumdar2024PositiveAlmostSureTermination},
showing that analyzing almost-sure termination is even more difficult than ordinary
termination when considering the halting problem for given inputs. 
In our paper, $\SN$ and  $\AST$ refer to universal termination,
as we consider rewrite sequences starting with arbitrary terms.
While for non-probabilistic programs, universal termination is ``harder'' than the halting
problem for given inputs,  these problems are equally hard for probabilistic programs
\cite{kaminski2019hardness}.

There exist numerous approaches and
proof rules mainly based on martingales 
for different properties of probabilistic programs.
For example, there are techniques for proving $\AST$ and $\PAST$
\cite{ferrerfioritiProbabilisticTerminationSoundness2015,chatterjeeTerminationAnalysisProbabilistic2016,  
 agrawal2017LexicographicRankingSupermartingales,
 chatterjee2018AlgorithmicAnalysisQualitative,huang2018NewApproachesAlmostSure,
 mciver2017new, dblp:journals/pacmpl/huang0cg19,
  chatterjee2020TerminationAnalysisProbabilistic,abateLearningProbabilisticTermination2021,
  moosbruggerAutomatedTerminationAnalysis2021a, chatterjee2023LexicographicProofRules,
  majumdar2025SoundCompleteProof}, 
for proving bounds on the termination probability
\cite{chatterjeeStochasticInvariantsProbabilistic2017,
  kura2019TailProbabilitiesRandomized,
chatterjee2022SoundCompleteCertificates,DBLP:journals/pacmpl/MoosbruggerSBK22, feng2023LowerBoundsPossibly}, and for
upper and lower bounds on expected runtimes and costs
\cite{kaminski2018WeakestPreconditionReasoning,ngoBoundedExpectationsResource2018,
    fuTerminationNondeterministicProbabilistic2019,giesl2019ComputingExpectedRuntimes,
    wang2019CostAnalysisNondeterministic,  avanziniModularCostAnalysis2020,hark2020AimingLowHarder,
    kaminski2020ExpectedRuntimeAnalyis,
    meyerInferringExpectedRuntimes2021a,
DBLP:journals/pacmpl/DasWH23,
lommen2024ControlFlowRefinementComplexity}.
  Many of these approaches can be automated directly
 or by extending them with automatic invariant synthesis techniques, 
e.g., \cite{katoen2010LinearInvariantGenerationProbabilistic,
  barthe2016SynthesizingProbabilisticInvariants,
  bao2022DataDrivenInvariantLearning, 
kofnov2022MomentBasedInvariantsProbabilistic, batzProbabilisticProgramVerification2023}.
Moreover, one can also use such proof rules within a quantitative program verification infrastructure like \tool{Caesar} 
\cite{DBLP:journals/pacmpl/SchroerBKKM23}.
Some of these proof rules have already been adapted to term rewriting, 
e.g., the proof rule of \cite{mciver2017new} can be adapted to prove
$\AST$ of PTRSs via polynomial and matrix interpretations \cite{kassinggiesl2023iAST} 
and it is also used within the probabilistic DP framework \cite{kassinggiesl2023iAST,FLOPS2024,JPK60}.
Further proof rules regarding $\PAST$ and $\SAST$ have been adapted
in \cite{avanzini2020probabilistic} and again integrated in the corresponding DP framework \cite{kassing2025DependencyPairsExpected}. 

In particular, there also exist several
\emph{tools} to analyze $\AST$, $\PAST$,
and expected costs for imperative probabilistic programs,
e.g.,  \tool{Amber} \cite{moosbruggerAutomatedTerminationAnalysis2021a},
\tool{KoAT}
\cite{meyerInferringExpectedRuntimes2021a,lommen2024ControlFlowRefinementComplexity},
 \tool{Eco-Imp}
\cite{avanziniModularCostAnalysis2020}, \tool{Absynth}
\cite{ngoBoundedExpectationsResource2018}, and \tool{Pastry}
\cite{novozhilov2025AlmostSureTerminationProbabilistic}. Moreover,
higher moments for a loop's variables are analyzed automatically with the tool 
\tool{Polar} \cite{DBLP:journals/pacmpl/MoosbruggerSBK22}.
These tools mainly consider imperative programs with an innermost evaluation strategy.
Hence, our
results on the 
relation between the different rewrite strategies cannot be directly used for
these tools, while our results on modularity may in principle be of interest.
However, our results concern the functional recursive nature of term rewriting,
where one does not have a fixed control flow as in an imperative program.
Our comparison between $\PAST$ and $\SAST$ in \Cref{thm:PAST-vs-SAST-1} should
also hold for
imperative programs which allow for multiple executions in parallel that are considered simultaneously.

Compared to these tools, our implementation in \tool{AProVE} currently can prove $\AST$,
$\PAST$,
$\SAST$, and analyze the expected complexity of PTRSs.
For algorithms whose termination behavior relies only on numbers,
techniques for imperative programs with built-in support for arithmetic
are usually more powerful than
approaches based on term rewriting. The reason is that for
term rewriting, 
numbers have to be represented
via terms, e.g., $2$ can be represented by the term $\ts(\ts(\tz))$.
On the other hand,
term rewriting can handle
programs with non-trivial recursive structure and 
arbitrary user-defined data structures, as these structures
can easily be represented as terms. 
For example, the list $[2,1]$ can be represented by the term
$\tcons(\ts(\ts(\tz)), \tcons(\ts(\tz), \tnil)))$. Thus, tools based on term rewriting
are particularly suitable when 
analyzing programs whose
termination depends on, e.g., lists, trees, or graphs.
In the non-probabilistic setting, there are also techniques and tools for term rewriting with
integrated built-in numbers, e.g., for
\emph{integer term rewrite systems} \cite{fuhs2009ProvingTerminationInteger} or
\emph{logically constrained term rewrite systems} \cite{DBLP:conf/frocos/KopN13}.
Lifting these approaches to the probabilistic setting is an interesting direction for future work.

In addition to the related work on probabilistic loop programs, there are also
several approaches for probabilistic recursive programs,
e.g., to analyze the probabilistic lambda calculus
or other higher-order functional languages
based on types or martingales
\cite{avanzini2019TypebasedComplexityAnalysis,dallago2019ProbabilisticTerminationMonadic,beutner2021probabilistic,
  DBLP:conf/lics/Kenyon-RobertsO21,dallago2021IntersectionTypesPositive,rajani2024ModalTypeTheory, dallago2024AlmostSureTerminationBinary}.
There has also been work on functional languages where $\AST$ or $\PAST$ is (partly) decidable, 
e.g., probabilistic higher-order rewrite schemes
\cite{kobayashi2020TerminationProblemProbabilistic},
probabilistic pushdown automata
\cite{brazdil2013AnalyzingProbabilisticPushdowna, brazdil2015RuntimeAnalysisProbabilistic,DBLP:conf/lics/WinklerK23}, and
restricted probabilistic tree-stack automata \cite{li2022ProbabilisticVerificationContextFreeness}.
Many of the results on recursive languages fix a leftmost-innermost rewrite strategy to avoid non-determinism.
Here, our results may be helpful to extend these techniques to different rewrite strategies,
and also our modularity results may be of interest for the different recursive probabilistic languages.

Finally, probabilistic programs that allow for \emph{data structures} are analyzed 
in \cite{wang2020RaisingExpectationsAutomating, leutgebAutomatedExpectedAmortised2022, batzCalculusAmortizedExpected2023}.
While \cite{batzCalculusAmortizedExpected2023} uses pointers to represent data structures like tables and lists, \cite{ wang2020RaisingExpectationsAutomating, leutgebAutomatedExpectedAmortised2022} consider a 
probabilistic programming language with matching similar to term rewriting
and develop an automatic amortized resource analysis via fixed template potential functions.
However, these latter works are mostly targeted towards specific data structures, 
and we consider general term rewrite systems that can model arbitrary data structures.

\section{Conclusion}\label{Conclusion}

In this paper, we
presented the first results on the relationship between
$\ASTs$ of a PTRS $\PP$
for different rewrite strategies $s \in \IS$, including several criteria such that
$\ASTi$ implies $\ASTf$. 
Our results also hold for $\PASTs$, $\SASTs$, and expected complexity, 
and all of our criteria are suitable for automation (for spareness,
there exist sufficient conditions that can be checked automatically).
We implemented our criteria for the equivalence of $\ASTi$ and
$\ASTf$ in our termination prover \aprove, and demonstrated their
practical usefulness in an experimental evaluation. Moreover, we developed the first
modularity results for termination of PTRSs under unions and signature extensions.

In the paper,
we already mentioned several topics for future work, e.g.:
\begin{itemize}
\item Improving our current results (in
  \Cref{properties-eq-Der-iDer-1,properties-eq-iDer-liDer-complex})
  on the relationship between
  non-probabilistic innermost and full derivational complexity.
  \item Extending our modularity results to hierarchical unions or finding
    more specific classes where $\SASTi$ 
  is modular for shared constructor unions, or even classes where $\PASTi$ becomes modular.
\item Tackling the (hard) problem of adapting Newman's Lemma \cite{Newman42}
  to the probabilistic setting.
\end{itemize}

\medskip

\noindent
\textbf{Acknowledgements.}
We thank  Florian Frohn for his help on the earlier conference
paper~\cite{FoSSaCS2024} and 
Stefan Dollase for pointing us to~\cite{fuhs2019transformingdctorc}.

\bibliographystyle{alphaurl}
\bibliography{biblio}
\clearpage
\appendix
\section{Missing Proofs}\label{appendix}

In this appendix, we present all missing proofs for
our new contributions and observations.
Most of our proofs use $\to$-RSTs instead of $\liftto$-rewrite sequences.
Therefore, in \Cref{appendix:characterization} we start with the formal definitions
for all required notions via RSTs (where some of them were already mentioned in the main
part 
of the paper). 
Then, in \Cref{appendix:relating} we give the missing proofs for
the theorems and lemmas from 
\Cref{Relating AST and its Restricted Forms} and \Cref{Improving Applicability},
concerning the relation between different rewrite strategies.
In \Cref{appendix:modularity}, 
we prove the results regarding modularity and signature extensions from
\Cref{Modularity}.

\subsection{Characterization via RSTs}\label{appendix:characterization}

We start with the formal definition of RSTs.

\begin{defi}[Rewrite Sequence Tree (RST)] \label{def:rewrite-sequence-tree}
    Let $\to \; \subseteq \TT \times \FDist(\TT)$.
    A $\to$\emph{-rewrite sequence tree ($\to$-RST)} $\F{T}$ is a labeled tree $\F{T}\!=\!(N,E,L)$ such that

    \begin{enumerate}
    \item[(1)] $N \neq \emptyset$ is a possibly infinite set of nodes and $E \subseteq N \times N$ is a set of directed edges, such that $(N, E)$ is a (possibly infinite) directed tree where $vE = \{ w \mid (v,w) \in E \}$ is finite for every $v \in N$.
    \item[(2)]  $L : N \rightarrow (0,1] \times \TT$ labels every node $v$ by a probability $p_v$ and a term $t_v$.
    For the root $v \in N$ of the tree, we have $p_v = 1$.
    \item[(3)] For all $v \in N$: If $vE = \{w_1, \ldots, w_k\}$, then $t_v \to \{\tfrac{p_{w_1}}{p_v}:t_{w_1}, \ldots, \tfrac{p_{w_k}}{p_v}:t_{w_k}\}$.
    \end{enumerate}
    $\ctleaf$ denotes the set of leaves of the 
    RST and for a node $x\in N$, $d(x)$ 
    denotes the depth of node $x$ in $\F{T}$.
    Here, the root has depth $0$.
    We say that $\F{T}$ is \emph{fully evaluated} if for every $v \in \ctleaf$
    the corresponding term $t_v$ is a normal form w.r.t.\ $\to$, i.e., $t_v \in \NF_{\to}$.
\end{defi}

Whenever the referenced RST is ambiguous, we will explicitly indicate the corresponding tree.
For instance, for the probability $p_v$ of the node $v \in N$ of some RST $\F{T} =
(N,E,L)$, we may also write $p_v^{\F{T}}$, and $N^{\F{T}} = N$ is the set of nodes of the tree $\F{T}$.

\begin{defi}[$|\F{T}|$, Convergence Probability] \label{def:rewrite-sequence-tree-convergence-notation}
    Let $\to \; \subseteq \TT \times \FDist(\TT)$.
    For any $\to$-RST $\F{T}$ we define $|\F{T}| = \sum_{v \in \ctleaf} p_v$
    and say that the RST $\F{T}$ \emph{converges with probability} $|\F{T}|$.
\end{defi}

It is now easy to observe that we have $\AST[\to]$
(i.e., for all $\liftto$-rewrite
sequences $(\mu_n)_{n \in \IN}$ we have $\lim_{n \to \infty} |\mu_n|_{\to} = 1$) iff for all
$\to$-RSTs $\F{T}$ we have $|\F{T}| = 1$. 
To see this, note that every infinite $\liftto$-rewrite sequence $(\mu_n)_{n \in \IN}$ that begins
with a single start term (i.e., $\mu_0 = \{1:t\}$) can be represented by an
$\to$-RST $\F{T}$ that is fully evaluated such that $\lim_{n \to \infty} |\mu_n|_{\to} =
|\F{T}|$ and vice versa. So $\AST[\to]$ holds iff all fully evaluated 
$\to$-RSTs converge with probability 1.
Furthermore, note that for every $\to$-RST $\F{T}$, 
there exists a fully evaluated $\to$-RST $\F{T}'$ such that $|\F{T}| \geq |\F{T}'|$.
To get from $\F{T}$ to $\F{T}'$ we can simply perform arbitrary (possibly infinitely many)
rewrite steps at the leaves that are not in normal form to fully evaluate the tree.

It remains to prove that it suffices to only regard $\liftto$-rewrite sequences that start
with a single start term.

\begin{lem}[Single Start Terms Suffice for ${\AST[\to]}$]\label{lemma:PTRS-AST-single-start-term}
    Let $\to \; \subseteq \TT \times \FDist(\TT)$.
    If $\AST[\to]$ does not hold, 
    then there exists an infinite $\liftto$-rewrite sequence 
    $(\mu_n')_{n \in \mathbb{N}}$ with a single start term, 
    i.e., $\mu_0' = \{1:t\}$, that converges with probability~$<1$.
\end{lem}

\begin{proof}
    We prove the converse.
    Assume that all infinite $\liftto$-rewrite sequences 
    $(\mu'_n)_{n \in \mathbb{N}}$ with a single start term converge with probability $1$.
    We prove that then every infinite $\liftto$-rewrite sequence converges with probability $1$,
    hence we have $\AST[\to]$.

    Let $(\mu_n)_{n \in \mathbb{N}}$ be an infinite $\liftto$-rewrite sequence.
    Suppose that we have $\mu_0 = \{p_1:t_1, \ldots, p_k:t_k\}$.
    Let $(\mu^{(j)}_n)_{n \in \mathbb{N}}$ with $\mu^{(j)}_0 =
    \{1 : t_j\}$ denote the infinite $\liftto$-rewrite sequence that uses
    the same rules
    as $(\mu_n)_{n \in \mathbb{N}}$ does on the term $t_j$ for every $1 \leq j \leq k$.
    We obtain
    \begin{alignat*}{3} 
        &\textstyle\mathrel{\hphantom{=}}\lim_{n \to \infty}|\mu_{n}|_{\to} &&\textstyle= \lim_{n \to \infty} \sum_{j = 1}^{k} p_j \cdot |\mu^{(j)}_n|_{\to} &&\textstyle= \sum_{j = 1}^{k} \lim_{n \to \infty} p_j \cdot |\mu^{(j)}_n|_{\to}\\
        &\textstyle= \sum_{j = 1}^{k} p_j \cdot \lim_{n \to \infty} |\mu^{(j)}_n|_{\to} &&\textstyle= \sum_{j =
          1}^{k} p_j \cdot 1 &&\textstyle= \sum_{j = 1}^{k} p_j = 1 \tag*{\qedhere}
    \end{alignat*}
\end{proof}

We now obtain the following corollary.

\begin{cor}[Characterizing ${\AST[\to]}$ with RSTs]\label{cor:Characterizing AST with RSTs}
    Let $\to \; \subseteq \TT \times \FDist(\TT)$.
    Then $\AST[\to]$ holds iff for all $\to$-RSTs
    $\F{T}$ we have $|\F{T}| = 1$.
    This is equivalent to the requirement that for all fully evaluated 
    $\to$-RSTs
    $\F{T}$ we have $|\F{T}| = 1$.
    Moreover, $\wAST[\to]$ holds iff for every term $t \in \TT$ there exists a fully
    evaluated $\to$-RST
    $\F{T}$ whose root is labeled with $(1:t)$ such that $|\F{T}| = 1$.
\end{cor}

Next, we recapitulate how $\PAST[\to]$ and $\SAST[\to]$ can be formulated in terms of RSTs.

\begin{cor}[Characterizing ${\PAST[\to]}$ with RSTs]\label{thm:PAST-via-RSTs}
	Let $\to \; \subseteq \TT \times \FDist(\TT)$, and $\F{T}$ be an $\to$-RST.  
    By
    \[
        \edl(\F{T}) = \sum_{x \in N \setminus \ctleaf} p_x = \lim_{n \to \infty}
        \;\; \sum_{\substack{x \in N \setminus \ctleaf \\ d(x) \leq n}} p_x
    \]
    we define the \emph{expected derivation length} of $\F{T}$.
    We have $\PAST[\to]$ iff $\edl(\F{T})$ is finite
    for every $\to$-RST $\F{T}$.
    Similarly, $\wPAST[\to]$ holds iff for every term $t$ there exists a fully evaluated
    $\to$-RST $\F{T}$
    whose root is labeled with $(1:t)$ such that $\edl(\F{T})$ is finite.
\end{cor}

\begin{cor}[Characterizing ${\SAST[\to]}$ with RSTs]\label{thm:SAST-via-RSTs}
    Let $\to \; \subseteq \TT \times \FDist(\TT)$.
    We have $\SAST[\to]$ iff 
    $\sup\{\edl(\F{T}) \mid \F{T}$ is an $\to$-RST whose root is labeled with $(1:t)\}$
    is finite for all $t \in \TT$.
\end{cor}

Both of these corollaries are again easy to observe, similar to the characterization of $\AST[\to]$
with RSTs in \Cref{cor:Characterizing AST with RSTs}. 
First note that every infinite $\liftto$-rewrite sequence $\vec{\mu} = (\mu_n)_{n \in \IN}$ that begins
with a single start term can be represented by an infinite
$\to$-RST $\F{T}$ that is fully evaluated such that 
$\edl(\vec{\mu}) = \sum_{n = 0}^{\infty} (1 - |\mu_n|_{\to}) =
\sum_{x \in N \setminus \ctleaf} p_x^{\F{T}} = \edl(\F{T})$.
The reason is that 
whenever we reach a normal form after $n$ steps, this is both a normal form in
$\mu_n$ and a leaf at depth $n$ of the RST with the same probability.
Otherwise,
if we have a term $t$ in $\Supp(\mu_n)$ that is not in normal form, 
then there exists a (unique) inner node in $\F{T}$ at depth $n$
with the same probability.
Note that we do not need a version of \Cref{lemma:PTRS-AST-single-start-term} for $\PAST[\to]$ and $\SAST[\to]$,
since both of them already consider single start terms.

\subsection{Proofs for \Cref{Relating AST and its Restricted Forms} and \Cref{Improving Applicability}}\label{appendix:relating}

In this subsection, we present all missing proofs on the relation 
between different restricted forms of probabilistic termination.
To this end, we first recapitulate the notion of a \emph{rewrite sequence subtree} from \cite{reportkg2023iAST}.

\begin{defi}[Rewrite Sequence Subtree] \label{def:chain-tree-induced-sub}
    Let $\to \; \subseteq \TT \times \FDist(\TT)$, and let $\F{T} = (N,E,L)$ be an $\to$-RST.
	Let $W \subseteq N$ be non-empty, weakly connected, 
    and for all $x \in W$ we have $xE \cap W = \emptyset$ or $xE \cap W = xE$.
	Then, we define the $\to$-\emph{rewrite sequence subtree} (or simply \emph{subtree}) $\F{T}[W]$ 
    by $\F{T}[W] = (W,E \cap (W \times W),L^W)$.
	Let $w \in W$ be the root of $\F{T}[W]$.
	To ensure that the root of our subtree has the probability $1$ again,
	we use the labeling $L^W(x) = (\frac{p_{x}^{\F{T}}}{p_w^{\F{T}}}: t_{x}^{\F{T}})$ for all nodes $x \in W$.
\end{defi}

The property of being non-empty and weakly connected ensures that the resulting 
graph $(W,E \cap (W \times W))$ is a tree again.
The property that we either have $xE \cap W = \emptyset$ or $xE \cap W = xE$ ensures that the sum of 
probabilities for the successors of a node $x$ is equal to the probability for the node $x$ itself.

To prove the theorems in \Cref{Relating AST and its Restricted Forms} and \Cref{Improving Applicability}
we only have to prove the corresponding lemmas. Then 
the theorems are a direct consequence, similar to the proof of \Cref{properties-eq-AST-iAST-1} given in 
\Cref{Relating AST and its Restricted Forms}.
As mentioned earlier, we will always prove the lemmas using rewrite sequence trees.

\ASTAndIASTLemmaOne*

\begin{proof}
    Let $\PP$ be a PTRS that is non-overlapping and linear.
    Furthermore, let $\F{T}$ be an $\fto$-RST.
    We create a new $\ito$-RST $\F{T}^{(\infty)}$ such that $|\F{T}^{(\infty)}| \leq |\F{T}|$ and $\edl(\F{T}^{(\infty)}) \geq \edl(\F{T})$.
    W.l.o.g., at least one rewrite step in $\F{T}$ is performed at some node $x$ with a
    redex that is not an innermost redex (otherwise we can use $\F{T}^{(\infty)}
    = \F{T}$). 
    The core steps of the proof are the following:
    \begin{enumerate}[label=\arabic{enumi}.]
        \item We iteratively move innermost rewrite steps to a higher position in the tree using a construction $\Phi(\_)$. 
        The limit of this iteration, namely $\F{T}^{(\infty)}$, is an \emph{innermost}
        $\ito_{\PP}$-RST with $|\F{T}^{(\infty)}| \leq |\F{T}|$. For each
         step of the iteration:
        \begin{enumerate}[label=\arabic{enumi}.\arabic{enumii}]
            \item We formally define the construction $\Phi(\_)$ that replaces a certain subtree $\F{T}_x$ by a new subtree $\Phi(\F{T}_x)$, by moving an innermost rewrite step to the root of $\Phi(\F{T}_x)$.
            \item We show $|\Phi(\F{T}_x)| = |\F{T}_x|$.
            \item We show that $\Phi(\F{T}_x)$ is indeed a valid RST\@.
        \end{enumerate}
        \item We show that the same construction also guarantees $\edl(\F{T}^{(\infty)}) \geq \edl(\F{T})$. For this:
        \begin{enumerate}[label=\arabic{enumi}.\arabic{enumii}]
            \item We prove $\edl(\F{T}) \leq \edl(\F{T}^{(1)})  \leq \edl(\F{T}^{(2)}) \leq \ldots$
            \item We prove $\edl(\F{T}) \leq \edl(\F{T}^{(\infty)})$.
        \end{enumerate}
    \end{enumerate}

    \smallskip
          
    \noindent 
    \textbf{\underline{1. We iteratively move innermost rewrite steps to a higher position.}}

    \noindent 
    In $\F{T}$ there exists at least one rewrite step performed at some node $x$, which is not an innermost rewrite step.
    Furthermore, we can assume that this is the first such rewrite step in the path from the root to the node $x$ 
    and that $x$ is a node of minimum depth\footnote{If we allowed rules with infinite
    support, then there could be infinitely many nodes at minimum depth. Hence, then one
    would have to use a more elaborate enumeration of all nodes where a non-innermost step is performed.} with this property.
    Let $\F{T}_x$ be the subtree that starts at node $x$, i.e., $\F{T}_x = \F{T}[xE^*]$, 
    where $xE^*$ is the set of all reachable nodes (via the edge relation) from $x$.
    We then construct a new tree $\Phi(\F{T}_x)$ such that $|\Phi(\F{T}_x)| = |\F{T}_x|$,
    where we use an innermost rewrite step at the root
    node $x$ instead of the old one, 
    i.e., we pushed the first non-innermost rewrite step deeper into the tree.
    This construction only works because $\PP$ is non-overlapping and
    linear.\footnote{For the construction,
    non-overlappingness is essential (while
    a related construction could also be defined without linearity).
    However, linearity is needed to ensure that the probability of termination in the new tree
    is not larger than in the original one.}
    Then, by replacing the subtree $\F{T}_x$ with the new tree $\Phi(\F{T}_x)$ in $\F{T}$,
    we obtain an
 $\ito_{\PP}$-RST $\F{T}^{(1)}$ with $|\F{T}^{(1)}| = |\F{T}|$,
    where we use an innermost rewrite step at node $x$ instead of the old rewrite step, as desired.
    We can then do such a replacement iteratively for every use of a non-innermost rewrite step, i.e., we again replace the first non-innermost rewrite step in $\F{T}^{(1)}$ to obtain $\F{T}^{(2)}$ with $|\F{T}^{(2)}| = |\F{T}^{(1)}|$, and so on.
    In the end, the limit of all these RSTs $\lim_{i \to \infty} \F{T}^{(i)}$ is an $\ito_{\PP}$-RST, 
    that we denote by $\F{T}^{(\infty)}$ 
    such that $|\F{T}^{(\infty)}| \leq |\F{T}|$. 
    So while the termination probability remains the same in each step, it can
    decrease in the limit.\footnote{As an example, consider a tree $\F{T}$ which is just
    a finite path and its path length increases in each iteration by one. Then the limit $\F{T}^{(\infty)}$ is an infinite path and converges with probability $0$, while $\F{T}, \F{T}^{(1)}, \ldots$ all converge with probability $1$.}

    To see that $\F{T}^{(\infty)}$ is indeed a valid $\ito_{\PP}$-RST, note that in
    every iteration of the construction we turn a non-innermost rewrite step at minimum depth
    into an innermost one.
    Hence, for every depth $H$ of the tree, we eventually turned every non-innermost rewrite step up to
    depth $H$ into an innermost one. So the construction will not change the tree above depth $H$ anymore,\footnote{Again, for rules with infinite support, this would not hold anymore. Nevertheless, with a more elaborate enumeration, one should still be able to obtain a valid $\ito_{\PP}$-RST in the limit.} i.e.,
    there exists an $m_H$ such that $\F{T}^{(\infty)}$ and $\F{T}^{(i)}$ are the same trees up to depth $H$ for all $i \geq m_H$.
    This means that the sequence $\lim_{i \to \infty} \F{T}^{(i)}$ really converges into an $\ito_{\PP}$-RST.

    Next, we want to prove that we have $|\F{T}^{(\infty)}| \leq |\F{T}|$.
    By induction on $n$ one can prove that $|\F{T}^{(i)}| = |\F{T}|$ for all $1 \leq i \leq n$, 
    since we have $|\F{T}^{(i)}| = |\F{T}^{(i-1)}|$ for all $i \geq 2$ and $|\F{T}^{(1)}| = |\F{T}|$.
    Assume for a contradiction that $|\F{T}^{(\infty)}| > |\F{T}|$.
    Then there exists a depth $H \in \IN$ such that
    $\sum_{x \in \ctleaf^{\F{T}^{(\infty)}}\!, \, d^{\F{T}^{(\infty)}}\!(x) \leq H} p_x > |\F{T}|$.
    Again, let $m_H \in \IN$ such that $\F{T}^{(\infty)}$ and $\F{T}^{(m_H)}$ are the same trees up to depth $H$.
    But this would mean that
    $|\F{T}^{(m_H)}| \geq \sum_{x \in \ctleaf^{\F{T}^{(m_H)}}\!, \,
    d^{\F{T}^{(m_H)}}\!(x) \leq H} p_x^{\F{T}^{(m_H)}} = \sum_{x \in \ctleaf^{\F{T}^{(\infty)}}\!,\,
    d^{\F{T}^{(\infty)}}\!(x) \leq H} p_x^{\F{T}^{(\infty)}} > |\F{T}|$, which is a contradiction to
    $|\F{T}^{(m_H)}| = |\F{T}|$.
  
    \smallskip
          
    \noindent 
    \textbf{\underline{1.1 Construction of $\Phi(\_)$}}

    \noindent 
    It remains to define
    the construction $\Phi(\_)$ mentioned above.
    Let $\F{T}_x$ be an $\fto_{\PP}$-RST that performs a non-innermost rewrite step at the root node
    $x$. This step has the form $t_x^{\F{T}_x} \fto_{\PP} \{p_{y_1}^{\F{T}_x}:t_{y_1}^{\F{T}_x}, \ldots,
    p_{y_k}^{\F{T}_x}:t_{y_k}^{\F{T}_x}\}$ using the rule
    $\bar{\ell} \to \{ \bar{p}_1:\bar{r}_1, \ldots, \bar{p}_k:\bar{r}_k\}$, the
    substitution $\bar{\sigma}$, and the position $\bar{\pi}$ such that $t_x^{\F{T}_x}|_{\bar{\pi}} = \bar{\ell} \bar{\sigma}$.
    Then we have $t_{y_j}^{\F{T}_x} = t_x^{\F{T}_x}[\bar{r}_j \bar{\sigma}]_{\bar{\pi}}$ for all $1 \leq j \leq k$.
 
    \begin{figure}
      \centering
            \begin{tikzpicture}[scale=0.5]
                \begin{pgfonlayer}{nodelayer}
                    \node [style=target,inner sep=0.5pt,pin={[pin distance=0.05cm, pin edge={,-}] 140:\tiny \textcolor{blue}{$x$}}] (3) at (0, 3) {\tiny$\mathbf{f}$};
                    \node [style=none] (6) at (1.5, 0) {};
                    \node [style=none] (7) at (-1.5, 0) {};
                    \node [style=none] (9) at (-2, -1) {};
                    \node [style=none] (10) at (2, -1) {};
                    \node [style=moveBlock,inner sep=1pt] (13) at (0, 0) {\tiny$\mathbf{i}$};
                    \node [style=moveBlock,inner sep=1pt] (14) at (0.75, 0.25) {\tiny$\mathbf{i}$};
                    \node [style=moveBlock,inner sep=1pt] (15) at (-0.75, 0.75) {\tiny$\mathbf{i}$};
                    \node [style=none] (16) at (-0.5, 1.5) {};
                    \node [style=none] (17) at (0, 1.25) {};
                    \node [style=none] (18) at (0.5, 1.5) {};
                    \node [style=none] (19) at (-0.75, 0) {};
                    \node [style=none] (20) at (0, -0.75) {};
                    \node [style=none] (21) at (0.75, -0.5) {};
                \end{pgfonlayer}
                \begin{pgfonlayer}{edgelayer}
                    \draw (3) to (6.center);
                    \draw (3) to (7.center);
                    \draw [style=dotWithoutHead] (7.center) to (9.center);
                    \draw [style=dotWithoutHead] (6.center) to (10.center);
                    \draw [style=dotWithoutHead, in=15, out=-105, looseness=0.50] (3) to (16.center);
                    \draw [style=dotWithoutHead, in=120, out=-90, looseness=0.75] (3) to (17.center);
                    \draw [style=dotWithoutHead, in=135, out=-75] (3) to (18.center);
                    \draw [style=dotHead, in=90, out=-30, looseness=0.75] (18.center) to (14);
                    \draw [style=dotHead, in=90, out=-150, looseness=0.75] (16.center) to (15);
                    \draw [style=dotHead, in=90, out=-45] (17.center) to (13);
                    \draw [style=dashHead, bend right=75, looseness=2.00] (14) to (3);
                    \draw [style=dashHead, bend left=75, looseness=1.75] (15) to (3);
                    \draw [style=dashHead, bend right=105, looseness=2.75] (13) to (3);
                    \draw [style=dotWithoutHead] (15) to (19.center);
                    \draw [style=dotWithoutHead] (13) to (20.center);
                    \draw [style=dotWithoutHead] (14) to (21.center);
                \end{pgfonlayer}
            \end{tikzpicture}
    \caption{Tree $\F{T}_x$ in the Construction of $\Phi(\_)$}\label{fig:Tx}
                    \end{figure}  

   Instead of applying a non-innermost rewrite step at the root $x$
    we want to directly apply an innermost rewrite step.
    Let $\tau$ be the position of some innermost redex in $t_x^{\F{T}_x}$ below $\bar{\pi}$. The 
            construction creates a new $\fto_{\PP}$-RST $\Phi(\F{T}_x) = (N',E',L')$ whose root is
    labeled with $(1:t_x^{\F{T}_x})$ such that $|\Phi(\F{T}_x)| =
    |\F{T}_x|$, and that directly performs the first rewrite step at position
    $\tau$ in the original tree $\F{T}_x$
    (which is an innermost rewrite step) at the root of the tree, by pushing it
    from the original nodes in the tree $\F{T}_x$ to the root of the new tree  $\Phi(\F{T}_x)$. 
    (It could also be that this innermost redex was never reduced in $\F{T}_x$.)
    This can be seen in the diagram in \Cref{fig:Tx}, which depicts the tree $\F{T}_x$.
    The boxes represent \textbf{i}nnermost rewrite steps at position $\tau$ and the dashed lines indicate that we
    push this rewrite step to the root.
    This push only results in the same convergence
    probability due to our
    restriction that $\PP$ is linear.
    However, since we are allowed to rewrite above $\tau$ in the original tree $\F{T}_x$, the actual position of the innermost redex that was originally at position $\tau$ might change during the application of a rewrite step.
    Hence, we recursively define the position $\varphi_{\tau}(v)$ that contains precisely this redex
    for each node $v$ in $\F{T}_x$ until we rewrite at this position.
    Initially, we have $\varphi_{\tau}(x) = \tau$.
    Whenever we have defined $\varphi_{\tau}(v)$ for some node $v$, and we have
    $t_v^{\F{T}_x} \fto_{\PP} \{p_{w_1}^{\F{T}_x}:t_{w_1}^{\F{T}_x}, \ldots,
    p_{w_m}^{\F{T}_x}:t_{w_m}^{\F{T}_x}\}$ for the direct successors $vE
    = \{w_1, \ldots, w_m\}$, using the rule $\ell \to \{
    p_1:r_1, \ldots, p_m:r_m\}$, the substitution $\sigma$, and the position $\pi$, we do the following:
    If $\varphi_{\tau}(v) = \pi$, meaning that we rewrite this innermost redex, then we set $\varphi_\tau(w_j) = \bot$ for all $1 \leq j \leq m$ to indicate that we have rewritten the innermost redex.
    If we have $\varphi_{\tau}(v) \bot \pi$, meaning that the rewrite step takes place on a
    position that is parallel to $\varphi_{\tau}(v)$, then we set
    $\varphi_\tau(w_j) = \varphi_{\tau}(v)$ for all $1 \leq j \leq
    m$, as the position of the innermost redex did not change.
    Otherwise, we have $\pi < \varphi_{\tau}(v)$ (since we cannot rewrite below $\varphi_{\tau}(v)$ as it is an innermost redex), and thus there exists a $\chi \in \IN^+$ such that $\pi.\chi = \varphi_{\tau}(v)$.
    Since the rules of $\PP$ are non-overlapping, the redex must be completely ``inside''
    the used substitution $\sigma$, and we can find a position $\alpha_q$ of a variable $q$ in $\ell$ and another position $\beta$ such that $\chi = \alpha_q.\beta$.
    Furthermore, since the rule is linear, $q$ only occurs once in $\ell$ and at most once in $r_j$ for all $1 \leq j \leq m$.
    If $q$ occurs in $r_j$ at a position $\rho_q^j$, then we set $\varphi_\tau(w_j) = \rho_q^j.\beta$.
    Otherwise, we set $\varphi_\tau(w_j) = \top$ to indicate that the innermost redex was erased during the computation.
    Finally, if $\varphi_{\tau}(v) \in \{\bot, \top\}$, then we set $\varphi_\tau(w_j)
    = \varphi_{\tau}(v)$ for all $1 \leq j \leq m$ as well.
    So to summarize, $\varphi_{\tau}(v)$ is
    now either the position of the innermost redex in $t_{v}$, $\top$ to indicate that the redex was erased, or $\bot$ to indicate that we have rewritten the redex.
     
    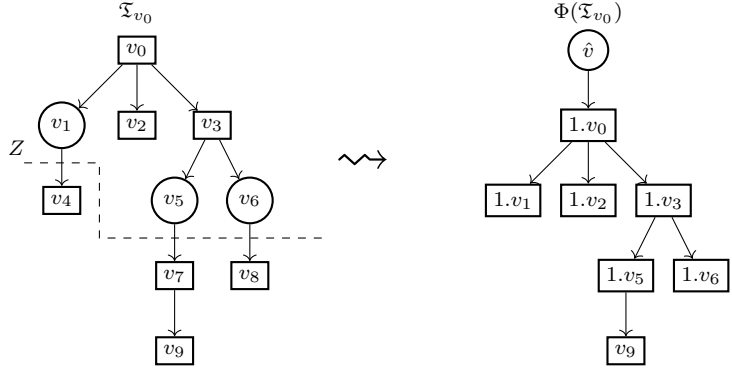
\begin{figure}
        \centering
        \scriptsize
		\begin{tikzpicture}
			\tikzstyle{adam}=[rectangle,thick,draw=black!100,fill=white!100,minimum size=3mm]
			\tikzstyle{empty}=[shape=circle,thick,minimum size=8mm]
			\tikzstyle{circle}=[shape=circle,draw=black!100,fill=white!100,thick,minimum size=3mm]
			
			\node[empty] at (-3.5, 0.5)  (name) {$\F{T}_{v_0}$};
			\node[adam] at (-3.5, 0)  (la) {$v_0$};

			\node[circle] at (-4.5, -1)  (lb1) {$v_1$};
			\node[adam] at (-3.5, -1)  (lb2) {$v_2$};
			\node[adam] at (-2.5, -1)  (lb3) {$v_3$};

			\node[adam] at (-4.5, -2)  (lc1) {$v_4$};

			\node[circle] at (-3, -2)  (ld1) {$v_5$};
			\node[circle] at (-2, -2)  (ld3) {$v_6$};
		
			\node[adam] at (-3, -3)  (lf1) {$v_7$};
			\node[adam] at (-2, -3)  (lf3) {$v_8$};

			\node[adam] at (-3, -4)  (lg1) {$v_9$};

			\draw (la) edge[->] (lb1);
			\draw (la) edge[->] (lb2);
			\draw (la) edge[->] (lb3);
			\draw (lb1) edge[->] (lc1);
			\draw (lb3) edge[->] (ld1);
			\draw (lb3) edge[->] (ld3);
			\draw (ld1) edge[->] (lf1);
			\draw (ld3) edge[->] (lf3);
			\draw (lf1) edge[->] (lg1);

			\node[empty] at (-5.1,-1.3)  (Z) {$Z$};

            \draw [dashed] (-5,-1.5) -- (-4,-1.5) -- (-4,-2.5) -- (-1,-2.5);

			\node[empty] at (-0.5, -1.5)  (lead) {\huge $\leadsto$};
			
			\node[empty] at (2.5, 0.5)  (name2) {$\Phi(\F{T}_{v_0})$};
			\node[circle] at (2.5, 0)  (a) {$\hat{v}$};

			\node[adam] at (2.5, -1)  (b1) {$1.v_0$};

			\node[adam] at (1.5, -2)  (c1) {$1.v_1$};
			\node[adam] at (2.5, -2)  (c2) {$1.v_2$};
			\node[adam] at (3.5, -2)  (c3) {$1.v_3$};

			\node[adam] at (3, -3)  (d1) {$1.v_5$};
			\node[adam] at (4, -3)  (d2) {$1.v_6$};

			\node[adam] at (3, -4)  (f1) {$v_9$};

			\draw (a) edge[->] (b1);
			\draw (b1) edge[->] (c1);
			\draw (b1) edge[->] (c2);
			\draw (b1) edge[->] (c3);
			\draw (c3) edge[->] (d1);
			\draw (c3) edge[->] (d2);
			\draw (d1) edge[->] (f1);
		\end{tikzpicture}
                \caption{Illustration of $\Phi(\F{T})$}\label{fig:illustration-phi}
                    \end{figure}

    In \Cref{fig:illustration-phi}, we illustrate the effect of
    $\Phi$, where the original tree $\F{T}_x$ with $x = v_0$
    is on the left and $\Phi(\F{T}_x)$ is on the right.
    The circled nodes represent the nodes where we perform a rewrite
    step at position $\varphi_{\tau}(v)$.
    We now define the $\fto_{\PP}$-RST $\Phi(\F{T}_x)$
    whose root $\hat{v}$ is labeled with $(1:t_x^{\F{T}_x})$ and 
    that directly performs the rewrite step $t_x^{\F{T}_x} = t_{\hat{v}}^{\Phi(\F{T}_x)} \ito_{\PP, \tau} \{\hat{p}_{1}:t_{1.x}^{\Phi(\F{T}_x)}, \ldots, \hat{p}_{h}:t_{h.x}^{\Phi(\F{T}_x)}\}$, 
    with the rule $\hat{\ell} \to \{ \hat{p}_1:\hat{r}_1, \ldots, \hat{p}_h:\hat{r}_h\} \in \PP$, a substitution $\hat{\sigma}$, 
    and the position $\tau$, at the new root $\hat{v}$. Here, we wrote ``$\ito_{\PP, \tau}$'' to
    make the position of the used redex explicit.
    We have $t_{\hat{v}}^{\Phi(\F{T}_x)}|_{\tau} = \hat{\ell} \hat{\sigma}$.
    Let $Z$ be the set of all nodes $z$ such that $\varphi_{\tau}(z) \neq \bot$, 
    i.e., at node $z$ we did not yet perform the rewrite step with the innermost redex.
    In the example we have $Z = \{v_0, \ldots, v_6\} \setminus \{v_4\}$.
    For each of these nodes $z \in Z$ and each $1 \leq e \leq h$,
    we create a new node $e.z \in N'$ with edges as in $\F{T}_x$ for the nodes in $Z$, e.g., for the node $1.v_3$ we create edges to $1.v_5$ and $1.v_6$.
    Furthermore, we add the edges from the new root $\hat{v}$ to the nodes $e.{x}$ for all $1 \leq e \leq h$.
    Note that $x$ was the root in the tree $\F{T}_x$ and has to be contained in $Z$.
    For example, for the node $\hat{v}$ we create an edge to $1.v_0$.
    We define the labeling of the nodes in $\Phi(\F{T}_x)$ as follows
    for all nodes $z$ in $Z$:        
    \begin{itemize}
        \item[] (T-1) $t_{e.z}^{\Phi(\F{T}_x)} = t_{z}^{\F{T}_x}[\hat{r}_e \hat{\sigma}]_{\varphi_{\tau}(z)}$ if $\varphi_{\tau}(z) \in \IN^*$ and $t_{e.z}^{\Phi(\F{T}_x)} = t_{z}^{\F{T}_x}$ if $\varphi_{\tau}(z) = \top$
        \item[] (T-2) $p_{e.z}^{\Phi(\F{T}_x)} = p_z^{\F{T}_x} \cdot \hat{p}_e$ 
    \end{itemize}
    Now, for a leaf $e.z' \in N'$ either $z' \in N$ is also a leaf (e.g., node $v_2$ in our example) 
    or we rewrite the innermost redex at
    position $\varphi_{\tau}(z')$ at node $z'$ in $\F{T}_x$ (e.g., node $v_1$ in our example).
    In the latter case, due to non-overlappingness, the same rule
    $\hat{\ell} \to \{ \hat{p}_1:\hat{r}_1, \ldots, \hat{p}_h:\hat{r}_h\}$ and
    the same substitution $\hat{\sigma}$ were used.
    If we rewrite $t_{z'}^{\F{T}_x} \ito_{\PP, \varphi_{\tau}(z')} \{\hat{p}_{1}:t_{w'_1}^{\F{T}_x}, \ldots, \hat{p}_{h}:t_{w'_h}^{\F{T}_x}\}$, 
    then we have $t_{w'_e}^{\F{T}_x} =
    t_{z'}^{\F{T}_x}[\hat{r}_e \hat{\sigma}]_{\varphi_{\tau}(z')} \stackrel{\mbox{\scriptsize (T-1)}}{=} t_{e.z'}^{\Phi(\F{T}_x)}$ 
    and $p_{w'_e}^{\F{T}_x} \stackrel{\mbox{\scriptsize (T-2)}}{=} p_{z'}^{\F{T}_x} \cdot \hat{p}_e =
    p_{e.z'}^{\Phi(\F{T}_x)}$.
    Thus, we can copy the rest of this subtree of $\F{T}_x$ to
    our newly generated tree $\Phi(\F{T}_x)$.
    In our example, $v_1$ has the only successor $v_4$, hence we can copy the subtree starting at node $v_4$, which is only the node itself, to the node $1.v_1$ in $\Phi(\F{T}_x)$.
    For $v_5$, we have the only successor $v_7$, hence we can copy the subtree starting at node $v_7$, which is the node itself together with its successor $v_9$, to the node $1.v_5$ in $\Phi(\F{T}_x)$.
    So essentially, we just had to define how to construct
    $\Phi(\F{T}_x)$ for
    the part of the tree before we reach nodes $v$ with $\varphi_{\tau}(v) = \bot$ in $\F{T}_x$.
    Now we have to show that $|\Phi(\F{T}_x)| = |\F{T}_x|$
    and that $\Phi(\F{T}_x)$ is indeed a valid $\fto_{\PP}$-RST 
    (i.e., that the edges between nodes $e.z$ with $z \in Z$ and its successors correspond to rewrite steps with $\PP$).
    \smallskip

    \noindent
    \underline{\textbf{1.2 We show $|\Phi(\F{T}_x)| = |\F{T}_x|$.}}

    \noindent
    Let $v$ be a leaf in $\Phi(\F{T}_x)$.
    If $v = e'.z$ for some node $z \in Z$ that is a leaf in $\F{T}_x$ (e.g., node $1.v_2$), then also $e.z$ must be a leaf in $\Phi(\F{T}_x)$ for every $1 \leq e \leq h$.
    Here, we get $\sum_{1 \leq e \leq h} p_{e.z}^{\Phi(\F{T}_x)} \stackrel{\text{(T-2)}}{=} \sum_{1 \leq e \leq h} p_z^{\F{T}_x} \cdot \hat{p}_e = p_z^{\F{T}_x} \cdot \sum_{1 \leq e \leq h} \hat{p}_e = p_z^{\F{T}_x} \cdot 1 = p_z^{\F{T}_x}$, and thus 
    
    \[
        \sum_{\substack{e.z \in \ctleaf^{\Phi(\F{T}_x)}\\ z \in Z,  z \in \ctleaf^{\F{T}_x}}}
        p_{e.z}^{\Phi(\F{T}_x)} = \sum_{\substack{z \in \ctleaf^{\F{T}_x}\\ z \in Z}} \left( \sum_{1 \leq e \leq h}
        p_{e.z}^{\Phi(\F{T}_x)} \right) = \sum_{\substack{z \in \ctleaf^{\F{T}_x}\\ z \in Z}} \!\! p_z^{\F{T}_x}
    \]

    If $v = e.z$ for some node $z \in Z$ that is not a leaf in $\F{T}_x$ (e.g., node $1.v_1$), then we know by construction 
    that the $e$-th successor $w_e$ of $z$ in $\F{T}_x$ is not contained in $Z$ and is a leaf of $\F{T}_x$.
    Let $(zE)_e$ denote the $e$-th successor of $z$.
    Here, we get $p_{e.z}^{\Phi(\F{T}_x)} \stackrel{\text{(T-2)}}{=} p_z^{\F{T}_x} \cdot \hat{p}_e = p_{w_e}^{\F{T}_x}$, and thus 
    
    \[
        \sum_{\substack{e.z \in \ctleaf^{\Phi(\F{T}_x)}\\z \in Z,  z \notin \ctleaf^{\F{T}_x}}} p_{e.z}^{\Phi(\F{T}_x)} 
        = \sum_{\substack{w_e \in \ctleaf^{\F{T}_x}\\ w_e = (zE)_e, w_e \not\in Z, z \in Z}} p_{e.z}^{\Phi(\F{T}_x)} 
        = \sum_{\substack{w_e \in \ctleaf^{\F{T}_x}\\ w_e = (zE)_e, w_e \not\in Z, z \in Z}} p_{w_e}^{\F{T}_x}
    \]

    Finally, if $v$ does not have the form $v = e.z$, then $v$ is also a leaf in $\F{T}_x$
    with $p_{v}^{\Phi(\F{T}_x)} = p_{v}^{\F{T}_x}$ and for both $v$ and its predecessor $u$ we have $v,u \not\in Z$, and thus 
    
    \[
        \sum_{\substack{v \in \ctleaf^{\Phi(\F{T}_x)}\\v \in \ctleaf^{\F{T}_x}}} p_v^{\Phi(\F{T}_x)} 
        = \sum_{\substack{v \in \ctleaf^{\F{T}_x}\\ v \in uE, u \not\in Z}} p_v^{\Phi(\F{T}_x)} 
        = \sum_{\substack{v \in \ctleaf^{\F{T}_x}\\ v \in uE, u \not\in Z}} p_v^{\F{T}_x}
    \]
    Note that these cases cover each leaf of $\F{T}_x$ exactly once.
    These three equations imply:
    {\small
    \allowdisplaybreaks
    \begin{align*}
        |\Phi(\F{T}_x)| &= \sum_{z \in \ctleaf^{\Phi(\F{T}_x)}} p_{z}^{\Phi(\F{T}_x)} = \sum_{\substack{e.z \in \ctleaf^{\Phi(\F{T}_x)}\\ z \in Z,  z \in \ctleaf^{\F{T}_x}}}
        p_{e.z}^{\Phi(\F{T}_x)} + \sum_{\substack{e.z \in \ctleaf^{\Phi(\F{T}_x)}\\
        z \in Z,  z \notin \ctleaf^{\F{T}_x}}} p_{e.z}^{\Phi(\F{T}_x)}
        + \sum_{\substack{v \in \ctleaf^{\Phi(\F{T}_x)}\\v \in \ctleaf^{\F{T}_x}}} p_v^{\Phi(\F{T}_x)} \\ 
        &= \sum_{\substack{z \in \ctleaf^{\F{T}_x}\\ z \in Z}} \!\! p_z^{\F{T}_x}
        + \sum_{\substack{w_e \in \ctleaf^{\F{T}_x}\\ w_e = (zE)_e, w_e \not\in Z, z \in Z}} p_{w_e}^{\F{T}_x}
        + \sum_{\substack{v \in \ctleaf^{\F{T}_x}\\ v \in uE, u \not\in Z}} p_v^{\F{T}_x} = \sum_{z \in \ctleaf^{\F{T}_x}} p_z^{\F{T}_x} = |\F{T}_x| \! 
    \end{align*}
    }
    \smallskip

    \noindent
    \underline{\textbf{1.3 We show that $\Phi(\F{T}_x)$ is indeed a valid RST\@.}}

    \noindent
    Finally, we prove that $\Phi(\F{T}_x)$ is a valid $\fto_{\PP}$-RST.
    Here, we only need to show that $t_{e.z}^{\Phi(\F{T}_x)} \fto_{\PP} \{p_{e.{w_1}}^{\Phi(\F{T}_x)}:t_{e.{w_1}}^{\Phi(\F{T}_x)}, \ldots, p_{e.{w_m}}^{\Phi(\F{T}_x)}:t_{e.{w_m}}^{\Phi(\F{T}_x)}\}$ for all $z \in Z$ as all the other edges and labelings were already present in $\F{T}_x$, which is a valid $\fto_{\PP}$-RST, and we have already seen that we have a valid innermost rewrite step at the new root $\hat{v}$.

    Let $z \in Z$.
    In the following, we distinguish between two different cases for a rewrite step at a
    node $z$ of $\F{T}_x$:
    \begin{enumerate}
        \item[(A)] We rewrite at a position
        parallel to $\varphi_{\tau}(z)$ or we have $\varphi_{\tau}(z) = \top$.
        \item[(B)] We rewrite at a position above $\varphi_{\tau}(z)$.
        Here, we need that $\PP$ is NO and linear.
    \end{enumerate}
    Let $t_z^{\F{T}_x} \fto_{\PP} \{p_z^{\F{T}_x} \cdot p_1:t_{w_1}^{\F{T}_x}, \ldots, p_z^{\F{T}_x} \cdot p_m:t_{w_m}^{\F{T}_x}\}$, 
    with a rule $\ell \to \{ p_1:r_1, \ldots, p_{m}:r_{m}\} \in \PP$, 
    a substitution $\sigma$, and a position $\pi$ such that $t_z^{\F{T}_x}|_{\pi} = \ell \sigma$. 
    We have $t_{w_j}^{\F{T}_x} = t_z^{\F{T}_x}[r_j \sigma]_{\pi}$ for all $1 \leq j \leq m$.
    \smallskip

    \noindent
    \textbf{(A)} We start with the case where
    we have $\pi \bot \varphi_{\tau}(z)$ or $\varphi_{\tau}(z) = \top$.
    By (T-1), we get $t_{e.z}^{\Phi(\F{T}_x)} = t_{z}^{\F{T}_x}[\hat{r}_e \hat{\sigma}]_{\varphi_{\tau}(z)}$ if $\varphi_{\tau}(z) \in \IN^*$ 
    and $t_{e.z}^{\Phi(\F{T}_x)} = t_{z}^{\F{T}_x}$ if $\varphi_{\tau}(z) = \top$.
    In both cases, we can rewrite $t_{e.z}^{\Phi(\F{T}_x)}$ using the same rule, 
    the same substitution, and the same position, 
    as we have $t_{e.z}^{\Phi(\F{T}_x)}|_{\pi} = t_{z}^{\F{T}_x}[\hat{r}_e \hat{\sigma}]_{\varphi_{\tau}(z)}|_{\pi} = t_z^{\F{T}_x}|_{\pi} = \ell \sigma$ 
    or directly $t_{e.z}^{\Phi(\F{T}_x)}|_{\pi} = t_z^{\F{T}_x}|_{\pi} = \ell \sigma$.

    It remains to show that $t_{e.{w_j}}^{\Phi(\F{T}_x)} = t_{e.z}^{\Phi(\F{T}_x)}[r_j \sigma]_{\pi}$ for all $1 \leq j \leq m$, i.e., that the labeling we defined for $\Phi(\F{T}_x)$ corresponds to this rewrite step.
    Let $1 \leq j \leq m$.
    If $\varphi_{\tau}(z) \in \IN^*$, then we have 
    $t_{e.z}^{\Phi(\F{T}_x)}[r_j \sigma]_{\pi} = t_{z}^{\F{T}_x}[\hat{r}_e \hat{\sigma}]_{\varphi_{\tau}(z)}[r_j \sigma]_{\pi} \stackrel{\varphi_{\tau}(z) \bot \pi}{=} t_{z}^{\F{T}_x}[r_j \sigma]_{\pi}[\hat{r}_e \hat{\sigma}]_{\varphi_{\tau}(z)} = t_{w_j}^{\F{T}_x}[\hat{r}_e \hat{\sigma}]_{\varphi_{\tau}(z)} = t_{e.{w_j}}^{\Phi(\F{T}_x)}$.
    If $\varphi_{\tau}(z) = \top$, then
    $t_{e.z}^{\Phi(\F{T}_x)}[r_j \sigma]_{\pi} = t_{z}^{\F{T}_x}[r_j \sigma]_{\pi} = t_{w_j}^{\F{T}_x} = t_{e.{w_j}}^{\Phi(\F{T}_x)}$.
    Finally, note that the probabilities of the labeling are correct, as we are using the same rule with the same probabilities.
    \smallskip

    \noindent
    \textbf{(B)} If we have $\pi < \varphi_{\tau}(z)$, then there exists a $\chi \in \IN^+$ such that $\pi.\chi = \varphi_{\tau}(z)$.
    Since the rules of $\PP$ are non-overlapping, the redex must be completely ``inside''
    the used substitution $\sigma$, and we can find a position $\alpha_q$ of a variable $q$ in $\ell$ and another position $\beta$ such that $\chi = \alpha_q.\beta$.
    Furthermore, since the rule is linear,
    $q$ only occurs once in $\ell$ and at most once in $r_j$ for all $1 \leq j \leq m$.
    Let $\rho_q^j$ be the position of $q$ in $r_j$ if it exists.
    By (T-1), we get $t_{z}^{\F{T}_x}[\hat{r}_e \hat{\sigma}]_{\varphi_{\tau}(z)} = t_{e.z}^{\Phi(\F{T}_x)}$.
    We can rewrite $t_{e.z}^{\Phi(\F{T}_x)}$ using the same rule, 
    the substitution $\sigma'$ with $\sigma'(q) = \sigma(q)[\hat{r}_e \hat{\sigma}]_{\beta}$ and $\sigma'(q') = \sigma(q')$ for all other variables $q' \neq q$, and the same position, 
    as we have $t_{e.z}^{\Phi(\F{T}_x)}|_{\pi} = t_{z}^{\F{T}_x}[\hat{r}_e \hat{\sigma}]_{\varphi_{\tau}(z)}|_{\pi} = t_{z}^{\F{T}_x}|_{\pi}[\hat{r}_e \hat{\sigma}]_{\chi} = \ell \sigma'$.

    It remains to show that $t_{e.{w_j}}^{\Phi(\F{T}_x)} = t_{e.z}^{\Phi(\F{T}_x)}[r_j \sigma']_{\pi}$ for all $1 \leq j \leq m$, i.e., that the labeling we defined for $\Phi(\F{T}_x)$ corresponds to this rewrite step.
    Let $1 \leq j \leq m$.
    If $\rho_q^j$ exists, then we have
    $t_{e.z}^{\Phi(\F{T}_x)}[r_j \sigma']_{\pi} = 
    t_{z}^{\F{T}_x}[\hat{r}_e \hat{\sigma}]_{\varphi_{\tau}(z)}[r_j \sigma']_{\pi} = 
    t_{z}^{\F{T}_x}[\hat{r}_e \hat{\sigma}]_{\pi.\alpha_q.\beta}[r_j \sigma']_{\pi} = 
    t_{z}^{\F{T}_x}[r_j \sigma]_{\pi}[\hat{r}_e \hat{\sigma}]_{\rho_q^j.\beta} = 
    t_{w_j}^{\F{T}_x}[\hat{r}_e \hat{\sigma}]_{\varphi_{\tau}(z)} = t_{e.{w_j}}^{\Phi(\F{T}_x)}$.
    Otherwise, we erase the precise redex and obtain
    $t_{e.z}^{\Phi(\F{T}_x)}[r_j \sigma]_{\pi} =
    t_{z}^{\F{T}_x}[\hat{r}_e \hat{\sigma}]_{\varphi_{\tau}(z)}[r_j \sigma']_{\pi} = 
    t_{z}^{\F{T}_x}[\hat{r}_e \hat{\sigma}]_{\pi.\alpha_q.\beta}[r_j \sigma']_{\pi} =
    t_{z}^{\F{T}_x}[r_j \sigma]_{\pi} = t_{w_j}^{\F{T}_x} = t_{e.{w_j}}^{\Phi(\F{T}_x)}$.
    Again, the probabilities of the labeling are correct, as we are using the same rule with the same probabilities.
    \smallskip

    \noindent
    \underline{\textbf{2. Analyzing the expected derivation length of $\F{T}^{(\infty)}$}}

    \noindent
    Our goal is to show that in the construction of the trees $\F{T}^{(1)}, \F{T}^{(2)},
    \ldots$, every leaf $v$ of $\F{T}$ is turned into leaves of 
    $\F{T}^{(1)}, \F{T}^{(2)}, \ldots$ whose probabilities sum up to $p^{\F{T}}_v$, 
    and whose depths are greater or equal
    than the original depth $d^{\F{T}}(v)$ of $v$ in $\F{T}$. 
    This implies $\edl(\F{T}) \leq \edl(\F{T}^{(1)}) \leq \edl(\F{T}^{(2)}) \leq \ldots$\
    From this observation, we then conclude 
    $\edl(\F{T}) \leq \edl(\F{T}^{(\infty)})$.

    \smallskip
          
    \noindent 
    \textbf{\underline{2.1 We prove $\edl(\F{T}) \leq \edl(\F{T}^{(1)})  \leq \edl(\F{T}^{(2)}) \leq \ldots$}}

    \noindent
    We start by considering how the leaves of $\F{T}$ correspond to
    the leaves of 
    $\F{T}^{(1)}$.
    Let $v$ be a leaf in $\F{T}$.
    If $v \not\in xE^*$, then $v$ is also a leaf in $\F{T}^{(1)}$, labeled with the same probability, and at the same depth.
    Otherwise, $v \in xE^*$, which means that either $v \in Z$, or $v \not\in Z$ but its
    predecessor is in $Z$, or neither $v$ nor its predecessor are in $Z$.

    If $v \in Z$ (like the node $v_2$ in \Cref{fig:illustration-phi}), then also $e.v$ must be a leaf in $\Phi(\F{T}_x)$ for every $1 \leq e \leq h$.
    Here, we again have $\sum_{1 \leq e \leq h} p_{e.v}^{\Phi(\F{T}_x)} = p_v^{\F{T}_x}$ as before.
    In addition, we also know that if $e.v$ is at depth $m$ in $\Phi(\F{T}_x)$, then $v$ is at depth $m-1$ in $\F{T}_x$.

    If $v \not\in Z$ but $v$ is the $e$-th successor of a node $z \in Z$
    (like the node $v_8$ in \Cref{fig:illustration-phi}), then $e.z$ is a leaf in $\Phi(\F{T}_x)$.
    Here, we again have $p_{e.z}^{\Phi(\F{T}_x)} = p_{v}^{\F{T}_x}$ as before.
    In addition, we also know that if $e.z$ is at depth $m$ in $\Phi(\F{T}_x)$, then $v$ is at depth $m$ in $\F{T}_x$.

    Finally, if neither $v$ nor its predecessor are in $Z$
    (like the node $v_9$ in \Cref{fig:illustration-phi}), 
    then $v$ is also a leaf in $\Phi(\F{T}_x)$ with $p_{v}^{\Phi(\F{T}_x)} = p_{v}^{\F{T}_x}$ and at the same depth.

    These cases cover each leaf of $\F{T}^{(1)}$ exactly once and all leaves in
    $\Phi(\F{T}_x)$ are at a depth greater or equal than the corresponding leaves in $\F{T}_x$, 
    implying that for each leaf $u$ in $\F{T}$ we can find a set of leaves $\Omega_u^{(1)}$ in $\F{T}^{(1)}$ 
    such that $d^{\F{T}}(u) \leq d^{\F{T}^{(1)}}(w)$ for each $w \in \Omega_u^{(1)}$,
    $\sum_{w \in \Omega_u^{(1)}} p^{\F{T}^{(1)}}_w = p^{\F{T}}_u$, 
    and $\ctleaf^{\F{T}^{(1)}} = \biguplus_{u \in \ctleaf^{\F{T}}} \Omega_u^{(1)}$.

    We can now use the same observation for each leaf $w$ in the tree $\F{T}^{(1)}$ in
    order to obtain a new set $\Xi_w^{(2)}$ of leaves in $\F{T}^{(2)}$ and define $\Omega_u^{(2)} = \bigcup_{w \in \Omega_u^{(1)}} \Xi_w^{(2)}$.
    Again, we get $d^{\F{T}}(u) \leq d^{\F{T}^{(2)}}(w)$ for each $w \in \Omega_u^{(2)}$,
    $\sum_{w \in \Omega_u^{(2)}} p^{\F{T}^{(2)}}_w = p^{\F{T}}_u$, 
    and $\ctleaf^{\F{T}^{(2)}} = \biguplus_{u \in \ctleaf^{\F{T}}} \Omega_u^{(2)}$.
    We now do this for each $i \in \IN$ to define the set $\Omega_u^{(i)}$ for each leaf $u$ in the original tree $\F{T}$.
    Overall, this implies  $\edl(\F{T}) \leq \edl(\F{T}^{(1)})  \leq \edl(\F{T}^{(2)}) \leq \ldots$.

    \smallskip

    \noindent
    \underline{\textbf{2.2 We prove $\edl(\F{T}^{(\infty)}) \geq \edl(\F{T})$.}}

    \noindent
    From the construction of the $\Omega_u^{(i)}$ above, in the end, we obtain
    \[\ctleaf^{\F{T}^{(\infty)}} = \biguplus_{u \in \ctleaf^{\F{T}}} \limsup_{i \to \infty} \Omega_u^{(i)},\] 
    where $\limsup_{i \to \infty} \Omega_u^{(i)} = \bigcap_{i' \in \IN} \bigcup_{i > i'} \Omega_u^{(i)} = \{w \mid w$ 
    is contained in infinitely many $\Omega_u^{(i)}\}$.
    To see this, let $v \in \ctleaf^{\F{T}^{(\infty)}}$.
    Remember that for every depth $H$ of the tree, 
    there exists an $m_H$ such that $\F{T}^{(\infty)}$ and $\F{T}^{(i)}$ are the same trees up to depth $H$ for all $i \geq m_H$.
    This means that the node $v$ must be contained in all trees $\F{T}^{(m)}$ with $m \geq m_{d^{\F{T}^{(\infty)}}(v)}$, 
    i.e., it is contained in $\biguplus_{u \in \ctleaf^{\F{T}}} \limsup_{i \to \infty} \Omega_u^{(i)}$.
    The other direction of the equality follows in the same manner.

    Since the probabilities of the leaves in $\F{T}^{(i)}$ always add up to the probability of
    the corresponding leaf in $\F{T}$, we have $\sum_{v \in \limsup_{i \to \infty}
    \Omega_u^{(i)}} p_v^{\F{T}^{(\infty)}} \leq p_u^{\F{T}}$ for all $u \in \ctleaf^{\F{T}}$.
    We now show $\edl(\F{T}) \leq \edl(\F{T}^{(\infty)})$ by considering the two cases where 
    $\sum_{v \in \limsup_{i \to \infty}
    \Omega_u^{(i)}} p_v^{\F{T}^{(\infty)}} < p_u^{\F{T}}$ for some $u \in \ctleaf^{\F{T}}$
    and where $\sum_{v \in \limsup_{i \to \infty}
    \Omega_u^{(i)}} p_v^{\F{T}^{(\infty)}} = p_u^{\F{T}}$ for all $u \in
    \ctleaf^{\F{T}}$.

    If we have $\sum_{v \in \limsup_{i \to \infty} \Omega_u^{(i)}} p_v^{\F{T}^{(\infty)}} < p_u^{\F{T}}$ for some $u \in \ctleaf^{\F{T}}$, then we obtain
    \[ 
        \begin{array}{rclcl}
            1 &\geq& |\F{T}|
            &=&
            \sum_{u \in \ctleaf^{\F{T}}} p_u^{\F{T}} \\
            &>&
            \sum_{u \in \ctleaf^{\F{T}}} \sum_{v \in \limsup_{i \to \infty} \Omega_u^{(i)}} p_v^{\F{T}^{(\infty)}}
            &=&
            \sum_{v \in \biguplus_{u \in \ctleaf^{\F{T}}} \limsup_{i \to \infty} \Omega_u^{(i)}} p_v^{\F{T}^{(\infty)}}\\
            &=&
            \sum_{v \in \ctleaf^{\F{T}^{(\infty)}}} p_v^{\F{T}^{(\infty)}}
            &=& 
            |\F{T}^{(\infty)}|
        \end{array}
    \]
    and from $|\F{T}^{(\infty)}| < 1$ we directly get $\edl(\F{T}^{(\infty)}) =
    \infty$.

    Otherwise, we have $\sum_{v \in \limsup_{i \to \infty} \Omega_u^{(i)}}
    p_v^{\F{T}^{(\infty)}} = p_u^{\F{T}}$ for all $u \in \ctleaf^{\F{T}}$, and thus we obtain
    
    {\small
    \begin{longtable}{C@{\;}C@{\;}L}
        &&\edl(\F{T}) = \sum_{n = 0}^{\infty} \sum_{\substack{u \in N^{\F{T}} \setminus \ctleaf^{\F{T}} \\ d^{\F{T}}(u) = n}} p_u^{\F{T}}\\
        (\textcolor{blue}{\sum_{\substack{u \in N \setminus \ctleaf \\ d^{\F{T}}(u) = n}} p_u^{\F{T}} = 1 - \sum_{\substack{u \in \ctleaf \\ d^{\F{T}}(u) \leq n}} p_u^{\F{T}}})&=&\sum_{n = 0}^{\infty} (1 - \sum_{\substack{u \in \ctleaf^{\F{T}} \\ d^{\F{T}}(u) \leq n}} p_u^{\F{T}})\\
        (\textcolor{blue}{p_u^{\F{T}} = \sum_{v \in \limsup_{i \to \infty} \Omega_u^{(i)}} p_v^{\F{T}^{(\infty)}}})&=&\sum_{n = 0}^{\infty} (1 - \sum_{\substack{u \in \ctleaf^{\F{T}} \\ d^{\F{T}}(u) \leq n}} \sum_{v \in \limsup_{i \to \infty} \Omega_u^{(i)}} p_v^{\F{T}^{(\infty)}})\\
        (\textcolor{blue}{\forall v \in \limsup_{i \to \infty} \Omega_u^{(i)}: d^{\F{T}}(u) \leq d^{\F{T}^{(\infty)}}(v)})&\leq&\sum_{n = 0}^{\infty} (1 - \sum_{u \in \ctleaf^{\F{T}} } \sum_{\substack{v \in \limsup_{i \to \infty} \Omega_u^{(i)}\\ d^{\F{T}^{(\infty)}}(v) \leq n}} p_v^{\F{T}^{(\infty)}})\\
        (\textcolor{blue}{\Omega_u^{(i)} \text{ all disjoint}})&=&\sum_{n = 0}^{\infty} (1 - \sum_{\substack{v \in \biguplus_{u \in \ctleaf^{\F{T}}} \limsup_{i \to \infty} \Omega_u^{(i)}\\ d^{\F{T}^{(\infty)}}(v) \leq n}} p_v^{\F{T}^{(\infty)}})\\
        (\textcolor{blue}{\ctleaf^{\F{T}^{(\infty)}} = \biguplus_{u \in \ctleaf^{\F{T}}} \limsup_{i \to \infty} \Omega_u^{(i)}})&=&\sum_{n = 0}^{\infty} (1 - \sum_{\substack{v \in \ctleaf^{\F{T}^{(\infty)}}\\ 
        d^{\F{T}^{(\infty)}}(v) \leq n}} p_v^{\F{T}^{(\infty)}})\\
        (\textcolor{blue}{\sum_{\substack{u \in N \setminus \ctleaf^{\F{T}^{(\infty)}} \\ 
        d^{\F{T}^{(\infty)}}(u) = n}} p_u^{\F{T}^{(\infty)}} = 1 - \sum_{\substack{v \in \ctleaf^{\F{T}^{(\infty)}} \\ 
        d^{\F{T}^{(\infty)}}(v) \leq n}} p_v^{\F{T}^{(\infty)}}})&=&\sum_{n = 0}^{\infty} \sum_{\substack{u \in N \setminus \ctleaf^{\F{T}^{(\infty)}} \\ d^{\F{T}^{(\infty)}}(u) = n}} p_u^{\F{T}^{(\infty)}}\\
        &=& \edl(\F{T}^{(\infty)})
    \end{longtable}}
    
    \noindent
    and therefore, $\edl(\F{T}^{(\infty)}) \geq \edl(\F{T})$.
\end{proof}

\ASTvswAST*

\begin{proof}
    We only have to prove the non-trivial direction ``$\Longleftarrow$''.
    Let $\PP$ be a PTRS that is non-overlapping, left-linear, and non-erasing.
    \smallskip
          
    \noindent 
    \textbf{\underline{$\ASTf \Longleftarrow \wASTf$:}}

    \noindent
    Assume for a contradiction that we have $\wASTf$ but not $\ASTf$.
    This means that there exists an
    $\fto_{\PP}$-RST $\F{T}$ such that $|\F{T}| = c$ for some $0 \leq c < 1$.
    Let $t \in \TT$ such that the root of $\F{T}$ is labeled with $(1:t)$.
    Since we have $\wASTf$, there exists another $\fto_{\PP}$-RST $\tilde{\F{T}} = (\tilde{N}, \tilde{E},
    \tilde{L})$ such that $|\tilde{\F{T}}| = 1$ and the root of $\tilde{\F{T}}$
    is also labeled with $(1:t)$.
    Hence, in $\F{T}$ at least one rewrite step is performed 
    at some node $x$ that is different to the rewrite step performed in $\tilde{\F{T}}$.
    The core steps of the proof are the same as for the proof of \Cref{lemma-eq-AST-iAST-1}.
    We iteratively push the rewrite steps that would be performed in $\tilde{\F{T}}$ at node $x$ to this node in $\F{T}$.
    Then, the limit of this construction is exactly $\tilde{\F{T}}$, which would mean that
    $|\tilde{\F{T}}| \leq c < 1$, which is the desired contradiction.
    For this, we have to adjust the construction $\Phi(\_)$.
    The rest of the proof remains completely the \pagebreak same.
    \smallskip
          
    \noindent 
    \textbf{\underline{1.1 Construction of $\Phi(\_)$}}
    \smallskip

    \noindent
    Let $\F{T}_x = \F{T}[xE^*]$ be an $\fto_{\PP}$-RST that performs a rewrite step at position $\zeta$ at the root node $x$, i.e., $t_x^{\F{T}_x} \fto_{\PP, \zeta} \{p_{y_1}^{\F{T}_x}:t_{y_1}^{\F{T}_x}, \ldots, p_{y_k}^{\F{T}_x}:t_{y_k}^{\F{T}_x}\}$ using the rule $\bar{\ell} \to \{ \bar{p}_1:\bar{r}_1, \ldots, \bar{p}_k:\bar{r}_k\}$, and the substitution $\bar{\sigma}$ such that $t_x^{\F{T}_x}|_{\zeta} = \bar{\ell} \bar{\sigma}$.
    Then $t_{y_j}^{\F{T}_x} = t_x^{\F{T}_x}[\bar{r}_j \bar{\sigma}]_{\zeta}$ for all $1 \leq j \leq k$.
    Furthermore, assume that
    $\tilde{T}_x = \tilde{\F{T}}[x\tilde{E}^*]$ rewrites at position $\tau$ (using a rule
    $\hat{\ell} \to \{ \hat{p}_1:\hat{r}_1, \ldots, \hat{p}_h:\hat{r}_h\} \in \PP$ and  
    the substitution $\hat{\sigma}$)
    with $\tau \neq \zeta$. (Note that
    if $\tau = \zeta$, then by non-overlappingness the rewrite step would be the same.)
    Instead of applying the rewrite step at position $\zeta$ at the root $x$
    we want to directly apply the rewrite step at position $\tau$.

    The construction creates a new RST $\Phi(\F{T}_x) = (N',E',L')$ such that
    $|\Phi(\F{T}_x)| = |\F{T}_x|$,
    and that directly performs a rewrite step at position $\tau$ at the root of the tree, 
    by pushing it from the original nodes in the tree $\F{T}_x$ to the root.
    This push only results in the same convergence probability due to our
    restriction that $\PP$ is linear and non-erasing.
    We need the non-erasing property now, because $\tau$ may be above $\zeta$, 
    which was not possible in the proof of \Cref{lemma-eq-AST-iAST-1}.

    Again, since we are allowed to rewrite above $\tau$ in the original tree $\F{T}_x$, the
    actual position of the redex that was originally at position $\tau$ might change
    during the application of a rewrite step.
    Hence, we recursively define the position $\varphi_{\tau}(v)$
    that contains precisely this redex for each node $v$ in $\F{T}_x$ until we rewrite at this position.
    Compared to the proof of \Cref{lemma-eq-AST-iAST-1}, 
    since the rules in $\PP$ are non-erasing, 
    we only have $\varphi_{\tau}(v) \in \IN^*$ or $\varphi_{\tau}(v) = \bot$. 
    The option $\varphi_{\tau}(v) = \top$ is not possible anymore.
    Initially, we have $\varphi_{\tau}(x) = \tau$.
    Whenever we have defined $\varphi_{\tau}(v)$ for some node $v$, and we have
    $t_v^{\F{T}_x} \fto_{\PP} \{p_{w_1}^{\F{T}_x}:t_{w_1}^{\F{T}_x}, \ldots, p_{w_m}^{\F{T}_x}:t_{w_m}^{\F{T}_x}\}$ 
    for the direct successors $vE = \{w_1, \ldots, w_m\}$, using the rule $\ell \to \{ p_1:r_1, \ldots, p_m:r_m\}$, 
    the substitution $\sigma$, and position $\pi$, we do the following:
    If $\varphi_{\tau}(v) = \pi$, then we set $\varphi_\tau(w_j) = \bot$ for all $1 \leq j \leq m$ to indicate that we have rewritten the redex.
    If we have $\varphi_{\tau}(v) \bot \pi$, meaning that the rewrite step takes place parallel to $\varphi_{\tau}(v)$, then we set $\varphi_\tau(w_j) = \varphi_{\tau}(v)$ for all $1 \leq j \leq m$, as the position did not change.
    If we have $\varphi_{\tau}(v) < \pi$, then we set $\varphi_\tau(w_j)
    = \varphi_{\tau}(v)$ for all $1 \leq j \leq m$ as well, as the position did not change
    either.
    If we have $\pi < \varphi_{\tau}(v)$, then there exists a $\chi \in \IN^+$ such that $\pi.\chi = \varphi_{\tau}(v)$.
    Since the rules of $\PP$ are non-overlapping, the redex must be completely ``inside'' the
    used substitution $\sigma$, and we can find a position $\alpha_q$ of a variable $q$ in $\ell$ and another position $\beta$ such that $\chi = \alpha_q.\beta$.
    Furthermore, since the rule is linear and non-erasing,
    $q$ only occurs once in $\ell$ and once in $r_j$ for all $1 \leq j \leq m$.
    Let $\rho_q^j$ be the position of $q$ in $r_j$.
    Here, we set $\varphi_\tau(w_j) = \rho_q^j.\beta$.
    Finally, if $\varphi_{\tau}(v) = \bot$, then we set $\varphi_\tau(w_j)
    = \varphi_{\tau}(v) = \bot$ for all $1 \leq j \leq m$ as well.

    Again, we now define the $\fto_{\PP}$-RST $\Phi(\F{T}_x)$ whose root is labeled with $(1:t_x^{\F{T}_x})$ such that $|\Phi(\F{T}_x)| = |\F{T}_x|$, 
    and that directly performs the rewrite step $t_{x}^{\F{T}_x} \fto_{\PP, \tau} \{\hat{p}_{1}:t_{1.x}^{\Phi(\F{T}_x)}, \ldots, \hat{p}_{h}:t_{h.x}^{\Phi(\F{T}_x)}\}$, 
    with the rule $\hat{\ell} \to \{ \hat{p}_1:\hat{r}_1, \ldots, \hat{p}_h:\hat{r}_h\} \in \PP$, 
    the substitution $\hat{\sigma}$, 
    and the position $\tau$, at the new root $\hat{v}$.
    Here, we have $t_x^{\F{T}_x}|_{\tau} = \hat{\ell} \hat{\sigma}$.
    Let $Z$ be the set of all nodes $v$ such that $\varphi_{\tau}(v) \neq \bot$.
    For each of these nodes $z \in Z$ and each $1 \leq e \leq h$,
    we create a new node $e.z \in N'$ with edges as in $\F{T}_x$ for the nodes in $Z$.
    Furthermore, we add the edges from the new root $\hat{v}$ to the nodes $e.{x}$ for all $1 \leq e \leq h$.
    Remember that $x$ was the root in the tree $\F{T}_x$ and has to be contained in $Z$.
    We define the labeling of the nodes in  $\Phi(\F{T}_x)$ as follows
    for all nodes $z$ in
    $Z$:
    \begin{itemize}
        \item[] (T-1) $t_{e.z}^{\Phi(\F{T}_x)} = t_{z}^{\F{T}_x}[\hat{r}_e \delta]_{\varphi_{\tau}(z)}$ for the substitution $\delta$ such that $t_{z}^{\F{T}_x}|_{\varphi_{\tau}(z)} = \hat{\ell} \delta$
        \item[] (T-2) $p_{e.z}^{\Phi(\F{T}_x)} = p_z^{\F{T}_x} \cdot \hat{p}_e$ 
    \end{itemize}
    Now, for a leaf $e.z' \in N'$ either $z' \in N$ is also a leaf
    or we rewrite at the position $\varphi_{\tau}(z')$ in node $z'$ in $\F{T}_x$.
    If we rewrite
    $t_{z'}^{\F{T}_x} \ito_{\PP, \varphi_{\tau}(z')} \{p_{w'_1}^{\F{T}_x}:t_{w'_1}^{\F{T}_x}, \ldots, p_{w'_h}^{\F{T}_x}:t_{w'_h}^{\F{T}_x}\}$,
    then we have $t_{w'_e}^{\F{T}_x} = t_{z'}^{\F{T}_x}[\hat{r}_e \delta]_{\varphi_{\tau}(z')} \stackrel{\mbox{\scriptsize (T-1)}}{=} t_{e.z'}^{\Phi(\F{T}_x)}$ for the substitution $\delta$ such that $t_{z'}^{\F{T}_x}|_{\varphi_{\tau}(z')} = \hat{\ell} \delta$
    and $p_{w'_e}^{\F{T}_x} \stackrel{\mbox{\scriptsize (T-2)}}{=} p_{z'}^{\F{T}_x} \cdot \hat{p}_e = p_{e.z'}^{\Phi(\F{T}_x)}$.
    Thus, we can copy the rest of this subtree of $\F{T}_x$ in
    our newly generated tree $\Phi(\F{T}_x)$.
    Again, we just had to define how to construct
    $\Phi(\F{T}_x)$ for
    the part of the tree before we reach the
    nodes $v$ with $\varphi_{\tau}(v) = \bot$ in $\F{T}_x$.
    As in the proof of \Cref{lemma-eq-AST-iAST-1}, we obtain $|\Phi(\F{T}_x)| =
    |\F{T}_x|$.
    We only need to show that $\Phi(\F{T}_x)$ is a valid $\fto_{\PP}$-RST, i.e., that
    $t_{e.z}^{\Phi(\F{T}_x)} \fto_{\PP} \{p_{e.{w_1}}^{\Phi(\F{T}_x)}:t_{e.{w_1}}^{\Phi(\F{T}_x)}, \ldots, p_{e.{w_m}}^{\Phi(\F{T}_x)}:t_{e.{w_m}}^{\Phi(\F{T}_x)}\}$ 
    for all $z \in Z$ as in the proof of \Cref{lemma-eq-AST-iAST-1}.
    \smallskip

    \noindent
    \underline{\textbf{1.3 We show that $\Phi(\F{T}_x)$ is indeed a valid RST\@.}}

    \noindent
    Let $z \in Z$.
    In the following, we distinguish three different cases for a rewrite step at node $z$ of $\F{T}_x$:
    \begin{enumerate}
        \item[(A)] We rewrite at a position parallel to $\varphi_{\tau}(z)$.
        \item[(B)] We rewrite at a position above $\varphi_{\tau}(z)$.
        \item[(C)] We rewrite at a position below $\varphi_{\tau}(z)$.
    \end{enumerate}
    The cases (A) and (B) are analogous to our earlier proof, but in Case (A) we cannot erase the redex anymore.
    We only need to look at the new Case (C).

    \noindent
    \textbf{(C)} If we have
    $t_z^{\F{T}_x} \fto_{\PP} \{p_z^{\F{T}_x} \cdot p_1:t_{w_1}^{\F{T}_x}, \ldots, p_z^{\F{T}_x} \cdot p_m:t_{w_m}^{\F{T}_x}\}$, 
    then there is a rule $\ell \to \{ p_1:r_1, \ldots, p_{m}:r_{m}\} \in \PP$, 
    a substitution $\sigma$, and a position $\pi$ with $t_z^{\F{T}_x}|_{\pi} = \ell \sigma$. 
    Then $t_{w_j}^{\F{T}_x} = t_z^{\F{T}_x}[r_j \sigma]_{\pi}$ for all $1 \leq j \leq m$.
    Additionally, we assume that $\varphi_{\tau}(z) < \pi$, and thus there exists a $\chi \in \IN^+$ such that $\pi = \varphi_{\tau}(z).\chi$.
    By (T-1), we get $t_{z}^{\F{T}_x}[\hat{r}_e \delta]_{\varphi_{\tau}(z)} = t_{e.z}^{\Phi(\F{T}_x)}$ for the substitution $\delta$ such that $t_{z}^{\F{T}_x}|_{\varphi_{\tau}(v)} = \hat{\ell} \delta$.
    Since the rules of $\PP$ are non-overlapping, including the rule $\hat{\ell} \to \{ \hat{p}_1:\hat{r}_1, \ldots, \hat{p}_h:\hat{r}_h\}$ that we use at the root of $\Phi(\F{T}_x)$, 
    the redex for the current rewrite step must be completely ``inside'' the substitution $\delta$, and we can find a variable position $\alpha_q$ of a variable $q$ in $\hat{\ell}$ and another position $\beta$ such that $\chi = \alpha_q.\beta$.
    Furthermore, since the rule is also linear
    and non-erasing, $q$ occurs exactly once in $\hat{\ell}$ and exactly once in $\hat{r}_j$ for all $1 \leq j \leq m$.
    Let $\rho_q^j$ be the position of $q$ in $r_j$.
    We can rewrite $t_{e.z}^{\Phi(\F{T}_x)}$ using the same rule, 
    the same substitution, and the same position, 
    as we have $t_{e.z}^{\Phi(\F{T}_x)}|_{\pi} =
    t_{z}^{\F{T}_x}[\hat{r}_e \delta]_{\varphi_{\tau}(z)}|_{\pi} = \hat{r}_e \delta|_{\chi}
    = \delta(z)|_{\beta} = t_{z}^{\F{T}_x}|_{\varphi_{\tau}(z)}|_{\alpha_q}|_{\beta} =
    t_{z}^{\F{T}_x}|_{\varphi_{\tau}(z).\alpha_q.\beta} = t_{z}^{\F{T}_x}|_{\varphi_{\tau}(z).\chi} =
    t_{z}^{\F{T}_x}|_{\pi} = \ell \sigma$.
    
    It remains to show that $t_{e.{w_j}}^{\Phi(\F{T}_x)} = t_{e.z}^{\Phi(\F{T}_x)}[r_j \sigma']_{\pi}$ for all $1 \leq j \leq m$, i.e., that the labeling we defined for $\Phi(\F{T}_x)$ corresponds to this rewrite step.
    Let $1 \leq j \leq m$.
    We have
    $t_{e.z}^{\Phi(\F{T}_x)}[r_j \sigma]_{\pi} = t_{z}^{\F{T}_x}[\hat{r}_e \delta]_{\varphi_{\tau}(z)}[r_j \sigma]_{\pi} = t_{z}^{\F{T}_x}[\hat{r}_e \delta']_{\varphi_{\tau}(z)} = t_{e.{w_j}}^{\Phi(\F{T}_x)}$ for the substitution $\delta'$ with $\delta'(q) = \delta(q)[r_j \sigma]_{\beta}$ and $\delta'(q') = \delta(q')$ for all other variables $q' \neq q$.
    With this new substitution, we get $t_{w_j}^{\F{T}_x}|_{\varphi_{\tau}(w_j)} = t_{z}^{\F{T}_x}[r_j \sigma]_{\varphi_{\tau}(z).\alpha_q.\beta}|_{\varphi_{\tau}(z)} = t_{z}^{\F{T}_x}|_{\varphi_{\tau}(z)}[r_j \sigma]_{\alpha_q.\beta} = (\hat{\ell} \delta)[r_j \sigma]_{\alpha_q.\beta} = \hat{\ell} \delta'$.
    Finally, note that the probabilities of the labeling are correct, as we are using the same rule with the same probabilities in both trees.

    \smallskip
          
    \noindent 
    \textbf{\underline{$\PASTf \Longleftarrow \wPASTf$:}}

    \noindent
    By the same construction as for $\wASTf$, we also get $\edl(\F{T}^{(\infty)}) \geq \edl(\F{T})$.
\end{proof}

\IASTAndLIASTLemma*

\begin{proof}
    The idea and the construction of this proof are completely analogous
    to the one of \Cref{lemma-eq-AST-iAST-1}.
    We iteratively move leftmost-innermost rewrite steps to a higher position in the innermost RST.
    Hence, the resulting tree is a leftmost $\lito_{\PP}$-RST.

    The construction of $\Phi(\_)$ is also analogous to the one
    in \Cref{lemma-eq-AST-iAST-1}.
    The only difference to the proof of \Cref{lemma-eq-AST-iAST-1} is that 
    the original tree $\F{T}_x$ is already an $\ito_{\PP}$-RST.
    This means that during our construction only Case (A) can occur, 
    as we cannot rewrite above a redex in an $\ito_{\PP}$-RST.
    And for Case (A), we only need the property of being non-overlapping.
\end{proof}

Next, we prove the new results from \cref{Improving Applicability}.

\ASTAndIASTLemmaThree*

\begin{proof}
    The proof is completely analogous to the one of \Cref{lemma-eq-AST-iAST-1}.
    We iteratively move the innermost rewrite steps to a higher position using the construction $\Phi(\_)$.
    Note that since $\PP$ is spare and left-linear, 
    in the construction of $\Phi(\_)$ and in the proof of
    Case (B), if the innermost redex is below a redex $\ell\sigma$ that is reduced next via a
    rule $\ell \to \{p_1:r_1, \ldots, p_m:r_m \}$, then the innermost redex is completely
    ``inside'' the used substitution $\sigma$, and it corresponds to a variable $q$ which 
    occurs only once in $\ell$ and at most once in $r_j$ for all $1 \leq j \leq m$,
    due to spareness of $\PP$ and the fact that we started with a basic term.
    Hence, we can use the same construction as in \Cref{lemma-eq-AST-iAST-1}.
\end{proof}

For \Cref{lemma:spareness-AST-proof-1} we need some more auxiliary functions from \cite{fuhs2019transformingdctorc} to
decode a basic term over $\Sigma \cup \Sigma_{\C{G}(\PP)}$ into the original term over $\Sigma$.

\begin{defi}[Constructor Variant, Basic Variant, Decoded Variant]
    Let $\PP$ be a PTRS over the signature $\Sigma$. 
    For a term $t \in \TT$, we define its \emph{constructor variant} $\cv(t)$ inductively as follows: 
    \begin{itemize}
        \item[$\bullet$] $\cv(x) = x$ for $x \in \VSet$
        \item[$\bullet$] $\cv(f(t_1, \ldots, t_n)) = f(\cv(t_1), \ldots, \cv(t_n))$ for $f \in \Sigma_C$
        \item[$\bullet$] $\cv(f(t_1, \ldots, t_n)) = \tcons_f(\cv(t_1), \ldots, \cv(t_n))$ for $f \in \Sigma_D$
    \end{itemize}
    For a term $t \in \TT$ with $t = f(t_1, \ldots, t_n)$, we define its \emph{basic variant} $\bv(f(t_1, \ldots, t_n)) = \tenc_f(\cv(t_1), \ldots, \cv(t_n))$.
    For a term $t \in \TSet{\Sigma \cup \Sigma_{\C{G}(\PP)}}{\VSet}$, we define its \emph{decoded variant} $\dv(t) \in \TT$ as follows: 
    \begin{itemize}
        \item[$\bullet$] $\dv(x) = x$ for $x \in \VSet$
        \item[$\bullet$] $\dv(\targenc(t)) = \dv(t)$
        \item[$\bullet$] $\dv(f(t_1, \ldots, t_n)) = g(\dv(t_1), \ldots, \dv(t_n))$ for $f \in \{g, \tcons_g, \tenc_g\}$ with $g \in \Sigma_D$
        \item[$\bullet$] $\dv(f(t_1, \ldots, t_n)) = f(\dv(t_1), \ldots, \dv(t_n))$ for $f \in \Sigma_C$
    \end{itemize}
\end{defi}

The only difference to the auxiliary functions from \cite{fuhs2019transformingdctorc} is that $\dv$ now also removes the $\targenc$ symbols from a term.
This was handled differently in \cite{fuhs2019transformingdctorc} in order to ensure
$|\!\dv(t)| = |t|$, but this is irrelevant for our proofs regarding $\PSNf$.

\BasicASTToAST*

\begin{proof}
    
    \noindent   
    ``$\Longleftarrow$''

    \noindent   
    Let $\F{T} = (N,E,L)$ be an $\fto_{\PP}$-RST
    whose root is labeled with $(1:t)$ for some $t \in \TT$.
    We construct an $(\PP \cup \C{G}(\PP))$-RST $\F{T}' = (N',E',L')$ whose root is labeled with $(1:\bv(t))$ where $|\F{T}'| = |\F{T}|$ and $\edl(\F{T}') \geq \edl(\F{T})$.
    
    In \cite{fuhs2019transformingdctorc} it was shown that 
    $\{1 : \bv(t)\} \iliftto_{\C{G}(\PP)}^* \{1: t \delta_t\}$.
    (PTRSs were not considered in
    \cite{fuhs2019transformingdctorc}, but since all the
    probabilities in the rules of $\C{G}(\PP)$ are trivial, the proof
    in \cite{fuhs2019transformingdctorc}
    directly translates to the probabilistic setting.)
    Here, for any term $t \in \TT$ the substitution $\delta_t$ is defined by $\delta_t(x) = \targenc(x)$ if $x \in \Var(t)$ and $\delta_t(x) = x$ otherwise.
    The $(\PP \cup \C{G}(\PP))$-RST $\F{T}'$ first performs these innermost rewrite steps to get from $\bv(t)$ to $t \delta_t$, and then we can mirror the rewrite steps from $\F{T}$.
    To be precise, we use the same underlying tree structure and an adjusted labeling such that $p_x^{\F{T}} = p_x^{\F{T}'}$ and $t_x^{\F{T}} \delta_t = t_x^{\F{T}'}$ for all $x \in N$.
    Since the tree structure and the probabilities are the same, we then obtain $|\F{T}| = |\F{T}'|$.
    \begin{center}
        \scriptsize
        \begin{tikzpicture}
            \tikzstyle{adam}=[thick,draw=black!100,fill=white!100,minimum size=4mm, shape=rectangle split, rectangle split parts=2,rectangle split
            horizontal]
            \tikzstyle{empty}=[rectangle,thick,minimum size=4mm]
            
            \node[adam] at (-3.5, 0)  (a) {$1$ \nodepart{two} $t$};
            \node[adam] at (-5, -0.8)  (b) {$p_1$ \nodepart{two} $t_{1}$};
            \node[adam] at (-2, -0.8)  (c) {$p_2$ \nodepart{two} $t_{2}$};
            \node[adam] at (-6, -1.6)  (d) {$p_3$ \nodepart{two} $t_3$};
            \node[adam] at (-4, -1.6)  (e) {$p_4$ \nodepart{two} $t_4$};
            \node[adam] at (-2, -1.6)  (f) {$p_5$ \nodepart{two} $t_5$};
            \node[empty] at (-6, -2.4)  (g) {$\ldots$};
            \node[empty] at (-4, -2.4)  (h) {$\ldots$};
            \node[empty] at (-2, -2.4)  (i) {$\ldots$};

            \node[empty] at (-0.5, -1.2)  (arrow) {\Huge $\leadsto$};
            
            \node[adam] at (3.5, 1.6)  (init) {$1$ \nodepart{two} $\bv(t)$};
            \node[empty] at (3.5, 0.8)  (init2) {$\ldots$};
            \node[adam] at (3.5, 0)  (a2) {$1$ \nodepart{two} $t \delta_t$};
            \node[adam] at (2, -0.8)  (b2) {$p_1$ \nodepart{two} $t_1 \delta_t$};
            \node[adam] at (5, -0.8)  (c2) {$p_2$ \nodepart{two} $t_2 \delta_t$};
            \node[adam] at (1, -1.6)  (d2) {$p_3$ \nodepart{two} $t_3 \delta_t$};
            \node[adam] at (3, -1.6)  (e2) {$p_4$ \nodepart{two} $t_4 \delta_t$};
            \node[adam] at (5, -1.6)  (f2) {$p_5$ \nodepart{two} $t_5 \delta_t$};
            \node[empty] at (1, -2.4)  (g2) {$\ldots$};
            \node[empty] at (3, -2.4)  (h2) {$\ldots$};
            \node[empty] at (5, -2.4)  (i2) {$\ldots$};
        
            \draw (a) edge[->] (b);
            \draw (a) edge[->] (c);
            \draw (b) edge[->] (d);
            \draw (b) edge[->] (e);
            \draw (c) edge[->] (f);
            \draw (d) edge[->] (g);
            \draw (e) edge[->] (h);
            \draw (f) edge[->] (i);

            \draw (init) edge[->] (init2);
            \draw (init2) edge[->] (a2);
            \draw (a2) edge[->] (b2);
            \draw (a2) edge[->] (c2);
            \draw (b2) edge[->] (d2);
            \draw (b2) edge[->] (e2);
            \draw (c2) edge[->] (f2);
            \draw (d2) edge[->] (g2);
            \draw (e2) edge[->] (h2);
            \draw (f2) edge[->] (i2);
        \end{tikzpicture}
    \end{center}

    \noindent
    ``$\implies$''

    \noindent 
    Let $\F{T} = (N,E,L)$ be an $(\PP \cup \C{G}(\PP))$-RST 
    whose root is labeled with $(1:t)$ for some term $t \in \TSet{\Sigma \cup \Sigma_{\C{G}(\PP)}}{\VSet}$.
    We construct an $\fto_{\PP}$-RST $\F{T}' = (N',E',L')$ inductively such that for all leaves
    $x$ of $\F{T}'$ during the construction there exists a node $\varphi(x)$ of $\F{T}$ such that
    \begin{equation}\label{claimBasic} 
        t_{x}^{\F{T}'} = \dv(t_{\varphi(x)}^{\F{T}}) \; \text{ and } \;
        p_{x}^{\F{T}'} = p_{\varphi(x)}^{\F{T}}.
    \end{equation}
    Here, $\varphi$ is injective, i.e., every
    leaf $x$ of $\F{T}'$ is mapped to a (unique) node $\varphi(x)$ of $\F{T}$.
    Furthermore, after this construction, if $x$ is still a leaf in $\F{T}'$, then $\varphi(x)$ is also a leaf in $\F{T}$.
    Hence, we obtain
    $|\F{T}| = \sum_{x \in \ctleaf^{\F{T}}}p_x^{\F{T}} \geq \sum_{x \in \ctleaf^{\F{T}'}}p_x^{\F{T}'} = |\F{T}'|$.
    Moreover, we only add rewrite steps in the beginning, so that $\edl(\F{T}) \leq \edl(\F{T}')$.
    
    We label the root of $\F{T}'$ with $(1:\dv(t))$.
    By letting $\varphi$ map the root of  $\F{T}'$  to the root of $\F{T}$, the
    claim \eqref{claimBasic} is clearly satisfied.
    As long as there is still a node $x$ in $\F{T}'$ such that $\varphi(x)$ is not a leaf in $\F{T}$, we do the following.
    If we perform a rewrite step with $\C{G}(\PP)$ at node $\varphi(x)$, i.e.,
    $t_{\varphi(x)}^{\F{T}} \fto_{\C{G}(\PP)} \{1: t_{y}^{\F{T}}\}$ for the only successor
    $y$ of $\varphi(x)$, then we have $\dv(t_{\varphi(x)}^{\F{T}}) = \dv(t_{y}^{\F{T}})$,
    i.e., we do nothing in $\F{T}'$ but simply
    change the definition of $\varphi$ such that 
    $\varphi(x)$ is now $y$.
    To see why $\dv(t_{\varphi(x)}^{\F{T}}) = \dv(t_{y}^{\F{T}})$ holds, note that we have $\dv(\ell) = \dv(r)$ for all rules $\ell \to \{1:r\} \in \C{G}(\PP)$, since $\dv(\tenc_f(x_1, \ldots, x_n)) = f(x_1, \ldots, x_n) = \dv(f(\targenc(x_1), \ldots, \targenc(x_n)))$, and analogously $\dv(\targenc(\tcons_f(x_1, \ldots, x_n))) = f(x_1, \ldots, x_n) = \dv(f(\targenc(x_1), \ldots, \targenc(x_n)))$.
    This can then be lifted to arbitrary rewrite steps.

    Otherwise, let $\varphi(x)E = \{y_1, \ldots, y_k\}$ be the set of successors of
    $\varphi(x)$ in $\F{T}$, and we have $t_{\varphi(x)}^{\F{T}} \fto_{\PP} \{\tfrac{p_{y_1}^{\F{T}}}{p_{\varphi(x)}^{\F{T}}}:t_{y_1}^{\F{T}}, \ldots, \tfrac{p_{y_k}^{\F{T}}}{p_{\varphi(x)}^{\F{T}}}:t_{y_k}^{\F{T}}\}$.
    Since $t_{x}^{\F{T}'} = \dv(t_{\varphi(x)}^{\F{T}})$ and $p_{x}^{\F{T}'} =
    p_{\varphi(x)}^{\F{T}}$ by the induction hypothesis, 
    then we also have
    $t_x^{\F{T}'} \fto_{\PP} \{\tfrac{p_{y_1}^{\F{T}'}}{p_{x}^{\F{T}'}}:t_{y_1}^{\F{T}'}, \ldots, \tfrac{p_{y_k}^{\F{T}'}}{p_{x}^{\F{T}'}}:t_{y_k}^{\F{T}'}\}$
    for terms $t_{y_j}^{\F{T}'}$ with  $t_{y_j}^{\F{T}'} = \dv(t_{y_j}^{\F{T}})$ and for
    $p_{y_j}^{\F{T}'} = p_{y_j}^{\F{T}}$. Thus, when defining $\varphi(y_j) = y_j$ for all $1 \leq j \leq k$, 
    \eqref{claimBasic} is satisfied for the new leaves $y_1,\ldots,y_k$ of $\F{T}'$.
\end{proof}

\BasicPASTToPAST*

\begin{proof} 
    Let $\F{T}$ be an $\fto_{\PP}$-RST
    whose root is labeled with $(1:t)$ for some $t \in \TT$.
    The construction from the previous proof yields
 an $(\PP \cup \C{G}(\PP))$-RST $\F{T}'$ whose root is labeled with $(1:\bv(t))$
 where $|\F{T}'| = |\F{T}|$.
     Since the tree structure and the probabilities are the same, 
    and we only add rewrite steps in the beginning, we get 
    $\edl(\F{T}') \geq
    \edl(\F{T})$.  
\end{proof}

\subsection{Proofs for \Cref{Modularity}}\label{appendix:modularity}

We start by proving the \emph{cutting lemma} that is needed to prove our
modularity results for $\ASTs$.
It states that if there exists an $\to$-RST $\F{T}$ that converges with probability~$<1$
and a partitioning of its inner nodes 
into two sets $N_1$ and $N_2$ such that
every subtree of $\F{T}$ that only contains inner nodes from $N_1$ converges with probability $1$,
then we can create a subtree of $\F{T}$ that converges with probability~$<1$ as well 
such that every infinite path contains an infinite number of $N_2$ nodes 
(i.e., it does not contain any
infinite path with eventually only nodes from $N_1$).

\begin{lem}[Cutting Lemma]\label{lemma:p-partition}
    Let $\to \; \subseteq \TT \times \FDist(\TT)$,
    and let $\F{T}$ be an $\to$-RST with $|\F{T}| < 1$.
    Suppose we can partition its inner nodes into
    $N_1 \uplus N_2$ 
    such that $|\F{T}_1| = 1$ holds for every subtree $\F{T}_1$ of $\F{T}$
    which only contains inner nodes from $N_1$.
    Then there exists a subtree $\F{T}'$ of $\F{T}$ with $|\F{T}'| < 1$ 
    such that every infinite path of $\F{T}'$ has an infinite number of nodes from $N_2$.
\end{lem}
  
\begin{proof}
    Let $\F{T} = (N,E,L)$ be an $\to$-RST with $|\F{T}| = c < 1$ for some $c \in \IR$.
    Since we have $0 \leq c < 1$, there is an $\varepsilon > 0$ such that $c + \varepsilon < 1$.
    Remember that the formula for the geometric series is:
    \[
        \sum_{n = 1}^{\infty} \left(\tfrac{1}{d}\right)^n = \tfrac{1}{d-1}, \text{ for all } d \in \IR \text{ such that } \tfrac{1}{|d|} < 1
    \]
    Let $d = \frac{1}{\varepsilon} + 2$. Then we have $\frac{1}{d} = \frac{1}{\frac{1}{\varepsilon} + 2} < 1$ and:
    \begin{equation}\label{eq:setting-d}
        \tfrac{1}{\varepsilon} + 1 < \tfrac{1}{\varepsilon} + 2 \Leftrightarrow \tfrac{1}{\varepsilon} + 1 < d \Leftrightarrow \tfrac{1}{\varepsilon} < d-1 \Leftrightarrow \tfrac{1}{d-1} < \varepsilon \Leftrightarrow \sum_{n = 1}^{\infty} \left(\tfrac{1}{d}\right)^n < \varepsilon
    \end{equation}
    We will now construct a subtree $\F{T}' = (N',E',L')$ such that every infinite path has an infinite number of $N_2$ nodes and such that 
    \begin{equation}\label{eq:sum-after-all-cuts}
        |\F{T}'| \leq |\F{T}| + \sum_{n = 1}^{\infty} \left(\tfrac{1}{d}\right)^n
    \end{equation}
    and then, we finally have
    \[
        |\F{T}'| \stackrel{\eqref{eq:sum-after-all-cuts}}{\leq} |\F{T}| + \sum_{n = 1}^{\infty} \left(\tfrac{1}{d}\right)^n = c + \sum_{n = 1}^{\infty} \left(\tfrac{1}{d}\right)^n \stackrel{\eqref{eq:setting-d}}{<} c + \varepsilon < 1
    \]
    
    The idea of this construction is that we cut infinite subtrees of pure $N_1$ nodes as soon as the probability for normal forms is high enough.
    In this way, one obtains paths where after finitely many $N_1$ nodes, there is a $N_2$ node, or we reach a leaf.

    The construction works as follows.
    For any node $x \in N$, let $\ctlevelTwo(x)$ be the number of $N_2$ nodes in the path from the root to $x$.
    Furthermore, for any set $W \subseteq N$ and $k \in \IN$, let
    $\ctlevelTwowithborder(W,k) = \{x \in W \mid \ctlevelTwo(x) \leq k \lor (x
    \in N_2 \land \ctlevelTwo(x) \leq k+1)\}$ be the set of all nodes in $W$ that have at most $k$ nodes from $N_2$ in the path from the root to its predecessor.
    So if $x \in \ctlevelTwowithborder(W,k)$ is not in $N_2$, then we have at most $k$ nodes from $N_2$ in the
    path from the root to $x$
    and if $x \in \ctlevelTwowithborder(W,k)$ is in $N_2$, 
    then we have at most $k+1$ nodes from $N_2$ in the path from the root to $x$.
    We will inductively define a set $U_k \subseteq N$ such that $U_k \subseteq
    \ctlevelTwowithborder(N,k)$ and then define the subtree as
    $\F{T}' = \F{T}[\bigcup_{k \in \IN} U_k]$.

    We start by considering the subtree $\F{T}_0 = \F{T}[\ctlevelTwowithborder(N,0)]$.
    This tree only contains inner nodes from $N_1$.
    While the node set $\ctlevelTwowithborder(N,0)$ itself may contain nodes from $N_2$, 
    they can only occur at the leaves of $\F{T}_0$.
    Using the prerequisite of the lemma, we get $|\F{T}_0|=1$.
    In \cref{Possibilities for Te} one can see the different possibilities for $\F{T}_0$.
    Either $\F{T}_0$ is finite or $\F{T}_0$ is infinite.
    In the first case, we can add all the nodes to $U_0$ since there is no infinite path of pure $N_1$ nodes.
    Hence, we define $U_0 = \ctlevelTwowithborder(N,0)$.
    In the second case, we have to cut the tree at a specific depth once the probability of leaves is high enough.
    Let $\ctdepth_{0}(y)$ be the depth of the node $y$ in the tree $\F{T}_0$.
    Moreover, let $D_{0}(k) = \{x \in \ctlevelTwowithborder(N,0) \mid \ctdepth_{0}(y) \leq k\}$ be the set of nodes in $T_0$ that have a depth of at most $k$.
    Since $|\F{T}_0|=1$ and $|\_|$ is monotonic w.r.t.\ the depth of
    the tree $\F{T}_0$, we can find an $N_{0} \in \IN$ such that
    \[
        \sum_{x \in \ctleaf^{\F{T}_0}, d_{0}(x) \leq N_{0}} p_x^{\F{T}_0} \geq  1 - \frac{1}{d}
    \]
            
    We include all nodes from $D_{0}(N_{0})$ in $U_0$ and delete every other node of $\F{T}_0$.
    In other words, we cut the tree after depth $N_{0}$.
    This cut can be seen in \cref{Possibilities for Te}, indicated by the dotted line.
    We now know that this cut may increase the probability of leaves by at most $\frac{1}{d}$.
    Therefore, we define $U_0 = D_{0}(N_{0})$ in this case.

    \begin{figure}
        \centering
        \begin{subfigure}[b]{0.4\textwidth}
            \centering
            \begin{tikzpicture}
                \tikzstyle{adam}=[circle,thick,draw=black!100,fill=white!100,minimum size=3mm]
                \tikzstyle{empty}=[circle,thick,minimum size=3mm]
                
                \node[adam] at (0, 0)  (a) {};
                \node[adam, label=center:{\tiny $N_1$}] at (2, -3)  (b) {};
                \node[adam, label=center:{\tiny $N_2$}] at (-2, -3)  (c) {};
                \node[adam, label=center:{\tiny $N_2$}] at (0.75, -3)  (d) {};
                \node[adam, label=center:{\tiny $N_1$}] at (-0.75, -3)  (e) {};
                \node[empty, label=center:{\small $N_1$}] at (0, -1.5)  (middleA) {};
                
                \node[adam, label=center:{\tiny $\NF$}] at (-1.5, -5)  (nf1) {};
                \node[adam, label=center:{\tiny $\NF$}] at (0, -5)  (nf2) {};
                
                \node[adam, label=center:{\tiny $\NF$}] at (2.75, -5)  (bb) {};
                \node[adam, label=center:{\tiny $N_2$}] at (1.25, -5)  (bc) {};
                \node[adam, label=center:{\tiny $N_2$}] at (2, -5)  (bd) {};
                \node[empty, label=center:{\small $N_1$}] at (2, -4)  (middleA) {};
    
                \node[empty, label=center:{\small $N_1$}] at (-0.75, -4)  (middleA) {};
                
                \node[empty] at (0, -6)  (stretch) {};
            
                \draw (a) edge[-] (b);
                \draw (b) edge[-] (d);
                \draw (d) edge[-] (e);
                \draw (e) edge[-] (c);
                \draw (a) edge[-] (c);
                
                \draw (b) edge[-] (bb);
                \draw (bb) edge[-] (bd);
                \draw (bd) edge[-] (bc);
                \draw (b) edge[-] (bc);
                
                \draw (e) -- (nf1) -- (nf2) -- (e);
                
                \begin{scope}[on background layer]
                \fill[green!20!white,on background layer] (0, 0) -- (-2, -3) -- (2, -3);
                \fill[green!20!white,on background layer] (2, -3) -- (1.25, -5) -- (2.75, -5);
                \fill[green!20!white,on background layer] (-0.75, -3) -- (-1.5, -5) -- (0, -5);
                \end{scope}
            \end{tikzpicture}
            \caption{$\F{T}_x$ finite}
        \end{subfigure}
        \hspace{30px}
        \begin{subfigure}[b]{0.4\textwidth}
            \centering
            \begin{tikzpicture}
                \tikzstyle{adam}=[circle,thick,draw=black!100,fill=white!100,minimum size=3mm]
                \tikzstyle{empty}=[circle,thick,minimum size=3mm]
                
                \node[adam] at (0, 0)  (a) {};
                \node[adam, label=center:{\tiny $N_1$}] at (2, -3)  (b) {};
                \node[adam, label=center:{\tiny $N_2$}] at (-2, -3)  (c) {};
                \node[adam, label=center:{\tiny $N_2$}] at (0.75, -3)  (d) {};
                \node[adam, label=center:{\tiny $N_1$}] at (-0.75, -3)  (e) {};
                \node[empty, label=center:{\small $N_1$}] at (0, -1.5)  (middleA) {};
                
                \node[adam, label=center:{\tiny $\NF$}] at (-1.5, -5)  (nf1) {};
                \node[adam, label=center:{\tiny $\NF$}] at (0, -5)  (nf2) {};
                
                \node[adam, label=center:{\tiny $N_1$}] at (2.75, -5)  (bb) {};
                \node[empty] (bbi) at (2.75,-6)  {};
                \node[adam, label=center:{\tiny $N_2$}] at (1.25, -5)  (bc) {};
                \node[adam, label=center:{\tiny $N_1$}] at (2, -5)  (bd) {};
                \node[empty] (bdi) at (2,-6)  {};
                \node[empty, label=center:{\small $N_1$}] at (2, -4)  (middleA) {};
    
                \node[empty, label=center:{\small $N_1$}] at (-0.75, -4)  (middleA) {};

                \node[empty, label=center:{\small \textcolor{red}{$N_{x}$}}] at (3, -3.7)  (cut) {};
            
                \draw (a) edge[-] (b);
                \draw (b) edge[-] (d);
                \draw (d) edge[-] (e);
                \draw (e) edge[-] (c);
                \draw (a) edge[-] (c);
                
                \draw (b) edge[-] (bb);
                \draw (bb) edge[-] (bd);
                \draw (bd) edge[-] (bc);
                \draw (b) edge[-] (bc);
                
                \draw (e) -- (nf1) -- (nf2) -- (e);
                
                \draw[] (bb) edge ($(bb)!0.3cm!(bbi)$) edge [dotted] ($(bb)!0.7cm!(bbi)$);
                \draw[] (bd) edge ($(bd)!0.3cm!(bdi)$) edge [dotted] ($(bd)!0.7cm!(bdi)$);

                \draw[] (-2, -3.7) edge [red, dotted] (cut);
                
                \begin{scope}[on background layer]
                \fill[green!20!white,on background layer] (0, 0) -- (-2, -3) -- (2, -3);
                \fill[green!20!white,on background layer] (2, -3) -- (1.25, -5) -- (2.75, -5);
                \fill[green!20!white,on background layer] (-0.75, -3) -- (-1.5, -5) -- (0, -5);
                \end{scope}
            \end{tikzpicture}
            \caption{$\F{T}_x$ infinite}
        \end{subfigure}
        \caption{Possibilities for $\F{T}_x$}\label{Possibilities for Te}
    \end{figure}
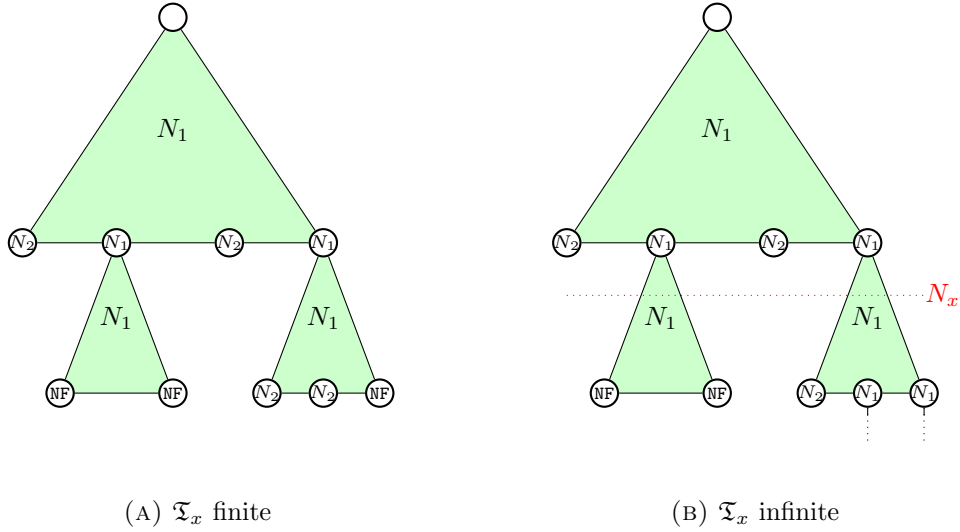

    For the induction step, assume that we have already defined a subset $U_i \subseteq \ctlevelTwowithborder(N,i)$.
    Let $H_i = \{x \in U_i \mid x \in N_2, \ctlevelTwo(x) = i+1\}$ be the set of leaves in $\F{T}[U_i]$ that are in $N_2$.
    For each $x \in H_i$, we consider the subtree that starts at $x$ until we
    reach the next node from $N_2$, including the node itself.
    Everything below such a node will be cut.
    To be precise, we regard the tree $\F{T}_x = (N_x,E_x,L_x) =
    \F{T}[\ctlevelTwowithborder(xE^*,i+1)]$. Here, $xE^*$ is the set of all nodes that
    are reachable from $x$ by arbitrary many steps.
    
    First, we show that $|\F{T}_x| = 1$.
    For every direct successor $y$ of $x$, the subtree $\F{T}_y = \F{T}_x[y E_x^*]$ of $\F{T}_x$ 
    that starts at $y$ does not contain any inner nodes from $N_2$.
    Hence, we have $|\F{T}_y| = 1$ by the prerequisite of the lemma, and hence
    \[
        |\F{T}_x| = \sum_{y \in xE} p_y \cdot |\F{T}_y| = \sum_{y \in xE} p_y \cdot 1 = \sum_{y \in xE} p_y = 1.
    \]
    For the construction of $U_{i+1}$, we have the same cases as before, see \Cref{Possibilities for Te}.
    Either $\F{T}_x$ is finite or $\F{T}_x$ is infinite.
    Let $Z_{x}$ be the set of nodes that we want to add to our node set $U_{i+1}$ from the tree $\F{T}_x$.
    In the first case we can add all the nodes again and set $Z_{x} = N_x$.
    In the second case, we once again cut the tree at a specific depth once the probability for leaves is high enough.
    Let $\ctdepth_{x}(z)$ be the depth of the node $z$ in the tree $\F{T}_x$.
    Moreover, let $D_{x}(k) = \{x \in N_x \mid \ctdepth_{x}(z) \leq k\}$ be the set of nodes in $\F{T}_x$ that have a depth of at most $k$.
    Since $|\F{T}_x|=1$ and $|\_|$ is monotonic w.r.t.\ the depth of the tree $\F{T}_x$, we can find an $N_x \in \IN$ such that
    \[
        \sum_{y \in \ctleaf^{\F{T}_x}, d_{x}(y) \leq N_{x}} p_y^{\F{T}_x} \geq 1 - \left(\frac{1}{d}\right)^{i+1} \cdot \frac{1}{|H_i|}
    \]
    We include all nodes from $D_{x}(N_{x})$ in $U_{i+1}$ and delete every other node of $\F{T}_x$.
    In other words, we cut the tree after depth $N_{x}$.
    We now know that this cut may increase the probability of leaves by at most $\left(\frac{1}{d}\right)^{i+1} \cdot \frac{1}{|H_i|}$.
    Therefore, we set $Z_{x} = D_{x}(N_{x})$.
    
    We do this for each $x \in H_i$ and in the end, we set
        $U_{i+1} = U_i \cup \bigcup_{x \in H} Z_{x}$.

    It is straightforward to see that $\bigcup_{k \in \IN} U_k$ satisfies the conditions of \Cref{def:chain-tree-induced-sub}, as we only cut after certain nodes in our construction.
    Hence, $\bigcup_{k \in \IN} U_k$ is non-empty and weakly connected, and for each
    of its nodes, it either contains no or all successors.
    Furthermore, $\F{T}' = \F{T}[\bigcup_{k \in \IN} U_k]$ is a subtree which
    does not contain an infinite path of pure $N_1$ nodes as we cut every such
    path after a finite depth.
        
    It remains to prove that $|\F{T}'| \leq |\F{T}| + \sum_{n = 1}^{\infty} \left(\frac{1}{d}\right)^n$ holds.
    During the $i$-th iteration of the construction, we may increase the value of $|\F{T}|$ by the sum of all probabilities corresponding to the new leaves resulting from the cuts.
    As we cut at most $|H_i|$ trees in the $i$-th iteration and for each such tree, we added at most a total probability of $\left(\frac{1}{d}\right)^{i+1} \cdot \frac{1}{|H_i|}$ for the new leaves, the value of $|\F{T}|$ might increase by 
    \[
        |H_i| \cdot \left(\frac{1}{d}\right)^{i+1} \cdot \frac{1}{|H_i|} = \left(\frac{1}{d}\right)^{i+1}
    \]
    in the $i$-th iteration, and hence in total, we then \pagebreak[3] obtain
    \[
        |\F{T}'| \leq |\F{T}| + \sum_{n = 1}^{\infty} \left(\frac{1}{d}\right)^n,
    \]
    as desired (see \eqref{eq:sum-after-all-cuts}).
\end{proof}

With the cutting lemma, we can now prove the following lemma regarding the parallel execution of 
rewrite sequences that are $\ASTs$.

\begin{lem}[Parallel Execution Lemma for $\ASTs$]\label{lemma:parallel-execution-lemma-AST}
    Let $\PP$ be a PTRS and $s \in \IS$.\@
    Furthermore, let $q_1, \ldots, q_n \in \TT$ be terms such that for every $\sto_{\PP}$-RST $\F{T}_i$ that starts with $(1:q_i)$
    for some $1 \leq i \leq n$ we have $|\F{T}_i| = 1$. 
    Then, every $\sto_{\PP}$-RST $\F{T}$ that starts with $(1: \tc(q_1, \ldots, q_n))$ for some symbol $\tc$, 
    where we do not rewrite at the root position,
    converges with probability $1$.
\end{lem}

\begin{proof}
    By $\snotrootto_{\PP}$ we denote the restriction of $\sto_{\PP}$ that does not perform any rewrite steps at the root position.
    We prove that every $\snotrootto_{\PP}$-RST which starts with $(1:\tc(q_1, \ldots, q_n))$ for some symbol $\tc$
    converges with probability $1$.
    Note that if we rewrite a term $q_i$ to $\{p_1:q_{i,1}, \ldots, p_k:q_{i,k}\}$, then we obtain a distribution $\{p_1:\tc(q_1, \ldots, q_{i,1}, \ldots, q_n), \ldots, p_k:\tc(q_1, \ldots, q_{i,k}, \ldots, q_n)\}$.
	Now, the terms $q_j$ with $j \neq i$ occur multiple times in this distribution, and we may use different rules to rewrite them.
	Hence, the order in which we rewrite the different $q_i$ matters and cannot be
        chosen arbitrarily (as seen in Counterex.\ \ref{example:liAST-vs-iAST}). 
    Therefore, the proof is much more complex than in the non-probabilistic setting for termination.
    (This is not the case for leftmost-innermost rewriting, where an easier induction step
    than
    the following one would also be possible.)
    Let $\tored{e}{}{\PP}$ be the restriction of $\snotrootto_{\PP}$ where we can only
    rewrite on or
    below a position $1 \leq i \leq e$.
    Note that $\tored{n}{}{\PP} \;=\; \snotrootto_{\PP}$.
    By induction on $e$, we prove
that every $\tored{e}{}{\PP}$-RST which starts with $(1:\tc(q_1, \ldots, q_n))$ converges
with probability 1.

    In the base case we have $e=1$ and only allow rewrite steps on or below position $1$.
    Obviously, since every $\sto_{\PP}$-RST that starts with $(1:q_1)$ converges with probability $1$ by our assumption,
    so does every $\tored{1}{}{\PP}$-RST that starts with $(1:\tc(q_1, \ldots, q_n))$.

    In the induction step, we assume that the statement holds for $e-1 < n$, 
    i.e., every $\tored{e-1}{}{\PP}$-RST which starts with $(1:\tc(q_1, \ldots, q_n))$
    converges with probability $1$.
    Now consider an arbitrary $\tored{e}{}{\PP}$-RST $\F{T}$ that starts with $(1:\tc(q_1, \ldots, q_n))$.
    Assume for a contradiction that it converges with a probability~$<1$.
    The rest of the induction step has the following structure:
    \begin{enumerate}[label=\arabic{enumi})]
        \item We first use the cutting lemma (\Cref{lemma:p-partition}) to get rid of infinite paths that only
        use rewrite steps on or below position $e$, resulting in an $\tored{e}{}{\PP}$-RST
        tree $\F{T}'$,
which uses infinitely many $\tored{e-1}{}{\PP}$-steps in each infinite path.
        
        \item Then, for each finite height $H \in \IN$ we split the tree $\F{T}'_{H}$,
          which consists of the first $H$ layers of the tree $\F{T}'$, into multiple
          (finitely many) $\tored{e-1}{}{\PP}$-RSTs
 of height at most $H$
          that all start with $(1:\tc(q_1, \ldots, q_n))$
          Furthermore, we show that for each $H \in \IN$, at least one of these
           $\tored{e-1}{}{\PP}$-RSTs converges with low enough probability.
        \item We create an infinite, finitely-branching tree whose nodes
are labeled with RSTs. More precisely, its
nodes          at depth $H \in \IN$ are labeled with
 $\tored{e-1}{}{\PP}$-RSTs with low enough probability.
        \item Finally, we use König's Lemma to obtain an infinite path in this tree, which corresponds to an infinite 
        $\tored{e-1}{}{\PP}$-RST that starts with $(1:\tc(q_1, \ldots, q_n))$ and converges with probability~$<1$, 
        which is our desired contradiction.
    \end{enumerate}
	\medskip

    \noindent
	\textbf{\underline{1) Use the cutting lemma}}

    \noindent
    We can partition the inner
    nodes of our RST $\F{T}$ into the sets
    \begin{itemize}
      \item[$\bullet$] $N_1 := \{x \in N^{\F{T}}\setminus\ctleaf^{\F{T}} \mid$ the rewrite step at $x$ is on or below position $e\}$
      \item[$\bullet$] $N_2 := N \setminus N_1 = \{x \in N^{\F{T}}\setminus\ctleaf^{\F{T}} \mid$ the rewrite step at $x$
        is on or below position $k$ with $1 \leq k < e\}$
    \end{itemize}
    We know that every $\sto_{\PP}$-RST that starts with $(1:q_e)$ converges with probability $1$ by assumption.
    Hence, also every subtree of $\F{T}$ that only contains nodes from $N_1$ as inner
    nodes
    converges with probability $1$.
Thus, we can apply the cutting lemma (\Cref{lemma:p-partition}) and obtain a subtree $\F{T}'$ of $\F{T}$
    that converges with probability~$<1$ and contains infinitely many nodes of $N_2$ in each infinite path.
    
	\medskip

    \noindent
	\textbf{\underline{2) Split the tree $\F{T}_{H}$ for every $H \in \IN$}}

    \noindent
    Let $H \in \IN$ and let $\F{T}'_H$ be the finite tree consisting of the first $H$ layers of $\F{T}'$,
    where the subterm on position $e$ remains the same in every node and is equal to the
    subterm on
    position $e$ in the root of $\F{T}'$.
    Since $\F{T}'$ converges with probability~$<1$, there exists an $0 < \alpha \leq 1$
    such that $|\F{T}'| = \lim_{H \to \infty} \sum_{x \in \ctleaf^{\F{T}'} \land
      \ctdepth(x) \leq H} \, p_x^{\F{T}'} < 1 - \alpha$, 
    and hence $\sum_{x \in \ctleaf^{\F{T}'} \land \ctdepth(x) \leq H} \, p_x^{\F{T}'} =
    \sum_{x \in \ctleaf^{\F{T}'} \land x \in \ctleaf^{\F{T}'_H}} \, p_x^{\F{T}'_H}  < 1 - \alpha$ for every $H \in \IN$.
    Note that $\F{T}'_H$ is not a valid $\tored{e}{}{\PP}$-RST, as we now have steps where no term changes, i.e., where we would have performed a rewrite step below position $e$.

 From $\F{T}'_H$ we generate a set $\mathbb{T}_H$  of
 pairs
$(p_{\F{t}}, \F{t})$ where
 $p_{\F{t}} \in (0,1]$ is a probability and
$\F{t}$ is an $\tored{e-1}{}{\PP}$-RST  of height at most $H$
    (i.e., we do not perform any rewrite steps on or below position $e$ anymore).   
   The set $\mathbb{T}_H$ contains all $\tored{e-1}{}{\PP}$-RSTs $\F{t}$ resulting from
   $\F{T}'_H$ when skipping the nodes of $N_1$ (where the rewrite step is on or below the
   position $e$).
   Instead, one uses one of its child nodes. Moreover,
$p_{\F{t}}$ is the probability for choosing the RST $\F{t}$.
   We will give more intuition on
 $\mathbb{T}_H$ below.
   Since
a rewrite step on or below the position $e$ may create several children (one for each term
in the support of the multi-distribution on the right-hand side of the applied rewrite
rule),  the set $\mathbb{T}_H$ may contain several pairs.
    $\mathbb{T}_H$ will satisfy the following two properties:
    \begin{enumerate}[label=(Prop-\arabic{enumi})]
		\item\label{itm:const-prop-1} We have $\sum_{(p_{\F{t}},\F{t}) \in \mathbb{T}_H} p_{\F{t}} = 1$. This means that the sum of the probabilities for all trees in $\mathbb{T}_H$ sum up to one.
		
		\item\label{itm:const-prop-2} For all $x \in \ctleaf^{\F{T}'_H} \cup N_2$
                  we have $p_x^{\F{T}'_H} =
                  \sum_{(p_{\F{t}},\F{t}) \in \mathbb{T}_H} p_{\F{t}} \cdot
                  p_{x}^{\F{t}}$. As mentioned above, $p_{\F{t}}$ represents the
                  probability that the RST $\F{t}$ is chosen and $p_{x}^{\F{t}}$ is the
                  probability of the node $x$ in the RST $\F{t}$. Whenever $x$ is not a
                  node of $\F{t}$ (i.e., $x \notin N^{\F{t}}$), then we define
                  $p_{x}^{\F{t}} = 0$.
                  	This means that the probability for a leaf $x \in \ctleaf^{\F{T}'_H}$ or
                an inner node $x \in N_2$ in our cut tree $\F{T}'_H$ is equal to the sum
                over all trees $\F{t}$ that contain $x$, where we multiply the probability
                $p_{\F{t}}$ of the tree $\F{t}$ by the probability of node $x$ in $\F{t}$.
	\end{enumerate}

    Assume now that for every $(p_{\F{t}},\F{t}) \in \mathbb{T}_H$ we have $\sum_{x \in
      \ctleaf^{\F{t}} \land x \in \ctleaf^{\F{T}'}} \,  p_x^{\F{t}} \geq 1 - \alpha$.
    Then, with the two properties
\ref{itm:const-prop-1} and \ref{itm:const-prop-2} we obtain \pagebreak[3]
the following:
    \begin{equation}\label{eq:array-AST-parallel}
      \begin{array}{ccl}
        &  & \sum_{x \in \ctleaf^{\F{T}_H'} \land x \in \ctleaf^{\F{T}'}} \, p_x^{\F{T}_H'} \\
        \text{\textcolor{blue}{(by \ref{itm:const-prop-2})}} & = & \sum_{x \in
          \ctleaf^{\F{T}_H'} \land x \in \ctleaf^{\F{T}'}} \sum_{(p_{\F{t}},\F{t}) \in
          \mathbb{T}_{H}} \, p_{\F{t}} \cdot p_x^{\F{t}} \\
        & = & \sum_{(p_{\F{t}},\F{t}) \in \mathbb{T}_{H}} \sum_{x \in \ctleaf^{\F{T}_H'}
          \land x \in \ctleaf^{\F{T}'}} \, p_{\F{t}} \cdot p_x^{\F{t}}
        \\
        & = & \sum_{(p_{\F{t}},\F{t}) \in \mathbb{T}_{H}} \, p_{\F{t}} \cdot \sum_{x \in
          \ctleaf^{\F{T}_H'} \land x \in \ctleaf^{\F{T}'}} \, p_x^{\F{t}}\\
                & = & \sum_{(p_{\F{t}},\F{t}) \in \mathbb{T}_{H}} \, p_{\F{t}} \cdot
        \sum_{x \in \ctleaf^{\F{t}} \land x \in \ctleaf^{\F{T}'}}\,  p_x^{\F{t}} \\
        \text{\textcolor{blue}{(by assumption)}} & \geq & \sum_{(p_{\F{t}},\F{t}) \in
          \mathbb{T}_{H}} \, p_{\F{t}} \cdot (1 - \alpha) \\
        & = & (1 - \alpha) \cdot \sum_{(p_{\F{t}},\F{t}) \in \mathbb{T}_{H}} \, p_{\F{t}} \\
        \text{\textcolor{blue}{(by \ref{itm:const-prop-1})}} & = & (1 - \alpha) \cdot 1 \\
        & = & 1 - \alpha
      \end{array}
	  \end{equation}
    which is a contradiction.
    Hence, for each $H \in \IN$ there exists a $(p_{\F{t}},\F{t}) \in \mathbb{T}_{H}$ such that 
    \begin{equation}\label{eq:t-small-probability-every-H}
        \sum_{x \in \ctleaf^{\F{t}} \land x \in \ctleaf^{\F{T}'}} p_x^{\F{t}} < 1 - \alpha
    \end{equation}
    We will use this in Step 3) and 4) to generate our infinite $\tored{e-1}{}{\PP}$-RST that converges with probability $<1$.
    But first we have to define the set $\mathbb{T}_{H}$ that satisfies
    \ref{itm:const-prop-1} and \ref{itm:const-prop-2}. 

	  \medskip

    \noindent
	  \textbf{\underline{Idea of $\mathbb{T}_H$}}

    \noindent
    The idea of the split can be seen in \Cref{fig:example_ct_witness_proof_after_cut_with_prob,fig:example_ct_witness_proof_after_cut_corr_trees_in_split}.
    There, the probabilities of the nodes are indicated by the \textcolor{blue}{small
      numbers}
     and the probabilities for the edges are indicated by the \textcolor{red}{big numbers}.
    Whenever we encounter a node from $N_1$ with successors $y_1,\ldots,y_m$, 
    meaning that we rewrite on or below position $e$,
    then we split the tree into $m$ different trees, 
    since the terms below positions $1$ to $e-1$ did not change,
    and we may use different rules on the different resulting terms afterwards.
	  \begin{figure}
      \centering
      \begin{tikzpicture}
          \tikzstyle{adam}=[circle,thick,draw=black!100,fill=white!100,minimum size=3mm]
          \tikzstyle{empty}=[shape=rectangle,draw=black!100,thick,minimum size=4mm]
          
          \node[empty,pin={[pin distance=0.1cm, pin edge={-}] 135:\tiny \textcolor{blue}{$1$}}] at (0, 0)  (a) {};

          \node[empty,pin={[pin distance=0.1cm, pin edge={-}] 135:\tiny \textcolor{blue}{$\tfrac{1}{3}$}}] at (-3, -1)  (aa) {$e$};
          \node[empty,pin={[pin distance=0.1cm, pin edge={-}] 45:\tiny \textcolor{blue}{$\tfrac{2}{3}$}}] at (3, -1)  (ab) {};

          \node[empty,pin={[pin distance=0.1cm, pin edge={-}] 135:\tiny \textcolor{blue}{$\tfrac{1}{12}$}}] at (-4.5, -2)  (aaa) {};
          \node[empty,pin={[pin distance=0.1cm, pin edge={-}] 45:\tiny \textcolor{blue}{$\tfrac{1}{4}$}}] at (-1.5, -2)  (aab) {};
        
          \draw (a) edge[->] node[above] {$\textcolor{red}{\tfrac{1}{3}}$} (aa);
          \draw (a) edge[->] node[above] {$\textcolor{red}{\tfrac{2}{3}}$} (ab);
          \draw (aa) edge[->] node[above] {$\textcolor{red}{\tfrac{1}{4}}$} (aaa);
          \draw (aa) edge[->] node[above] {$\textcolor{red}{\tfrac{3}{4}}$} (aab);
      \end{tikzpicture}
      \caption{Example tree $\F{T}'_2$ with a step on or below position $e$ in the left node}
      \label{fig:example_ct_witness_proof_after_cut_with_prob}
	\end{figure}
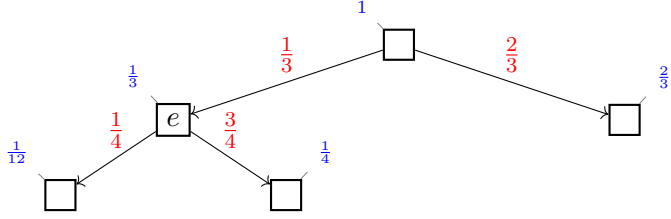
	\begin{figure}
        \centering
        \begin{subfigure}[b]{0.4\textwidth}
            \centering
            \begin{tikzpicture}
				\tikzstyle{adam}=[circle,thick,draw=black!100,fill=white!100,minimum size=3mm]
				\tikzstyle{empty}=[shape=rectangle,draw=black!100,thick,minimum size=4mm]
				\tikzstyle{realempty}=[shape=rectangle,draw=black!00,thick,minimum size=4mm]
				
				\node[empty,pin={[pin distance=0.1cm, pin edge={-}] 135:\tiny \textcolor{blue}{$1$}}] at (0, 0)  (a) {};

				\node[empty] at (-1.5, -1)  (aa) {$e$};
				\node[empty,pin={[pin distance=0.1cm, pin edge={-}]
                                    45:\tiny \textcolor{blue}{$\tfrac{2}{3}$}}] at (1.5,
                                -1)  (ab) {}; 

				\node[realempty] at (-3, 0)  (prop1) {$\tfrac{1}{4}$:};

				\node[empty,pin={[pin distance=0.1cm, pin edge={-}] 135:\tiny \textcolor{blue}{$\tfrac{1}{3}$}}] at (-2, -2)  (aaa) {};
				\node[empty] at (-1, -2)  (aab) {};
			
				\draw (a) edge[->] (aaa);
				\draw (a) edge[->] (ab);
			\end{tikzpicture}
            \caption{$\F{t}_{1}$}
        \end{subfigure}
        \hspace{70px}
        \begin{subfigure}[b]{0.4\textwidth}
            \centering
            
            \begin{tikzpicture}
				\tikzstyle{adam}=[circle,thick,draw=black!100,fill=white!100,minimum size=3mm]
				\tikzstyle{empty}=[shape=rectangle,draw=black!100,thick,minimum size=4mm]
				\tikzstyle{realempty}=[shape=rectangle,draw=black!00,thick,minimum size=4mm]
				
				\node[empty,pin={[pin distance=0.1cm, pin edge={-}] 135:\tiny \textcolor{blue}{$1$}}] at (0, 0)  (a) {};

				\node[empty] at (-1.5, -1)  (aa) {$e$};
				\node[empty,pin={[pin distance=0.1cm, pin edge={-}] 45:\tiny \textcolor{blue}{$\tfrac{2}{3}$}}] at (1.5, -1)  (ab) {};

				\node[realempty] at (-3, -0)  (prop1) {$\tfrac{3}{4}$:};
	
				\node[empty] at (-2, -2)  (aaa) {};
				\node[empty,pin={[pin distance=0.1cm, pin edge={-}] 45:\tiny \textcolor{blue}{$\tfrac{1}{3}$}}] at (-1, -2)  (aab) {};
			
				\draw (a) edge[->] (aab);
				\draw (a) edge[->] (ab);
			\end{tikzpicture}
            \caption{$\F{t}_{2}$}
        \end{subfigure}
        \caption{The set $\mathbb{T}_2 =  \{(\tfrac{1}{4},\F{t}_1), (\tfrac{3}{4}, \F{t}_2)\}$}
		\label{fig:example_ct_witness_proof_after_cut_corr_trees_in_split}
    \end{figure}
	
	\medskip
        \pagebreak[3]
        
    \noindent
	\textbf{\underline{Constructing $\mathbb{T}_H$}}

  \noindent
  To construct $\mathbb{T}_H$, we
  repeatedly remove all nodes of $N_1$ from the tree $\F{T}'_H$ 
  and create a set $M$ inductively that satisfies the following properties.
	\begin{enumerate}[label=(Ind-\arabic{enumi})]
		\item\label{itm:const-induction-1} $\sum_{(p_{\F{t}},\F{t}) \in M} p_{\F{t}} = 1$.
		\item\label{itm:const-induction-2} For all $(p_{\F{t}},\F{t}) \in M$ and
                  all inner nodes $x \in N^{\F{t}}$ with successors $y_1, \ldots, y_k$ in $\F{t}$,
                  the edge relation either represents a valid rewrite step on or
                  below a position $1 \leq j \leq e-1$, or the subterms
on or below the positions $1 \leq j \leq e-1$
                  remain the same, 
    but the probabilities of the successors still sum up to the probability of node $x$, 
    i.e., $\sum_{j=1}^{k} p_{y_j}^{\F{t}} = p_{x}^{\F{t}}$.
		\item\label{itm:const-induction-3} For all $x \in \ctleaf^{\F{T}'_H} \cup N_2$
    we have $p_x^{\F{T}'_H} = \sum_{(p_{\F{t}},\F{t}) \in M} p_{\F{t}} \cdot p_x^{\F{t}}$.
	\end{enumerate}
  We stop the construction once every $\F{t} \in M$ is a valid $\tored{e-1}{}{}$-RST,
  i.e., once we have
  removed all nodes
  where the terms do not change as we would have performed a rewrite step on or below position $e$.
  In the end, we result in a set $M$ that satisfies \ref{itm:const-induction-1}, \ref{itm:const-induction-2}, and \ref{itm:const-induction-3}, 
  and  $\F{t}$ is a valid $\tored{e-1}{}{}$-RST for every $(p_{\F{t}},\F{t}) \in M$.
	Then $M$ is our desired set, and we define $\mathbb{T}_H = M$.
        Moreover, in the end, \Cref{itm:const-prop-1} and \Cref{itm:const-prop-2} follow from
        \ref{itm:const-induction-1} and
        \ref{itm:const-induction-3}.
        
	We start with $M = \{(1,\F{T}'_H)\}$.
	Here, clearly all  three properties
        \ref{itm:const-induction-1}-\ref{itm:const-induction-3} are satisfied.
	Now, assume that there is still a pair $(p_{\F{l}}, \F{l}) \in M$ such that $\F{l}$ 
  contains a node $v \in N_1$ that is not a leaf in $\F{l}$.
	We will now split $\F{l}$ into multiple trees that do not contain $v$ 
  anymore but move directly to one of its children, as illustrated in \Cref{fig:split-construction-definition}.

	First, assume that $v$ is not the root of $\F{l}$.
	Let $vE^\F{l} = \{w_1, \ldots, w_m\}$ be the direct successors of $v$ in $\F{l}$ 
    and let $z$ be the predecessor of $v$ in $\F{l}$.
	Instead of one tree $\F{l}$ with the edges $(z,v), (v,w_1), \ldots, (v,w_m)$, 
    we split the tree into $m$ different trees $\F{l}_1, \ldots, \F{l}_m$ 
    such that for every $1 \leq h \leq m$, the tree $\F{l}_h$ contains a direct edge from $z$ to $w_h$.
	In addition to that, the unreachable nodes are removed, 
    and we also have to adjust the probabilities of all (not necessarily direct) successors of $w_h$ 
    (including $w_h$ itself) in $\F{l}_h$.
	\begin{figure}
        \centering
		\begin{tikzpicture}
			\tikzstyle{adam}=[circle,thick,draw=black!100,fill=white!100,minimum size=3mm]
			\tikzstyle{empty}=[shape=rectangle,draw=black!100,thick,minimum size=4mm]
			\tikzstyle{realempty}=[shape=rectangle,draw=black!00,thick,minimum size=4mm]
			
			\node[realempty] at (-3, 0)  (a) {$z$};

			\node[realempty] at (-3, -1)  (aa) {$v$};

			\node[realempty] at (-4, -2)  (aaa) {$w_1$};
			\node[realempty] at (-3, -2)  (aab) {$\ldots$};
			\node[realempty] at (-2, -2)  (aac) {$w_m$};

			\node[realempty] at (0, -1)  (mid) {\LARGE $\leadsto$};
			
			\node[realempty] at (3, 0)  (x) {$z$};

			\node[realempty] at (3, -1)  (xx) {$v$};

			\node[realempty] at (2, -2)  (xxx) {$w_1$};
			\node[realempty] at (3, -2)  (xxy) {$\ldots$};
			\node[realempty] at (4, -2)  (xxz) {$w_m$};

			\node[realempty] at (5.5, -1)  (dots) {$\ldots$};
			
			\node[realempty] at (8, 0)  (tx) {$z$};

			\node[realempty] at (8, -1)  (txx) {$v$};

			\node[realempty] at (7, -2)  (txxx) {$w_1$};
			\node[realempty] at (8, -2)  (txxy) {$\ldots$};
			\node[realempty] at (9, -2)  (txxz) {$w_m$};

			\draw (a) edge[->] (aa);
			\draw (aa) edge[->] (aaa);
			\draw (aa) edge[->] (aab);
			\draw (aa) edge[->] (aac);

			\draw (x) edge[->] (xxx);

			\draw (tx) edge[->] (txxz);
		\end{tikzpicture}
		\caption{Skipping inner node $v \in N_2$ to create $m$ different trees}
		\label{fig:split-construction-definition}
    \end{figure}
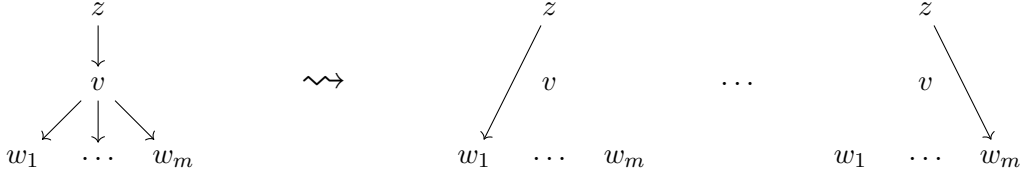
More precisely, we set $\F{l}_h = (N^{\F{l}_h}, E^{\F{l}_h}, L^{\F{l}_h})$, with 
	\begin{align*}
		N^{\F{l}_h} &= (N^\F{l} \setminus v(E^\F{l})^*) \cup w_h(E^\F{l})^*\\
		E^{\F{l}_h} &= (E^\F{l} \setminus (v(E^\F{l})^* \times v(E^\F{l})^*)) \cup \{(z,w_h)\} \cup (E^\F{l} \cap (w_h(E^\F{l})^* \times w_h(E^\F{l})^*))
	\end{align*}
	Furthermore, let $p_h = \tfrac{p_{w_h}^{\F{l}}}{p_{v}^{\F{l}}}$.
	Then, the labeling is defined by
	\[
		L^{\F{l}_h}(x) = \begin{cases}
			(\frac{1}{p_h} \cdot p_{x}^{\F{l}},t_x^{\F{l}}) & \text{ if } x \in w_h(E^\F{l})^*\\
			(p_x^{\F{l}},t_x^{\F{l}}) & \text{ otherwise}
		\end{cases}
	\]
	Note that 
	\begin{equation} \label{p_h sum is one}
		\sum_{1 \leq h \leq m} p_h = \sum_{1 \leq h \leq m} \tfrac{p_{w_h}^{\F{l}}}{p_v^{\F{l}}} = \tfrac{1}{p_v^{\F{l}}} \cdot \sum_{1 \leq h \leq m} p_{w_h}^{\F{l}} \stackrel{\Cref{itm:const-induction-2}}{=} \tfrac{1}{p_v^{\F{l}}} \cdot p_v^{\F{l}} = 1
	\end{equation}

	If $v$ is the root of $\F{l}$, then we use the same construction, but we have no predecessor $z$ of $v$ and directly start with the node $w_h$ as the new root.
	Hence, we have to use the node set $N^{\F{l}_h} = w_h(E^\F{l})^*$ and edge relation $E^{\F{l}_h} = (E \cap (w_h(E^\F{l})^* \times w_h(E^\F{l})^*))$ and the rest stays the same.

	In the end, we set 
	\[
		M' = M \setminus \{(p_\F{l},\F{l})\} \cup \{(p_\F{l} \cdot p_1,\F{l}_1), \ldots, (p_\F{l} \cdot p_m,\F{l}_m)\}
	\]
	This construction is exactly what we did in
        \cref{fig:example_ct_witness_proof_after_cut_with_prob} and
        \cref{fig:example_ct_witness_proof_after_cut_corr_trees_in_split} for the only
        node where we would have rewritten on or below position $e$.
	It remains to prove that
 \ref{itm:const-induction-1}, \ref{itm:const-induction-2}, and \ref{itm:const-induction-3}
 are still satisfied for $M'$.
	\begin{description}
		\item[\ref{itm:const-induction-1}]
		We have
		\[
			\begin{array}{ccl}
				&   & \sum_{(p_{\F{t}},\F{t}) \in M'} \, p_{\F{t}} \\
				& = & \sum_{(p_{\F{t}},\F{t}) \in M \setminus
                            \{(p_\F{l},\F{l})\}} \, p_{\F{t}} + \sum_{(p_{\F{t}},\F{t})
                            \in \{(p_{\F{l}} \cdot p_1, \F{l}_1), \ldots, (p_{\F{l}} \cdot
                            p_m,\F{l}_m)\}} \, p_{\F{t}} \\
				& = & \sum_{(p_{\F{t}},\F{t}) \in M \setminus
                            \{(p_\F{l},\F{l})\}} \, p_{\F{t}} + \sum_{1 \leq h \leq m} \, p_\F{l} \cdot p_h \\
				& = & \sum_{(p_{\F{t}},\F{t}) \in M \setminus
                            \{(p_\F{l},\F{l})\}} \, p_{\F{t}} + p_\F{l} \cdot \sum_{1 \leq
                            h \leq m} \, p_h \\
                \textcolor{blue}{\text{(by \eqref{p_h sum is one})}} & = &
                \sum_{(p_{\F{t}},\F{t}) \in M \setminus \{(p_\F{l},\F{l})\}} \, p_{\F{t}} + p_\F{l} \cdot 1\\ 
				& = & \sum_{(p_{\F{t}},\F{t}) \in M \setminus
                  \{(p_\F{l},\F{l})\}} \, p_{\F{t}} + p_\F{l}\\
				& = & \sum_{(p_{\F{t}},\F{t}) \in M} \, p_{\F{t}}\\
				& \stackrel{IH}{=} & 1
			\end{array}
		\]

		\item[\ref{itm:const-induction-2}]
		Let $1 \leq h \leq m$.
    We only split nodes whenever we would have
    rewritten on or below position $e$.
    Hence, we only have to prove that in this case the probabilities of the successors for a node add up to the probability for the node itself with our new labeling.
		We constructed $\F{l}_h$ by skipping the node $v$ and directly moving from $z$ to $w_h$ (or starting with $w_h$ if $v$ was the root node).
		Hence, $(N^{\F{l}_h}, E^{\F{l}_h})$ is still a finitely branching tree.
		Let $x \in N^{\F{l}_h}$ with $xE^{\F{l}_h} \neq \emptyset$.
		If $x \in w_h(E^\F{t})^*$, then $xE^{\F{l}_h} = xE^{\F{l}}$ and thus
		\[
			\sum_{y \in xE^{\F{l}_h}} p_y^{\F{l}_h}
			= \sum_{y \in xE^{\F{l}_h}} \tfrac{1}{p_h} \cdot p_{y}^{\F{l}}
			= \tfrac{1}{p_h} \cdot \sum_{y \in xE^{\F{l}_h}} p_{y}^{\F{l}}
			= \tfrac{1}{p_h} \cdot \sum_{y \in xE^{\F{l}}} p_{y}^{\F{l}}
			= \tfrac{1}{p_h} \cdot p_{x}^{\F{l}}
			= p_x^{\F{l}_h}
		\]
		If $v$ was not the root and $x = z$, then $zE^{\F{l}_h} = (zE^{\F{l}} \setminus \{v\}) \cup \{w_h\}$ and thus
		\begin{align*}
			\sum_{y \in zE^{\F{l}_h}} p_y^{\F{l}_h}
			&= \sum_{y \in (zE^{\F{l}} \setminus \{v\}) \cup \{w_h\}} p_{y}^{\F{l}_h}
			= \sum_{y \in (zE^{\F{l}} \setminus \{v\})} p_{y}^{\F{l}_h} + p_{w_h}^{\F{l}_h}
			= \sum_{y \in (zE^{\F{l}} \setminus \{v\})} p_{y}^{\F{l}} + \tfrac{1}{p_h} \cdot p_{w_h}^{\F{l}}\\
			&= \sum_{y \in (zE^{\F{l}} \setminus \{v\})} p_{y}^{\F{l}} + \tfrac{p_{v}^{\F{l}}}{p_{w_h}^{\F{l}}} \cdot p_{w_h}^{\F{l}}
			= \sum_{y \in (zE^{\F{l}} \setminus \{v\})} p_{y}^{\F{l}} + p_{v}^{\F{l}}
			= \sum_{y \in zE^{\F{l}}} p_{y}^{\F{l}}
			= p_z^{\F{l}}	= p_z^{\F{l}_h}
		\end{align*}
		Otherwise, we have $x \in N^{\F{l}} \setminus (v(E^\F{t})^* \cup \{z\})$.
		This means $p_y^{\F{l}_h} = p_y^{\F{l}}$ for all $y \in xE^{\F{l}_h}$ and $xE^{\F{l}_h} = xE^{\F{l}}$, and thus
		\[
			\sum_{y \in xE^{\F{l}_h}} p_y^{\F{l}_h} 
			= \sum_{y \in xE^{\F{l}_h}} p_{y}^{\F{l}} 
			= \sum_{y \in xE^{\F{l}}} p_{y}^{\F{l}} 
			= p_x^{\F{l}}
			= p_x^{\F{l}_h}
		\]
		For the last property
($p_x^{\F{l}}	= p_x^{\F{l}_h}$), note that if $v$ is not the root in $\F{l}$, then the
                root and its labeling did not change, so that we have
                $p_{\ctroot^{\F{l}_h}}^{\F{l}_h} = p_{\ctroot^{\F{l}}}^{\F{l}} = 1$, where
                $\ctroot^{\F{l}_h} = \ctroot^{\F{l}}$ is the root of $\F{l}_h$ and $\F{l}$.
		If $v$ was the root, then $w_h$ is the new root with 
		\[
			p_{w_h}^{\F{l}_h} = \frac{1}{p_h} \cdot p_{w_h}^{\F{l}} =
                        \tfrac{p_v^{\F{l}}}{p_{w_h}^{\F{l}}} \cdot p_{w_h}^{\F{l}} =
                        p_{v}^{\F{l}} = 1
		\]
		
		\item[\ref{itm:const-induction-3}]
		Let $x \in \ctleaf^{\F{T}'_H} \cup N_2$.
		If we have $x \not\in N^{\F{l}}$, then also $x \not\in N^{\F{l}_h}$ for all $1 \leq h \leq m$ and thus
		\[
			\begin{array}{cl}
				& \sum_{(p_{\F{t}},\F{t}) \in M'} \, p_{\F{t}} \cdot p_x^{\F{t}} \\
				= & \sum_{(p_{\F{t}},\F{t}) \in M \setminus
                                  \{(p_\F{l},\F{l})\} \cup \{(p_\F{l} \cdot p_1,\F{l}_1),
                                  \ldots, (p_\F{l} \cdot p_m,\F{l}_m)\}} \, p_{\F{t}} \cdot p_x^{\F{t}} \\
				= & \sum_{(p_{\F{t}},\F{t}) \in M \setminus
                                  \{(p_\F{l},\F{l})\}} \, p_{\F{t}} \cdot p_x^{\F{t}} +
                                \sum_{(p_{\F{t}},\F{t}) \in \{(p_\F{l} \cdot p_1,\F{l}_1),
                                  \ldots, (p_\F{l} \cdot p_m,\F{l}_m)\}} \, p_{\F{t}} \cdot p_x^{\F{t}} \\
				= & \sum_{(p_{\F{t}},\F{t}) \in M \setminus
                                  \{(p_\F{l},\F{l})\}} \, p_{\F{t}} \cdot p_x^{\F{t}} +
                                \sum_{(p_{\F{t}},\F{t}) \in \{(p_\F{l} \cdot p_1,\F{l}_1),
                                  \ldots, (p_\F{l} \cdot p_m,\F{l}_m)\}} \, p_{\F{t}} \cdot 0 \\
				= & \sum_{(p_{\F{t}},\F{t}) \in M \setminus
                                  \{(p_\F{l},\F{l})\}} \, p_{\F{t}} \cdot p_x^{\F{t}} \\
				= & \sum_{(p_{\F{t}},\F{t}) \in M \setminus
                                  \{(p_\F{l},\F{l})\}} \, p_{\F{t}} \cdot p_x^{\F{t}} + p_{\F{t}} \cdot 0\\
				= & \sum_{(p_{\F{t}},\F{t}) \in M \setminus
                                  \{(p_\F{l},\F{l})\}} \, p_{\F{t}} \cdot p_x^{\F{t}} + p_{\F{t}} \cdot p_x^{\F{l}}\\
				= & \sum_{(p_{\F{t}},\F{t}) \in M} \, p_{\F{t}} \cdot p_x^{\F{t}}\\
				\stackrel{IH}{=} & p_x^{\F{T}'_H}
			\end{array}
		\]
		If we have $x \in N^{\F{l}}$ and $x \not\in v(E^{\F{l}})^*$, then $x \in
                N^{\F{l}_h}$ for all $1 \leq h \leq m$ and $p_x^{\F{l}} = p_x^{\F{l}_h}$. Hence
		\[
			\begin{array}{cl}
				& \sum_{(p_{\F{t}},\F{t}) \in M'} \, p_{\F{t}} \cdot p_x^{\F{t}} \\
				= & \sum_{(p_{\F{t}},\F{t}) \in M \setminus
                                  \{(p_\F{l},\F{l})\} \cup \{(p_\F{l} \cdot p_1,\F{l}_1),
                                  \ldots, (p_\F{l} \cdot p_m,\F{l}_m)\}} \, p_{\F{t}} \cdot p_x^{\F{t}} \\
				= & \sum_{(p_{\F{t}},\F{t}) \in M \setminus
                                  \{(p_{\F{l}},\F{l})\}} \, p_{\F{t}} \cdot p_x^{\F{t}} +
                                \sum_{(p_{\F{t}},\F{t}) \in \{(p_{\F{l}} \cdot
                                  p_1,{\F{l}}_1), \ldots, (p_{\F{l}} \cdot
                                  p_m,{\F{l}}_m)\}} \, p_{\F{t}} \cdot p_x^{\F{t}} \\
				= & \sum_{(p_{\F{t}},\F{t}) \in M \setminus
                                  \{(p_{\F{l}},\F{l})\}} \, p_{\F{t}} \cdot p_x^{\F{t}} +
                                \sum_{1 \leq h \leq m} \, p_{\F{l}} \cdot p_h \cdot p_x^{\F{l}_h} \\
				= & \sum_{(p_{\F{t}},\F{t}) \in M \setminus
                                  \{(p_{\F{l}},\F{l})\}} \, p_{\F{t}} \cdot p_x^{\F{t}} +
                                \sum_{1 \leq h \leq m} \, p_{\F{l}} \cdot p_h \cdot p_x^{\F{l}} \\
				= & \sum_{(p_{\F{t}},\F{t}) \in M \setminus
                                  \{(p_{\F{l}},\F{l})\}} \, p_{\F{t}} \cdot p_x^{\F{t}} +
                                p_{\F{l}} \cdot p_x^{\F{l}} \cdot \sum_{1 \leq h \leq m}
                                \, p_h\\
				= & \sum_{(p_{\F{t}},\F{t}) \in M \setminus
                                  \{(p_{\F{l}},\F{l})\}} \, p_{\F{t}} \cdot p_x^{\F{t}} + p_{\F{l}} \cdot p_x^{\F{l}} \cdot 1\\
				= & \sum_{(p_{\F{t}},\F{t}) \in M} \, p_{\F{t}} \cdot p_x^{\F{t}}\\
				\stackrel{IH}{=} & p_x^{\F{T}'_H}
			\end{array}
		\]
		Otherwise we have $x \in N^{\F{l}}$ and $x \in v(E^{\F{l}})^*$.
		This means that we have $x \in N^{\F{l}_h}$ and $x \in w_h(E^{\F{l}})^*$ for some $1 \leq h \leq m$ and $x \not\in N^{\F{l}_{h'}}$ for all $h' \neq h$.
		Furthermore, we have $p_x^{\F{l}_h} = \frac{1}{p_h} \cdot p_x^{\F{l}}$, and hence
		\[
			\begin{array}{cl}
				& \sum_{(p_{\F{t}},\F{t}) \in M'} \, p_{\F{t}} \cdot p_x^{\F{t}} \\
				= & \sum_{(p_{\F{t}},\F{t}) \in M \setminus
                                  \{(p_\F{l},\F{l})\} \cup \{(p_\F{l} \cdot p_1,\F{l}_1),
                                  \ldots, (p_\F{l} \cdot p_m,\F{l}_m)\}} \, p_{\F{t}} \cdot p_x^{\F{t}} \\
				= & \sum_{(p_{\F{t}},\F{t}) \in M \setminus
                                  \{(p_{\F{t}},\F{t})\}} \, p_{\F{t}} \cdot p_x^{\F{t}} +
                                \sum_{(p_{\F{t}},\F{t}) \in \{(p_\F{l} \cdot p_1,\F{l}_1),
                                  \ldots, (p_\F{l} \cdot p_m,\F{l}_m)\}} \, p_{\F{t}} \cdot p_x^{\F{t}} \\
				= & \sum_{(p_{\F{t}},\F{t}) \in M \setminus
                                  \{(p_{\F{t}},\F{t})\}} \,  p_{\F{t}} \cdot p_x^{\F{t}} + p_\F{l} \cdot p_h \cdot p_x^{\F{l}_h} \\
				= & \sum_{(p_{\F{t}},\F{t}) \in M \setminus
                                  \{(p_{\F{t}},\F{t})\}} \, p_{\F{t}} \cdot p_x^{\F{t}} + p_\F{l} \cdot p_h \cdot \frac{1}{p_h} \cdot p_x^{\F{l}} \\
				= & \sum_{(p_{\F{t}},\F{t}) \in M \setminus
                                  \{(p_{\F{t}},\F{t})\}} \, p_{\F{t}} \cdot p_x^{\F{t}} + p_\F{l} \cdot p_x^{\F{l}}\\
				= & \sum_{(p_{\F{t}},\F{t}) \in M} \, p_{\F{t}} \cdot p_x^{\F{t}}\\
				\stackrel{IH}{=} & p_x^{\F{T}'_H}
			\end{array}
		\]
	\end{description}
  \medskip

  \noindent
	\textbf{\underline{3) Create the finitely-branching infinite tree of RSTs}}

    \noindent
    Next, we create a tree $\F{F}$ whose nodes are labeled with RSTs. More precisely, its
    nodes
    at depth $H$ 
    represent $\tored{e-1}{}{\PP}$-RSTs $\F{t} \in \mathbb{T}_{H}$ 
    that converge with a small enough probability,
    i.e., $\sum_{x \in \ctleaf^{\F{t}} \land x \in \ctleaf^{\F{T}'}} \, p_x^{\F{t}} < 1 - \alpha$.
    The root of $\F{F}$ is labeled with the subtree $\F{T}'_0$ of $\F{T}'$ that only
    consists of its root.
	  $\F{T}'_0$ is a finite subtree and 
    $\sum_{x \in \ctleaf^{\F{T}'_0} \land x \in \ctleaf^{\F{T}'}} \, p_x^{\F{T}'_0} = 0 \leq 1 - \alpha$, 
    since the root of $\F{T}'$ cannot be a leaf in $\F{T}'$.

	Let $H \in \IN$ with $H > 1$.
	For the tree $\F{F}$, we have a node at depth $H$ for every tree $\F{t} \in \mathbb{T}_{H}$ 
    such that $\sum_{x \in \ctleaf^{\F{t}} \land x \in \ctleaf^{\F{T}'}} \, p_x^{T} < 1 - \alpha$.
	We draw an edge from a node $X$ at depth $H-1$ to the node $Y$ at depth $H$ 
    if the corresponding RSTs in the labels are the same, or
    if the RST for node $Y$ is an extension of the RST for node $X$ 
    (i.e., we evaluate the leaves in the RST of $X$ further to obtain $Y$).
	
    Each $\mathbb{T}_{H}$ for every $H \in \IN$ is finite, so $\F{F}$ is finitely branching.
    Furthermore, there is a node in each layer of the tree $\F{F}$, since for every $H \in \IN$, 
    we can find an $(p_{\F{t}},\F{t}) \in \mathbb{T}_{H}$ 
    such that $\sum_{x \in \ctleaf^{\F{t}} \land x \in \ctleaf^{\F{T}'}} \, p_x^{\F{t}} < 1 - \alpha$, 
    see \eqref{eq:t-small-probability-every-H}.
    \medskip

    \noindent
	\textbf{\underline{4) Use König's Lemma}}

    \noindent
    Now we have an infinite tree $\F{F}$ that is finitely branching, 
    which means that the tree has an infinite path by König's Lemma.
	This path represents an $\tored{e-1}{}{\PP}$-RST that does not converge with probability $1$, 
    which is our desired contradiction.
    To see this, let $\F{t}_1, \F{t}_2, \ldots$ be the finite $\tored{e-1}{}{\PP}$-RSTs in
    the labels 
    of the nodes in the infinite path in $\F{F}$, where $\F{t}_i$ is a prefix of
    $\F{t}_{i+1}$ for all $i \geq 1$. Hence, we can  
 define the tree $\F{t}_{\lim} = \lim_{i \to \infty} \F{t}_i$.
    Furthermore, there exists no infinite path in $\F{F}$ such that the sequence $\F{t}_1, \F{t}_2, \ldots$
    eventually stays the same finite tree forever, due to the fact that there are no infinite paths in $\F{T}'$
    that eventually only contain nodes from $N_1$ (due to the application of the cutting
    lemma to create $\F{T}'$).
Finally, we have $|\F{t}_{\lim}| = \lim_{i \to \infty} \sum_{x \in \ctleaf^{\F{t}_i} \land
  x \in \ctleaf^{\F{T}'}} \, p_x^{\F{t}_i} < 1 - \alpha$ since $\sum_{x \in
  \ctleaf^{\F{t}_i} \land x \in \ctleaf^{\F{T}'}} \, p_x^{\F{t}_i} < 1 - \alpha$ holds for all $i \in \IN$.
\end{proof}

We also obtain a corresponding parallel execution lemma for $\SASTs$.
Note that the parallel execution lemma
does not hold for $\PASTs$, see \Cref{example:Past-binary-function}.

\begin{lem}[Parallel Execution Lemma for $\SASTs$]\label{lemma:parallel-execution-lemma-SAST}
    Let $\PP$ be a PTRS and $s \in \IS$.
    Furthermore, let $q_1, \ldots, q_n \in \TT$ be terms and $C_1, \ldots, C_n \in \IR$ constants 
    such that for every $1 \leq i \leq n$ and every $\sto_{\PP}$-RST $\F{T}_i$ that starts with $(1:q_i)$
    we have $\edl(\F{T}_i) \leq C_i$. 
    Then, every $\sto_{\PP}$-RST $\F{T}$ that starts with $(1:\tc(q_1, \ldots, q_n))$ for some symbol $\tc$, 
    where we do not use rewrite steps at the root position,
    has a finite expected derivation length,
    which is bounded by $\sum_{i=0}^{n} C_i \in \IR$, i.e., $\edl(\F{T}) \leq \sum_{i=0}^{n} C_i$.
\end{lem}

\begin{proof}
    We use the same induction on $e$ as in the proof for the parallel execution lemma for $\ASTs$ 
    (\Cref{lemma:parallel-execution-lemma-AST}).
    Note that we now need to prove an upper bound (the expected derivation length is smaller than $\sum_{i=0}^{n} C_i$),
    while for $\ASTi$ we had to prove a lower bound 
    (probability of convergence is (at least) $1$).
    Therefore, we do not use the cutting lemma and the proof is a bit different in the induction step, 
    while the base case is again trivial.

    In the induction step, we assume that the statement holds for $e-1$, 
    i.e., there exists a bound $C_{e-1}' = \sum_{i=0}^{e-1} C_i \in \IR$ 
    such that for every $\tored{e-1}{}{\PP}$-RST $\F{T}$ that starts with $(1:\tc(q_1, \ldots, q_n))$
    we have $\edl(\F{T}) \leq C'_{e-1}$.
    Now consider an arbitrary $\tored{e}{}{\PP}$-RST $\F{T}$.
    We prove that its expected derivation length is bounded by $C'_{e-1} + C_{e}$.

    Let $H \in \IN$ and let $\F{T}_{H}$ be the tree consisting of the first $H$ layers
    of $\F{T}$.
    We partition the inner nodes of $\F{T}_{H}$ into $N_1$ and $N_2$, analogous to the
    proof of  the parallel execution lemma for $\ASTs$ 
    (\Cref{lemma:parallel-execution-lemma-AST}). As in that proof, for each $H \in \IN$
    we split the tree $\F{T}_{H}$ 
    into multiple (finitely many) $\tored{e-1}{}{\PP}$-RSTs
of height at most $H$
    that all start with 
    $(1:\tc(q_1, \ldots, q_n))$. This again leads to a set of pairs $\mathbb{T}_H$ such that:

    \begin{enumerate}[label=(Prop-\arabic{enumi})]
      \item\label{itm:const-prop-1-SAST} $\sum_{(p_{\F{t}},\F{t}) \in \mathbb{T}_H} p_{\F{t}} = 1$
      
      \item\label{itm:const-prop-2-SAST} For all $x \in N_2$ we have
        $p_x^{\F{T}_H} = \sum_{(p_{\F{t}},\F{t}) \in \mathbb{T}_H} p_{\F{t}} \cdot p_x^{\F{t}}$
	  \end{enumerate}
    using the same notation as before.
    Furthermore, we can use the same construction to create a set $\mathbb{T}_H^{e}$ containing 
    (finitely many) pairs of probabilities and $\sto_{\PP}$-RSTs that start with $(1:q_e)$ of height at most $H$,
    by simply switching when to split the tree and when to perform the rewrite step.
    To be precise, we split the tree if we encounter a rewrite step on or below a position $1
    \leq j \leq e-1$,
    and perform the rewrite step if it is on or below position $e$.
    Again, we get
    \begin{enumerate}[label=(Prop-\arabic{enumi}-e)]
      \item\label{itm:const-prop-1-SAST-e} $\sum_{(p_{\F{t}},\F{t}) \in \mathbb{T}_H^e} p_{\F{t}} = 1$
      
      \item\label{itm:const-prop-2-SAST-e} For all $x \in N_1$ we have
        $p_x^{\F{T}_H} = \sum_{(p_{\F{t}},\F{t}) \in \mathbb{T}_H^e} p_{\F{t}} \cdot p_x^{\F{t}}$
	  \end{enumerate}

    Now we can bound the expected runtime after height $H$ by $C'_{e-1} + C_{e}$ for every $H \in \IN$,
    as we have
    \[
		\begin{array}{ccl}
            &  & \edl(\F{T}_H) \\
            & = & \sum_{x \in N^{\F{T}_H}\setminus\ctleaf^{\F{T}_H}} \, p_x^{\F{T}_H} \\
      \text{\textcolor{blue}{(since $N_1 \uplus N_2 = N^{\F{T}_H} \setminus
          \ctleaf^{\F{T}_H}$)}}      & = &
      \sum_{x \in N_1} \, p_x^{\F{T}_H} \; + \; \sum_{x \in N_2} \, p_x^{\F{T}_H} \\
			\text{\textcolor{blue}{(by \ref{itm:const-prop-2-SAST-e} and \ref{itm:const-prop-2-SAST})}} & = & 
            \sum_{x \in N_1}  \sum_{(p_{\F{t}},\F{t}) \in \mathbb{T}_H^e} \, p_{\F{t}} \cdot
            p_x^{\F{t}}\\
            & & + \; \sum_{x \in N_2}  \sum_{(p_{\F{t}},\F{t}) \in \mathbb{T}_H} \, p_{\F{t}} \cdot p_x^{\F{t}}\\
			\text{\textcolor{blue}{(as in \eqref{eq:array-AST-parallel})}} & =
                        & \sum_{(p_{\F{t}},\F{t}) \in \mathbb{T}_{H}^e} \, p_{\F{t}} \cdot
                        \sum_{x \in N^{\F{t}} \setminus \ctleaf^{\F{t}}} \, p_x^{\F{t}} \\
            &  & + \; \sum_{(p_{\F{t}},\F{t}) \in \mathbb{T}_{H}} \, p_{\F{t}} \cdot
                        \sum_{x \in N^{\F{t}} \setminus \ctleaf^{\F{t}}} \, p_x^{\F{t}} \\
			\text{\textcolor{blue}{(by assumption and induction hypothesis)}}
                        & \leq & \sum_{(p_{\F{t}},\F{t}) \in \mathbb{T}_{H}^e} \, p_{\F{t}} \cdot C'_{e-1} \\
            &  & + \; \sum_{(p_{\F{t}},\F{t}) \in \mathbb{T}_{H}} \, p_{\F{t}} \cdot C_{e} \\
			& = & C'_{e-1} \cdot \sum_{(p_{\F{t}},\F{t}) \in \mathbb{T}_{H}^e}
                        \, p_{\F{t}} \\
            &  & + \; C_n \cdot \sum_{(p_{\F{t}},\F{t}) \in \mathbb{T}_{H}} \, p_{\F{t}} \\
			\text{\textcolor{blue}{(by \ref{itm:const-prop-1-SAST-e} and \ref{itm:const-prop-1-SAST})}} & = & C'_{e-1} \cdot 1 + C_e \cdot 1 \\
			& = & C'_{e-1} + C_e
		\end{array}
	\]
    This gives us $\edl(\F{T}) = \lim_{H \to \infty} \edl(\F{T}_H) \leq C'_{e-1} + C_{e}$,
    as desired.
\end{proof}

With the parallel execution lemmas for $\ASTi$ and $\SASTi$
we can finally prove our modularity results. We start with $\ASTi$.

\ModiASTDisjoint*

\begin{proof}
  The direction ``$\Longrightarrow$'' is trivial and thus, we only prove
  ``$\Longleftarrow$''. So let  $\PP = \PP^{(1)} \cup \PP^{(2)}$, where
  both $\AST[\ito_{\PP^{(1)}}]$ and $\AST[\ito_{\PP^{(2)}}]$ hold.

      By \Cref{lemma:PTRS-AST-single-start-term}, for $\ASTi$ it suffices to regard only rewrite sequences that start
    with multi-distributions of the form $\{1:t\}$.
    Thus, we show by
    structural induction on the term structure that for every 
    $t \in \TSet{\Sigma^{\PP}}{\VSet}$, all
    $\iliftto_{\PP}$-rewrite sequences starting with $\{1:t\}$
    converge with probability 1.

      If $t \in \VSet$, then $t$ is in normal form.
    If $t$ is a constant, then w.l.o.g.\ let $t \in \PP^{(1)}$.
    Since we have $\AST[\ito_{\PP^{(1)}}]$, $\{1:t\}$ cannot start an infinite $\iliftto_{\PP}$-rewrite sequence that converges with probability~$<1$.

    Now we regard the induction step, and consider the case where
      $t = f(q_1, \ldots, q_n)$.
    By the induction hypothesis, every $\ito_{\PP}$-RST $\F{T}$ 
    that starts with $(1:q_i)$ for some $1 \leq i \leq k$ converges with probability $1$.
    Let $\F{T}$ be a fully evaluated $\ito_{\PP}$-RST 
    that starts with $(1:f(q_1, \ldots, q_n))$.
	We prove that for every $0 < \delta < 1$ we can find an $M \in \IN$ 
	such that $\sum_{x \in \ctleaf^{\F{T}}, d(x) \leq N}\, p_x^{\F{T}} > 1 - \delta$, 
    which means that $|\F{T}| = \lim_{k \to \infty} \sum_{x \in \ctleaf^{\F{T}}, d(x) \leq k} \, p_x^{\F{t}} = 1$.
    (Recall that for proving $\ASTi$, it suffices to consider only fully
    evaluated RSTs, see \Cref{cor:Characterizing AST with RSTs}.)

	Let $0 < \delta < 1$.
    By the induction hypothesis and the
    parallel execution lemma (\Cref{lemma:parallel-execution-lemma-AST}),
    the maximal subtree $\F{T}_{\lnot \varepsilon}$  of $\F{T}$ that starts
    with $\F{T}$'s root node and
    only performs rewrite steps at  non-root positions   converges with probability
    $1$.
    Since $|\F{T}_{\lnot \varepsilon}| = 1$,
    there exists a depth $H$ such that $\sum_{x \in \ctleaf^{\F{T}_{\lnot
    \varepsilon}}, d(x) \leq H} \, p_x^{\F{T}} \geq  \beta$ with $\beta = 
    \sqrt{1-\delta}$. 
    Let $\F{T}_H$ be the tree resulting from
    cutting the tree $\F{T}$ at depth $H$, 
    and let $Z^{\F{T}_H}$ be the set of leaves in $\F{T}_H$ that were already leaves in $\F{T}_{\lnot \varepsilon}$.
    So we have $\sum_{x \in Z^{\F{T}_H}} \, p_x^{\F{T}} 
    = \sum_{x \in \ctleaf^{\F{T}_{\lnot \varepsilon}}, d(x) \leq H} \, p_x^{\F{T}} 
    \geq  \beta$.   
	For each leaf $x \in Z^{\F{T}_H}$, every proper subterm of $t_x$ is in normal form
        w.r.t.\ $\PP$.
This is due to the fact that we use an innermost rewrite strategy and the RST $\F{T}$ is
fully evaluated.
 	Let us look at the induced subtree $\F{T}_x$ of $\F{T}$ that starts at $x$ (i.e., $\F{T}_x = \F{T}[xE^*]$).
    W.l.o.g., let the root symbol $f$ of $t_x$ be from $\Sigma^{\PP^{(1)}}$.
    Let $\F{T}^{(1)}$ result from $\F{T}_x$ by labeling the root with 
    $t_x'$,
where  $t_x'$ results from $t_x$ by
replacing all its maximal (i.e., topmost)
subterms with root symbols from $\Sigma^{\PP^{(2)}}$ 
by fresh variables (using the same variable for the same subterm).
Obviously, since both PTRSs have disjoint signatures
and all proper subterms of $t_x$ are in normal form,
we can still apply the same rules as in $\F{T}_x$,
such that $\F{T}^{(1)}$ is
an $\PP^{(1)}$-RST  with $|\F{T}^{(1)}| = |\F{T}_x|$.
Since every $\PP^{(1)}$-RST converges with probability $1$,
we obtain $1 = |\F{T}^{(1)}| = |\F{T}_x|$.

For any node $y$ of $\F{T}_x$, 
let $\ctdepth_{x}(y)$ be the depth of the node $y$ in the tree $\F{T}_x$.
	Moreover, let $D_{x}(k) = \{y \in N^{\F{T}_x} \mid \ctdepth_{x}(y) \leq k\}$ be the set of nodes in $\F{T}_x$ that have a depth of at most $k$.
	Since $|\F{T}_x|=1$ and
          $|\_|$ increases weakly monotonically
        with the depth of the tree, we can find an $M_{x} \in \IN$ such that $|\F{T}_x[D_{x}(M_{x})]| >  \beta$.
	
	Note that $Z^{\F{T}_H}$ is finite and thus $M_{\max} = \max\{M_x \mid x \in Z^{\F{T}_H}\}$ exists.
	Now, for $N = H + M_{\max}$ we finally have 
	\[
		\begin{array}{rl}
			& \sum_{x \in \ctleaf^{\F{T}}, d^{\F{T}}(x) \leq N} \, p_x^{\F{T}} \\
	        =& \sum_{x \in \ctleaf^{\F{T}}, d^{\F{T}}(x) \leq H + M_{\max}} \, p_x^{\F{T}} \\
            \geq & \sum_{x \in Z^{\F{T}_H}} \sum_{y \in \ctleaf^{\F{T}_x}
                \land \ctdepth_x(y) \leq M_{\max}} \, p_{y}^{\F{T}}\\
			= & \sum_{x \in Z^{\F{T}_H}} \sum_{y \in \ctleaf^{\F{T}_x} 
                \land \ctdepth_x(y) \leq M_{\max}} \, p_{x}^{\F{T}} \cdot p_{y}^{\F{T}_x}\\
			= & \sum_{x \in Z^{\F{T}_H}} \, p_{x}^{\F{T}} \cdot 
                \sum_{y \in \ctleaf^{\F{T}_x} \land \ctdepth_x(y) \leq M_{\max}} \, p_{y}^{\F{T}_x}\\
			\geq & \sum_{x \in Z^{\F{T}_H}} \, p_{x}^{\F{T}} \cdot 
                \sum_{y \in \ctleaf^{\F{T}_x} \land \ctdepth_x(y) \leq M_{x}} \, p_{y}^{\F{T}_x}\\
			> & \sum_{x \in Z^{\F{T}_H}} \, p_{x}^{\F{T}} \cdot  \beta\\
			= & \beta \cdot \sum_{x \in Z^{\F{T}_H}} \, p_{x}^{\F{T}}\\
			\geq &  \beta \cdot \beta\\
			= & 1-\delta \\
		\end{array}
	\]
	The first inequality holds since every leaf in $\F{T}_x$ with a depth of at most $M_{\max}$ (in $\F{T}_x$) for some $x \in Z^{\F{T}_H}$ 
	must also be a leaf in $\F{T}$ with a depth of at most $H + M_{\max}$, since $x$ is at a depth of at most $H$.
\end{proof}

Before we can prove modularity for disjoint unions w.r.t.\ $\SASTi$,
we have to give some more definitions regarding the disjoint union abstraction.
In addition to \Cref{def:term-abstraction}, we also want to label the function symbols in the abstraction,
to indicate from which position the function symbol originated from in the original
term. For a term $t$, let
$\pos_\VSet(t)$ be the set of all its variable positions and 
let $\pos_\Sigma(t)$ be the set of all those positions of $t$ where
$t$ has function symbols instead of variables.

\begin{defi}[Labeled Disjoint Union Abstraction]\label{def:term-abstraction-labeled}
    Let $\PP^{(1)}, \PP^{(2)}$ be PTRSs with $\Sigma^{\PP^{(1)}} \cap \Sigma^{\PP^{(2)}} = \emptyset$.
    For any $d \in \{1,2\}$, $t \in \TSet{\Sigma^{\PP^{(1)}} \cup
      \Sigma^{\PP^{(2)}}}{\VSet}$, 
    and position $\pi \in \pos_\Sigma(t)$,
 $\caphterm_d^{\pi}(t)$ and $\capterm_d(t)$ are multisets of terms    
    from $\TSet{L(\Sigma^{\PP^{(d)}})}{\VSet}$, which are defined as follows.  Here, we
    use the signature $L(\Sigma^{\PP^{(d)}}) = \Sigma^{\PP^{(d)}} \times \pos(t)$.    
    {\small
    \[
        \begin{array}{ll}
          \caphterm_d^{\pi}(y) & = \{x\} \text{, if } y \in \VSet
          \text{, where
                    $x$  is always a new fresh variable } \\
        	\caphterm_d^{\pi}(f(t_1, \ldots, t_k)) & = \{f^{\pi}(q_1, \ldots, q_k) \mid  q_1 \in
                \caphterm_d^{\pi.1}(t_1), \ldots, q_k \in
                \caphterm_d^{\pi.k}(t_k)
                \} \text{, if } f \in \Sigma^{\PP^{(d)}}\\ 
        	\caphterm_d^{\pi}(f(t_1, \ldots, t_k)) & =  \{x\} \cup \caphterm_d^{\pi.1}(t_1) \cup
                \ldots \cup \caphterm_d^{\pi.k}(t_k)\text{,  otherwise, where
                    $x$  is always a new variable }\!
    \end{array}
        \]}
    
    \noindent
    So $\caphterm_d^{\pi}(t)$ is always a linear term, i.e., it never contains multiple occurrences of the
    same variable.
 
    For any function $\varphi: X \to X$ with $X \subseteq \VSet$, 
    let $\sigma_{\varphi}$ be the substitution that replaces
    every variable $x \in X$ by $\varphi(x) \in X$
    and leaves all other variables unchanged, i.e., $\sigma_{\varphi}(x) =
    \varphi(x)$ if $x \in X$ and $\sigma_X(x) = x$ otherwise. 
    Then we define
    \[\capterm_d(t) = \{ \sigma_{\varphi}(q) \mid q \in \caphterm_d^{\varepsilon}(t),
    \varphi: \VSet(q) \to \VSet(q) \}\]
    and $\capterm(t) = \capterm_1(t) \cup \capterm_2(t)$.
    The \emph{(labeled) disjoint union abstraction} of $t$ is the multiset $\capterm_1(t)
    \cup \capterm_2(t)$.
\end{defi}

\begin{exa}\label{example:SAST-modularity-proof-abstraction-labeled}
    Consider the PTRS $\PP_{14}^{(1)}$ with the rules $\ta \to \{1:\tf(\tb)\}$, $\tb \to
    \{1:\tz\}$, and $\th(y,y) \to \{1:y\}$, the PTRS
  $\PP_{14}^{(2)}$ with the rule $\tg(y) \to \{1:\tc\}$,
    and the term $\th(\tg(\ta), \tg(x))$.
    We obtain\footnote{To be precise,
    $\caphterm^{\varepsilon}_1(\th(\tg(\ta), \tg(x)))$ contains two additional terms
    $\th^{\varepsilon}(\ta^{1.1}, z')$ and
      $\th^{\varepsilon}(y,z')$ since $\caphterm^{2}_1(\tg(x)) = \{z, z'\}$. However, to
    ease readability, we disregarded them here.}
    \[ \begin{array}{ll}
      \caphterm_1^{\varepsilon}(\th(\tg(\ta), \tg(x))) & = \{\th^{\varepsilon}(\ta^{1.1}, z),
      \th^{\varepsilon}(y,z)\}\\
      \capterm_1(\th(\tg(\ta), \tg(x))) & = \{\th^{\varepsilon}(\ta^{1.1}, z),
      \th^{\varepsilon}(z,y), \th^{\varepsilon}(y,z), \th^{\varepsilon}(z,z), \th^{\varepsilon}(y,y)
      \}\\
      \caphterm_2^{\varepsilon}(\th(\tg(\ta), \tg(x))) & = \{x', \tg^{1}(y), \tg^{2}(z)\}\\
      \capterm_2(\th(\tg(\ta), \tg(x))) & = \{x', \tg^{1}(y), \tg^{2}(z)\}\!
    \end{array}\]
\end{exa}

Furthermore, we will define
which terms $q \in \capterm(t)$ ``cover'' which of the 
nodes from $N^{\F{T}}$ 
for an arbitrary $\ito_{\PP^{(1)} \cup \PP^{(2)}}$-RST $\F{T}$ starting with $(1:t)$.
The idea is that every  step in a rewrite sequence starting with $t$
(corresponding to an inner
node of $\F{T}$)
can also be performed when starting with a suitable $q \in \capterm(t)$. 
For this, we first define the notion of an \emph{origin graph}.
For every rewrite sequence starting with $t$, the graph indicates which subterm
``originates'' from which subterm of $t$.
Moreover, every function symbol in the origin graph is labeled by the position of the
corresponding symbol on the right-hand side of the rule that created it.

\begin{defi}[Origin Graph]\label{def:orig}
    Let $\PP$ be a PTRS
    and let $\F{T}$ be an $\ito_{\PP}$-RST.
    The \emph{origin graph} for $\F{T}$ is a labeled graph with the nodes $(x,\pi)$ 
    for all $x \in N^{\F{T}}$ and all $\pi \in \pos_\Sigma(t_x)$,
    where the edges and labels are defined as follows:
    For the root $\ctroot$ of $\F{T}$ we label the node $(\ctroot,\pi)$ by $\varepsilon$.
    For $x \in N^{\F{T}}$,
    let the rewrite step  $t_x \ito_{\PP} \{p_1:t_{y_1}, \ldots, p_k:t_{y_k}\}$ be performed using the rule
    $\ell \to \{p_1:r_{y_1}, \ldots, p_k:r_{y_k}\}$, the position $\tau$, and the substitution $\sigma$,
    i.e., $t_x|_\tau = \ell \sigma$ and $t_{y_j} = t_x[r_j \sigma]_\tau$ for all
 $1 \leq j \leq k$.    
    Let $\pi \in \pos_\Sigma(t_x)$.
    \begin{description}
        \item[(a)] If $\pi < \tau$ or $\pi \bot \tau$ (i.e., $\pi$ is above or parallel to $\tau$),        
        then there is an edge from $(x,\pi)$ to $(y_j,\pi)$.
        If $(x,\pi)$ was labeled by $\gamma$, then $(y_j,\pi)$ is labeled by $\gamma$ as well.
        \item[(b)] For $\pi=\tau$,
          there is an edge from $(x,\pi)$ to $(y_j,\pi.\alpha)$ for all
          $\alpha \in \pos_{\Sigma}(r_j)$.
        If $(x,\pi)$ was labeled by $\gamma$, then $(y_j,\pi.\alpha)$ is labeled by $\gamma.\alpha$.
        \item[(c)] For every variable position
        $\alpha_\ell \in \pos_\VSet(\ell)$, for all
        positions $\alpha_{r_j} \in \pos_{\VSet}(r_j)$ with
        $r_j|_{\alpha_{r_j}} = \ell|_{\alpha_\ell}$,
        and for all $\beta \in \IN^*$ with
        $\alpha_\ell.\beta \in \pos_\Sigma(\ell\sigma)$, there is an edge from 
        $(x,\tau.\alpha_\ell.\beta)$ to $(y_j,\tau.\alpha_{r_j}.\beta)$.
        \item[(d)] For all other positions $\pi \in \pos(t_x)$, 
        i.e., the positions that are inside the redex $\ell \sigma$
        but neither at the root of $\ell$ nor inside the substitution $\sigma$, 
        there is no outgoing edge from the node $(x, \pi)$.
    \end{description}
\end{defi}

\begin{exa}\label{graph1}
    Reconsider the PTRSs $\PP_{14}^{(1)}$ 
    with the rules $\ta \to \{1:\tf(\tb)\}$, $\tb \to \{1:\tz\}$, and $\th(y,y) \to \{1:y\}$, 
   the PTRS  $\PP_{14}^{(2)}$ with the rule $\tg(y) \to \{1:\tc\}$,
    and the term $\th(\tg(\ta), \tg(x))$ from \Cref{example:SAST-modularity-proof-abstraction-labeled}.
    Furthermore, consider the following RST, where we omitted the (trivial) probabilities,
    and numbered each rewrite step.     
    \[\th(\tg(\underline{\ta}), \tg(x)) \stackrel{(1)}{\longrightarrow} \th(\tg(\tf(\underline{\tb})), \tg(x)) \stackrel{(2)}{\longrightarrow} \th(\underline{\tg(\tf(\tz))}, \tg(x)) \stackrel{(3)}{\longrightarrow} \th(\tc, \underline{\tg(x)}) \stackrel{(4)}{\longrightarrow} \underline{\th(\tc,\tc)} \stackrel{(5)}{\longrightarrow} \tc\]

    This RST yields the following
    origin graph.
    \[ \xymatrix @-1.5pc {
        \th^{\textcolor{blue}{\varepsilon}}\ar@{-}[d] &( & \tg^{\textcolor{blue}{\varepsilon}} \ar@{-}[d]&(&\ta^{\textcolor{blue}{\varepsilon}}\ar@{-}[d] \ar@{-}[drr]&),&& & \tg^{\textcolor{blue}{\varepsilon}}\ar@{-}[d] & ( &x &)) \\
        \th^{\textcolor{blue}{\varepsilon}}\ar@{-}[d] &( & \tg^{\textcolor{blue}{\varepsilon}} \ar@{-}[d]&(&\tf^{\textcolor{blue}{\varepsilon}}\ar@{-}[d]&(&\tb^{\textcolor{blue}{1}} \ar@{-}[d]&)), & \tg^{\textcolor{blue}{\varepsilon}}\ar@{-}[d] & ( &x &)) \\
        \th^{\textcolor{blue}{\varepsilon}}\ar@{-}[d]&( & \tg^{\textcolor{blue}{\varepsilon}}\ar@{-}[d]&(&\tf^{\textcolor{blue}{\varepsilon}}&(&\tz^{\textcolor{blue}{1}}&)), & \tg^{\textcolor{blue}{\varepsilon}}\ar@{-}[d] & ( &x &)) \\
        \th^{\textcolor{blue}{\varepsilon}}\ar@{-}[d]&( & \tc^{\textcolor{blue}{\varepsilon}}\ar@{-}[d]&&&&&, & \tg^{\textcolor{blue}{\varepsilon}}\ar@{-}[d] & ( &x &)) \\
        \th^{\textcolor{blue}{\varepsilon}}&( & \tc^{\textcolor{blue}{\varepsilon}}\ar@{-}[dll]&&&&&, & \tc^{\textcolor{blue}{\varepsilon}}\ar@{-}[dllllllll] &&&) \\
        \tc^{\textcolor{blue}{\varepsilon}}&&&&&&&&&&& \!
    }
    \]
\end{exa}

Note that the root of every subterm that is not in normal form
is reachable from exactly one node $(\ctroot, \pi)$.
Only nodes $(x,\tau)$ where $t_x|_{\tau}$ is in normal form may have multiple incoming transitions.

Our goal is to split up the RST that started with $(1:t)$ into RSTs that start with
$(1:q)$ for $q \in \capterm(t)$. The reason is that these terms $q$ only contain symbols
from either $\Sigma^{\PP^{(1)}}$ or $\Sigma^{\PP^{(2)}}$ and hence, there is a bound on
the expected derivation length of all these RSTs.

The labels in the origin graph and the labels in the terms $q \in \capterm(t)$ can now be
used to  construct the new RSTs that start with $(1:q)$ for $q \in \capterm(t)$
from the original RST that starts with $(1:t)$. Then every rewrite step in the original
RST corresponds to at least one step in one of these new RSTs.
Let us illustrate this with our running example.

\begin{exa}\label{example:cover}
  From the $\ito_{\PP_{14}}$-RST in \Cref{graph1} we obtain
  the following $\ito_{\PP_{14}^{(1)}}$-RSTs
    and $\ito_{\PP_{14}^{(2)}}$-RSTs\footnote{In addition to the RST starting with
    $\th^{\varepsilon}(z, z)$ we also obtain the corresponding RST starting with
    $\th^{\varepsilon}(y, y)$, but omitted it here for readability.} \pagebreak[3]
    \[\begin{array}{lclcl}
        \th^{\varepsilon}(\underline{\ta^{1.1}}, z) &\stackrel{(1')}{\longrightarrow}& \th(\tf(\underline{\tb}), z) &\stackrel{(2')}{\longrightarrow}& \th(\tf(\tz), z)\\
        \underline{\th^{\varepsilon}(z, z)} &\stackrel{(5')}{\longrightarrow}& z &&\\
        \underline{\tg^{1}(z)} &\stackrel{(3')}{\longrightarrow}& \tc \\
        \underline{\tg^{2}(z)} &\stackrel{(4')}{\longrightarrow}& \tc \!
    \end{array}\]
    where the labels are ignored for the rewrite steps, but they are used
     to determine which start term from $\capterm(t)$ to use for which step.
    The rewrite step $(i)$ in the original RST corresponds to the rewrite step $(i')$ in
    our new RSTs.
In the original RST, one first performs rewrite steps for the symbols $\ta$ and $\tb$
from  $\Sigma^{\PP_{14}^{(1)}}$, then one rewrites the symbol $\tg$ from
$\Sigma^{\PP_{14}^{(2)}}$ above, and finally one rewrites the top symbol $\th$ from
$\Sigma^{\PP_{14}^{(1)}}$. In contrast, these rewrite steps are now separated such that in
each of the above RSTs, one either only rewrites symbols from $\Sigma^{\PP_{14}^{(1)}}$ or
only symbols from $\Sigma^{\PP_{14}^{(2)}}$.

    Let us explain how to detect that the rewrite step $(2)$ that is performed at position $\pi = 1.1.1$ 
    in $\th(\tg(\tf(\tb)), \tg(x))$ in the original RST 
    should be applied at position $1.1$ in $\th(\tf(\tb), z)$ for rewrite step $(2')$.
    We consider the (unique) predecessor $(t_\ctroot,1.1)$
(i.e., $\ta^\varepsilon$)
    of $(x_2,1.1.1)$ (i.e., $\tb^1$), where $\F{r}$ is the root of the original RST
and $x_2$ denotes the second node of the original RST containing the term
    $\th(\tg(\tf(\tb)), \tg(x))$
This indicates that the rewrite step $(2)$ in the original RST corresponds to a rewrite
step in a new RST that starts with  $(1:q)$
for a term $q$ containing
    $\rootsym(t_\ctroot|_{1.1})^{1.1} =
\rootsym(\th(\tg(\ta), \tg(x))|_{1.1})^{1.1} = \ta^{1.1}$. Hence, we have to consider the
new RST starting with  $\th^{\varepsilon}(\underline{\ta^{1.1}}, z)$.

To find the actual rewrite step that corresponds to  $(2)$
in this new RST, we determine 
the position of the symbol
    $\ta^{1.1}$
    within $\th^{\varepsilon}(\ta^{1.1}, x)$, which is $\gamma = 1$,
    and the label of $(x_2,1.1.1)$ (i.e., $\tb^1$), which is $\chi = 1$. This 
    indicates that in the new RST we need to rewrite at position $\gamma.\chi = 1.1$.
\end{exa}

Now, we define this covering formally in order to
determine which term $q$ from $\capterm(t)$ to use for the current rewrite step.
More precisely, for any $q \in \capterm(t)$ we define the \emph{abstraction cover}  $AC_q \subseteq N^{\F{T}}$ which
contains all inner nodes
of the original RST whose rewrite step can be simulated by a rewrite step in the new RST for $q$.
 Moreover, for every
 node $x$, we define the term $\psi_x(q)$ which we would obtain instead of $t_x$ if we
  had started the RST with $q$ instead of $t$.
 Thus, if there is an edge from the node $x$ to the node $y$ in the original RST $\F{T}$, then 
  $\psi_{x}(q)$ rewrites to $\psi_{y}(q)$ or they are equal. Therefore, for the root
  $\ctroot$ of $\F{T}$, the term $\psi_{\ctroot}(q)$ rewrites (in zero or more steps) to 
 $\psi_{x}(q)$ for every node of $\F{T}$.

\begin{defi}[Abstraction Cover]\label{def:cover}
  Let $\PP = \PP^{(1)} \cup \PP^{(2)}$ be a PTRS
with $\Sigma^{\PP^{(1)}} \cap \Sigma^{\PP^{(2)}} = \emptyset$
  and let $\F{T}$ be an $\ito_{\PP}$-RST that starts with $(1:t)$.
     We define the \emph{abstraction cover} with $\bigcup_{q \in \capterm(t)} AC_q =
     N^{\F{T}}\setminus\ctleaf^{\F{T}}$ recursively,
         where we additionally define a term
       $\psi_x(q)$ for each node $x \in N^{\F{T}}$
    that corresponds
to $t_x$ in the RST starting with $(1:q)$. 
So to reach $\psi_x(q)$ from $q$, one performs the same 
rewrite steps as in the path from the root to $x$ in $\F{T}$ whenever possible,
where the appropriate position of the new rewrite step is 
    indicated by the labels of the symbols in $q$ and the labels in the origin graph.
   For instance, in \Cref{example:cover} for $q = \th(\ta, z)$ we have $\psi_{x_2}(q) = \th(\tf(\tb), z)$ 
    and $\psi_{x_i}(q) = \th(\tf(\tz), z)$ for all $i \in \{3,4,5,6\}$ (we remove the
    labels in $\psi_x(q)$).

   For the root node $\ctroot$ of $\F{T}$, we initially set $\psi_{\ctroot}(q) = q$.

Now consider the rewrite step at an arbitrary node $x \in N^{\F{T}}$, 
    which uses the rule $\ell \to \{\ldots, p_j:r_j, \ldots\}$ at position $\pi$ in $t_x$ with substitution $\delta$.
    Hence, we have $t_{y_j} = t_x[r_j \delta]_{\pi}$.
Let the origin graph of $\F{T}$ contain a path from  $(\ctroot, \tau)$ to $(x, \pi)$ for
some position $\tau$ and let
 $Q \subseteq \capterm(t)$ be the set of
all terms from $\capterm(t)$ that contain a
    function
    symbol labeled with $\tau$.
    Moreover, let $(x, \pi)$ be labeled with position $\chi$ in the origin graph of $\F{T}$.
    Then
we add $x$ to $AC_q$
    for all those $q \in Q$ where  there exists a substitution $\delta'$ with
    $\psi_x(q)|_{\tau.\chi} = \ell \delta'$. For these $q$,
    we  set $\psi_{y_j}(q) = \psi_{x}(q)[r_j\delta']_{\tau.\chi}$. Note that now indeed,
    $\psi_x(q)$ rewrites to $\psi_{y_j}(q)$.
    For all other $q \in \capterm(t)$ we set $\psi_{y_j}(q) = \psi_{x}(q)$.
\end{defi}

\begin{lem}[$AC_q$ Covers all Inner Nodes]\label{CoveringComplete}
  Let $\PP = \PP^{(1)} \cup \PP^{(2)}$ be a PTRS
  with $\Sigma^{\PP^{(1)}} \cap \Sigma^{\PP^{(2)}} = \emptyset$ and let $\F{T}$ be an $\ito_{\PP}$-RST that starts with $(1:t)$.
    Then $N^{\F{T}}\setminus\ctleaf^{\F{T}} = \bigcup_{q \in \capterm(t)}
    AC_q$.
\end{lem}

\begin{proof}
    In order to show that $\bigcup_{q \in \capterm(t)} AC_q$ is indeed a cover of $N^{\F{T}}\setminus\ctleaf^{\F{T}}$,
    we have to show that for every node $x \in N^{\F{T}}\setminus\ctleaf^{\F{T}}$ there
    exists a $q \in \capterm(t)$ with $x \in AC_q$. 
    
    In other words, we have to show that there exists at least one term $q \in Q$ 
    such that there is a substitution $\delta'$ with $\psi_x(q)|_{\tau.\chi} = \ell \delta'$.
    W.l.o.g., let $\rootsym(\ell) \in \Sigma^{\PP^{(1)}}$.
    Let $C_1$ be the maximal context
containing no 
    symbols from $\Sigma^{\PP^{(2)}}$ such that there exists another context $C$
    and terms $s_1, \ldots, s_m$ with
 $\rootsym(s_i) \in \Sigma^{\PP^{(2)}}$ for all $1 \leq i \leq m$ such that
    $t_x = C[C_1[s_1, \ldots, s_m]]$ 
    and the position $\pi$ is
    within the context $C_1$.
    
    Let $\Phi$ be the set of positions in $\pos_\Sigma(t_x)$ that are within the context
    $C_1$.
 There exists a term $q \in \capterm(t)$ 
 that contains all function symbols that are labeled with  positions $\varphi^{-1}$ where
 there is a path in the origin graph from $(\ctroot,\varphi^{-1})$ to $(x, \varphi)$
 for some $\varphi \in \Phi$. This also holds if there is a normal form at position
 $\varphi$ which may be reached from several positions $\varphi^{-1}$. 
    Moreover, let $\Psi = \{\psi_1, \ldots, \psi_m\}$ be the set of root positions 
    of the $s_1, \ldots, s_m$ within $\pos(t_x)$, i.e.,
    the positions of the holes in $C_1$ within $t_x$.
We can choose $q$ 
in such a way that the fresh variables
for symbols from   $\Sigma^{\PP^{(2)}}$
at positions $\chi$ and $\chi'$ are the same whenever
there are paths in the origin graph from $(\ctroot,\chi)$ to $(x, \psi_i)$
and from  $(\ctroot,\chi')$ to $(x, \psi_{i'})$, and  $s_i = s_{i'}$.

    Finally, for this specific $q$ we have $\psi_x(q)|_{\tau.\chi} = \ell \delta'$ where
    $\delta'$ is
    like $\delta$, 
    but we replace every occurrence of the terms $s_i$ with the corresponding fresh variables.
\end{proof}

\begin{exa}
    As a final example, consider the rewrite step $(5)$, i.e., $\th(\tc,\tc) \to \tc$.
    Following the notation from the proof above, we have $C_1 = \th(\square, \square)$, $s_1 = \tc$, and $s_2 = \tc$.
    Furthermore, we have $\psi_1 = \chi = 1$, $\psi_2 = \chi' = 2$, 
    and the position $\varphi$ of the $\th$ in the context $C_1$ is $\varphi = \varepsilon$ with $\varphi^{-1} = \varepsilon$.
    Hence, we choose  $\th^{\varepsilon}(z,z) \in \capterm(t)$, 
    as this term contains the function symbol $\th$ labeled with $\varphi^{-1} = \varepsilon$,
    and uses the same variable $z$ for both subterms, as they are equal ($s_1 = s_2 = \tc$).
\end{exa}

\ModiSASTDisjoint*

\begin{proof}
  Let  $\PP = \PP^{(1)} \cup \PP^{(2)}$ and assume
that both $\SAST[\ito_{\PP^{(1)}}]$ and $\SAST[\ito_{\PP^{(2)}}]$ hold.
    Let $\F{T} = (N,E,L)$ be an arbitrary $\ito_{\PP}$-RST that starts with $(1:t)$.
    We prove that $\edl(\F{T})$ is bounded by some constant
    which does not depend on $\F{T}$ but just on $t$.
    This proves $\SASTi$.

    In \Cref{def:cover}
    we defined sets $AC_q \subseteq N^{\F{T}}$ for each $q \in \capterm(t)$
    such that $x \in AC_q$ if
   the rewrite step at node $x \in N^{\F{T}}$ is performed at some position $\pi$ in $t_x$,
    $q$ contains a function symbol labeled with $\tau$ where there is a path from $(\ctroot,
    \tau)$ to $(x,\pi)$ in the origin graph of $\F{T}$, 
 $(x,\pi)$ is labeled with some position $\chi$ in the origin  graph, and  we can
perform the same rewrite step on $\psi_x(q)$  at position $\tau.\chi$.
In \Cref{CoveringComplete} we showed that 
$\bigcup_{q \in \capterm(t)} = N^{\F{T}}\setminus\ctleaf^{\F{T}}$.

    With the definition of $AC_q$ we can now prove the upper bound on the expected derivation height of $t$.
    Let $H \in \IN$ and let $\F{T}_{H}$ be the tree consisting of the first $H$ layers of
    $\F{T}$.
    As in the proof of the parallel execution lemma for $\SASTs$ 
    (\Cref{lemma:parallel-execution-lemma-SAST}), for each $H \in \IN$ 
    and each $q \in \capterm(t)$
    we split the tree $\F{T}_{H}$, 
    into  (finitely many) sets $\mathbb{T}_H^{q}$ 
    containing (finitely many) pairs of $\ito_{\PP^{(1)}}$-RSTs 
    and $\ito_{\PP^{(2)}}$-RSTs $\F{t}$ with certain probabilities $p_{\F{t}}$.
    Each of these RSTs $\F{t}$ starts with $(1:q)$,
    has height at most $H$, and 
    \begin{enumerate}[label=(Prop-\arabic{enumi})]
		\item\label{itm:const-prop-1-SAST-mod} $\sum_{(p_{\F{t}},\F{t}) \in \mathbb{T}_H^q} p_{\F{t}} = 1$
		
		\item\label{itm:const-prop-2-SAST-mod} For all $x \in
                  AC_q$ we have $p_x^{\F{T}_H} = \sum_{(p_{\F{t}},\F{t}) \in
                    \mathbb{T}_H^q} \, p_{\F{t}} \cdot p_x^{\F{t}}$. 
        
                \item\label{itm:const-prop-3-SAST-mod} For all $\F{t} \in \mathbb{T}_H^q$, $\F{t}$ starts with
$(1:q)$ and is a valid $\ito_{\PP^{(d)}}$-RST for some $d \in \{1,2\}$.
	\end{enumerate}
    The construction is analogous to
    the ones in \Cref{lemma:parallel-execution-lemma-AST,lemma:parallel-execution-lemma-SAST}.
Now we keep all nodes that are in $AC_q$ and split a tree into multiple ones if
    the node is not in $AC_q$, i.e., if the rewrite step required a different starting term in the beginning.

    For example, consider the PTRS $\PP^{(1)}$ containing the rule $\tf(x,x) \to \{1:\tf(\tb,\tc)\}$,
 $\PP^{(2)}$ containing the rule $\tg(x) \to \ta$,
    and the innermost rewrite sequence $\tf(\tg(\tb), \tg(\tc)) \to \tf(\ta, \tg(\tc)) \to \tf(\ta, \ta) \to \tf(\tb,\tc)$,
    where we wrote $\to$ instead of $\iliftto_{\PP^{(1)} \cup \PP^{(2)}}$ and omitted the probabilities for readability.
    We split this sequence (RST) into the three sequences $\tg(x_1) \to \ta$, $\tg(x_2) \to \ta$, and $\tf(x_3,x_3) \to \tc$,
    where all start terms are contained in $\capterm(\tf(\tg(\tb), \tg(\tc)))$.
    
    Since $|\capterm(t)| = K$ is finite 
    and the expected derivation lengths of  all $\ito_{\PP^{(d)}}$-RSTs with $d \in
    \{1,2\}$
    that start with $(1:q)$ for a term $q \in \capterm(t)$ are bounded by some constant $C_q < \omega$,
    there is a $C_{\max} < \omega$ such that for all RSTs $\F{t}$ with
    $(\F{t},p_{\F{t}}) \in \mathbb{T}_H^q$ we have $\edl(\F{t}) \leq C_{\max}$ by \ref{itm:const-prop-3-SAST-mod}.
    Hence, we obtain for each $H \in \IN$:
    \[
		\begin{array}{ccl}
            &  & \edl(\F{T}_H) \\
            & = & \sum_{x \in N^{\F{T}_H}\setminus\ctleaf^{\F{T}_H}} \, p_x^{\F{T}_H} \\
            \text{\textcolor{blue}{(since $\bigcup_{q \in \capterm(t)} AC_q =
                N^{\F{T}_H}\setminus
                \ctleaf^{\F{T}}$)}} & \leq & \sum_{q \in \capterm(t)} \sum_{x \in AC_q} \, p_x^{\F{T}_H} \\
			\text{\textcolor{blue}{(by \ref{itm:const-prop-2-SAST-mod})}} & = & 
            \sum_{q \in \capterm(t)} \sum_{x \in AC_q} \sum_{(p_{\F{t}},\F{t}) \in
              \mathbb{T}_H^q} \, p_{\F{t}} \cdot p_x^{\F{t}}\\
			\text{\textcolor{blue}{(as in \eqref{eq:array-AST-parallel})}} & =
                        & \sum_{q \in \capterm(t)} \sum_{(p_{\F{t}},\F{t}) \in
                          \mathbb{T}_H^q} \, p_{\F{t}} \cdot \sum_{x \in AC_q} \, p_x^{\F{t}} \\
			\text{\textcolor{blue}{(by assumption)}} & \leq & \sum_{q \in
                          \capterm(t)} \sum_{(p_{\F{t}},\F{t}) \in \mathbb{T}_H^q} \, p_{\F{t}} \cdot C_{\max} \\
			& = & \sum_{q \in \capterm(t)} \, C_{\max} \cdot
                        \sum_{(p_{\F{t}},\F{t}) \in \mathbb{T}_H^q} \, p_{\F{t}} \\
			\text{\textcolor{blue}{(by \ref{itm:const-prop-1-SAST-mod})}} & =
                        & \sum_{q \in \capterm(t)}   \, C_{\max} \cdot 1 \\
			& = & \sum_{q \in \capterm(t)} \, C_{\max}\\
			& = & K \cdot C_{\max}\!
		\end{array}
	\]
    Thus, we have $\edl(\F{T}) = \lim_{H \to \infty} \edl(\F{T}_H) \leq K \cdot C_{\max}$.
\end{proof}

Finally, we prove the theorem on signature extensions.

\SignatureExtensions*

\begin{proof}
    We only prove the non-trivial direction ``$\Longrightarrow$'' and
    consider the following three cases:
    \begin{enumerate}
        \item For innermost rewriting, the theorem is implied by
        modularity of $\ASTi$ and $\SASTi$
        for disjoint unions
        (\Cref{modularity-iAST-disjoint,modularity-iSAST-disjoint}).

        \item For full rewriting, we first consider
        the case where $\Sigma^{\PP}$ only contains constants and unary symbols.
        Let $\Sigma'$ be another signature containing fresh symbols, i.e.,
        w.l.o.g.\ we have 
        $\Sigma^{\PP} \cap \Sigma' = \emptyset$.
        For any term from $\TSet{\Sigma^{\PP} \cup \Sigma'}{\VSet}$ we now compute
        a multiset of terms from $\TSet{\Sigma^{\PP}}{\VSet}$ which can be regarded instead.
        Thus, we define a corresponding mapping $\caphterm$ from $\TSet{\Sigma^{\PP} \cup
        \Sigma'}{\VSet}$ to 
        multisets of terms from $\TSet{\Sigma^{\PP}}{\VSet}$.
        For this definition, we need two auxiliary mappings.
        The mapping
        $\caphcterm: \TSet{\Sigma^{\PP} \cup \Sigma'}{\VSet} \to \TSet{\Sigma^{\PP}}{\VSet}$
        replaces all topmost subterms with a root $f$ from $\Sigma'$ by the fresh variable $x_f$
        and the mapping $\caphbterm$ maps any term $t \in \TSet{\Sigma^{\PP} \cup \Sigma'}{\VSet}$
        to the multiset that unites $\caphterm(r)$ for the topmost subterms $r$ with $\rt(r) \in
        \Sigma^{\PP}$ occurring below a symbol from $\Sigma'$ in $t$.
               \[
        \begin{array}{ll}
        \caphcterm(x) &= x \text{, if } x \in \VSet \\
        \caphcterm(f(t_1, \ldots, t_k)) & = f(\caphcterm(t_1), \ldots,
        \caphcterm(t_k))\text{, if } f \in \Sigma^{\PP}\\
        \caphcterm(f(t_1, \ldots, t_k)) & = x_f\text{, if } f \in \Sigma'\\
        \\
        \caphterm(x) &= \{x\}\text{, if } x \in \VSet \\
        \caphterm(f(t_1, \ldots, t_k)) & = \{f(\caphcterm(t_1), \ldots,
        \caphcterm(t_k))\} \cup
        \caphbterm(t_1) \cup \ldots \cup \caphbterm(t_k)        \text{, if } f \in \Sigma^{\PP}\\
        \caphterm(f(t_1, \ldots, t_k)) & =\caphterm(t_1) \cup \ldots \cup \caphterm(t_k)        \text{,
        if } f \in \Sigma'\\
        \\
        \caphbterm(x) &= \emptyset\text{, if } x \in \VSet \\
        \caphbterm(f(t_1, \ldots, t_k)) & =\caphbterm(t_1) \cup \ldots \cup \caphbterm(t_k)        \text{,
        if } f \in \Sigma^{\PP}\\
        \caphbterm(f(t_1, \ldots, t_k)) & =\caphterm(t_1) \cup \ldots \cup \caphterm(t_k)        \text{,
        if } f \in \Sigma'
        \end{array}
        \]
        Let $t \in \TSet{\Sigma^{\PP} \cup \Sigma'}{\VSet}$ and let $\caphterm(t) = \{q_1, \ldots, q_n\}$.
        Let $\tc$ be a fresh $n$-ary constructor symbol. Then instead of $t$, we consider the term $\tc(q_1, \ldots, q_n)$.
        Every $\fto_{\PP}$-RST $\F{T}$ that starts with $(1:t)$ gives rise to an 
        $\fto_{\PP}$-RST $\F{T}'$ that starts with $(1:\tc(q_1, \ldots, q_n))$ with $|\F{T}| = |\F{T}'|$
        and $\edl(\F{T}) = \edl(\F{T}')$.
        To see this, suppose that
        $t \fto_{\PP} \{p_1:s_1, \ldots, p_k:s_k\}$,
$A(t) = \{q_1, \ldots, q_n\}$, and $A(s_i) = \{q_1^i, \ldots, q_{m_i}^i\}$ for all $1 \leq i
        \leq k$.
        Then
        there exists a $1 \leq j \leq n$ with $q_j \fto_{\PP}
        \{p_1:u_1, \ldots, p_k:u_k\}$ where $u_i \in
        A(s_i)$ for all $1 \leq i \leq k$.      
For every
$v_i \in A(s_i) \setminus
          \{u_i\}$  there exists a $j' \neq j$ with  $1 \leq j' \leq n$ 
   such that $v_i  = q_{j'}$ for all $1 \leq i \leq k$.
        Note that $m_i \leq n$, and we might even have $m_i < n$ if we use an erasing rule.
        Hence, every step in $\F{T}$ in a term $t$ can be mirrored by a step in $\F{T}'$
        in a term $\tc(q_1, \ldots, q_n)$.

        Since the $q_i$ are terms from $\TSet{\Sigma^{\PP}}{\VSet}$, every $\fto_{\PP}$-RST
        that starts with  $(1:q_i)$ converges with probability 1 (resp.\ has
        bounded expected derivation length). Hence, by the 
        parallel execution lemmas
        (\Cref{lemma:parallel-execution-lemma-AST,lemma:parallel-execution-lemma-SAST}) this also
        holds for 
        every    $\fto_{\PP}$-RST $\F{T}'$
        that starts with $(1:\tc(q_1, \ldots, q_n))$ (note that there cannot be any rewrite steps
        at the root since $\tc$ is a constructor). But then this also holds for 
        every $\fto_{\PP}$-RST $\F{T}$ that starts with $(1:t)$.
       
        \item Now we consider the case where
        $\Sigma^{\PP}$ contains at least one symbol $\tg$ of arity at least $2$. If $\tg$ has
        an arity greater than two, then we use
        the term $\tg(\_, \, \_, \; x,..., x)$ instead, where $x \in \VSet$.
        Again,        let $\Sigma'$ be another signature containing fresh symbols, i.e.,
        w.l.o.g.\ we have $\Sigma^{\PP} \cap \Sigma' = \emptyset$. 

        We now define a function $\phi : \TSet{\Sigma^{\PP} \cup \Sigma'}{\VSet} \to \TSet{\Sigma^{\PP}}{\VSet}$
        such that $\edh_{\fto_{\PP}}(t) \leq \edh_{\fto_{\PP}}(\phi(t))$:
                \[
            \begin{array}{@{\qquad}ll}
            \phi(x) &= x \text{, if } x \in \VSet \\
            \phi(f(t_1, \ldots, t_k)) & = f(\phi(t_1), \ldots,
            \phi(t_k))\text{, if } f \in \Sigma^{\PP}\\
            \phi(f) & = x_f\text{, if } f \in \Sigma' \text{ has arity 0}\\
            \phi(f(t)) & = \tg(\phi(t),x_f)\text{, if } f \in \Sigma' \text{ has arity 1}\\
            \phi(f(t_1, \ldots, t_k)) & = \tg(\phi(t_1), \tg(\phi(t_2), \ldots \tg(\phi(t_k),
            x_f)\ldots))\text{, if } f \in \Sigma' \text{ has arity $k > 1$}
            \end{array}
        \]
        Every $\fto_{\PP}$-RST that starts with $(1:t)$ gives rise to an 
        $\fto_{\PP}$-RST that starts with $(1:\phi(t))$
        using exactly the same rules, leading to the same convergence probability and the
        same expected derivation length. 
        To see this, note that whenever $t \fto_{\PP} \{p_1:s_1, \ldots, p_k:s_k\}$,
        then also $\phi(t) \fto_{\PP} \{p_1:\phi(s_1), \ldots, p_k:\phi(s_k)\}$. \qedhere
    \end{enumerate}
\end{proof}

\end{document}